\def\eqref#1{equation~\ref{#1}}
\def\1{\bm{1}}
\DeclareMathAlphabet{\mathsfit}{\encodingdefault}{\sfdefault}{m}{sl}
\SetMathAlphabet{\mathsfit}{bold}{\encodingdefault}{\sfdefault}{bx}{n}
\newcommand{\R}{\mathbb{R}}
\newcolumntype{Y}{>{\raggedright\arraybackslash}X} 
\theoremstyle{plain}
\newtheorem{theorem}{Theorem}[section]
\newtheorem{lemma}[theorem]{Lemma}
\newtheorem{proposition}[theorem]{Proposition}
\newtheorem{corollary}[theorem]{Corollary}
\theoremstyle{definition}
\newtheorem{definition}[theorem]{Definition}
\theoremstyle{remark}
\newtheorem{remark}[theorem]{Remark}
\newcommand{\safeincludegraphics}[2][]{%
  \IfFileExists{#2}{%
    \includegraphics[#1]{#2}%
  }{%
    \setlength\fboxsep{0pt}%
    \fbox{%
      \parbox[c][0.28\textheight][c]{0.95\linewidth}{\centering
        \vspace{0.5em}\textit{(placeholder)}\\
        Missing figure: \texttt{#2}\vspace{0.5em}
      }%
    }%
  }%
}
\title{Just-In-Time Piecewise‑Linear Semantics for ReLU‑type Networks}
\author{Hongyi Duan  \\
The Hong Kong University of Science and Technology (Guangzhou)\\
\And
Haoyang Liu \\
Peking University \\
\And
Jian'an Zhang \\
Peking University \\
\And
Fengrui Liu \\
East China Normal University \\
\And
Yiyi Wang \\
Xi'an Jiaotong University \\
}
\providecommand{\R}{\mathbb{R}}
\providecommand{\N}{\mathbb{N}}
\providecommand{\Z}{\mathbb{Z}}
\providecommand{\LB}{\operatorname{LB}}
\providecommand{\UB}{\operatorname{UB}}
\providecommand{\Aff}{\mathsf{Affine}}
\providecommand{\Sum}{\mathsf{Sum}}
\providecommand{\Scale}{\mathsf{Scale}}
\providecommand{\Bias}{\mathsf{Bias}}
\providecommand{\Max}{\mathsf{Max}}
\providecommand{\filecite}[1]{} 
\begin{document}

\maketitle

\begin{abstract}
We present a JIT PL semantics for ReLU‑type networks that compiles models into a guarded CPWL transducer with shared guards. The system adds hyperplanes only when operands are affine on the current cell, maintains global lower/upper envelopes, and uses a budgeted branch‑and‑bound. We obtain anytime soundness, exactness on fully refined cells, monotone progress, guard‑linear complexity (avoiding global $\binom{k}{2}$), dominance pruning, and decidability under finite refinement. The shared carrier supports region extraction, decision complexes, Jacobians, exact/certified Lipschitz, LP/SOCP robustness, and maximal causal influence. A minimal prototype returns certificates or counterexamples with cost proportional to visited subdomains.
\end{abstract}


\section{Introduction and Positioning}

\paragraph{Problem \& motivation.} For neural networks whose computation DAG is composed of affine modules (FC/conv/mean-pool/inference-time BN/residual) and pointwise gates (ReLU/Leaky-ReLU/PReLU/Abs/Max), each output coordinate is a \emph{continuous piecewise-linear} (CPWL) function of the input. This makes a wide range of geometric and formal tasks—e.g., region extraction, gradients/Jacobians, decision boundaries, Lipschitz analysis, robustness/ equivariance/ equivalence checks, and intervention/repair—\emph{exactly} expressible at the level of CPWL fragments. The bottleneck is \emph{expression explosion}: if we expand the network as a symbolic max/+ tree layer by layer, the number of linear fragments and comparators can grow as $2^{\Theta(N)}$ in the number of gates, making the “compile first, solve later’’ pipeline infeasible even for medium-size models.

\paragraph{Method overview.} We propose a unified symbolic carrier for CPWL networks: (i) a \textbf{Symbolic Weighted Transducer (SWT)} over the \emph{function semiring} $\left(\mathrm{CPWL}^{\pm\infty},\,\oplus=\max,\,\otimes=+\right)$, where edges carry \emph{CPWL weights} and \emph{polyhedral guards}; (ii) a \textbf{JIT-SWT} execution semantics that performs \emph{on-demand refinement} instead of a static global expansion. JIT-SWT maintains a shared DAG of lazy expressions (an e-graph of \texttt{Max/Sum/Scale/Compose}) and a \emph{global guard library}. It inserts threshold/comparator hyperplanes only when a visited GuardSet requires them, constructs local minimal common refinements and winner regions on-the-fly, and always tracks sound \emph{anytime envelopes} $(\underline{A},\overline{A})$ so that $\underline{A}\le F\le\overline{A}$ at all times. When all relevant faces inside a queried subdomain have been made explicit and the expression collapses to a single affine map, the envelopes meet and equal the true function—yielding \emph{exactness on demand}.

\paragraph{Contributions.}
\begin{enumerate}[leftmargin=1.4em,itemsep=0.15em,topsep=0.15em]
\item \textbf{CPWL layer calculus and static SWT semantics.}
We establish CPWL closure (AF‑1..5) and an \emph{equivalent} network$\to$SWT compilation (SWT‑1).
\item \textbf{JIT‑SWT objects, invariants, and refiners.}
We define the guard library/GuardSets and lazy e‑graph; propose three refiners
(\texttt{ENSURE\_SIGN}, \texttt{ENSURE\_WINNER}, \texttt{ENSURE\_COMMON\_REFINE});
and prove \emph{DYN‑1..6}: sound envelopes; exactness under local full refinement;
no regression; budgeted linear upper bounds in \#splits/\#new guards; correctness
of dominance pruning; and decidability under finite refinement.
\item \textbf{Decidable geometry \& formal analysis on JIT.}
We extract regions/gradients/Jacobians and compute extrema and \emph{exact} Lipschitz
constants; robustness/equivalence/equivariance follow via product constructions and
branch‑and‑bound with certificates; causal interventions admit exact or anytime bounds.
\item \textbf{Small yet sufficient experiments.}
FFN/CNN/GNN demonstrate machine‑precision forward equivalence after local refinement and
practical benefits: FFN—local Lipschitz correlates with FGSM success
(Top‑50 $100\%$ vs Bottom‑50 $14\%$ at $\varepsilon{=}0.25$);
CNN—$85\%$ translation‑equivariance pass with failures at padding boundaries;
GNN—permutation equivariance and Imax‑guided ablations
(Top‑5 removal $-2.94\%$ accuracy; Bottom‑5 $0\%$).
\end{enumerate}

\paragraph{Related work.}
\emph{Max‑of‑affine/tropical} methods target \emph{convex} CPWL but do not directly capture
general (possibly nonconvex) CPWL with shared guards \citep{butkovic2010maxplus}.
\emph{Weighted automata} traditionally use discrete alphabets and scalar weights
\citep{mohri2009weighted}; our SWT lifts weights to CPWL functions and augments edges
with polyhedral guards.
Neural verification via SMT/MILP/convex relaxations
\citep{katz2017reluplex,ehlers2017planet,tjeng2019verifying,wong2018provable,bunel2020survey}
is largely solver‑centric and end‑to‑end, lacking a reusable, incrementally refiable
\emph{shared geometric object}. JIT‑SWT serves as that carrier: properties reduce to
finite affine comparisons/LP/SOCP on a \emph{local} common refinement with anytime certificates,
without global enumeration. Our on‑demand refinement echoes
\emph{abstract interpretation}/\emph{CEGAR} \citep{cousot1977abstract,clarke2000counterexample},
but lifted from program states to CPWL geometry with guard‑driven predicates
and certificate‑producing solvers.

\begin{figure}[t]
  \centering
  \includegraphics[width=1\linewidth]{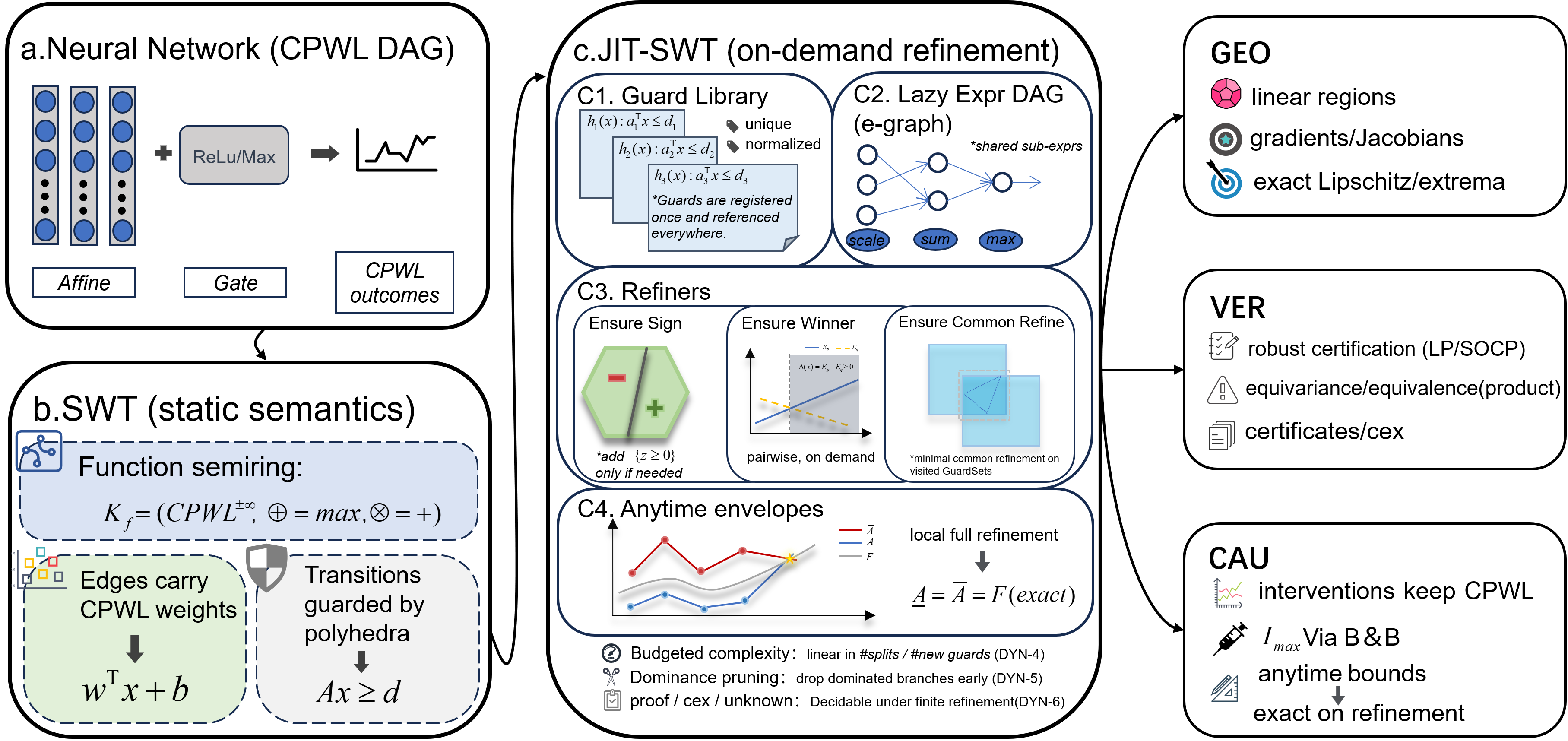}
  \vspace{-0.5em}
  \caption{\textbf{Contribution map and framework overview.}
  Networks $\rightarrow$ SWT (static semantics) $\rightarrow$ JIT‑SWT (on‑demand refinement)
  $\rightarrow$ unified geometry/verification/causality with anytime certificates and
  exactness under local full refinement.}
  \label{fig:overview}
  \vspace{-0.8em}
\end{figure}

\begin{table}[t]
  \centering
  \footnotesize
  \setlength{\tabcolsep}{5pt}
  \begin{tabular}{ll}
    \toprule
    \textbf{Label} & \textbf{One‑line meaning (full proofs in the appendix)} \\
    \midrule
    AF‑1..5 & CPWL closure; homomorphism; continuity/convexity; a.e.\ differentiability; compositionality. \\
    SWT‑1 & Equivalent compilation: SWT equals the network pointwise. \\
    SWT‑3/4 & Decidable equivalence via common refinement; difference‑region automata. \\
    SWT‑5 & Minimal guard complexity exists; decision is NP‑hard. \\
    DYN‑1 & Sound anytime envelopes $\underline A\le F\le \overline A$. \\
    DYN‑2/3 & Exactness under local full refinement; progress never regresses. \\
    DYN‑4 & Budgeted linear upper bounds in \#splits/\#new guards. \\
    DYN‑5 & Correctness of dominance pruning. \\
    DYN‑6 & JIT‑decidability under finite refinement; certs/counterexamples. \\
    \bottomrule
  \end{tabular}
  \caption{\textbf{Index of core statements.} Labels align with the theorem tags used throughout.}
  \label{tab:theorem-index}
  \vspace{-1.0em}
\end{table}


\section{Setup and the Static Baseline: SWT Truth Semantics}\label{2}

\paragraph{CPWL and the function semiring (recap).}
Let $\mathrm{CPWL}$ denote the class of continuous piecewise-linear real‑valued functions over $\mathbb{R}^n$, and let $\mathrm{CPWL}^{\pm\infty}$ further include the everywhere $-\infty$ function.
We work over the \emph{function semiring}
\[
K_f \;=\; \bigl(\mathrm{CPWL}^{\pm\infty},\ \oplus,\ \otimes,\ \mathbf{0},\ \mathbf{1}\bigr),
\qquad
(f\oplus g)(x)=\max\{f(x),g(x)\},\quad (f\otimes g)(x)=f(x)+g(x),
\]
with $\mathbf{0}\equiv-\infty$, $\mathbf{1}\equiv 0$.
All network quantities (edge/node weights in the compilation graph) are CPWL functions, and all layerwise combinations are formed with pointwise $\max/+$.  

\paragraph{AF‑1..AF‑5 (statements only; proofs in App.~\ref{app:af}).}
\begin{itemize}[leftmargin=1.4em, itemsep=0.2em]
\item \textbf{AF‑1 (well‑definedness).} Under our component set (affine/conv/mean‑pool/inference BN/residual; ReLU/LReLU/PReLU/Abs/Max) and DAG structure, every pre‑activation is CPWL and every activation remains in $K_f$.
\item \textbf{AF‑2 (pointwise homomorphism).} Evaluating the expression at a point $x_0$ by the semiring homomorphism $h_{x_0}:f\mapsto f(x_0)$ reproduces the numerical forward pass.
\item \textbf{AF‑3 (continuity; sufficient convexity).} Outputs are CPWL and hence continuous; if all mixing coefficients entering each activation are nonnegative and the activation is convex and monotone (ReLU/LReLU/PReLU/Max), the network is convex. Abs preserves convexity only when applied directly to an affine atom.
\item \textbf{AF‑4 (a.e.\ differentiability).} CPWL functions are affine on a finite polyhedral complex and are differentiable Lebesgue‑a.e.; Clarke subdifferentials at boundaries equal convex hulls of adjacent gradients.
\item \textbf{AF‑5 (closure under composition).} Finite compositions of “affine + the above gates’’ remain CPWL.
\end{itemize}

\paragraph{Symbolic Weighted Transducer (SWT): static truth semantics.}
We compile a network into a \emph{guarded} symbolic weighted transducer
\[
A=(Q,q_{\mathrm{in}},F,E,\ \mathsf{G},\mathsf{W},\ K_f).
\]
Here $Q$ are states (structural sites / template instances), $q_{\mathrm{in}}\in Q$ is the initial state, $F\subseteq Q$ are final states, and $E\subseteq Q\times Q$ are directed edges.
Each edge $e\in E$ carries a \emph{polyhedral guard} $\mathsf{G}(e)=\{x:Ax\le d\}$ in H‑form drawn from a global guard library $\mathcal{H}$ (each normalized inequality is registered once), and a \emph{CPWL weight} $\mathsf{W}(e)\in K_f$ (node weights similarly).
A path $\pi=e_1\ldots e_T$ is feasible at $x$ if $x\in\bigcap_{t}\mathsf{G}(e_t)$, and its value is
\[
\mathsf{val}_A(\pi,x)=\mathsf{W}(q_{\mathrm{in}})(x)\ \otimes\ \Bigl(\bigotimes_{t=1}^{T}\mathsf{W}(e_t)(x)\Bigr)\ \otimes\ \mathsf{W}(q_T)(x).
\]
The SWT output is the semiring sum over feasible paths,
\[
A(x)=\bigoplus_{\pi:\ q_{\mathrm{in}}\to F,\ x\models\pi}\ \mathsf{val}_A(\pi,x).
\]

\begin{theorem}[SWT‑1: equivalent compilation; proof in App.~\ref{app:swt:equiv}]
For any network $F$ built from our component set and any input $x$ in the domain,
the SWT $A_F$ obtained by the rules below satisfies $A_F(x)=F(x)$.
\end{theorem}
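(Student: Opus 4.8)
The plan is to prove $A_F\equiv F$ by structural induction on the computation DAG, processed in topological order, with the induction hypothesis that for every node $v$ with associated node-function $f_v\colon\mathbb{R}^n\to\mathbb{R}^{d_v}$ (well-defined and CPWL by AF-1) the sub-transducer $A_v$ assembled so far satisfies $A_v(x)=f_v(x)$ for every admissible $x$, coordinatewise. The base case is the input node, compiled to a single state whose weight is the identity / coordinate-projection CPWL map, so $A_{\mathrm{in}}(x)=x$ trivially. In the inductive step I would treat the affine family (FC/conv/mean-pool/inference-time BN/residual) and the gate family (ReLU/LReLU/PReLU/Abs/Max) separately, and in each case invoke the pointwise homomorphism AF-2: since $h_x\colon f\mapsto f(x)$ preserves $\oplus=\max$ and $\otimes=+$, it suffices to check that the gadget's path-sum formula, read as a $\max/+$ expression over the incoming node-functions, equals the module's defining formula; evaluating at $x$ and applying the induction hypothesis to the predecessors then closes the step, while AF-5 guarantees the composite stays CPWL so the hypothesis is reinstated at $v$.

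For an affine module the gadget is a $\Scale/\Sum/\Bias$ block with a single feasible continuation whose accumulated weight realizes $x\mapsto W f_{\mathrm{pred}}(x)+b$ as one CPWL edge weight (legitimate because affine maps are CPWL, and the $\Scale$ constants may be negative or non-integer --- this is precisely where the network is \emph{not} a $\max/+$ polynomial in its inputs and the affine piece must be carried as a primitive weight), so the path-sum is the required affine image with no spurious $\oplus$ introduced. For the gates I would exhibit the standard parallel gadgets: $\mathrm{ReLU}(z)=z\oplus\mathbf 1$ as two edges guarded by $\{z\ge 0\}$ (weight $z$) and $\{z\le 0\}$ (weight $\mathbf 1\equiv 0$); LReLU and PReLU as the same sign-of-$z$ split with second-branch weight $\alpha z$; $\mathrm{Abs}(z)$ as the split with branches $z$ and $-z$; and $\mathrm{Max}(z_1,\dots,z_k)$ as $k$ edges, the $i$-th guarded by $\{z_i\ge z_j\ \forall j\}$ with weight $z_i$. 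On the interior of each cell exactly the "winning" branches are feasible and share the gate's value, while on a shared boundary face several branches are simultaneously feasible but agree there (continuity, AF-3), so the $\oplus$ over the redundant feasible paths is harmless. To keep every $\mathsf{G}(e)$ polyhedral one registers in $\mathcal H$ the threshold hyperplanes of $z$ on each already-affine cell of the predecessor complex and conjoins them; this terminates since the DAG and every gadget are finite (the static object may be exponentially large, but that is the explosion the later JIT semantics addresses, not a soundness issue here).

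I would then assemble the global transducer by plumbing the final states of each predecessor gadget into the initial state of its successor (a product/concatenation on states, union of guard libraries), and observe that feasibility of a full path $q_{\mathrm{in}}\to F$ at $x$ is exactly the conjunction of the per-gadget guard choices, i.e.\ membership of $x$ in one cell $C$ of the induced polyhedral complex; on that cell the accumulated $\otimes$-weight is the corresponding affine piece $F|_C$, whence $A_F(x)=\bigoplus_{C\ni x}F|_C(x)=F(x)$, the last equality again by continuity of the CPWL function $F$. The main obstacle I anticipate is not any individual gate but the composition bookkeeping for the affine layers inside the non-idempotent $(\max,+)$ semiring: one must show that stitching an affine gadget behind a disjunctive (Max/ReLU) gadget neither duplicates nor drops weight --- concretely, that the accumulated $\otimes$-weight along each feasible path telescopes to exactly $W f_{\mathrm{pred}}(x)+b$ and that every cell of the predecessor complex lifts to exactly one feasible continuation through the affine block, since a mismatch would over-count (extra $\max$ branches) or leave the function under-specified on some cell. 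I would address this by strengthening the induction hypothesis to assert a bijection between the feasible $q_{\mathrm{in}}\to v$ paths at $x$ and the cells of $f_v$'s complex containing $x$, with matching affine weights; the affine and gate steps then preserve this bijection, and SWT-1 follows.
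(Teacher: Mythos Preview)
Your proposal is correct and takes essentially the same approach as the paper: structural induction on the DAG in topological order, per-module verification using the pointwise homomorphism AF--2, with closed guards ensuring ties are covered and the $\oplus$ over redundant feasible paths is harmless by continuity. The paper's proof additionally singles out a worked miniature for parallel addition (residual merges with fan-in $>1$) to make explicit that the two branch contributions combine via $\otimes$ on a single path through a dedicated sum node rather than via $\oplus$ across paths---this is the one case your ``single feasible continuation'' description of the affine family glosses over, though your anticipated strengthened invariant would handle it once spelled out.
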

\noindent\emph{One‑line essence.} Affine branches compose by $\otimes(=+)$ along paths; gate choices and pooling reduce to $\oplus(=\max)$ over guarded winners, so path‑wise accumulation and path‑set maximization match the numerical forward semantics pointwise.

\paragraph{Compilation rules at a glance (details in App.~\ref{app:swt:rules}).}
We compile layer‑by‑layer while reusing the shared guard library $\mathcal{H}$ and never materializing global region partitions:
\begin{enumerate}[leftmargin=1.4em, itemsep=0.25em]
\item \textbf{Affine / residual / inference‑time BN.}
For a fragment $(C,w,b)$ (guard $C$, affine $x\mapsto w^\top x+b$), an affine map $y=Wx+b_0$ yields $(C,\ \tilde w=Ww,\ \tilde b=Wb+b_0)$.
Inference‑time BN is merged into the affine.
Residual sums become $\otimes$‑sums on the common guard $C$.
\item \textbf{ReLU / Leaky‑ReLU / PReLU / Abs (pointwise gates).}
Introduce two guarded branches on $z(x)=w^\top x+b$:
\[
\text{(pos)}\; C^{+}=C\cap\{z\ge 0\}\ \leadsto\ (w,b),\qquad
\text{(neg)}\; C^{-}=C\cap\{z\le 0\}\ \leadsto\ 
\begin{cases}
(0,0)&\text{ReLU}\\
(\alpha w,\alpha b)&\text{LReLU/PReLU}\\
(-w,-b)&\text{Abs}
\end{cases}
\]
using \emph{closed} guards ($\ge,\le$) to keep ties on both sides; $\oplus$ resolves equality.
\item \textbf{Pointwise Max / MaxPool.}
Given candidates $\{(C_i,w_i,b_i)\}_{i=1}^{k}$ with common support $C_\cap=\bigcap_i C_i$, add comparators $\{(w_i{-}w_j)^\top x+(b_i{-}b_j)\ge 0\}$ to $\mathcal{H}$ and select the winner region
\[
C_i^\star = C_\cap \cap \bigcap_{j\neq i}\{(w_i-w_j)^\top x+(b_i-b_j)\ge 0\},
\quad\text{carrying }(w_i,b_i).
\]
\item \textbf{Convolution / mean‑pool.}
Instantiate a per‑location \emph{template} with \emph{sliding‑window guards} encoding stride/padding; share kernel parameters by aliasing to avoid numeric duplication. Mean‑pool is affine.
\item \textbf{GNN (fixed graph).}
Sum/mean aggregation compiles to sparse affine maps (template per node); max aggregation uses comparator guards; node‑MLPs follow the same affine+gate rules.
\item \textbf{Normalization.}
All guard rows are $\ell_2$‑normalized; we adopt closed‑set convention throughout.
\end{enumerate}

\paragraph{Size and acyclicity (informal SWT‑2; constants in App.~\ref{app:swt:bounds}).}
Because the network DAG compiles forward without back‑edges and guards are stored by \emph{indices} into the shared library (we do not use regions as states), the SWT is acyclic; the number of states grows linearly with the number of affine sublayers and convolutional template instances, while the number of edges/guards is linear in these plus the number of \emph{introduced comparator hyperplanes}.
This avoids per‑path duplication and isolates the exponential blow‑up solely to the (optional) global enumeration of winner comparisons—which we \emph{do not} perform in the static object.

\begin{figure}[t]
  \centering 
  \makebox[\linewidth][c]{%
    \includegraphics[width=1.1\linewidth]{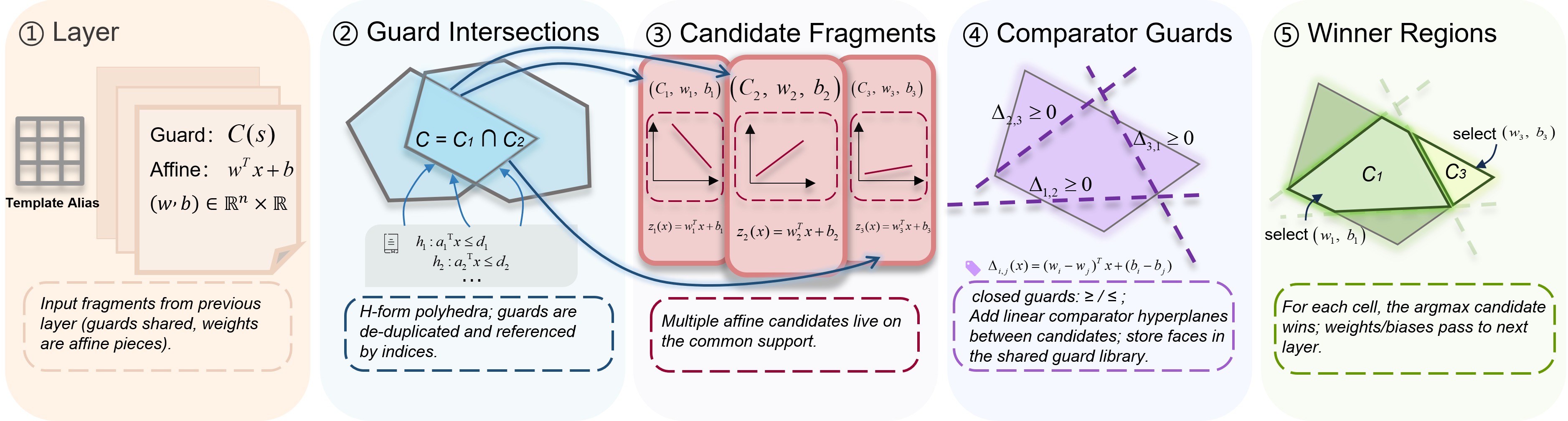}%
  }
  \vspace{-0.6em}
  \caption{\textbf{Static SWT compilation pipeline.} Layer $\rightarrow$ guard intersections $\rightarrow$ candidate fragments $\rightarrow$ comparator guards $\rightarrow$ winner regions.}
  \label{fig:static-swt}
  \vspace{-0.8em}
\end{figure}


\section{Dynamic Compilation (JIT–SWT): Objects, Refiners, and Guarantees}

\paragraph{Objects and invariants.}
We endow the static SWT (§2) with a \emph{JIT} (on‑demand) execution semantics.
Let $\mathcal{H}=\{h_\ell(x): a_\ell^\top x\le d_\ell\}_{\ell=1}^{M}$ be the global
\emph{guard library} where each normalized halfspace is registered once.
A \emph{GuardSet} is an ordered finite index set $S\subseteq\{1,\dots,M\}$
representing the polyhedron
$C(S)=\bigcap_{\ell\in S}\{x:\ a_\ell^\top x\le d_\ell\}$.
Lazy expressions (\emph{scalar} case) are e‑graph nodes with syntax
\[
\texttt{Expr}\ ::=\ \texttt{Affine}(w,b)\mid
\texttt{Sum}(\mathcal E)\mid
\texttt{Max}(\mathcal E)\mid
\texttt{Scale}(c,E)\mid
\texttt{Bias}(b,E),
\]
shared by structure hashing/union‑find.
Vector outputs use a componentwise family of scalar \texttt{Expr}s.

\medskip
\noindent\textbf{Invariants (I–III).}
\emph{I: unique guards.} Each inequality in $\mathcal{H}$ appears once; edges store
\emph{indices} only (no explicit path intersections).
\emph{II: on‑demand refinement.} New threshold/comparator hyperplanes are inserted
\emph{only} on the \emph{visited} GuardSet $S$, splitting $S$ locally into
$S\cup\{\ell\}$ and $S\cup\{\bar\ell\}$; no global pre‑partitioning.
\emph{III: anytime envelopes.} At all times we maintain
\[
\underline A,\overline A:\ \ \forall x\in\mathcal D,\qquad
\underline A(x)\ \le\ A(x)\ \le\ \overline A(x),
\]
where $A$ is the static SWT semantics (truth semantics). Envelopes are
propagated compositionally from local \emph{LB/UB} on \texttt{Expr}s (below).

\paragraph{LB/UB oracle (structure rules).}
For any GuardSet $S$ and expression $E$ we compute interval
$[\LB(E,S),\,\UB(E,S)]$ sound for $C(S)$: \footnote{Upgrades to exact LP/SOCP bounds are permitted anytime; proofs of soundness/monotonicity are in App.~\ref{app:jit}}
\[
\begin{aligned}
&\LB(\texttt{Affine}(w,b),S)=\min_{x\in C(S)} w^\top x+b,\quad
\UB(\texttt{Affine}(w,b),S)=\max_{x\in C(S)} w^\top x+b \ \ \text{(LP)};\\
&\LB(\texttt{Sum}(\{E_k\}),S)=\sum_k \LB(E_k,S),\quad
\UB(\texttt{Sum}(\{E_k\}),S)=\sum_k \UB(E_k,S);\\
&\LB(\texttt{Scale}(c,E),S)=\begin{cases}
c\,\LB(E,S), & c\ge 0\\
c\,\UB(E,S), & c< 0
\end{cases}\!\!,\quad
\UB(\texttt{Scale}(c,E),S)=\begin{cases}
c\,\UB(E,S), & c\ge 0\\
c\,\LB(E,S), & c< 0
\end{cases}\!\!;\\
&\LB(\texttt{Bias}(b,E),S)=\LB(E,S)+b,\quad
\UB(\texttt{Bias}(b,E),S)=\UB(E,S)+b;\\
&\LB(\texttt{Max}(\{E_k\}),S)=\max_k \LB(E_k,S),\quad
\UB(\texttt{Max}(\{E_k\}),S)=\max_k \UB(E_k,S).
\end{aligned}
\]

\paragraph{Atomic refiners (3–5 lines each; full details in App.~\ref{app:jit:refiners}).}
\emph{(i) \texttt{ENSURE\_SIGN}$(z{=}\texttt{Expr},S)$:}
if $\UB(z,S)\le 0$ commit the negative branch; if $\LB(z,S)\ge 0$
commit the positive; otherwise insert the threshold guard $\{z\ge 0\}$ into $\mathcal H$
(if absent) and split $S$ into $S^\pm$.\\
\emph{(ii) \texttt{ENSURE\_WINNER}$(\{E_i\},S)$:}
compute $[\mathrm{LB/UB}(E_i,S)]$; prune dominated candidates $\{i:\ \max_j \LB(E_j,S)\ge \UB(E_i,S)\}$; if a unique $i^\star$ remains with $\LB(E_{i^\star},S)\ge \max_{j\ne i^\star}\UB(E_j,S)$,
commit $i^\star$; else add one comparator $\{E_p\ge E_q\}$ chosen by a gap heuristic
and split $S$.\\
\emph{(iii) \texttt{ENSURE\_COMMON\_REFINE}$(S;S_1,S_2)$:}
incrementally insert faces from $(S_1\cup S_2)\setminus S$ until the relevant
sub‑expressions are comparable/addable on each child.

\begin{algorithm}[t]
\caption{\textbf{JIT–SWT Branch-and-Bound (B\&B) driver (anytime)}}
\label{alg:jitbb}
\begin{algorithmic}[1]
\State \textbf{Input:} SWT $A$, domain $D$ as $S_0$, objective $g$ (e.g., spec margin or difference), budgets $B_{\mathrm{split}},B_{\mathrm{guard}}$
\State $Q\gets\{S_0\}$; \quad \textsc{Cert}$\gets\varnothing$ \Comment{queue of active GuardSets; certificate store}
\While{$Q\neq\varnothing$ and budgets not exhausted}
  \State $S\gets\arg\max_{T\in Q}\ \UB(g,T)-\LB(g,T)$ \Comment{max‑gap policy}
  \If{$\LB(g,S)\ge 0$} \textsc{Cert}$\gets$\textsc{Cert}$\cup\{S\}$; $Q\gets Q\setminus\{S\}$; \textbf{continue} \EndIf
  \If{$\UB(g,S)< 0$} \textbf{return} \textsc{Counterexample} from $\min_{x\in C(S)} g(x)$ \EndIf
  \State try \texttt{ENSURE\_WINNER} / \texttt{ENSURE\_SIGN} on $S$; else use \texttt{ENSURE\_COMMON\_REFINE}; push children to $Q$
\EndWhile
\If{$Q=\varnothing$} \textbf{return} \textsc{Proof} with \textsc{Cert} \Else \textbf{return} \textsc{Unknown} with $S^\star=\arg\max_T \UB-\LB$ \EndIf
\end{algorithmic}
\end{algorithm}

\paragraph{Anytime envelopes and monotonicity.}
\begin{lemma}[Monotone envelope updates; App.~\ref{app:jit:lbub}]\label{lem:mono}
Adding guards (shrinking $C(S)$) or tightening local bounds makes
$\underline A$ \emph{nondecreasing} and $\overline A$ \emph{nonincreasing} pointwise;
in particular $\underline A\le A\le \overline A$ is preserved.
\end{lemma}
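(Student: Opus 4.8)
The plan is to reduce the global claim to a purely local, structural statement about the LB/UB oracle and then re-assemble. Two monotonicity facts are at play: \emph{refinement monotonicity} — passing from a GuardSet $S$ to any $S'$ with $C(S')\subseteq C(S)$, which is exactly what inserting a face does, can only raise $\LB(E,\cdot)$ and lower $\UB(E,\cdot)$ for every \texttt{Expr} node $E$; and \emph{oracle monotonicity} — replacing the interval $[\LB(E,S),\UB(E,S)]$ by any sound sub-interval (which is what "tightening local bounds" means) has the same effect. Granting these for every node, the envelopes $\underline A,\overline A$ — which are just the root node's LB/UB read off over the current leaf cover of $\mathcal D$ — inherit the monotonicity, and soundness $\underline A\le A\le\overline A$ is carried along as an invariant.

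First I would handle the base case: for \texttt{Affine}, $\min$/$\max$ of a fixed affine form over nested polyhedra is monotone in the stated direction, any sound relaxation whose output interval shrinks keeps this, and $\LB\le w^\top x+b\le\UB$ on $C(S)$ is immediate. Then I would induct over the shared e-graph DAG in topological order, assuming each child has monotone, sound bounds: \texttt{Sum} and \texttt{Bias} preserve both monotonicities and soundness termwise; \texttt{Max} is a pointwise $\max$ of the children's bounds, and $\max$ of nondecreasing (resp.\ nonincreasing) functions is again nondecreasing (resp.\ nonincreasing), with soundness since $\max_k E_k$ lies between $\max_k\LB(E_k,S)$ and $\max_k\UB(E_k,S)$; \texttt{Scale}$(c,E)$ is routine for $c\ge 0$, while for $c<0$ the rule swaps $\LB$ and $\UB$, so $\LB(\texttt{Scale}(c,E),\cdot)=c\,\UB(E,\cdot)$ is nondecreasing precisely because a negative multiple of a nonincreasing function is nondecreasing (symmetrically for $\UB$). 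This closes the induction, so every node — hence the root — has monotone, sound bounds in both senses.

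Next I would lift this to the global envelopes. A \emph{split} of a visited $S$ into $S^{+},S^{-}$ inserts one face, so $C(S^{\pm})\subseteq C(S)$; the local envelopes on the two children are no looser than the one on $S$, and since every $x\in C(S)$ lies in $C(S^{+})\cup C(S^{-})$ (closed guards, so boundary points lie in both, consistently), the assembled envelope at $x$ can only tighten. A \emph{commit} by \texttt{ENSURE\_SIGN}/\texttt{ENSURE\_WINNER} fires only when the current bounds \emph{certify} the choice on all of $C(S)$ (e.g.\ $\UB(z,S)\le 0$ pins the negative branch, or $\LB(E_{i^\star},S)\ge\max_{j\ne i^\star}\UB(E_j,S)$ pins winner $i^\star$); on that cell the committed sub-expression then equals the gate/\texttt{Max} node pointwise, and replacing a \texttt{Max} by one of its arguments under this equality cannot enlarge the interval — its LB is $\ge$ that argument's LB, and dropping the other arguments' (possibly larger) UBs only lowers the \texttt{Max}'s UB. Hence every JIT step leaves $\underline A$ nondecreasing and $\overline A$ nonincreasing pointwise.

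For soundness: $\underline A\le A\le\overline A$ holds at initialization (honest LP/interval bounds on $C(S_0)=\mathcal D$) and is preserved by each step — a split keeps the oracle sound on each subset child, and a commit is taken only when it is exactly correct on the cell, so the collapsed expression still equals the fixed truth semantics $A$ of \S\ref{2} there within the maintained interval; since $A$ itself never changes, the monotone motion of the envelopes toward $A$ keeps the sandwich intact. The hard part will be the bookkeeping in the commit case: making precise how a commit rewrites the \emph{shared} e-graph locally to one cell without perturbing the bounds recorded for sibling cells that reuse the same subterm, and checking that the "replace \texttt{Max} by a certified-dominant argument" rewrite is monotone in the presence of structure sharing; everything else is composition of manifestly monotone operations.
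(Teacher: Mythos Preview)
Your proposal is correct and follows essentially the same route as the paper: structural induction over the \texttt{Expr} constructors (base case \texttt{Affine}, inductive cases \texttt{Sum}/\texttt{Bias}/\texttt{Max}/\texttt{Scale} with the sign-swap for negative $c$) to get per-node monotonicity and soundness, then aggregation over the leaf cover to conclude for $\underline A,\overline A$. Your discussion of the \emph{commit} case and the shared e-graph bookkeeping goes beyond what this particular lemma requires---the paper defers that to DYN--3---but the reasoning you give there is also sound and does no harm.
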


\begin{theorem}[DYN‑1: sound anytime envelopes; App.~\ref{app:jit:lbub}]\label{thm:dyn1}
At any time during JIT execution, $\underline A(x)\le A(x)\le \overline A(x)$ holds over $\mathcal D$.
\end{theorem}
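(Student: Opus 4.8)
The plan is to prove Theorem~\ref{thm:dyn1} as an \emph{invariant‑preservation} statement: invariant III holds at initialization, and every elementary JIT transition --- insert‑and‑split a GuardSet, commit a branch of a \texttt{Max}, or upgrade a local LB/UB --- preserves it; since a run is a finite sequence of such transitions, induction on the transition count yields soundness ``at any time.'' The preservation step is essentially Lemma~\ref{lem:mono}; what remains is (a) a structural‑induction lemma that the LB/UB oracle is sound on every e‑graph node, (b) a ``commit correctness'' observation that branch commitment and candidate pruning are value‑preserving on the visited cell, and (c) a coverage observation that the active GuardSets always tile $\mathcal D$.

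For (a) I would show by induction on \texttt{Expr} syntax that $\LB(E,S)\le E(x)\le\UB(E,S)$ for all $x\in C(S)$. The base case \texttt{Affine}$(w,b)$ is the LP $\min/\max$ of $w^\top x+b$ over $C(S)$, hence attained and sound. \texttt{Sum} sums the inductive inequalities termwise; \texttt{Bias} shifts by $b$; \texttt{Scale}$(c,E)$ splits on $\sign(c)$ using monotonicity of $t\mapsto ct$. For \texttt{Max}$(\{E_k\})$: picking $k^\star\in\arg\max_k\LB(E_k,S)$ gives $\LB(\texttt{Max},S)=\LB(E_{k^\star},S)\le E_{k^\star}(x)\le\max_k E_k(x)$, while $\max_k E_k(x)=E_{\ell}(x)\le\UB(E_\ell,S)\le\max_k\UB(E_k,S)$ for a pointwise maximizer $\ell$. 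By AF‑2 (pointwise homomorphism) and SWT‑1 the root expression evaluates at every $x$ to the static semantics $A(x)$, so root bounds bound $A$; and the compositional propagation of $(\underline A,\overline A)$ from node‑level $(\LB,\UB)$ is exactly the oracle map of §2, so node‑level soundness transfers to the root --- this is the routine part.

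For (b), (c) and the inductive step I would examine the three refiners (\texttt{ENSURE\_SIGN}, \texttt{ENSURE\_WINNER}, and \texttt{ENSURE\_COMMON\_REFINE}, the last a sequence of the splits below). \texttt{ENSURE\_SIGN} commits a sign only when $\UB(z,S)\le 0$ or $\LB(z,S)\ge 0$, in which case by (a) the kept branch of the gate equals the gate on \emph{all} of $C(S)$, so the affected value is unchanged; otherwise it inserts $\{z\ge 0\}$ and splits $S$ into $S^+=S\cup\{\ell\}$ and $S^-=S\cup\{\bar\ell\}$, whose closed halfspaces cover $\mathbb R^n$, so the children still tile $C(S)$ while $C(S^\pm)\subseteq C(S)$ makes every oracle value tighten; Lemma~\ref{lem:mono} then gives $\underline A$ nondecreasing and $\overline A$ nonincreasing with the bracket preserved. \texttt{ENSURE\_WINNER} is the same once we note pruning $\{i:\max_j\LB(E_j,S)\ge\UB(E_i,S)\}$ never discards a maximizer (if $i$ is pruned then $E_i(x)\le\UB(E_i,S)\le\max_j\LB(E_j,S)\le\max_j E_j(x)$ on $C(S)$), and that committing $i^\star$ when $\LB(E_{i^\star},S)\ge\max_{j\ne i^\star}\UB(E_j,S)$ is value‑preserving (then $E_{i^\star}(x)\ge\LB(E_{i^\star},S)\ge\UB(E_j,S)\ge E_j(x)$ on $C(S)$); otherwise it adds one comparator face and splits, again a covered refinement. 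Bound upgrades to exact LP/SOCP only shrink intervals around the true value, so (a) and the bracket survive. The base case is $S_0=\mathcal D$ with the SWT‑1 root Expr and the oracle's initial intervals, sound by (a); this closes the induction.

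The main obstacle I expect is (b): making ``commit'' and ``prune'' genuinely value‑preserving on the \emph{whole} cell, boundary points included, rather than merely almost everywhere. The closed‑guard convention and the non‑strict inequalities in the refiners are precisely what orient the chains above so they hold for every $x\in C(S)$; I would isolate this tie bookkeeping as a short sub‑lemma stating that $\max$ over the surviving branches equals $\max$ over all branches on $C(S)$. Everything else --- the structural induction of (a) and the tiling of (c) --- is mechanical.
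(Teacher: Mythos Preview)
Your proposal is correct and shares the key ingredient with the paper --- the structural induction that $\LB(E,S)\le E(x)\le\UB(E,S)$ on $C(S)$ (your part (a); the paper's Lemma~\ref{lem:lbub-sound}). The paper's proof of DYN--1 is more direct: it simply combines that lemma with a DAG-propagation lemma (Lemma~\ref{lem:dag-prop}) and the $\sup/\inf$ aggregation over current leaves, arguing that at \emph{any} state the structural rules yield a valid sandwich, without inducting on transitions. Your transition-induction framing is heavier but buys something real: part (b), the value-preservation of commits and pruning on all of $C(S)$, is needed for the sandwich to survive the e-graph rewrites (after a commit, $\LB/\UB$ is computed on the \emph{rewritten} local expression, so one must know $E_{\text{new}}=A$ on $C(S)$); the paper leaves this implicit in the refiners' preconditions and does not explicitly invoke it in the DYN--1 proof proper. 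Your tiling argument (c) is likewise more explicit than the paper's one-line pointer to ``aggregation over current leaves.'' In short: same core lemma, your route is longer but more self-contained; the paper's route is shorter but leans on side-conditions about commit correctness that you chose to surface and prove.
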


\paragraph{Anytime exactness and no regression.}
\begin{definition}[Local full refinement]\label{def:lfr}
A GuardSet $S$ is \emph{locally fully refined} if (i) all gate/comparator faces relevant to $S$
have been added to $\mathcal{H}$, and (ii) each output \texttt{Expr} collapses to a \emph{single affine} on $C(S)$.
\end{definition}
\begin{theorem}[DYN‑2: exactness on $C(S)$; App.~\ref{app:jit:refiners}]\label{thm:dyn2}
If $S$ is locally fully refined, then on $C(S)$ we have $\underline A=\overline A=A$.
\end{theorem}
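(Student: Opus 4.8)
I would split the claim into a statement about the \emph{static} semantics $A$ restricted to $C(S)$ and a statement about the \emph{envelope} propagation on the collapsed e-graph, and then glue the two with the soundness inequality of Theorem~\ref{thm:dyn1}. The guiding observation is that "locally fully refined" is precisely the regime in which the lazy \texttt{Expr} carrying each output coordinate degenerates to a single affine map \emph{uniformly on $C(S)$}, and in which the compositional LB/UB rules introduce no slack.

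\textbf{Step 1: the static semantics is a single affine on $C(S)$.} First I would make precise that the lazy \texttt{Expr} is a faithful symbolic carrier of $A$: the e-graph constructors \texttt{Affine}/\texttt{Sum}/\texttt{Max}/\texttt{Scale}/\texttt{Bias} realize exactly the function-semiring operations ($+$, $\max$, scalar scaling, bias) used in the §2 compilation rules, so by SWT-1 together with AF-2 (pointwise homomorphism) we have $h_x(\texttt{Expr})=A(x)=F(x)$ for every $x$. Condition~(i) of Definition~\ref{def:lfr} puts every gate/comparator face relevant to $S$ into $\mathcal H$; since $S$ is not refined further, each such face must be \emph{sign-definite on $C(S)$} — otherwise it would cut the interior of $C(S)$ and \texttt{ENSURE\_SIGN}/\texttt{ENSURE\_WINNER} would split it, contradicting condition~(ii). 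Hence on $C(S)$ every ReLU/LReLU/PReLU/Abs argument has constant sign and every \texttt{Max}/MaxPool comparator has a constant winner. A structural induction on the e-graph then collapses each \texttt{Max} node to its unique winning child and replaces each resolved gate by its committed affine branch; composing along \texttt{Sum}/\texttt{Scale}/\texttt{Bias} leaves a single \texttt{Affine}$(w_S,b_S)$ on $C(S)$ — this is exactly condition~(ii), now obtained with its semantic content. Applying $h_x$ gives $A(x)=w_S^\top x+b_S$ for all $x\in C(S)$.

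\textbf{Step 2: the envelopes coincide with that affine on $C(S)$, and gluing.} The envelopes are assembled compositionally from the §2 LB/UB rules; on a cell these rules introduce \emph{zero slack} at an \texttt{Affine} leaf (its lower and upper envelope pieces are the affine map $w^\top x+b$ itself; the interval $[\LB,\UB]$ is used only for B\&B decisions and for collapse detection), pass slack through \texttt{Sum}/\texttt{Scale}/\texttt{Bias} additively/monotonically, and introduce none at a \texttt{Max} node once a winner is committed (the node is replaced by its winning child, inheriting that child's envelopes). Running the collapse of Step~1 on the envelope propagation therefore yields $\underline A|_{C(S)}=\overline A|_{C(S)}=x\mapsto w_S^\top x+b_S$. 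Combining with Theorem~\ref{thm:dyn1} ($\underline A\le A\le\overline A$ on $\mathcal D$) and Step~1, all three functions equal $x\mapsto w_S^\top x+b_S$ on $C(S)$, which is the assertion; Lemma~\ref{lem:mono} shows this equality persists under any later refinement of sibling cells.

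\textbf{Expected main obstacle.} The routine ingredients (LP exactness for affine leaves, the semiring-homomorphism bookkeeping, additivity of interval arithmetic) are light. The delicate point is Step~1's passage from the \emph{syntactic} collapse of the lazy \texttt{Expr} to a \emph{uniform pointwise} identity on all of $C(S)$: one must argue that the commit conditions that licensed each reduction ($\UB(z,S)\le 0$, $\LB(z,S)\ge 0$, or the winner gap $\LB(E_{i^\star},S)\ge\max_{j\ne i^\star}\UB(E_j,S)$) are genuine certificates over the entire polyhedron $C(S)$ — not merely at sampled points — so that the reduced e-graph agrees with the original \emph{as a function} on $C(S)$, and hence that exactly the SWT paths compatible with the committed branches/winners are feasible over $C(S)$ and all carry the common affine value. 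This is where condition~(i) does the real work and must be invoked carefully, including the closed-guard tie convention (the $\oplus$-resolves-equality rule of §2) so that boundary points of $C(S)$, where several branches or winners are simultaneously active, are also covered.
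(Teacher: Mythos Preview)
Your proposal is correct and follows essentially the same route as the paper's own (very terse) proof: local full refinement collapses the output \texttt{Expr} to a single affine on $C(S)$, and for a single affine the compositional LB/UB propagation is tight, so $\underline A=\overline A=A$ there. You supply considerably more scaffolding than the paper does---the structural induction on the e-graph, the explicit use of SWT--1/AF--2 to identify the lazy \texttt{Expr} with $A$, the appeal to DYN--1 as a sandwich, and the closed-guard tie discussion---but none of this diverges from the paper's argument, it only makes explicit what the appendix proof compresses into two sentences.
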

\begin{theorem}[DYN‑3: progress never regresses; App.~\ref{app:jit:refiners}]\label{thm:dyn3}
Refining a GuardSet $S$ either keeps it as a tighter leaf (with tighter LB/UB) or splits it into children; previously exact regions remain exact and are never invalidated.
\end{theorem}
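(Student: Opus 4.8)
The plan is to reduce the statement to two facts sitting on top of Lemma~\ref{lem:mono} and Theorem~\ref{thm:dyn2}. First, that the global envelopes $(\underline A,\overline A)$ are maintained as piecewise data over the current set of \emph{active leaves} (one GuardSet per leaf, with its committed sub‑expressions and its interval $[\LB,\UB]$ of the output \texttt{Expr}), and that every refiner acts on that set only by \emph{tightening a leaf in place} or by \emph{partitioning a leaf into sub‑cells whose union is exactly that leaf} — never by enlarging a cell, deleting a halfspace from $\mathcal H$, or un‑committing a branch. Second, that the envelope attached to a leaf depends only on that leaf's GuardSet and on the (monotone, only‑growing) library $\mathcal H$, so work done on one leaf cannot perturb another.

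For the first fact I would walk through the three atomic refiners of §3. When \texttt{ENSURE\_SIGN} or \texttt{ENSURE\_WINNER} \emph{commits} a branch, it replaces a \texttt{Max} (or gate) e‑graph node on the \emph{visited} GuardSet $S$ by one of its operands; by Invariant~II this is done only on $C(S)$, and there the committed operand equals the \texttt{Max} value pointwise — that is exactly the condition $\UB(E_i,S)\le\max_j\LB(E_j,S)$ for the losers, resp. the sign test — so the replacement is a semantics‑preserving narrowing, and by the \texttt{Max}/\texttt{Scale}/\texttt{Sum}/\texttt{Bias} rules of the oracle it cannot widen any interval. When instead they split, they register one halfspace $h_\ell$ in $\mathcal H$ (once, by Invariant~I) and replace $S$ by $S\cup\{\ell\}$ and $S\cup\{\bar\ell\}$; under the closed‑set convention these cells tile $C(S)$. \texttt{ENSURE\_COMMON\_REFINE} is a finite chain of such inserts. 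In every case: a leaf that merely stays gets an improved (or unchanged) LP/SOCP interval, hence $[\LB,\UB]$ can only shrink; a leaf that splits yields children with $C(S\cup\{\ell\})\subseteq C(S)$, so optimizing the same \texttt{Expr} over the smaller set gives $\LB(\cdot,S\cup\{\ell\})\ge\LB(\cdot,S)$ and $\UB(\cdot,S\cup\{\ell\})\le\UB(\cdot,S)$. Assembling over the partition, $\underline A$ is pointwise nondecreasing and $\overline A$ pointwise nonincreasing — this is Lemma~\ref{lem:mono} — and soundness $\underline A\le A\le\overline A$ is preserved (Theorem~\ref{thm:dyn1}).

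For ``previously exact regions remain exact,'' suppose a leaf $S'$ is locally fully refined (Definition~\ref{def:lfr}) at some step, so by Theorem~\ref{thm:dyn2} every output \texttt{Expr} is a single affine $a^\top x+c$ on $C(S')$ and $\underline A=A=\overline A$ there. Any later refinement either touches a different leaf $S''\neq S'$ — then, by the locality of the envelope and the fact that $\mathcal H$ only grows, the GuardSet, committed affine, and bound interval of $S'$ are untouched, so its envelope is unchanged and still exact — or it splits $S'$ into $S'\cup\{\ell\}$ and $S'\cup\{\bar\ell\}$. Each child cell lies inside $C(S')$, on which every output \texttt{Expr} already is $a^\top x+c$; hence no gate or comparator is undecided on the child (intersecting with one more halfspace cannot re‑activate a gate whose argument has constant sign on the larger cell, nor change which \texttt{Max} operand dominates), so each child is itself locally fully refined and exact by Theorem~\ref{thm:dyn2}, and the children partition $C(S')$. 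Because no operation deletes a guard or reverses a commit, this is irreversible: an exact region cannot be re‑opened. Combining the cases gives the claim.

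I expect the main obstacle to be pinning down the \emph{locality} of the envelope together with the ``relevant faces'' clause of Definition~\ref{def:lfr}: one must argue that registering a halfspace in $\mathcal H$ on behalf of some other leaf does not silently make a new face ``relevant'' to an already‑collapsed cell, and that ``commit'' is the only way the e‑graph changes on a cell and is always a sound narrowing there. Once those are nailed down from Invariants~I–III and the \texttt{Max}/\texttt{Sum}/\texttt{Scale}/\texttt{Bias} monotonicity of the oracle, the remainder is just the domain‑monotonicity of LP min/max already exploited in Lemma~\ref{lem:mono}.
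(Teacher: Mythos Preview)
Your plan is correct and follows essentially the same route as the paper: a two‑case analysis (commit tightens a leaf in place; split replaces a leaf by children contained in it) combined with domain‑monotonicity of the LB/UB oracle (the paper's Lemma~\ref{lem:mono-leaf}/\ref{lem:mono-tight}, your Lemma~\ref{lem:mono}) and locality of per‑leaf data. Your treatment is in fact slightly more thorough than the paper's, which simply asserts ``exact leaves are disjoint from $S$ and unaffected''; you additionally handle the case where an already‑exact leaf is itself split and argue the children remain locally fully refined, which is a welcome elaboration.
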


\paragraph{Budgeted complexity upper bounds.}
\begin{theorem}[DYN‑4; App.~\ref{app:jit:cost}]\label{thm:dyn4}
Let $B_{\textsf{split}}$ be the number of local splits and $B_{\textsf{guard}}$ the number of newly inserted hyperplanes (beyond the initial library size $|\mathcal{H}_0|$).
Then: (i) the number of active GuardSets is $\le 1+B_{\textsf{split}}$; (ii) $|\mathcal{H}|\le |\mathcal{H}_0|+B_{\textsf{guard}}$; (iii) the SWT node/edge count grows $O(N_{\mathrm{lin}}+T_{\mathrm{conv}}+B_{\textsf{guard}})$; (iv) the number of LP/SOCP calls is $O(B_{\textsf{split}}+B_{\textsf{guard}}+|Q|)$ under per‑step constant‑candidate policies.
\end{theorem}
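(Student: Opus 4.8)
I would prove the four clauses independently, each by a short bookkeeping argument over the finite sequence of JIT operations, using only Invariants~I–II, the refiner specifications, the LB/UB structure rules, and the static size estimate (informal SWT‑2). \emph{Clause (i).} Induct on the number of operations executed so far: the active collection starts as $\{S_0\}$ (size $1=1+0$), and by Invariant~II the only operation that changes it is a \emph{local split}, which deletes one active GuardSet $S$ and inserts its two children $S\cup\{\ell\}$, $S\cup\{\bar\ell\}$—net change $+1$—with every split (whether fired by \texttt{ENSURE\_SIGN}, \texttt{ENSURE\_WINNER}, or a face insertion inside \texttt{ENSURE\_COMMON\_REFINE}) counted in $B_{\textsf{split}}$. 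Committing a branch (the $\UB(z,S)\le 0$ or $\LB(z,S)\ge 0$ cases of \texttt{ENSURE\_SIGN}; the unique‑winner case of \texttt{ENSURE\_WINNER}) and dominance pruning create no GuardSets, and in \Algref{alg:jitbb} moving a GuardSet to \textsc{Cert} or returning a counterexample only removes GuardSets; hence the active count is exactly $1+B_{\textsf{split}}$. \emph{Clause (ii)} is then immediate from Invariant~I and the definition of $B_{\textsf{guard}}$: each hyperplane added to $\mathcal H$ is a threshold $\{z\ge 0\}$, a comparator $\{E_p\ge E_q\}$, or a common‑refinement face, each normalized inequality is registered at most once (the ``if absent'' test), so $|\mathcal H|\le|\mathcal H_0|+B_{\textsf{guard}}$.

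\emph{Clause (iii).} I start from the static estimate (informal SWT‑2): the compiled carrier has $O(N_{\mathrm{lin}}+T_{\mathrm{conv}})$ states and $O(N_{\mathrm{lin}}+T_{\mathrm{conv}})$ edges (plus the initial domain description $\mathcal H_0$), with structure hashing / union‑find materializing each distinct subexpression once and kernel aliasing avoiding numeric duplication. Then I show that the JIT dynamics touch the shared carrier only through guard insertions: committing a branch on a child $S$ is recorded as a \emph{local} annotation of $S$ (the collapse of a \texttt{Max} to a single \texttt{Affine} on $C(S)$), not a mutation of the e‑graph, so it adds nothing; inserting one threshold or comparator hyperplane adds $O(1)$ new nodes (at most the affine expression being thresholded, or the pair $E_p,E_q$—possibly already present—plus the two guarded branches and their parent \texttt{Max}) and $O(1)$ new edges. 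Summing over the $\le B_{\textsf{guard}}$ insertions and absorbing $|\mathcal H_0|$ into the base term yields $O(N_{\mathrm{lin}}+T_{\mathrm{conv}}+B_{\textsf{guard}})$.

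\emph{Clause (iv).} The LB/UB structure rules do the work: for a composite \texttt{Expr} the oracle only combines children's bounds arithmetically, so with $[\LB,\UB]$ memoized on (node, GuardSet) pairs the only \emph{fresh} LP/SOCP solves occur at \texttt{Affine} leaves on GuardSets where they were not previously computed. Under the per‑step constant‑candidate policy each refiner call touches only $O(1)$ not‑yet‑cached (expression, GuardSet) pairs—the single threshold $z$, or the $O(1)$ surviving candidates $E_i$, plus the objective $g$—hence $O(1)$ fresh solves; over all $\le B_{\textsf{split}}$ splits and $\le B_{\textsf{guard}}$ insertions this contributes $O(B_{\textsf{split}}+B_{\textsf{guard}})$. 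Each iteration of \Algref{alg:jitbb} dequeues one GuardSet and performs $O(1)$ objective evaluations (the sign tests on $\LB(g,S)$ and $\UB(g,S)$, and the gap for newly pushed children); by clause (i) the loop runs $O(1+B_{\textsf{split}})$ iterations, which we identify with the $|Q|$ term. Adding the three sources gives $O(B_{\textsf{split}}+B_{\textsf{guard}}+|Q|)$.

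\emph{Main obstacle.} Clauses (i)–(ii) are pure counting and (iii) is a routine incremental size bound; the delicate point is the accounting in (iv). A literal reading of ``LP/SOCP call'' would blow up if evaluating $\LB(g,S)$ recursed through the entire shared DAG, re‑solving an LP at every affine leaf for every GuardSet, or if a \texttt{Max}/MaxPool gate with many candidates were refined in a single step. The resolution—and the reason the hypothesis names ``per‑step constant‑candidate policies'' and permits bound upgrades ``anytime'' rather than ``always''—is (a) to memoize $[\LB,\UB]$ per (node, GuardSet) so each affine‑leaf LP is solved at most once per GuardSet, and (b) to observe that a split or insertion makes only $O(1)$ new (node, GuardSet) pairs actually \emph{needed} by the current step, the rest being inherited from the parent GuardSet or lazily deferred. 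I would state this memoization/laziness convention explicitly at the start of the proof; with it in place, all four clauses reduce to the short arguments above, and the constants in App.~\ref{app:swt:bounds} / App.~\ref{app:jit:cost} can be filled in routinely.
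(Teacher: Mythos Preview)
Your proposal is correct and follows essentially the same bookkeeping strategy as the paper: each clause is discharged by a short counting argument (split replaces one leaf by two; only new faces grow $\mathcal H$; each new guard adds $O(1)$ nodes/edges to the shared DAG; per-step $O(1)$ solver calls summed over iterations). You are in fact more careful than the paper on clause~(iv)---the paper's proof simply asserts ``per-iteration, each split/register triggers $O(1)$ LP/SOCP solves,'' whereas you explicitly spell out the memoization on (node, GuardSet) pairs and the role of the constant-candidate hypothesis, which is exactly the mechanism the paper relies on (cf.\ the caching discussion in App.~\ref{app:micro}).
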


\paragraph{Dominance pruning.}
\begin{proposition}[DYN‑5; App.~\ref{app:jit:dyn-proofs}]\label{prop:dom}
On $S$, if $\max_{x\in C(S)}(E_\pi(x)-E_{\pi'}(x))\le 0$ (e.g., obtained by an LP), then candidate $\pi$ can never win any max/gate on $C(S)$ and may be safely pruned from $S$’s candidate set.
\end{proposition}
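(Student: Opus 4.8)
The plan is to collapse the LP certificate to one pointwise inequality and then push it through the e-graph. \textbf{Step 1 (pointwise domination).} From $\sup_{x\in C(S)}\bigl(E_\pi(x)-E_{\pi'}(x)\bigr)\le 0$ we immediately obtain $E_\pi(x)\le E_{\pi'}(x)$ for every $x\in C(S)$, where $\pi'\ne\pi$ is a candidate that \emph{remains} in the relevant candidate set; this is the only consequence of the certificate we will use. \textbf{Step 2 (the winner selection is unchanged on $C(S)$).} Consider any \texttt{Max}-node (equivalently any ReLU/Abs/Max/MaxPool winner selection from Section~\ref{2}) whose candidate list $\{E_i\}_{i\in I}$ contains both $\pi$ and $\pi'$. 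For every $x\in C(S)$,
\[
\max_{i\in I}E_i(x)\;=\;\max_{i\in I\setminus\{\pi\}}E_i(x),
\]
since $E_\pi(x)\le E_{\pi'}(x)\le\max_{i\in I\setminus\{\pi\}}E_i(x)$. Thus $\pi$ is never a strict winner on $C(S)$, and wherever it ties some other candidate $\pi'$ ties too, so its winner region $C_\pi^\star$ is redundant inside $C(S)$ — this is precisely the assertion that $\pi$ ``can never win any max/gate on $C(S)$.''

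\textbf{Step 3 (propagation through the shared carrier).} Let $A'$ be the SWT obtained by deleting $\pi$ from that one candidate set. The only node whose value function changes is that \texttt{Max}-node, and by Step~2 it is unchanged on $C(S)$; arguing by structural induction upward through the lazy e-graph — \texttt{Affine} is untouched, \texttt{Sum}/\texttt{Scale}/\texttt{Bias}/\texttt{Max} are compositional — every super-expression evaluates identically on $C(S)$, and then the path/guard semantics of Section~\ref{2} together with the pointwise homomorphism AF-2 give $A'(x)=A(x)$ for all $x\in C(S)$. Consequently the LB/UB oracle, the anytime envelopes $(\underline A,\overline A)$, and any emitted certificates — all of which are functions of the carrier restricted to the visited GuardSets — keep their soundness guarantees (Theorem~\ref{thm:dyn1}, Lemma~\ref{lem:mono}) verbatim under $A'$; and because the structural $\UB$ of the affected \texttt{Max}-node is a max over a smaller set, it can only drop, so pruning never loosens the envelope. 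Finally, stability under refinement is automatic: any GuardSet produced from $S$ by later splits has $C(S')\subseteq C(S)$, so Step~1's inequality persists there, the pruning is inherited by all descendants, and the complexity bookkeeping of DYN-4 may treat $\pi$ as permanently removed.

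\textbf{The main obstacle} I expect is bookkeeping rather than mathematics: in the structure-hashed e-graph the same \texttt{Max}-node can be reachable from several GuardSets, so ``$S$'s candidate set'' must be scoped — the pruning is attached to the pair $(S,\text{node})$ and to the sub-cells of $S$, and one has to verify that a node shared between $S$ and an incomparable $S''$ is never silently pruned on $S''$ (where the domination need not hold). Once pruning is localized to $S$ and $C(S)$ as above, Steps~1--3 give correctness; what remains is the routine check that the heuristic choice of $\pi'$ inside \texttt{ENSURE\_WINNER} actually produces such an LP certificate whenever its dominated-candidate test $\{i:\max_j\LB(E_j,S)\ge\UB(E_i,S)\}$ fires, which follows by soundness of the LB/UB oracle applied to the pair $(E_\pi,E_{\pi'})$.
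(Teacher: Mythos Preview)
Your proof is correct and the core of it (Steps~1--2) is exactly the paper's argument: the paper's proof is a single line observing that $E_\pi(x)\le E_{\pi'}(x)$ on $C(S)$ forces $\max\{E_\pi,E_{\pi'}\}=E_{\pi'}$ pointwise, so $\pi$ is never selected. Your Step~3 and the scoping discussion go well beyond what the paper actually proves---the paper does not spell out propagation through the e-graph, envelope preservation, or the per-leaf localization of pruning---so if anything you have written a more careful version of the same idea.
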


\paragraph{Decidability and tri‑valued answers.}\begin{theorem}[DYN‑6: JIT‑decidability; App.~\ref{app:jit:dyn-proofs}]\label{thm:dyn6}
For properties expressible with finitely many affine and second‑order cone constraints, if the JIT strategy is \emph{fair} (every feasible, not‑yet fully refined GuardSet is eventually selected) and only finitely many faces can be added, then the B\&B process decides the property: it returns either a \emph{proof} (certificate $\LB\ge 0$ on a finite cover), or a \emph{counterexample} (witness from $\UB<0$ on some $S$), otherwise \emph{unknown} under budget—always sound thanks to Thm.~\ref{thm:dyn1}.
\end{theorem}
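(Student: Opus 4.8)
The plan is to decompose the argument into three parts: (a) \emph{soundness} of every verdict, which is immediate from the anytime envelope guarantee; (b) \emph{termination} of the branch-and-bound loop of Algorithm~\ref{alg:jitbb} under the fairness and finite-refinement hypotheses; and (c) \emph{exhaustiveness}, i.e.\ that the verdict returned on termination is the correct one.

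For soundness, recall that by Theorem~\ref{thm:dyn1} together with Lemma~\ref{lem:mono}, the pair $(\LB(g,\cdot),\UB(g,\cdot))$ are valid bounds on the objective over each cell at all times: $\LB(g,S)\le g(x)\le\UB(g,S)$ for every $x\in C(S)$. If the loop halts with $Q=\varnothing$, every GuardSet ever dequeued was moved to \textsc{Cert} under the test $\LB(g,S)\ge 0$, and since a local split replaces $C(S)$ by $C(S\cup\{\ell\})\cup C(S\cup\{\bar\ell\})$ with \emph{closed} halfspaces (invariant II, closed-set convention), an easy induction shows the cells in \textsc{Cert} cover $D=C(S_0)$; hence $g\ge 0$ on all of $D$ and \textsc{Proof} is correct. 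If instead the loop returns a \textsc{Counterexample} extracted from $\min_{x\in C(S)}g(x)$ on some $S$ with $\UB(g,S)<0$ (or, on a fully refined cell, from an exact solve giving $\min_{x\in C(S)}g(x)<0$), the returned minimiser is a genuine violation witness. The \textsc{Unknown} branch asserts nothing. Thus no verdict can be wrong, irrespective of budget.

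For termination, let $\mathcal H_\infty\supseteq\mathcal H_0$ be the finite set of all faces the strategy can ever insert (finite by hypothesis). Every GuardSet produced by the refiners is an index set into $\mathcal H_\infty$ with a chosen side per member, and a split $S\mapsto\{S\cup\{\ell\},S\cup\{\bar\ell\}\}$ is performed only when the relevant sign/winner is genuinely undetermined on $C(S)$, so $\ell\notin S$; \texttt{ENSURE\_COMMON\_REFINE} likewise inserts only faces from $(S_1\cup S_2)\setminus S$. Hence each face of $\mathcal H_\infty$ is used at most once along any root-to-leaf chain of the refinement forest, the forest has depth $\le|\mathcal H_\infty|$ and at most $O(2^{|\mathcal H_\infty|})$ nodes, and each \texttt{while} iteration either (i) permanently removes a leaf to \textsc{Cert}, (ii) returns a counterexample, or (iii) replaces a leaf by two strictly deeper children. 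A leaf stalls only if it is neither certified nor a counterexample and admits no productive refiner; but then it is \emph{locally fully refined} (Definition~\ref{def:lfr}), so by Theorem~\ref{thm:dyn2} each output \texttt{Expr} is a single affine on $C(S)$, the margin $g$ is affine- or SOC-representable there, and upgrading the oracle to an exact LP/SOCP solve (permitted anytime) gives $\LB(g,S)=\inf_{C(S)}g$, $\UB(g,S)=\sup_{C(S)}g$; so exactly one of $\LB(g,S)\ge 0$ (certify) or $\LB(g,S)<0$ (counterexample at the minimiser) fires. Fairness guarantees every active, not-yet-fully-refined cell is eventually selected and hence refined to full refinement after $\le|\mathcal H_\infty|$ splits; combined with finiteness of the forest, the loop runs finitely many iterations and exits (absent a budget cap, in which case it returns \textsc{Unknown} with the widest-gap residual cell).

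For exhaustiveness, if the loop exits with $Q=\varnothing$ no counterexample was ever returned, so part~(a) yields a correct \textsc{Proof}. Conversely, if the property fails—$g(x_0)<0$ for some $x_0\in D$—then $x_0$ lies in some leaf $C(S)$ of the final forest; that leaf is fully refined, so $\LB(g,S)\le g(x_0)<0$, it cannot have been certified, and the counterexample branch must have fired (possibly earlier at a coarser ancestor where already $\UB<0$). Hence, run to completion, the procedure returns \textsc{Proof} iff the property holds and otherwise a valid \textsc{Counterexample}; under an exhausted budget it returns \textsc{Unknown}, still sound by part~(a). I expect the main obstacle to be precisely the stalling case in part~(b): a cell on which the margin changes sign but which is otherwise fully refined must not deadlock the queue. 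This is resolved by making the LB/UB oracle exact on fully refined cells via Theorem~\ref{thm:dyn2}, so the sign test always resolves; everything else is bookkeeping on the finite refinement forest plus an appeal to fairness, with monotonicity (Lemma~\ref{lem:mono}, Theorem~\ref{thm:dyn3}) ensuring that already-certified cells are never revisited.
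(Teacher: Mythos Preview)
Your proof is correct and follows essentially the same route as the paper's: soundness from DYN--1, termination from finiteness of the face set plus fairness (so only finitely many GuardSets can ever appear and each eventually becomes locally fully refined), and resolution on fully refined leaves by exact LP/SOCP. Your treatment is considerably more detailed than the paper's terse argument; in particular, you explicitly handle the potential deadlock on a fully refined leaf where $g$ changes sign ($\LB<0\le\UB$) by invoking DYN--2 to upgrade the oracle to exact bounds and extract the minimiser as a witness---the paper only alludes to this with ``the decision reduces to finitely many LP/SOCP obligations, which terminate.''
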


\paragraph{Static–dynamic pointwise equivalence.}
\begin{definition}[Complete refinement cover]\label{def:crc}
A finite family $\mathcal S=\{S_t\}$ is a complete refinement cover of $D$ if $D\subseteq\bigcup_t C(S_t)$ and each $S_t$ is locally fully refined.
\end{definition}
\begin{theorem}[DYN‑7; App.~\ref{app:jit:dyn-proofs}]\label{thm:dyn7}
If $D$ admits a complete refinement cover $\mathcal S$, then $\forall x\in D$:
$A_{\textsf{JIT}}(x)=A_{\textsf{stat}}(x)=F(x)$.
\end{theorem}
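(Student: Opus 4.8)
The plan is to reduce DYN-7 to three facts already established: SWT-1 (the static SWT agrees with $F$ pointwise), Theorem~\ref{thm:dyn2} (on a locally fully refined GuardSet the envelopes satisfy $\underline A=\overline A=A$), and Theorem~\ref{thm:dyn3} together with Lemma~\ref{lem:mono} (refinements never regress and can only tighten the envelopes). First I would dispose of the right-hand equality: by SWT-1, $A_{\textsf{stat}}(x)=F(x)$ for every $x$ in the domain, so it remains only to prove $A_{\textsf{JIT}}(x)=A_{\textsf{stat}}(x)$ on $D$.

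Next, consider the JIT state obtained after the refinements that realize the cover $\mathcal S=\{S_t\}$. Since $\mathcal S$ is finite and each $S_t$ is locally fully refined by a finite set of gate/comparator faces (Definition~\ref{def:lfr}), only finitely many hyperplanes need to be inserted into $\mathcal H$, and a finite sequence of \texttt{ENSURE\_SIGN}/\texttt{ENSURE\_WINNER}/\texttt{ENSURE\_COMMON\_REFINE} steps starting from $S_0=D$ produces a leaf partition whose cells refine the $C(S_t)$; this is exactly the finite-refinement regime invoked in Theorem~\ref{thm:dyn6}. On such a state, take any $x\in D$. Because $D\subseteq\bigcup_t C(S_t)$, there is some $t$ with $x\in C(S_t)$; applying Theorem~\ref{thm:dyn2} to $S_t$ gives $\underline A(x)=\overline A(x)=A(x)$. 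By Lemma~\ref{lem:mono} and Theorem~\ref{thm:dyn3}, once exactness holds on $C(S_t)$ it is preserved under any further local split, so the committed value of the leaf that actually contains $x$ equals the common envelope value, i.e. $A_{\textsf{JIT}}(x)=\underline A(x)=\overline A(x)=A(x)=A_{\textsf{stat}}(x)$. Combined with the first step this yields $A_{\textsf{JIT}}(x)=A_{\textsf{stat}}(x)=F(x)$, as claimed.

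The one point that needs care — and where I expect the real work to sit — is \emph{well-definedness of $A_{\textsf{JIT}}$ across the cover and on shared cell boundaries}. Distinct members $S_t,S_{t'}$ can overlap on a lower-dimensional face (we use the closed-set convention throughout), and a given $x$ may lie in several $C(S_t)$; I would argue consistency by observing that Theorem~\ref{thm:dyn2} forces each such cell to report the single number $A(x)$, so the $A_{\textsf{JIT}}$ value is independent of the witness cell, and continuity of the CPWL output (AF-3) guarantees that the affine pieces carried by adjacent fully refined cells agree on their common boundary, so no tie-breaking ambiguity arises. The remaining bookkeeping — that the refiners terminate in the finite-refinement regime, and that the B\&B driver of Algorithm~\ref{alg:jitbb} run without a stopping budget (e.g. on the trivial objective $g\equiv 0$) visits every cell of such a cover under a fair selection policy — is a direct instantiation of Theorem~\ref{thm:dyn4} (finite GuardSet/guard/solver counts) and Theorem~\ref{thm:dyn6} (fairness $\Rightarrow$ decision), which I would cite rather than re-derive.
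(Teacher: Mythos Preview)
Your proposal is correct and follows essentially the same approach as the paper: invoke SWT--1 for $A_{\textsf{stat}}=F$, apply DYN--2 on each $C(S_t)$ of the cover to get $A_{\textsf{JIT}}=A$ there, and use the covering property to conclude. The paper's proof is three sentences to that effect; your additional discussion of how to \emph{realize} the cover via the refiners, preservation under further splits (DYN--3), and boundary consistency is sound but unnecessary here, since the hypothesis already hands you the cover and DYN--2 forces every covering cell through $x$ to report the single well-defined value $A(x)$.
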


\paragraph{Notes on ties and numerics.}
All threshold/comparator guards are treated as closed ($\ge,\le$), which guarantees coverage of $z{=}0$ events; implementations may use a small tolerance $\tau$ to handle floating‑point ties without affecting the formal semantics (App.~\ref{app:micro}).


\section{Decidable Analysis and Function Geometry on JIT}

We now show how geometry (regions/gradients/Jacobians, extrema/Lipschitz) and
formal tasks (verification, equivariance, causality) become \emph{decidable}
on top of the JIT–SWT semantics from §3. Throughout, proofs and full
algorithms are deferred to Appendix~E–F.

\subsection{Regions, Gradients, and Jacobians (GEO‑1/4)}

\paragraph{Active fragments and local exactness.}
Recall (Def.~\ref{def:lfr}) that a GuardSet $S$ is \emph{locally fully refined}
when all faces relevant to $S$ are present and every output expression collapses to a
single affine map on $C(S)$. In this case we write the scalar output as
$F(x)=w_S^\top x+b_S$ on $C(S)$ (or componentwise $F_i(x)=w_{S,i}^\top x+b_{S,i}$).

\begin{theorem}[GEO‑1: on‑demand region extraction; App.~\ref{app:geo:regions}]\label{thm:geo1}
Running JIT refinement restricted to a domain $D$ returns a set of \emph{active fragments}
$\{(S_\rho,w_\rho,b_\rho)\}_\rho$ whose (relative‑interior) union covers the visited
portion of $D$, and on each $C(S_\rho)$ the function equals the affine law
$F(x)=w_\rho^\top x+b_\rho$.
If refinement continues until $D$ is covered by finitely many locally fully refined
GuardSets, this yields a complete CPWL region table for $F$ on $D$.
\end{theorem}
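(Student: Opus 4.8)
The plan is to treat GEO‑1 as a direct consequence of the JIT invariants together with DYN‑2/DYN‑3, reading the region table off the refinement tree. First I would fix the combinatorial picture. By Invariant II, every local split of a GuardSet $S$ on a threshold/comparator face $h_\ell$ replaces $S$ by the two children $S^{+}=S\cup\{\ell\}$ and $S^{-}=S\cup\{\bar\ell\}$, where $\bar\ell$ is the reversed \emph{closed} inequality; since both halfspaces are closed, $C(S^{+})\cup C(S^{-})=C(S)$, and infeasible children (detected by an LP feasibility check, $C(S^{\pm})=\varnothing$) are simply discarded. Hence the GuardSets ever created, ordered by the ``is a refinement of'' relation, form a finite rooted tree with root $S_0$ where $C(S_0)\supseteq D$, and at any moment the current leaves $\{S_\rho\}$ satisfy $\bigcup_\rho C(S_\rho)\supseteq C(S_0)\cap D=D$ by a one‑line induction on tree depth. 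The \emph{active fragments} are exactly those leaves that are locally fully refined (Def.~\ref{def:lfr}); the remaining leaves are the not‑yet‑processed frontier still in the B\&B queue of Algorithm~\ref{alg:jitbb}, so the union of active fragments is precisely the ``visited (fully processed) portion'' of $D$, and once the queue empties (budget permitting) this union is all of $D$.

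Second, on each active fragment $S_\rho$ I would invoke DYN‑2 (Thm.~\ref{thm:dyn2}) together with SWT‑1: local full refinement forces every output \texttt{Expr} to collapse to a single \texttt{Affine}$(w_\rho,b_\rho)$ node on $C(S_\rho)$, DYN‑2 gives $\underline A=\overline A=A$ there, and SWT‑1 gives $A=F$, so $F(x)=w_\rho^\top x+b_\rho$ on $C(S_\rho)$ (componentwise $F_i(x)=w_{\rho,i}^\top x+b_{\rho,i}$ in the vector case). On a full‑dimensional $C(S_\rho)$ the pair $(w_\rho,b_\rho)$ is uniquely determined by the values of $F$; on lower‑dimensional cells it need not be, but the recorded law still evaluates to $F$, which is all the table needs. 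DYN‑3 (Thm.~\ref{thm:dyn3}) is what makes this picture stable under continued refinement: an active fragment is never invalidated — a later split only partitions it into sub‑fragments each carrying the \emph{same} affine law — so the collection of active fragments only grows finer, never inconsistent.

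Third, for the ``complete CPWL region table'' claim I would add finiteness and continuity. If refinement is run until $D$ is covered by finitely many locally fully refined GuardSets (exactly the complete‑refinement‑cover hypothesis of Def.~\ref{def:crc}/Thm.~\ref{thm:dyn7}, reachable whenever only finitely many faces are relevant to $D$), then $\{(S_\rho,w_\rho,b_\rho)\}_\rho$ is finite, its cells cover $D$, and by the previous paragraph $F$ agrees with $w_\rho^\top x+b_\rho$ on each. Two adjacent cells meet along a shared closed face lying in some guard hyperplane; since $F$ is CPWL and hence continuous (AF‑3/AF‑4), the two affine laws coincide on that face, so the table is a consistent affine‑on‑a‑finite‑polyhedral‑cover description of $F$, i.e.\ a CPWL region table. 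Quotienting out relative‑interior overlaps (cells sharing only boundary faces) to present a genuine polyhedral complex is then cosmetic.

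The step I expect to be the only non‑bookkeeping point — hence the main obstacle — is the careful treatment of the ``visited portion'' and of degenerate cells: one must argue precisely that closed guards give $C(S^{+})\cup C(S^{-})=C(S)$ so that no sliver of $D$ (not even the face $\{z=0\}$) is lost at a split, that empty children can be dropped without harming coverage, and that on a flat or empty‑interior cell the recorded affine law, while possibly non‑unique as a linear functional, is still well defined as a value and agrees with its neighbours by continuity. Everything else is induction on the refinement tree plus citations to DYN‑1/2/3 and SWT‑1.
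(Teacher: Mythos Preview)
Your proposal is correct and follows essentially the same approach as the paper: the paper's proof is a three-sentence sketch that (i) invokes local full refinement to obtain single-affine exactness on each leaf, (ii) notes that closed-guard splits only subdivide $C(S)$ so interiors do not overlap, and (iii) concludes that a finite cover yields the region table by union. You spell out the same skeleton in more detail---the refinement-tree induction for coverage, the explicit appeal to DYN--2/SWT--1 for exactness, DYN--3 for stability, and the continuity check at shared faces---none of which the paper makes explicit but all of which are consistent with its argument.
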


\paragraph{Gradients/Jacobians and boundary representatives.}
On the relative interior of each $C(S_\rho)$, the gradient/Jacobian is constant:
$\nabla F(x)=w_\rho$ (or $J_F(x)=J_\rho$ for vector outputs).
At polyhedral boundaries, $F$ is not differentiable in general; we use Clarke
subdifferentials.

\begin{theorem}[GEO‑4: a.e.\ exact Jacobian; App.~\ref{app:geo:regions}]\label{thm:geo4}
Lebesgue‑a.e.\ $x\in D$ lies in the relative interior of some refined $C(S_\rho)$,
and the JIT extractor returns $\nabla F(x)$ (or $J_F(x)$) exactly.
If $x$ lies on a boundary, a selection rule (e.g., minimum‑norm element from the
convex hull of adjacent gradients, obtained by a tiny QP) yields a representative
in the Clarke subdifferential.
\end{theorem}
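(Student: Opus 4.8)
The plan is to assemble three pieces that are already in hand: the local-exactness guarantee of Theorems~\ref{thm:dyn2} and \ref{thm:geo1} (each locally fully refined cell carries a single affine law), a measure-theoretic fact that the union of cell boundaries is Lebesgue-null, and the CPWL Clarke-subdifferential formula recorded in AF-4. Throughout I work under the hypothesis stated in the second sentence of Theorem~\ref{thm:geo1} (equivalently, that $D$ admits a complete refinement cover in the sense of Def.~\ref{def:crc}): refinement has produced a finite family $\{S_\rho\}$ of locally fully refined GuardSets with $D\subseteq\bigcup_\rho C(S_\rho)$, and on $C(S_\rho)$ the output reads $F(x)=w_\rho^\top x+b_\rho$ (componentwise $F_i(x)=w_{\rho,i}^\top x+b_{\rho,i}$, i.e. $J_F=J_\rho$).

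First I would dispose of the boundary. Each $C(S_\rho)$ is a polyhedron cut out by finitely many halfspaces from $\mathcal H$. Let $Z=\bigcup_\rho \partial C(S_\rho)$ with $\partial$ the topological boundary in $\mathbb{R}^n$: for a full-dimensional cell $\partial C(S_\rho)$ is a finite union of facets, each contained in a single guard hyperplane $\{a_\ell^\top x=d_\ell\}$; for a lower-dimensional cell $\partial C(S_\rho)=C(S_\rho)$ itself lies in such a hyperplane. Hence $Z$ is contained in a finite union of affine hyperplanes, so $\lambda_n(Z)=0$. For any $x\in D\setminus Z$ choose $\rho$ with $x\in C(S_\rho)$ (the cells cover $D$); since $x\notin\partial C(S_\rho)$ we get $x\in\operatorname{int}C(S_\rho)$, so $C(S_\rho)$ is full-dimensional and $x$ lies in its relative interior — the first clause of the theorem, for every $x\in D\setminus Z$.

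Next, exactness on interiors. Fix $x\in\operatorname{int}C(S_\rho)$. By Theorem~\ref{thm:geo1} (via Theorem~\ref{thm:dyn2}) $F$ agrees with the affine map $y\mapsto w_\rho^\top y+b_\rho$ on all of $C(S_\rho)$, hence on an open neighborhood of $x$; so $F$ is differentiable at $x$ with $\nabla F(x)=w_\rho$ (resp. $J_F(x)=J_\rho$). This is unambiguous: if two cells overlapped in a full-dimensional set they would carry equal affine laws, since both equal $F$ there by Theorem~\ref{thm:dyn2}. The JIT extractor returns exactly the coefficient vector stored in the collapsed \texttt{Affine} node labelling $C(S_\rho)$, namely $w_\rho$, so the returned gradient/Jacobian is exact; combined with the previous paragraph this is the a.e.\ claim. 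For the boundary case, $F$ is CPWL hence locally Lipschitz (on a compact neighborhood it is affine on finitely many pieces, Lipschitz constant $\max_\rho\lVert w_\rho\rVert$), so $\partial_C F(x)$ is defined, and by AF-4 $\partial_C F(x)=\operatorname{conv}\{w_\rho:\ x\in C(S_\rho),\ \dim C(S_\rho)=n\}$ — the finitely many adjacent affine gradients. The extractor, having materialized all adjacent cells under the complete-cover hypothesis, knows this vertex set $V$; the minimum-norm element of $\operatorname{conv}(V)$ is the Euclidean projection of $0$ onto a polytope, so it exists and is unique, and it is the solution of the small convex QP $\min_{\lambda\in\Delta_{|V|}}\lVert\sum_\rho\lambda_\rho w_\rho\rVert_2^2$ over the probability simplex. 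The minimizer $g^\star=\sum_\rho\lambda_\rho^\star w_\rho$ lies in $\operatorname{conv}(V)=\partial_C F(x)$ by construction, hence is a valid (canonical) Clarke representative; the vector-output case is the same, row by row or via the QP over the convex hull of adjacent Jacobians.

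The routine parts are the null-set argument and the interior case. The one step that warrants care is the identity $\partial_C F(x)=\operatorname{conv}\{w_\rho:\ \text{adjacent}\}$: one must check (i) every limiting gradient $\lim_k\nabla F(x_k)$ along differentiability points $x_k\to x$ equals some $w_\rho$ with $x\in C(S_\rho)$, and (ii) each such $w_\rho$ is realized by sending $x_k\to x$ through $\operatorname{int}C(S_\rho)$. Both follow from local finiteness of the polyhedral complex at $x$ together with Clarke's formula $\partial_C F(x)=\operatorname{conv}\{\lim_k\nabla F(x_k):x_k\to x\ \text{differentiability pts}\}$, but this is precisely where the complete-refinement-cover hypothesis is essential: without it the extractor may only have discovered a subset of the adjacent cells, so the QP still returns an element of $\partial_C F(x)$ but not necessarily the global minimum-norm subgradient. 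I would therefore keep the theorem stated, as above, under that hypothesis so the canonical choice is recovered.
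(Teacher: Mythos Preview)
Your proposal is correct and follows essentially the same route the paper takes: the paper does not give a standalone proof of GEO--4 but treats it as an immediate consequence of AF--4 (boundaries lie in finitely many hyperplanes, hence are Lebesgue-null; Clarke subdifferential equals the convex hull of adjacent gradients) together with GEO--1/DYN--2 (on each locally fully refined cell the extractor reads off the exact affine law). Your write-up is in fact more detailed than what the paper provides, and your closing caveat---that without a complete refinement cover the QP still returns a valid Clarke element but possibly not the global minimum-norm one---is a correct and useful sharpening that the paper leaves implicit.
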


\paragraph{Practical readout on JIT.}
Once $S$ is locally fully refined, we read $(w_S,b_S)$ by summing affine atoms
along the unique acyclic path fragments that remain active on $C(S)$.
For vector outputs we assemble $J_S$ row‑wise.

\subsection{Extrema and Lipschitz (GEO‑2/5)}

\paragraph{Anytime maxima/minima.}
Let $D$ be a convex domain (box/polytope/$\ell_p$ ball).
JIT–B\&B maintains
\[
\underline M=\max_{S\subseteq D}\LB(F,S),\qquad
\overline M=\max_{S\subseteq D}\UB(F,S),
\]
which satisfy $\underline M\le \max_{x\in D}F(x)\le \overline M$ and tighten
monotonically (Thm.~\ref{thm:dyn1}).
When all visited $S$ are locally fully refined, each subproblem reduces to
affine maximization on a convex set, solved exactly by LP or SOCP.

\begin{theorem}[GEO‑2: exact extrema after local refinement; App.~\ref{app:geo:opt}]\label{thm:geo2}
If $D$ is covered by locally fully refined GuardSets $\{S\}$, then
\[
\max_{x\in D}F(x)=\max_{S}\ \max_{x\in C(S)\cap D}\ \big(w_S^\top x+b_S\big),
\]
with LP for polytope/$\ell_\infty$/$\ell_1$, SOCP for $\ell_2$ or polytope$\cap\ell_2$;
the minima are analogous.
\end{theorem}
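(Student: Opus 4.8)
The plan is to reduce the global optimization on $D$ to a finite max over the locally fully refined cells, then invoke exactness of the affine law on each cell (DYN-2) and classical convex-optimization facts. First I would use Definition~\ref{def:crc}: by hypothesis $D$ is covered by finitely many locally fully refined GuardSets $\{S\}$, so $D\subseteq\bigcup_S C(S)$ and hence $D=\bigcup_S\bigl(C(S)\cap D\bigr)$. Any maximizer $x^\star\in D$ of $F$ lies in at least one $C(S)\cap D$; conversely every $x\in C(S)\cap D$ is in $D$. Therefore $\max_{x\in D}F(x)=\max_S\max_{x\in C(S)\cap D}F(x)$ as a set-theoretic identity about suprema over a finite union of compact (or, if $D$ is a polytope/$\ell_p$ ball, closed) sets — I would note that the finiteness of the cover guarantees the outer max is attained and well-defined, and that each inner problem is over $C(S)\cap D$, the intersection of a polyhedron with a convex domain, hence convex.

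Next I would invoke Theorem~\ref{thm:dyn2} (DYN-2): on a locally fully refined $S$, the output expression collapses to a single affine map, so $F(x)=w_S^\top x+b_S$ for all $x\in C(S)$, with $w_S,b_S$ read off as in the ``practical readout'' paragraph (summing affine atoms along the unique active path fragment). Substituting this into the inner problem gives $\max_{x\in C(S)\cap D}F(x)=\max_{x\in C(S)\cap D}\bigl(w_S^\top x+b_S\bigr)$, which is exactly the displayed formula. The remaining content is the solver classification: if $D$ is a polytope, a box, or an $\ell_\infty$/$\ell_1$ ball, then $C(S)\cap D$ is a polyhedron (the $\ell_1$ and $\ell_\infty$ balls admit polyhedral H-descriptions), so maximizing a linear functional over it is a linear program; if $D$ is an $\ell_2$ ball, $C(S)\cap D$ is the intersection of finitely many halfspaces with a Euclidean ball, a second-order cone feasible set, so the problem is an SOCP; the mixed case polytope$\,\cap\,\ell_2$ is likewise an SOCP. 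I would briefly cite that linear objectives over these feasible sets attain their optima (compactness of $C(S)\cap D$ when $D$ is bounded; and each listed $D$ is bounded), so ``solved exactly'' is literally correct. The minima statement follows by replacing $F$ with $-F$, or equivalently by swapping $\max$ for $\min$ throughout, since $-F$ is again affine on each $C(S)$ with data $(-w_S,-b_S)$.

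I do not expect a genuine obstacle here: the theorem is essentially a bookkeeping corollary of DYN-2 plus the definition of a complete refinement cover, and the only thing requiring care is edge cases. The main thing to be careful about is the overlap of cells on their shared boundaries — since the guards are closed ($\ge,\le$), adjacent $C(S)$ overlap on lower-dimensional faces, but this is harmless for a $\max$: a point on a shared face is counted in several subproblems and contributes the same value $F(x)$ in each (by continuity of $F$, AF-3, the affine laws of neighboring cells agree on the common face), so the outer $\max$ is unaffected. A second minor point is to confirm that when $D$ is unbounded one would instead need $D$ itself bounded for attainment; but the statement restricts $D$ to boxes, polytopes, and $\ell_p$ balls, all of which (in the intended reading) are bounded, so attainment is automatic and I would just remark on it. Thus the proof is: (1) decompose $D$ over the finite cover; (2) apply DYN-2 to replace $F$ by its affine law on each cell; (3) classify the resulting convex programs as LP or SOCP and note attainment; (4) handle minima by $F\mapsto -F$.
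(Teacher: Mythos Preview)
Your proposal is correct and follows essentially the same route as the paper: decompose $D$ over the finite refined cover, invoke DYN--2 to replace $F$ by its local affine law $w_S^\top x+b_S$ on each $C(S)$, and then classify the per-leaf subproblems as LP or SOCP exactly as in the paper's Theorem~\ref{thm:aff-max}. If anything, you are more careful than the paper about boundary overlaps and attainment; those remarks are fine to include but not strictly necessary for the argument.
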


\paragraph{Lipschitz constants.}
For scalar $F$, the exact $L_p$ on $D$ equals the maximum dual‑norm of local gradients.
For vector $F:\ell_p\to\ell_r$, it equals the maximum operator norm of local Jacobians.

\begin{theorem}[GEO‑5: piecewise maximum formula; App.~\ref{app:geo:norms}]\label{thm:geo5}
If $D$ is covered by locally fully refined $\{S\}$, then
\[
L_p(F;D)=\max_{S:\,C(S)\cap D\neq\varnothing}\ \|w_S\|_{p^*},\qquad
L_{p\to r}(F;D)=\max_{S:\,C(S)\cap D\neq\varnothing}\ \|J_S\|_{p\to r}.
\]
Before full refinement, JIT provides anytime bounds
$\underline L\le L\le \overline L$ by combining exact values on refined leaves
with sound relaxations on unresolved leaves (LP/SOCP based).
\end{theorem}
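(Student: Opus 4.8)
The plan is to establish each equality by the classical two-sided ``sandwich'' for CPWL functions: on a convex domain the Lipschitz constant equals the essential supremum of the local gradient dual-norm (resp.\ Jacobian operator norm), which under a locally-fully-refined cover is exactly the maximum over cells. I would give the argument for the scalar constant $L_p(F;D)$; the vector case $L_{p\to r}(F;D)$ is verbatim after replacing $w_S$ by the Jacobian $J_S$, the dual norm $\|w_S\|_{p^*}$ by the operator norm $\|J_S\|_{p\to r}$, and the pairing $w_S^\top v$ by $J_S v$ (using Clarke subdifferentials at boundaries exactly as in Theorem~\ref{thm:geo4}). Throughout $D$ is convex as in the GEO-2 setting, and by Theorem~\ref{thm:geo1} the locally fully refined GuardSets $\{C(S)\}$ form a finite polyhedral family covering $D$ with $F(x)=w_S^\top x+b_S$ on each $C(S)$ (Definition~\ref{def:lfr}).

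For the upper bound, I would fix $x,y\in D$ and use convexity so that $[x,y]\subseteq D$. Since the cover is finite, $[x,y]$ meets only finitely many cells, giving collinear, ordered breakpoints $x=z_0,\dots,z_m=y$ with $[z_k,z_{k+1}]\subseteq C(S_k)$. On each piece $|F(z_{k+1})-F(z_k)|=|w_{S_k}^\top(z_{k+1}-z_k)|\le\|w_{S_k}\|_{p^*}\,\|z_{k+1}-z_k\|_p\le(\max_S\|w_S\|_{p^*})\,\|z_{k+1}-z_k\|_p$; summing by the triangle inequality and using $\sum_k\|z_{k+1}-z_k\|_p=\|y-x\|_p$ yields $|F(y)-F(x)|\le(\max_S\|w_S\|_{p^*})\,\|y-x\|_p$, i.e.\ $L_p(F;D)\le\max_S\|w_S\|_{p^*}$.

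For the lower bound, for each $S$ with $C(S)\cap D$ full-dimensional I would pick $x_0$ in its relative interior, where by Theorem~\ref{thm:geo4} $F$ is differentiable with $\nabla F(x_0)=w_S$, and a direction $v$ with $\|v\|_p=1$ attaining $w_S^\top v=\|w_S\|_{p^*}$ (Hölder duality; trivial when $w_S=0$). For small $t>0$, $x_0+tv\in C(S)\cap D$ and $|F(x_0+tv)-F(x_0)|/\|tv\|_p=\|w_S\|_{p^*}$, so $L_p(F;D)\ge\|w_S\|_{p^*}$. I would close the gap to lower-dimensional cells $C(S)$ --- which the JIT refiner can emit since threshold/comparator guards are closed --- by continuity of $F$: there the stored affine law agrees on $C(S)$ with that of an adjacent full-dimensional cell, so its dual-norm cannot exceed the full-dimensional maximum (and such cells are Lebesgue-null, hence irrelevant to the essential supremum). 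Combining the two bounds proves both equalities.

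For the anytime bounds, $\underline L:=\max_{\text{refined }S,\,C(S)\cap D\ne\varnothing}\|w_S\|_{p^*}$ is sound by the lower-bound argument applied to each refined cell, and is nondecreasing since refinement only adds leaves. For $\overline L$, an unresolved leaf $S'$ admits a finite set $W(S')$ of candidate affine laws obtainable by resolving its remaining gates; every local gradient of $F$ inside $C(S')$ lies in $W(S')$, so $L_p(F;C(S')\cap D)\le\max_{w\in W(S')}\|w\|_{p^*}$ (or a tighter LP/SOCP relaxation), and rerunning the segment argument with per-cell local Lipschitz constants gives $L_p(F;D)\le\overline L$, the max of exact values on refined leaves and these sound relaxations elsewhere; Lemma~\ref{lem:mono} and Theorem~\ref{thm:dyn1} make the bounds tighten monotonically and meet at a complete refinement cover. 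The step I expect to be the main obstacle is not the norm-duality algebra but the polyhedral-complex bookkeeping in the upper bound together with the degenerate-cell caveat in the lower bound: one must verify that a segment lying inside a guard hyperplane still meets the finite cover in finitely many subsegments, and that closed-guard lower-dimensional cells never carry a strictly larger dual-norm than full-dimensional ones --- both follow from continuity of $F$ and finiteness of $\mathcal H$, but they are where the care is needed.
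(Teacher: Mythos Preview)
Your argument is correct and substantially more detailed than the paper's own treatment: App.~\ref{app:geo:opt} and~\ref{app:geo:norms} simply \emph{state} the piecewise-maximum formulas without a formal proof, relying on the folklore that a CPWL function's Lipschitz constant equals the supremum of its piecewise gradient dual-norms. Your segment-decomposition for the upper bound and dual-direction witness for the lower bound are the standard route and are exactly what the paper presupposes; your anytime-bound construction ($\underline L$ from refined leaves, $\overline L$ from sound per-leaf relaxations) also matches the paper's recipe verbatim.

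One small correction on the degenerate-cell caveat you flag at the end: continuity of $F$ alone does not force $\|w_S\|_{p^*}$ on a lower-dimensional $C(S)$ to be dominated by adjacent full-dimensional norms --- two affine laws agreeing on a hyperplane can differ arbitrarily in the normal direction, so their dual norms are unrelated. The correct reason is constructive: by Def.~\ref{def:lfr}, the pair $(w_S,b_S)$ arises from committing specific gate/comparator branches on $S$, so $w_S$ is literally the gradient of $F$ on the full-dimensional region where those committed branches hold strictly; hence $w_S=w_{S'}$ for some full-dimensional leaf $S'$ in the cover and $\|w_S\|_{p^*}$ already appears in the full-dimensional maximum. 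In practice the JIT refiners only split along faces that genuinely bisect the current leaf (the sign/winner is undecided), so degenerate leaves do not arise and the point is moot; but if you want the argument airtight for the stated formula, which ranges over \emph{all} $S$ with $C(S)\cap D\neq\varnothing$, invoke the construction rather than continuity.
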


\subsection{Verification, Equivariance, and Causality (VER/CAU)}

\paragraph{Property language and product construction.}
We consider properties expressed by finite combinations of:
input constraints $x\in D$ (box/polytope/$\ell_p$ ball),
output thresholds/intervals $F_k(x)\le u,\, F_k(x)\ge \ell$,
classification margins $F_y-\max_{j\ne y}F_j\ge \gamma$,
and relational constraints $|F(x)-F'(x)|\le \varepsilon$.
Each atomic constraint is compiled into a \emph{counterexample transducer}
and composed (product) with the SWT of $F$, yielding a CPWL objective $g$ whose
violation corresponds to counterexamples. JIT–B\&B on $g$ returns a tri‑valued
answer with certificates (Thm.~\ref{thm:dyn6}).

\begin{theorem}[VER‑1: decidable specification checking; App.~\ref{app:ver:lang}]\label{thm:ver1}
For the above property language, JIT–B\&B returns \textsc{True} (with per‑leaf
LB certificates), \textsc{False} (with a witness point from LP/SOCP on some leaf),
or \textsc{Unknown} under budget—always sound and complete under finite refinement.
\end{theorem}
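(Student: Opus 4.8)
}
The plan is to collapse every property in the language to a single scalar CPWL \emph{margin function} $g$ over the convex input domain $D$, compile $g$ into an SWT via SWT-1, run the JIT branch-and-bound of Algorithm~\ref{alg:jitbb} on $g$, and read the tri-valued verdict off the sign of the anytime envelopes, invoking Thm.~\ref{thm:dyn6} for termination. Soundness will be free from Thm.~\ref{thm:dyn1}; completeness will be the work.

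\emph{Step 1: specification $\to$ CPWL margin.} First I would check that each atomic constraint is equivalent to ``$g \ge 0$ on $D$'' for some $g$ built only from the component set of \S2, hence (AF-1, AF-5) CPWL: a threshold $F_k \le u$ gives $g = u - F_k$; $F_k \ge \ell$ gives $g = F_k - \ell$; a classification margin gives $g = F_y - \max_{j \ne y} F_j - \gamma$, CPWL because $\max$ is a gate of the calculus; a relational bound $|F(x) - F'(x)| \le \varepsilon$ gives $g = \varepsilon - \max_k \max\{F_k - F'_k,\ F'_k - F_k\}$, using the $\mathsf{Abs}/\mathsf{Max}$ gates and, for two-network or transformed-input specs (equivalence; equivariance under an affine symmetry $\sigma$, where $x \mapsto \sigma x$ is just another affine module), the \emph{product} network stacking both computations---still a DAG of affine maps and pointwise gates. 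Boolean combinations of atoms are absorbed by closing under $\min$ ($g = \min_i g_i$ for conjunctions) and $\max$ (for disjunctions), both CPWL. In every case $g$ is realized by a network in our class and the property holds iff $\min_{x \in D} g(x) \ge 0$.

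\emph{Step 2: compilation and soundness.} By SWT-1 on the augmented/product network there is an SWT $A_g$ with $A_g \equiv g$ pointwise; the ``counterexample transducer'' for an atom is exactly the sub-SWT realizing $u - F_k$, $F_k - \ell$, etc., and the product is the $K_f$-product, which remains an SWT since $K_f$ is closed under $\oplus,\otimes$. I then instantiate Algorithm~\ref{alg:jitbb} with objective $g$ and $S_0$ encoding $D$ (a polytope, or an $\ell_1/\ell_2/\ell_\infty$ ball handled by the LP/SOCP oracle of \S3---matching the ``finitely many affine and second-order cone constraints'' hypothesis). Every verdict is sound by Thm.~\ref{thm:dyn1}: a \textsc{True} answer carries $\LB(g,S_t) \ge 0$ on a finite cover $\{S_t\}$ with $D \subseteq \bigcup_t C(S_t)$, so $g(x) \ge \LB(g,S_t) \ge 0$ for all $x \in D$; a \textsc{False} answer returns $x^\star \in C(S) \cap D$ with $g(x^\star) < 0$ from the exact leaf solve; \textsc{Unknown} asserts nothing. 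This is the ``always sound'' half and pins down the three output forms.

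\emph{Step 3: completeness under finite refinement, and the main obstacle.} For completeness I would note that a fixed network together with the $\mathsf{Max}/\mathsf{Abs}$ gates of the spec induces only finitely many candidate threshold/comparator hyperplanes, so only finitely many faces can be added; with fairness this meets the hypothesis of Thm.~\ref{thm:dyn6}, and the B\&B eventually refines $\{S_t\}$ into a complete refinement cover of $D$ (Def.~\ref{def:crc}). On such a leaf $S$, DYN-2 (Thm.~\ref{thm:dyn2}) makes $g$ a single affine law $w_S^\top x + b_S$ on $C(S)$, so $\LB(g,S) = \min_{x \in C(S)\cap D} g$ and $\UB(g,S) = \max_{x \in C(S)\cap D} g$ are \emph{exact} finite LP/SOCP values; Thm.~\ref{thm:dyn3} and Thm.~\ref{thm:dyn7} ensure already-exact leaves are never undone and $A_{\mathsf{JIT}} \equiv g$ on $D$. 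If the property truly holds, every leaf reports $\LB \ge 0$, the queue empties, and the driver returns \textsc{True} with per-leaf certificates; if it fails at some $x_0 \in D$, the leaf containing $x_0$ has exact $\LB(g,S) \le g(x_0) < 0$ and the exact leaf solve surfaces the minimizer as a witness, so the driver returns \textsc{False}. The delicate point---the main obstacle---is exactly this last step: one must argue that refinement is \emph{guaranteed} to terminate into a complete cover (finiteness of relevant faces plus fairness forcing their insertion on every surviving leaf), and that on a fully refined leaf the residual problem is a \emph{decisive} convex program whose optimum's sign settles the property with a usable witness. In particular, reconciling Algorithm~\ref{alg:jitbb}'s literal ``$\UB < 0$'' counterexample test with the affine-leaf situation (where one may have $\LB < 0 \le \UB$ and yet no face left to split) requires making explicit the terminal exact-solve case of the B\&B on locally fully refined cells.
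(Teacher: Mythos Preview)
Your proposal is correct and follows essentially the same route as the paper: reduce each atomic predicate to a CPWL margin $g$ (with $\min$ for conjunctions), realize $g$ via a product/stacked SWT, invoke DYN--1 for soundness of the tri-valued output, and invoke DYN--6 (finite faces $+$ fairness) for termination and completeness. Your Step~3 is in fact more careful than the paper's own argument: you correctly isolate the one subtle point---that on a locally fully refined leaf with $\LB(g,S)<0\le \UB(g,S)$, Algorithm~\ref{alg:jitbb}'s literal ``$\UB<0$'' trigger does not fire, and one must explicitly appeal to the terminal exact LP/SOCP solve on that leaf to extract the witness---whereas the paper's proof simply subsumes this under ``the decision reduces to finitely many LP/SOCP obligations'' without spelling out the interface with the driver.
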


\paragraph{Robustness certification.}
For a labeled sample $(x_0,y)$, define the margin
$g(x)=F_y(x)-\max_{j\ne y}F_j(x)$. Over $B_p(x_0,\epsilon)\cap D$, JIT refines
until $g$ is affine on leaves and solves
$\min g(x)$ by LP/SOCP per leaf. The minimum bound across leaves certifies
robustness; a negative minimum yields a counterexample.

\begin{theorem}[VER‑2: robustness via JIT; App.~\ref{app:ver:robust}]\label{thm:ver2}
On $B_p(x_0,\epsilon)\cap D$, the JIT procedure produces an anytime lower bound
on the margin and, upon local full refinement, the \emph{exact} minimum. Hence
``$g\ge \gamma$'' is decidable under finite refinement with a certificate or a witness.
\end{theorem}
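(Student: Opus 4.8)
The plan is to reduce VER‑2 to the already‑established DYN‑ and GEO‑results by treating the margin function $g(x)=F_y(x)-\max_{j\neq y}F_j(x)$ as an instance of the property language of §4.3 over the convex domain $D' := B_p(x_0,\epsilon)\cap D$. First I would observe that $g$ is itself CPWL: $F_y$ is CPWL by AF‑1, $\max_{j\neq y}F_j$ is CPWL by closure under $\max$ (AF‑5 / the $\oplus$ operation of $K_f$), and their difference is CPWL, so $g$ is representable as an \texttt{Expr} built from \texttt{Sum}, \texttt{Max}, \texttt{Scale}$(-1,\cdot)$ on the component transducers — exactly the product‑construction objective of Thm.~\ref{thm:ver1}. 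Consequently the LB/UB oracle applies to $g$ and, by Lemma~\ref{lem:mono} and Thm.~\ref{thm:dyn1}, running JIT–B\&B (Alg.~\ref{alg:jitbb}) on $g$ over $S_0 = $ (an H‑representation of) $D'$ maintains sound envelopes $\underline A_g \le g \le \overline A_g$ and hence an anytime lower bound $\underline m := \min_{S\in Q}\LB(g,S)$ on $\min_{x\in D'} g(x)$ that is nondecreasing as refinement proceeds. This already gives the first half of the statement (the anytime lower bound on the margin).

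Next I would establish exactness after local full refinement. Suppose $D'$ is covered by finitely many locally fully refined GuardSets $\{S_t\}$ (Def.~\ref{def:lfr}): on each $C(S_t)$ every output \texttt{Expr} — in particular the \texttt{Expr} for $g$ — collapses to a single affine map $g(x)=w_{S_t}^\top x + b_{S_t}$, so by DYN‑2 (Thm.~\ref{thm:dyn2}) $\underline A_g = \overline A_g = g$ on $C(S_t)$. Then $\min_{x\in D'} g(x) = \min_t \min_{x\in C(S_t)\cap D'}(w_{S_t}^\top x + b_{S_t})$, and each inner problem is affine minimization over a polytope (for $p\in\{1,\infty\}$) or over a polytope intersected with an $\ell_2$ ball (for $p=2$), solved exactly by LP resp. SOCP — this is precisely the GEO‑2 reduction (Thm.~\ref{thm:geo2}), applied to the single CPWL function $g$ rather than to $F$. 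Taking the minimum over the finitely many leaves yields the exact value $m^\star := \min_{x\in D'} g(x)$, together with the minimizing $x^\star$ returned by the winning leaf's LP/SOCP. By DYN‑3 (Thm.~\ref{thm:dyn3}) previously refined leaves stay exact, so the B\&B never regresses and, under a fair strategy with finitely many admissible faces, it visits all leaves of a complete refinement cover in finitely many steps (this is the finite‑refinement hypothesis of DYN‑6, Thm.~\ref{thm:dyn6}).

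Finally I would read off the decidability of ``$g\ge\gamma$'': run Alg.~\ref{alg:jitbb} with objective $g-\gamma$ (an affine shift, handled by \texttt{Bias}). If at some point $\LB(g-\gamma,S)\ge 0$ on every active leaf, the per‑leaf LB values form a certificate that $g\ge\gamma$ on $D'$ (soundness from Thm.~\ref{thm:dyn1}); if $\UB(g-\gamma,S)<0$ on some leaf, the LP/SOCP minimizer on $C(S)\cap D'$ is a witness $x^\star\in D'$ with $g(x^\star)<\gamma$ — in particular a misclassification/counterexample when $\gamma=0$. Under finite refinement one of these two cases must occur after finitely many splits (since at full refinement $\underline A_g=\overline A_g=g$ and the sign of $m^\star-\gamma$ is then exactly determined on some leaf), which is the claimed decidability; otherwise the budget is exhausted and the procedure reports \textsc{Unknown}, still sound. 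The main obstacle, and the only place that needs care beyond invoking prior results, is the interaction between the $\ell_p$‑ball constraint in $S_0$ and the ``locally fully refined'' bookkeeping: for $p=2$ the domain is not polyhedral, so I must be precise that $S_0$ contributes a second‑order cone constraint that is carried unchanged through all refinements (guards only ever intersect $C(S)$ with halfspaces, per Invariant II), so that each leaf subproblem is exactly ``affine objective over polytope $\cap$ ball'' — an SOCP — and the finite‑refinement argument still applies because only the halfspace faces (finitely many comparators/thresholds relevant to $F$) are ever added, the cone constraint being fixed. I would also note the closed‑guard convention ensures the ball's boundary and all gate faces $z=0$ are covered, so no minimizer on a face is missed.
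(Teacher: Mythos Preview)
Your proposal is correct and follows essentially the same route as the paper: the appendix proof of VER--2A cites exactly DYN--1 for the monotone anytime lower bound and GEO--2 together with DYN--2 for exactness on locally fully refined leaves, with decidability then read off from DYN--6 as you do. Your write-up is considerably more fleshed out---in particular your care about the $\ell_2$ ball being carried as a fixed SOCP constraint through halfspace-only refinements is a useful detail the paper leaves implicit.
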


\paragraph{Equivariance checks.}
Given input transform $\mathcal{T}_g$ and output transform $\mathcal{T}'_g$
(e.g., CNN translations within an effective domain, or GNN permutations),
we check $H(x)=F(\mathcal{T}_g x)-\mathcal{T}'_g(F(x))$.
Running JIT–B\&B on $|H|$ over the appropriate domain yields either a uniform
upper bound $\le\varepsilon$ with certificates, or an explicit violating input.

\begin{theorem}[VER‑3: (approximate) equivariance; App.~\ref{app:ver:eqv}]\label{thm:ver3}
Over the effective domain, JIT decides exact equivariance ($H\equiv 0$) or
$\varepsilon$‑equivariance ($\|H\|_\infty\le\varepsilon$) under finite refinement,
with counterexamples or per‑leaf bounds as certificates.
\end{theorem}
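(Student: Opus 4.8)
}
The plan is to reduce VER‑3 to the decidability theorem DYN‑6 (Thm.~\ref{thm:dyn6}) by exhibiting $H$ as a CPWL object of finite piecewise structure on the effective domain, compiling it into the lazy e‑graph, and choosing an objective whose nonnegativity is equivalent to the equivariance specification. First I would observe that $H$ is CPWL: the input transform $\mathcal{T}_g$ (a translation or permutation, hence a fixed affine map) and the output transform $\mathcal{T}'_g$ (likewise affine) keep us inside the component class, so $x\mapsto F(\mathcal{T}_g x)$ is the precomposition of a network output with an affine map and $x\mapsto\mathcal{T}'_g(F(x))$ its postcomposition; by AF‑1 and AF‑5 both are CPWL, and $H=F\circ\mathcal{T}_g-\mathcal{T}'_g\circ F$ is their difference, hence CPWL. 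On the transducer side this is a product construction: build $A_1:=A_{F\circ\mathcal{T}_g}$ by substituting $x\leftarrow\mathcal{T}_g x$ into the guards and weights of $A_F$ (the affine/precomposition rule of \secref{2}), build $A_2:=A_{\mathcal{T}'_g\circ F}$ by applying the affine rule once at the output, and form $H$ componentwise as the e‑graph node $H_k=\texttt{Sum}\bigl(\{A_{1,k},\ \texttt{Scale}(-1,A_{2,k})\}\bigr)$.

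Second, encode both variants of the spec by one CPWL objective. Take $g(x)=\varepsilon-\|H(x)\|_\infty=\varepsilon-\max_k\max\{H_k(x),-H_k(x)\}$, which lives in the e‑graph via \texttt{Max}, \texttt{Scale}$(-1,\cdot)$, and \texttt{Bias}: an inner \texttt{Max} of each $H_k$ against \texttt{Scale}$(-1,H_k)$, an outer \texttt{Max} over $k$, then \texttt{Scale}$(-1,\cdot)$ and \texttt{Bias}$(\varepsilon,\cdot)$; the existing LB/UB rules already cover these node types, so the envelope machinery applies verbatim. Then $\|H\|_\infty\le\varepsilon$ on $D$ iff $g\ge0$ on $D$; exact equivariance is the case $\varepsilon=0$, where $g\le0$ holds identically, so $g\ge0$ on a cover of $D$ forces $g\equiv0$, i.e.\ $H\equiv0$. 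Feed $g$ and the effective domain $S_0$ to Algorithm~\ref{alg:jitbb}. The two hypotheses of DYN‑6 are then discharged as follows. \emph{Finitely many faces:} by AF‑4 each CPWL function is affine on a finite polyhedral complex, so the complexes of $F\circ\mathcal{T}_g$ and of $\mathcal{T}'_g\circ F$ are finite, their common refinement is finite, and the sign hyperplanes of the $H_k$ together with the winner comparators of the two \texttt{Max} layers in $g$ form a finite family of halfspaces; intersected with the bounded effective domain they yield only finitely many distinct faces that \texttt{ENSURE\_SIGN}, \texttt{ENSURE\_WINNER}, \texttt{ENSURE\_COMMON\_REFINE} can ever insert, so $|\mathcal{H}|$ is bounded. \emph{Fairness:} the max‑gap rule of Algorithm~\ref{alg:jitbb} always selects an unresolved leaf; by DYN‑3 (Thm.~\ref{thm:dyn3}) every refinement step is monotone progress, and since only finitely many guards exist the loop turns every leaf into a locally fully refined one (Def.~\ref{def:lfr}) after finitely many steps, emptying $Q$.

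Third, conclude the tri‑valued answer and its certificates. Once all leaves are locally fully refined, DYN‑2 (Thm.~\ref{thm:dyn2}) gives $\underline A=\overline A=A$ on each $C(S)$, so $g$ is a single affine map there and $\LB(g,S),\UB(g,S)$ are its exact extrema over $C(S)\cap D$, obtained by LP (box/polytope/$\ell_\infty$/$\ell_1$) or SOCP ($\ell_2$). Hence Algorithm~\ref{alg:jitbb} returns \textsc{Proof} exactly when $\min_{x\in D}g(x)\ge0$ — i.e.\ $\|H\|_\infty\le\varepsilon$, resp.\ $H\equiv0$ — with the per‑leaf numbers $\LB(g,S)\ge0$ as certificate; it returns \textsc{Counterexample} exactly when some leaf has $\UB(g,S)<0$, with the LP/SOCP minimizer $x^\star\in C(S)$ as an explicit violating input (sound even before full refinement, by DYN‑1, Thm.~\ref{thm:dyn1}); and \textsc{Unknown} only under budget, still sound by DYN‑1. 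This is the asserted decidability under finite refinement. (Alternatively, exact equivariance follows statically from SWT‑3/4 by forming the difference‑region automaton of $F\circ\mathcal{T}_g$ and $\mathcal{T}'_g\circ F$; the JIT route is the $\varepsilon=0$ instance of the same test.)

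\paragraph{Main obstacle.}
I expect the delicate step to be the finiteness hypothesis of DYN‑6 in the presence of boundary effects: for CNN translations $\mathcal{T}_g$ shifts the receptive field, so outside the effective domain different template instances with different padding guards become active and the piece count genuinely grows. The proof must define the effective domain as the polytope on which the \emph{same} valid in‑domain template instances serve both $F(x)$ and $F(\mathcal{T}_g x)$ (for GNN permutations this is all of $D$), and verify that $H$ restricted to that polytope still has a finite complex so that only finitely many faces can be inserted. The remaining ingredients — CPWL closure under affine pre/post‑composition, the e‑graph encoding of $\|\cdot\|_\infty$, and the appeals to DYN‑1/2/3/6 — are routine once this domain bookkeeping is in place.
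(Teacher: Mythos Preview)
Your proposal is correct and follows essentially the same route as the paper: compile $H=F\circ\mathcal T_g-\mathcal T'_g\circ F$ as a CPWL object via product/composition, encode the spec as the objective $g=\varepsilon-\|H\|_\infty$ (exactly the relational atom in the paper's property language, App.~\ref{app:ver:lang}), and discharge decidability through DYN--6; the paper merely inserts the intermediate label VER--1 where you invoke DYN--6 directly. Your explicit verification of the finite-faces hypothesis and your care about the effective-domain restriction for CNN translations are in fact more detailed than what the paper writes out in App.~\ref{app:ver:eqv}.
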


\paragraph{Causal interventions and maximal influence.}
Consider an intervention that replaces a sub‑expression in the DAG by a CPWL policy
$P_c$, producing $F_C$ (App.~\ref{app:cau}). Since CPWL is closed under the operations we use,
$g_C(x)=F(x)-F_C(x)$ is CPWL. The \emph{maximal causal influence} on $D$ is
\[
I_{\max}(C;D)=\sup_{x\in D}\ |g_C(x)|,
\]
which JIT computes by running B\&B on $g_C$ and $-g_C$ with anytime
bounds $\underline I\le I_{\max}\le \overline I$ and exact value upon full
refinement.

\begin{theorem}[CAU‑1/2: decidable interventions; App.~\ref{app:cau}]\label{thm:cau}
For CPWL interventions $C$ and convex $D$, JIT–B\&B yields sound anytime bounds
on $I_{\max}(C;D)$ and the exact value once refined leaves cover $D$.
\end{theorem}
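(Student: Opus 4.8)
The plan is to reduce everything to the §3 machinery by placing $g_C$ in the same lazy carrier as $F$. First I would build $\langle F\rangle$ by SWT‑1 and obtain $\langle F_C\rangle$ from it by replacing the e‑graph node of the intervened sub‑expression with the carrier of the CPWL policy $P_c$; since $P_c\in\mathrm{CPWL}$ and every downstream module is affine or one of our gates, AF‑1 and AF‑5 give $F_C\in\mathrm{CPWL}$, and the two carriers share all untouched nodes. I then encode the objective as $g_C=\texttt{Sum}(\langle F\rangle,\texttt{Scale}(-1,\langle F_C\rangle))$ and $|g_C|=\texttt{Max}(\{g_C,\texttt{Scale}(-1,g_C)\})$, so the LB/UB oracle, the refiners, Lemma~\ref{lem:mono} and Thms.~\ref{thm:dyn1}--\ref{thm:dyn6} all apply verbatim with objective $g:=g_C$ (equivalently, one B\&B run on $g_C$ and one on $-g_C$, since $I_{\max}(C;D)=\max\{\sup_D g_C,\sup_D(-g_C)\}$). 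Because $g_C$ is CPWL it is affine on a finite polyhedral complex (AF‑4), so only finitely many gate/comparator faces are ever relevant to $D$, and under a fair strategy Alg.~\ref{alg:jitbb} produces a complete refinement cover (Def.~\ref{def:crc}) in finitely many steps (Thm.~\ref{thm:dyn6}). I take $D$ compact convex (box/polytope/$\ell_p$ ball, as in §4); every GuardSet descends from $S_0=D$ via the local splits of Invariant II, so $C(S)\subseteq D$, and each split replaces a leaf by children whose closed cells cover the parent — hence at every moment the current leaves cover $D$.

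Next I would establish the anytime bounds. For a leaf $S$ put $v(S)=\max_{x\in C(S)}|g_C(x)|$ (finite, $C(S)$ compact). Take $\overline I=\max_{S\ \text{leaf}}\UB(|g_C|,S)$ and $\underline I$ to be the running maximum of sound per‑leaf lower bounds on $\sup_{C(S)}|g_C|$ — namely $v(S)$ on locally fully refined leaves (an LP/SOCP; see below) and any feasible‑point value $|g_C(\hat x)|$, $\hat x\in C(S)$, otherwise. Soundness is then immediate: Thm.~\ref{thm:dyn1} gives $|g_C(x)|\le\UB(|g_C|,S)$ on $C(S)$, and since the leaves cover $D$, $I_{\max}(C;D)=\sup_{x\in D}|g_C(x)|\le\overline I$; conversely $C(S)\subseteq D$ forces $v(S)\le I_{\max}(C;D)$ and $|g_C(\hat x)|\le I_{\max}(C;D)$, so $\underline I\le I_{\max}(C;D)$. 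Monotonicity follows from Lemma~\ref{lem:mono}: refining a leaf gives children with $\UB(|g_C|,\cdot)\le\UB(|g_C|,S)$, so $\overline I$ is nonincreasing, while $\underline I$ is a running max of quantities all $\le I_{\max}$, hence nondecreasing. (The two‑objective variant gives the same bracket with $\underline I=\max\{\underline M_+,\underline M_-\}$, $\overline I=\max\{\overline M_+,\overline M_-\}$ and $\underline M_\pm\le\sup_D(\pm g_C)\le\overline M_\pm$.)

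For exactness I would invoke DYN‑2. Suppose $D$ is covered by locally fully refined GuardSets $\{S_t\}$. By Def.~\ref{def:lfr}(ii) the e‑graph of $g_C$ collapses on each $C(S_t)$ to a single affine law $g_C(x)=w_{S_t}^\top x+b_{S_t}$ — all Max/gate nodes of $\langle F\rangle$ and $\langle F_C\rangle$ relevant to $S_t$ having committed — and Thm.~\ref{thm:dyn2} certifies $\underline A=\overline A=A=g_C$ there. Then $v(S_t)=\max\{\max_{x\in C(S_t)}(w_{S_t}^\top x+b_{S_t}),\ \max_{x\in C(S_t)}(-(w_{S_t}^\top x+b_{S_t}))\}$, two affine maximizations over the convex body $C(S_t)$: an LP for box/polytope $D$, an SOCP when $D$ carries an $\ell_2$ constraint (exactly as in Thms.~\ref{thm:geo2} and~\ref{thm:ver2}), each solved exactly, and this value is returned simultaneously as $\UB(|g_C|,S_t)$ and (being attained) as the leaf's lower bound. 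Finiteness of the cover together with $C(S_t)\subseteq D\subseteq\bigcup_t C(S_t)$ then yields
\[
I_{\max}(C;D)=\sup_{x\in D}|g_C(x)|=\max_t\sup_{x\in C(S_t)}|g_C(x)|=\max_t v(S_t)=\underline I=\overline I,
\]
the exact value.

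The hard part, such as it is, is not conceptual but a point of care: the structural $\LB/\UB$ rule for $\texttt{Sum}$ is not tight (summing coordinatewise minima underestimates the minimum of the sum), so exactness genuinely needs the e‑graph to normalize $g_C=F-F_C$ to a \emph{single} affine atom on each refined cell before the LP/SOCP is called — precisely clause (ii) of Def.~\ref{def:lfr}, which is why DYN‑2 rather than the raw oracle is the right tool. A second minor point is bookkeeping: one must carry the ``current leaves cover $D$'' invariant through all three refiners (each of their insertions is a closed‑halfspace split) and use compactness of $D$ so the suprema are attained; beyond that, CPWL‑closure of $g_C$ (AF‑5), soundness (DYN‑1), monotone convergence (Lemma~\ref{lem:mono}) and termination (DYN‑6) are direct instantiations in the shared carrier.
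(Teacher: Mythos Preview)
Your proposal is correct and follows essentially the same route as the paper: establish CPWL closure of $g_C$ via AF--1/5, encode $|g_C|$ in the lazy carrier, and discharge soundness, monotonicity, and exactness by instantiating DYN--1/2/3/6 with per-leaf LP/SOCP on refined cells. Your treatment is in fact more careful than the paper's terse appendix proof---the explicit remark that the structural $\LB/\UB$ rule for \texttt{Sum} is loose and that exactness therefore hinges on clause~(ii) of Def.~\ref{def:lfr} (single-affine collapse before the LP) is a point the paper leaves implicit.
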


\paragraph{Notes on solvers and domains.}
Affine maximization over boxes/polyhedra uses LP; over $\ell_2$ balls or polytope$\cap\ell_2$
uses SOCP; pure $\ell_2$ admits the closed form $w^\top x_0+b+\epsilon\|w\|_2$.
All calls are \emph{local} to refined leaves; hence the global complexity follows the budgeted
bounds of Thm.~\ref{thm:dyn4}.

\begin{figure}[t]
  \centering
  \includegraphics[width=\linewidth]{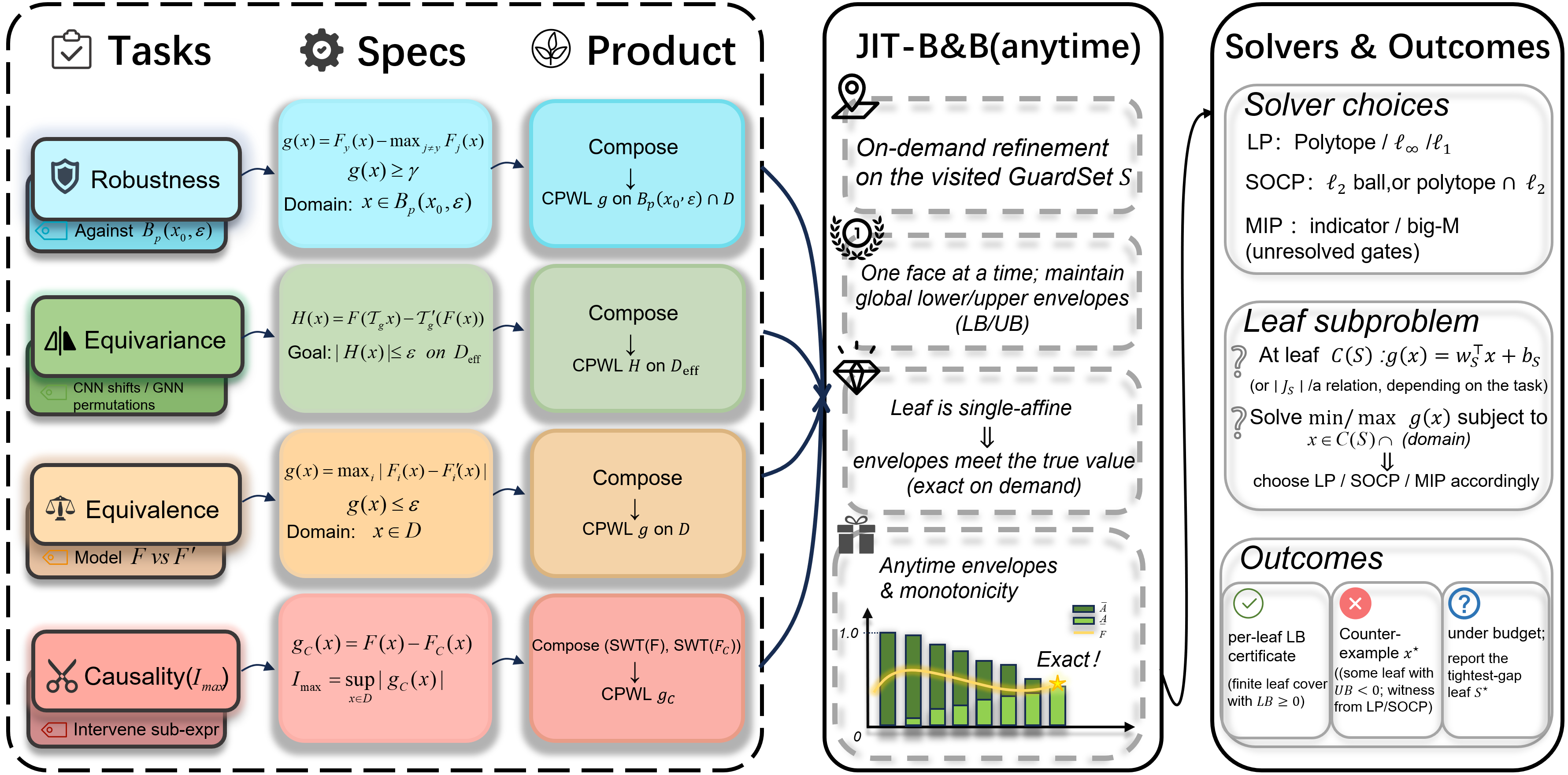}
  \vspace{-0.6em}
  \caption{\textbf{Task–spec–solver map.} Properties are compiled into
  CPWL objectives via product constructions and discharged on JIT leaves by
  small LP/SOCPs (and MIP when indicator constraints are explicitly encoded).
  The JIT envelopes provide anytime certificates; local full refinement gives exact answers.}
  \label{fig:task-map}
  \vspace{-0.8em}
\end{figure}


\section{Experiments}

We report three small yet sufficient studies that exercise JIT–SWT on FFNs, CNNs, and GNNs, focusing on (i) \emph{forward equivalence} to the original PyTorch models after local refinement, and (ii) one geometry/formal task per setting. To save space we summarize the key numbers here; the representative visuals are in \textbf{Figs.~\ref{fig:mnist-lip-dist}, \ref{fig:cnn-heatmap}, \ref{fig:acc-impact}} and full setup is deferred to the appendix~\ref{app:exp:all}.

\vspace{0.25em}
\noindent\textbf{FFN (MNIST): local Lipschitz \& robustness.}
For correctly classified test samples, we JIT‑compile a local affine law $x\mapsto W x+b$ per input and take $\|W\|_2$ as the \emph{local} Lipschitz.
We find \emph{machine‑precision} forward equivalence (avg.\ abs.\ error $1.13{\times}10^{-6}$) and a wide spread of local‑$L$ (most between 2–7; see \autoref{fig:mnist-lip-dist}).
Sorting by local‑$L$, the \emph{Top‑50} most sensitive points exhibit \textbf{100\%} FGSM success at $\varepsilon{=}0.25$, vs.\ \textbf{14\%} on the \emph{Bottom‑50}, confirming a strong correlation between local sensitivity and adversarial fragility.

\vspace{0.25em}
\noindent\textbf{CNN (CIFAR‑10 subset): translation equivariance.}
On a small conv net with stride$=1$, padding$=1$, we verify forward equivalence on random samples (avg.\ abs.\ error $1.36{\times}10^{-7}$), then test $8$ integer shifts $\Delta\in\{-1,0,1\}^2\!\setminus\{(0,0)\}$. The \emph{equivariance pass rate} is \textbf{85\%}.
Failures (58 total) concentrate at diagonal shifts (\autoref{fig:cnn-heatmap}), consistent with boundary effects from zero padding that propagate through ReLU and global pooling.%
\footnote{Ablations with different padding fill values are in the appendix; the conclusion is robust.}

\vspace{0.25em}
\noindent\textbf{GNN (Karate Club): permutation equivariance \& causal Imax.}
For a 2‑layer GCN we obtain near‑exact forward equivalence (mean error $2.17{\times}10^{-6}$) and near‑exact permutation equivariance over 50 random permutations (mean error $6.80{\times}10^{-15}$).
We then compute \emph{Imax} per hidden channel by intervening $h^{(2)}_{\cdot,k}\!\leftarrow\!0$ and running JIT–B\&B over $\|\!\cdot\!\|_\infty$‑bounded input neighborhoods.
Channels ranked \emph{Top‑5} by Imax cause a \textbf{2.94\%} accuracy drop when removed (vs.\ \textbf{0\%} for \emph{Bottom‑5}); see \autoref{fig:imax-dist} for the distribution and \autoref{fig:acc-impact} for ablation bars.

\begin{table}[t]
\centering 
\small
\setlength{\tabcolsep}{3pt}
\renewcommand{\arraystretch}{1.15}
\makebox[\linewidth][c]{
  \begin{tabularx}{1.1\linewidth}{l c Y Y Y}
  \toprule
  \textbf{Task} & \textbf{Forward equiv.} & \textbf{Geometry / statistic} & \textbf{Property outcome} & \textbf{Notes} \\
  \midrule
  FFN (MNIST) &
  avg.\ $1.13{\times}10^{-6}$ &
  local-$L$: Top/Bottom $6.60/2.23$ &
  FGSM@$\varepsilon{=}0.25$: $100\%$ vs $14\%$ &
  $L_{\text{upper}}{=}22.93$ \\
  CNN (CIFAR-10) &
  avg.\ $1.36{\times}10^{-7}$ &
  pass rate $0.85$ over 8 shifts &
  58 failures; concentrated at padding boundaries &
  Heatmap (\autoref{fig:cnn-heatmap}) \\
  GNN (Karate) &
  avg.\ $2.17{\times}10^{-6}$ &
  perm.-equiv.\ mean err.\ $6.80{\times}10^{-15}$ &
  Remove Top-5 Imax $\downarrow 2.94\%$; Bottom-5 $\downarrow 0\%$ &
  Bars (\autoref{fig:acc-impact}) \\
  \bottomrule
  \end{tabularx}
} 
\vspace{-0.6em}
\caption{\textbf{One-page summary.} Forward equivalence is at machine precision across all settings. Each geometry/formal task is discharged on JIT leaves by small LP/SOCPs, with anytime bounds and exactness upon local refinement.}
\label{tab:exp-summary}
\vspace{-0.8em}
\end{table}

\vspace{-0.25em}
\paragraph{Takeaways.}
(i) JIT–SWT attains precise \emph{per‑sample} linearization that is numerically identical to the original networks after local refinement; (ii) geometry is actionable—local Lipschitz pinpoints adversarially fragile points; (iii) JIT formal analysis scales to structure: translation equivariance violations are localized to padding faces; (iv) causal Imax separates crucial GNN channels from redundant ones and predicts ablation impact.


\section{Conclusion and Limitations}

\paragraph{Summary.}
We introduced \emph{SWT}, a symbolic carrier over the CPWL (max,+) semiring with polyhedral guards, and \emph{JIT--SWT}, an on‑demand semantics that shares an expression DAG, inserts faces only on visited cells, and maintains sound anytime envelopes. Geometry and formal tasks—region/gradient/Jacobian extraction, extrema and exact/certified Lipschitz, robustness/equivalence/equivariance, and causal interventions—reduce to small LP/SOCP subproblems on local common refinements, yielding certificates or counterexamples without global enumeration. Compact FFN/CNN/GNN case studies demonstrate machine‑precision agreement and practical value. 

\paragraph{Limitations.}
(i) Guarantees cover DAGs with affine modules and pointwise max‑type gates; multiplicative modules (attention, LSTM/GRU) and training‑time BN fall outside CPWL and need sound PL relaxations. (ii) Decidability assumes \emph{finite} refinements and a \emph{fair} selection policy; under budgets we return a sound tri‑valued answer (\textsc{True}/\textsc{False}/\textsc{Unknown}). (iii) Influence‑type upper bounds derived from single‑fragment linearization are non‑certified unless the leaf is fully refined or tightened by a guard‑aware refinement.

\paragraph{Future work.}
Certified PL lifts for multiplicative modules; compositional verified training/repair; parallel and learned refinement; matrix‑free Jacobian extraction for large conv/GNNs; richer property languages (temporal/relational) and tighter anytime relaxations with certificates.


\clearpage
\bibliography{iclr2026_conference}
\bibliographystyle{iclr2026_conference}
\appendix
\clearpage


\section{Notation, Domains, and the Guard Library}
\label{app:notation-guards}

\noindent
This appendix fixes notations, domain conventions, and the canonicalization
rules for the global \emph{guard library} used by SWT and JIT--SWT.
All statements here are design-level and compatible with the truth semantics in \S2
and the JIT semantics in \S3.

\vspace{0.25em}
\paragraph{Basic sets and operators.}
For a set $S\subseteq\R^n$, let $\operatorname{ri}(S)$, $\partial S$, and
$\operatorname{aff}(S)$ denote its relative interior, boundary, and affine hull.
For a family $\{S_i\}$, write $\bigcap S_i$ and $\bigcup S_i$ for intersection and union.
The indicator $\mathbb{I}[\cdot]$ takes values in $\{0,1\}$.

\subsection{Domains and norms}
\label{app:notation:norms}
\begin{definition}[Input domains]
\label{def:domains}
We work with convex input domains $D\subseteq\R^n$ of the following types:
\begin{enumerate}[leftmargin=1.3em,itemsep=0.2em]
\item \textbf{Box/polytope (H--form).} 
$D=\{x\in\R^n: A_D x\le d_D\}$ with $A_D\in\R^{m\times n}$.
\item \textbf{$\ell_p$ balls.} 
$D=B_p(x_0,\epsilon)=\{x:\|x-x_0\|_p\le\epsilon\}$ for $p\in\{1,2,\infty\}$.
\end{enumerate}
When $D$ is not polyhedral (e.g., $p=2$), we treat $D$ as a second--order cone
constraint in local LP/SOCP subproblems; only \emph{linear} faces are stored in the
global guard library (\S\ref{app:notation:guards}).
\end{definition}

\begin{definition}[Norms and duals]
\label{def:norms}
For $p\in[1,\infty]$, let $\|\cdot\|_p$ be the vector $p$--norm and $p^*$ its dual with
$1/p+1/p^*=1$ (by convention $1/\infty=0$). For a matrix or linear map $J$,
$\|J\|_{p\to r}=\sup_{\|v\|_p=1}\|Jv\|_r$.
We use the standard closed forms when applicable:
$\|J\|_{1\to1}=\max_j\sum_i|J_{ij}|$, 
$\|J\|_{\infty\to\infty}=\max_i\sum_j|J_{ij}|$, and
$\|J\|_{2\to2}=\sigma_{\max}(J)$.
\end{definition}

\subsection{Polyhedral guards in H--form and closed--set semantics}
\label{app:notation:Hform}
A (closed) halfspace is encoded by a normalized pair $(a,d)\in\R^n\times\R$ as
\[
h(x)\;:\; a^\top x\le d.
\]
A polyhedron in H--form is an intersection of finitely many closed halfspaces.
We always interpret gate thresholds and comparator faces with \emph{closed} inequalities
($\ge$ and $\le$ on appropriate sides), so that ``ties'' (equality cases) belong to
both sides and are resolved semantically by $\oplus=\max$ in the function semiring.

\subsection{The global guard library \texorpdfstring{$\mathcal H$}{H}: canonicalization and uniqueness}
\label{app:notation:guards}
\paragraph{Normalization.}
Each linear inequality is stored once in a global library
$\mathcal H=\{h_\ell(x):a_\ell^\top x\le d_\ell\}_{\ell=1}^M$ under the following
canonicalization:
\begin{enumerate}[leftmargin=1.3em,itemsep=0.2em]
\item \textbf{Row normalization:} $\|a_\ell\|_2=1$.
\item \textbf{Sign canonicalization:} if the first nonzero entry of $a_\ell$ is negative,
multiply both sides by $-1$ to make it nonnegative; this makes
$(a,d)\sim(\alpha a,\alpha d)$ with $\alpha>0$ share a unique representative.
\item \textbf{Tie policy:} for a threshold $\{z\ge 0\}$ we register \emph{both} orientations
$\{z\ge 0\}$ and $\{z\le 0\}$ as distinct indices, so equality is covered on either side.
We denote the reverse orientation of $\ell$ by $\bar\ell$.
\end{enumerate}

\paragraph{Comparator faces.}
For two affine forms $f(x)=w_f^\top x+b_f$ and $g(x)=w_g^\top x+b_g$,
the comparator face $\{f\ge g\}$ is represented by
$\big((w_f{-}w_g),\, (b_f{-}b_g)\big)$ and normalized as above before insertion.

\paragraph{Storage discipline.}
Edges in SWT/JIT store \emph{only indices} into $\mathcal H$; intersections are not
materialized on edges (no ``region as state''), cf.\ \S2.3.

\subsection{GuardSets, feasibility, and leaf semantics}
\label{app:notation:guardset}
\begin{definition}[GuardSet]
\label{def:guardset}
A \emph{GuardSet} is a finite ordered index set $S\subseteq\{1,\dots,M\}$ representing
the polyhedron
\[
C(S)\;=\;\bigcap_{\ell\in S}\{x: a_\ell^\top x\le d_\ell\}.
\]
We write $S\cup\{\ell\}$ (resp.\ $S\cup\{\bar\ell\}$) for adding an oriented face.
\end{definition}

\begin{definition}[Feasibility cache]
\label{def:feascache}
We maintain a cache $\mathsf{feas}:S\mapsto\{\textsf{unknown},\textsf{infeas},\textsf{feas}\}$,
queried/updated by a single LP feasibility test for $C(S)$. Certificates (a feasible point
or an infeasibility proof from the solver) may be stored alongside for reuse.
\end{definition}

\begin{definition}[Leaves and active cover]
\label{def:leaves}
At any time, the JIT refinement maintains a finite set of \emph{leaf} GuardSets
$\mathcal L\subset 2^{\{1,\dots,M\}}$ such that:
(i) every visited $x$ belongs to at least one feasible $C(S)$ with $S\in\mathcal L$,
and (ii) whenever a leaf $S$ is split along a (possibly new) face $\ell$,
$S$ is \emph{removed} from $\mathcal L$ and replaced by its children
$S^+=S\cup\{\ell\}$ and $S^-=S\cup\{\bar\ell\}$ that remain feasible.
\end{definition}

\begin{remark}[Anytime envelopes are taken over \emph{leaves}]
\label{rem:leaves-envelopes}
All lower/upper bounds $\LB(E,S)$ and $\UB(E,S)$ used to assemble the
anytime envelopes $(\underline A,\overline A)$ in \S3 are aggregated only over
\emph{current leaves} $S\in\mathcal L$ containing the query point $x$.
Removing a parent from $\mathcal L$ upon splitting ensures the monotonicity
$\underline A\uparrow$, $\overline A\downarrow$ proved in \S3 (DYN--1/3).
\end{remark}

\subsection{Lazy expression DAG (e--graph): syntax, semantics, and sharing}
\label{app:notation:egraph}
\paragraph{Syntax.}
Scalar expressions are formed by the grammar
\[
\texttt{Expr}\ ::=\ 
\texttt{Affine}(w,b)\ \mid\
\texttt{Sum}(\mathcal E)\ \mid\
\texttt{Max}(\mathcal E)\ \mid\
\texttt{Scale}(c,E)\ \mid\
\texttt{Bias}(b,E),
\]
where $\mathcal E$ is a finite multiset of subexpressions, $c\in\R$, and
$\texttt{Affine}(w,b)(x)=w^\top x+b$. Vector outputs are tuples of scalar \texttt{Expr}s.

\paragraph{Pointwise semantics.}
For $x\in\R^n$, the denotation ${\llbracket E\rrbracket}(x)$ is defined recursively by
\[
\begin{aligned}
&{\llbracket \texttt{Affine}(w,b)\rrbracket}(x)=w^\top x+b,\quad
{\llbracket \texttt{Bias}(b,E)\rrbracket}(x)={\llbracket E\rrbracket}(x)+b,\\
&{\llbracket \texttt{Scale}(c,E)\rrbracket}(x)=c\cdot{\llbracket E\rrbracket}(x),\quad
{\llbracket \texttt{Sum}(\{E_i\})\rrbracket}(x)=\sum_i {\llbracket E_i\rrbracket}(x),\\
&{\llbracket \texttt{Max}(\{E_i\})\rrbracket}(x)=\max_i {\llbracket E_i\rrbracket}(x).
\end{aligned}
\]
These agree with the function semiring $(\mathrm{CPWL}^{\pm\infty},\oplus,\otimes)$ via
$\oplus=\max$ and $\otimes=+$ (cf.\ \S2).

\paragraph{Sharing and congruence.}
Expressions are stored in an \emph{e--graph} with hash--consing and congruence closure:
\begin{enumerate}[leftmargin=1.3em,itemsep=0.2em]
\item \textbf{Affine interning.} Two affine atoms are identical iff their $(w,b)$
are componentwise equal (theory level) or equal within a fixed tolerance $\tau$
(implementation level; does not affect the closed--set semantics).
\item \textbf{AC canonicalization.} Children of \texttt{Sum}/\texttt{Max} are stored in a
deterministic order (e.g., lexicographic by node id) so that $\texttt{Sum}(E_1,E_2)$ and
$\texttt{Sum}(E_2,E_1)$ share; likewise for $\texttt{Max}$.
\item \textbf{Common subexpression elimination.} Any structurally equal subterm is stored once
and referenced by pointers; this enables whole--graph updates when a subterm’s bound is tightened.
\end{enumerate}

\paragraph{Comparator nodes and winners.}
Comparator faces introduced by $\texttt{Max}$ (and gate sign tests) are not embedded as
Boolean nodes; instead they are materialized as linear faces registered in $\mathcal H$
(\S\ref{app:notation:guards}) when a visited leaf $S\in\mathcal L$ requires disambiguation.
Winner selection is then expressed by restricting to the corresponding children
$S\cup\{\ell\}$ or $S\cup\{\bar\ell\}$.

\subsection{Ties, floating point tolerance, and determinism}
\label{app:notation:ties}
All theoretical statements use closed guards and exact reals.
Implementations may adopt a small tolerance $\tau$ (e.g., $10^{-7}$) to decide numeric
equalities; when $|z(x)|\le\tau$ at a hinge, both orientations are registered and explored
under JIT, preserving soundness of the envelopes in \S3.

\subsection{Summary table of symbols}
\label{app:notation:table}
\begin{table}[h]
\centering
\small
\setlength{\tabcolsep}{6pt}
\renewcommand{\arraystretch}{1.15}
\begin{tabularx}{\linewidth}{l l Y}
\toprule
\textbf{Symbol} & \textbf{Type} & \textbf{Meaning} \\
\midrule
$D,\ \mathcal D_{\rm in}$ & set & Input domain / global domain (\S\ref{def:domains}) \\
$B_p(x_0,\epsilon)$ & set & $\ell_p$ ball (SOCP when $p{=}2$) \\
$\|\cdot\|_p,\ p^*$ & number & Vector norm and its dual (\S\ref{def:norms}) \\
$\|J\|_{p\to r}$ & number & Operator norm of a linear map $J$ \\
$\mathcal H$ & library & Global guard library (\S\ref{app:notation:guards}) \\
$h_\ell$ & halfspace & Guard $\{a_\ell^\top x\le d_\ell\}$; reverse index $\bar\ell$ \\
$S$ & set of ids & GuardSet; $C(S)=\bigcap_{\ell\in S}h_\ell$ (\S\ref{def:guardset}) \\
$\mathsf{feas}(S)$ & enum & Feasibility cache (\S\ref{def:feascache}) \\
$\mathcal L$ & set & Current leaves (active cover; \S\ref{def:leaves}) \\
\texttt{Expr} & DAG node & Lazy scalar expression (\S\ref{app:notation:egraph}) \\
$\underline A,\overline A$ & function & Anytime envelopes (aggregated over leaves) \\
\bottomrule
\end{tabularx}
\end{table}

\section{Proofs for AF-1..AF-5}
\label{app:af}

This appendix supplies full proofs for the AF-series statements used in the main
text: CPWL closure (AF--1), pointwise homomorphism (AF--2),
continuity and sufficient convexity (AF--3), a.e.\ differentiability and Clarke
subgradients (AF--4), and closure under composition (AF--5).
We respect the closed-guard convention (both sides of hinges kept) fixed in
\S\ref{app:notation-guards}.

\subsection{Preliminaries on CPWL functions and polyhedral complexes}
\label{app:af:prelim}

\begin{definition}[CPWL]
A function $f:\R^n\!\to\!\R$ is \emph{continuous piecewise linear} (CPWL) if there exists
a finite polyhedral complex $\mathcal{C}=\{R_\rho\}$ that covers $\R^n$ such that for each
cell $R_\rho$, $f(x)=w_\rho^\top x+b_\rho$ holds on $R_\rho$ and adjacent cells agree on
their shared faces (continuity).
\end{definition}

\begin{lemma}[Finite overlays]
\label{lem:overlay}
Let $\mathcal{C}_1,\ldots,\mathcal{C}_k$ be finite polyhedral complexes in $\R^n$.
Then the common refinement
$\mathcal{C}=\operatorname{refine}(\mathcal{C}_1,\ldots,\mathcal{C}_k)$ obtained by
intersecting all cells is a finite polyhedral complex.
\end{lemma}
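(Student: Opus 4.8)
The plan is to work with the standard notion of a finite polyhedral complex: a finite family of closed polyhedra that is closed under passing to faces, in which the intersection of any two members is a common face of both, and (in the ambient-covering variant implicit in the CPWL definition) whose union is all of $\R^n$. Define the refinement $\mathcal{C}$ to be the set of all intersections $P_1\cap\cdots\cap P_k$ with $P_i$ ranging over the cells of $\mathcal{C}_i$ (retaining nonempty ones, with $\emptyset$ admitted as the trivial face). This set is finite, since $|\mathcal{C}|\le\prod_i|\mathcal{C}_i|$, and each member is a polyhedron as a finite intersection of polyhedra. It remains only to verify the two structural axioms, and the whole argument hinges on a face-decomposition identity.

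First I would fix H-representations $P_i=\{x:A^{(i)}x\le b^{(i)}\}$ and recall the active-constraint description of faces: the faces of $\{x:Ax\le b\}$ are exactly the nonempty sets $\{x:\ Ax\le b,\ A_Jx=b_J\}$ as $J$ ranges over subsets of the row indices (plus $\emptyset$), and this description is equivalent to the representation-free one ($G\subseteq P$ with $G=P\cap\{c^\top x=\delta\}$ for some valid inequality $c^\top x\le\delta$). For a product cell $P=\bigcap_iP_i$, whose constraint system is the concatenation of the blocks, partitioning a row-subset $J$ block-by-block into $J=\bigcup_iI_i$ shows that the faces of $P$ are precisely the sets $\{x\in P:\ A^{(i)}_{I_i}x=b^{(i)}_{I_i}\ \forall i\}$, and a short set-theoretic computation identifies this with $\bigcap_iG_i$, where $G_i=\{x\in P_i:\ A^{(i)}_{I_i}x=b^{(i)}_{I_i}\}$ is a face of $P_i$. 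Read forwards this says every face of $\bigcap_iP_i$ is an intersection of faces of the $P_i$; read backwards it says any intersection of faces $G_i\subseteq P_i$ is a face of $\bigcap_iP_i$. Since each $\mathcal{C}_i$ contains the faces $G_i$, the forward direction gives closure of $\mathcal{C}$ under faces.

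For the intersection axiom, take $P=\bigcap_iP_i$ and $Q=\bigcap_iQ_i$ in $\mathcal{C}$, so $P\cap Q=\bigcap_i(P_i\cap Q_i)$. Because $\mathcal{C}_i$ is a complex, $P_i\cap Q_i$ is a common face of $P_i$ and $Q_i$ (and lies in $\mathcal{C}_i$); by the backward direction above, $\bigcap_i(P_i\cap Q_i)$ is then a face of $\bigcap_iP_i=P$ and likewise of $Q$, so $P\cap Q$ is a common face, and it is again of product form, hence in $\mathcal{C}$. Finally, if each $\mathcal{C}_i$ covers $\R^n$, then any $x$ lies in some $P_i\in\mathcal{C}_i$ for every $i$, so $x\in\bigcap_iP_i\in\mathcal{C}$, establishing the covering variant. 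The $k$-ary statement can alternatively be obtained from the binary case by induction on $k$.

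I expect the only real obstacle to be the bookkeeping in the face-decomposition identity — proving cleanly, in both directions and in a representation-independent way, that faces of $\bigcap_iP_i$ are exactly intersections of faces of the $P_i$, with correct treatment of the improper face $P$ itself, the empty face, and degenerate cases where a product intersection is empty or lower-dimensional than expected. Everything else (finiteness, polyhedrality of finite intersections, and preservation of covering) is immediate, so I would state the active-constraint/valid-inequality equivalence for faces once at the start and then do all the work at the level of constraint subsets.
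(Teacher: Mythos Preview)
Your proposal is correct and follows the same approach as the paper: define the refinement as all cellwise intersections, note finiteness and polyhedrality, and derive face-compatibility from the fact that faces of $\bigcap_i P_i$ are intersections of faces of the $P_i$. The paper's proof is a three-sentence sketch that simply asserts this last fact (``distributivity of intersections over faces''); you supply the active-constraint argument that actually justifies it, so your version is strictly more detailed but not different in substance.
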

\begin{proof}
Each $\mathcal{C}_i$ is finite and closed under taking faces.
Intersections of finitely many polyhedra are polyhedra, and the set of all nonempty
intersections of finitely many cells is finite. Face-compatibility follows from
face-compatibility of the inputs and distributivity of intersections over faces.
\end{proof}

\begin{lemma}[Basic CPWL closure]
\label{lem:basic-closure}
If $f,g$ are CPWL on $\R^n$, then $f+g$ and $\max\{f,g\}$ are CPWL.
If $A\in\R^{m\times n},b\in\R^m$, then $x\mapsto f(Ax{+}b)$ is CPWL on $\R^m$.
\end{lemma}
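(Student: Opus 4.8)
The plan is to prove each of the three closure claims by exhibiting, in each case, a finite polyhedral complex on which the target function is affine cell-by-cell, and then checking continuity. The main tool is Lemma~\ref{lem:overlay} (finite overlays): starting from the complexes $\mathcal{C}_f,\mathcal{C}_g$ witnessing that $f,g$ are CPWL, I take their common refinement $\mathcal{C}=\operatorname{refine}(\mathcal{C}_f,\mathcal{C}_g)$, which is again a finite polyhedral complex. On each cell $R\in\mathcal{C}$ both $f$ and $g$ are simultaneously affine, say $f=w_f^\top x+b_f$ and $g=w_g^\top x+b_g$ on $R$.

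For $f+g$: on each cell $R$ of the common refinement, $(f+g)(x)=(w_f+w_g)^\top x+(b_f+b_g)$ is affine, so $\mathcal{C}$ itself witnesses that $f+g$ is piecewise linear. Continuity is immediate since the sum of continuous functions is continuous (or, at the combinatorial level, adjacent cells share face values because $f$ and $g$ each do). For $\max\{f,g\}$: I further refine $\mathcal{C}$ by the single hyperplane family $\{x: (w_f-w_g)^\top x+(b_f-b_g)=0\}$ restricted to each cell — more precisely, I split each $R$ into $R\cap\{f\ge g\}$ and $R\cap\{f\le g\}$. This is again a finite polyhedral complex by Lemma~\ref{lem:overlay} (intersecting with the two closed halfspaces of one hyperplane). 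On $R\cap\{f\ge g\}$ we have $\max\{f,g\}=f$, affine; on $R\cap\{f\le g\}$ it equals $g$, affine; the two pieces agree on the shared face $\{f=g\}$, giving continuity. For the affine precomposition $x\mapsto f(Ax+b)$: let $\mathcal{C}_f=\{R_\rho\}$ witness $f$ CPWL on $\R^n$; define the pullback cells $R_\rho' = \{x\in\R^m : Ax+b\in R_\rho\}$. Each $R_\rho'$ is the preimage of a polyhedron under an affine map, hence a polyhedron; faces pull back to faces; finiteness is preserved; and on $R_\rho'$, $f(Ax+b)=w_\rho^\top(Ax+b)+b_\rho=(A^\top w_\rho)^\top x+(w_\rho^\top b+b_\rho)$ is affine. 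Continuity of $f\circ(A\cdot+b)$ follows from continuity of $f$ and of the affine map.

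The only genuinely delicate point is the bookkeeping of \emph{face-compatibility} (the complex axiom that the intersection of any two cells is a common face of both) after refinement — in particular for the $\max$ case where I slice cells with an extra hyperplane. I would handle this by invoking Lemma~\ref{lem:overlay} in the form: the $\max$-complex is $\operatorname{refine}(\mathcal{C}, \mathcal{H}^+, \mathcal{H}^-)$ where $\mathcal{H}^\pm$ are the (trivial, two-cell) complexes given by the closed halfspaces $\{f\ge g\}$ and $\{f\le g\}$ together with their common face $\{f=g\}$; Lemma~\ref{lem:overlay} then delivers a finite polyhedral complex directly, so no ad hoc verification is needed. I expect no serious obstacles: every step reduces to "preimage/intersection of polyhedra is a polyhedron" plus "the overlay is finite," both already packaged in Lemma~\ref{lem:overlay}, and continuity is either inherited from $f,g$ or checked on shared faces where the two affine pieces coincide by construction.
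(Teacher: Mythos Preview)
Your proposal is correct and follows essentially the same approach as the paper: take the common refinement for $f+g$, further refine by comparator hyperplanes for $\max$, and pull back cells for affine precomposition. One small wording issue in your final paragraph: $\{f\ge g\}$ is not a global halfspace (since $f-g$ is only \emph{piecewise} affine), so the overlay should be taken with the finite family of per-cell comparator hyperplanes rather than a single two-cell complex---but your direct cell-by-cell splitting argument in the second paragraph already handles this correctly and matches the paper.
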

\begin{proof}
Write $f=w_\rho^\top x+b_\rho$ on cells $R_\rho$ of $\mathcal{C}_f$ and
$g=u_\sigma^\top x+c_\sigma$ on cells $S_\sigma$ of $\mathcal{C}_g$.
On the overlay $\mathcal{C}=\operatorname{refine}(\mathcal{C}_f,\mathcal{C}_g)$,
both are affine, hence $f{+}g$ is affine cell-wise.
For $\max\{f,g\}$, further refine by the family of linear comparators
$\{(w_\rho{-}u_\sigma)^\top x+(b_\rho{-}c_\sigma)\ge 0\}$; on each side of each comparator
the winner is fixed and the result is affine. Finiteness is preserved by
Lemma~\ref{lem:overlay}. For composition with $Ax{+}b$, pull back each cell
$R_\rho$ by $(Ax{+}b)$; preimages of polyhedra by an affine map are polyhedra, and
continuity is preserved, so $f\!\circ\!(Ax{+}b)$ is CPWL on a finite complex.
\end{proof}

\subsection{AF-1: Well-definedness (CPWL closure of the component set)}
\label{app:af:af1}

\begin{theorem}[AF--1]
Under the component and DAG assumptions (affine/conv/mean-pool/inference BN/residual;
ReLU/LReLU/PReLU/Abs/Max), every pre-activation is CPWL and every activation remains in
$K_f=(\mathrm{CPWL}^{\pm\infty},\oplus=\max,\otimes=+)$.
\end{theorem}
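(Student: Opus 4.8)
The plan is to argue by structural induction over the computation DAG, reducing every component in the allowed set to operations already known to preserve CPWL by Lemma~\ref{lem:basic-closure} (pointwise sum, pointwise max, and affine precomposition), together with two trivial observations: any constant function is CPWL (it is affine on the single cell $\R^n$), and any scalar multiple $c\,f$ of a CPWL function $f$ is CPWL (affine on each cell of $f$'s complex, with continuity inherited). From these, an inner finite induction combined with Lemma~\ref{lem:overlay} yields that any finite affine combination $\sum_i c_i f_i + b$ of CPWL functions is CPWL on the common overlay of their complexes.

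First I would fix a topological order on the DAG and carry the inductive hypothesis that every already-processed node has a CPWL value in each coordinate (vector-valued intermediates are treated componentwise throughout). The base case is the input node, whose coordinates are the affine projections $x\mapsto x_j$. For the inductive step I would walk the component list. (i) Fully-connected/affine layers, convolutions, mean-pooling, and inference-time batch norm are all (possibly sparse, parameter-tied) affine maps $y = Wx+b$ of the incoming CPWL vector, so each output coordinate is a finite affine combination of CPWL functions, hence CPWL; kernel aliasing in the convolutional template changes only the representation, not the underlying affine function. (ii) A residual connection outputs the sum of two CPWL branches, CPWL by Lemma~\ref{lem:basic-closure}. (iii) For the pointwise gates: $\mathrm{ReLU}(z)=\max\{z,0\}$ with $z$ the (CPWL) pre-activation and $0$ constant; $\mathrm{Abs}(z)=\max\{z,-z\}$; Leaky-ReLU/PReLU at inference time is the fixed affine combination $\tfrac{1+\alpha}{2}z+\tfrac{1-\alpha}{2}\lvert z\rvert$ (equivalently $\max\{z,\alpha z\}$ when $\alpha\in[0,1)$), hence CPWL by the affine-combination and $\mathrm{Abs}$ cases. (iv) Pointwise $\mathrm{Max}$ and $\mathrm{MaxPool}$ output $\max\{z_1,\dots,z_k\}$ of finitely many CPWL functions, CPWL by iterating the binary case of Lemma~\ref{lem:basic-closure} $k-1$ times. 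Since the DAG is finite, after finitely many steps every node — in particular every pre-activation fed to a gate — is CPWL, and every gate output is again CPWL, so it lies in $\mathrm{CPWL}\subseteq\mathrm{CPWL}^{\pm\infty}$, the carrier of $K_f$; the semiring zero $-\infty$ is available in the carrier but is never produced by an actual forward value, which is all the statement requires.

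There is no deep obstacle: the theorem is essentially a corollary of Lemma~\ref{lem:basic-closure} plus a finite case enumeration, so the work is bookkeeping. The points I would be most careful about are keeping the vector/componentwise reduction explicit so that an "affine layer" genuinely decomposes into per-coordinate affine combinations of CPWL inputs; recording that the closed-guard / closed-complex convention makes the side-assignment of hinge boundaries immaterial to the CPWL property (which constrains only the pointwise affine laws and continuity, not the combinatorial tie-breaking); and phrasing the LReLU/PReLU identity so that it covers all admissible slopes uniformly rather than only the $\max$-representable range. I would close by noting that the same induction, read through Lemma~\ref{lem:overlay}, simultaneously produces an explicit finite polyhedral complex on which every node is affine — a fact reused by AF-4 and by the SWT compilation of \S2.
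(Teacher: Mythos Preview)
Your proposal is correct and follows essentially the same route as the paper: structural induction over the DAG, with each component reduced to the closure properties of Lemma~\ref{lem:basic-closure} (affine precomposition, finite sums, finite maxima). Your treatment is in fact slightly more careful than the paper's in two places---you make the componentwise reduction for vector-valued intermediates explicit, and your affine-combination identity $\tfrac{1+\alpha}{2}z+\tfrac{1-\alpha}{2}|z|$ for LReLU/PReLU covers all real $\alpha$ uniformly, whereas the paper's $\max\{z,\alpha z\}$ form silently requires $\alpha\le 1$.
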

\begin{proof}
\emph{Affine, residual, mean-pool, inference-time BN.}
Each is an affine map in the input; convolution with fixed stride/padding is a linear map,
hence affine. (Inference BN $y=\gamma\odot\frac{x-\mu}{\sigma}+\beta$ is affine with fixed
$\mu,\sigma,\gamma,\beta$.) By Lemma~\ref{lem:basic-closure}, affine maps preserve CPWL.

\emph{Pointwise gates.}
Let $z$ be a CPWL pre-activation.
\[
\mathrm{ReLU}(z)=\max\{0,z\},\quad
\mathrm{LReLU}_\alpha(z)=\max\{z,\alpha z\}\ (\alpha\!\in[0,1]),\quad
\mathrm{PReLU}_\alpha(z)=\max\{z,\alpha z\}\ (\alpha\!\ge 0),
\]
\[
\mathrm{Abs}(z)=\max\{z,-z\},\qquad
\mathrm{Max}(z_1,\ldots,z_k)=\max_i z_i.
\]
Each right-hand side is a finite max of CPWL functions (affine multiples of $z$ or a finite
family $\{z_i\}$), hence CPWL by Lemma~\ref{lem:basic-closure}.
The closed-guard convention ensures ties are covered on both sides and continuity is kept.

\emph{DAG composition.}
Layerwise application of the above operations along a DAG composes CPWL maps with affine
maps and finite maxima; closure follows from Lemma~\ref{lem:basic-closure}.
\end{proof}

\subsection{AF-2: Pointwise homomorphism}
\label{app:af:af2}

\begin{theorem}[AF--2]
For any input $x_0$, evaluation $h_{x_0}:K_f\to(\R,\max,+)$ given by $h_{x_0}(f)=f(x_0)$
is a semiring homomorphism. Therefore the SWT expression obtained from a network evaluates
to the same numeric forward value at $x_0$.
\end{theorem}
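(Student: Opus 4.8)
The plan is to establish the two assertions separately: first that $h_{x_0}$ is a semiring homomorphism from $K_f$ into the max-plus semiring $\mathbb{T}=(\R\cup\{-\infty\},\ \max,\ +,\ -\infty,\ 0)$, and then that, using this, the symbolic SWT carrier of a network evaluated at $x_0$ reproduces the numeric forward pass. Essentially all of the ``verification'' lives in the first part, but it is routine; the second part is a structural induction that leans on the first.

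For the homomorphism I would check the four axioms directly from the pointwise definitions recalled in \S\ref{2}. Totality: every CPWL function is everywhere finite and the single extra element $\mathbf 0\equiv-\infty$ maps to $-\infty\in\mathbb T$, so $h_{x_0}$ is defined on all of $\mathrm{CPWL}^{\pm\infty}$. Additivity: $h_{x_0}(f\oplus g)=(f\oplus g)(x_0)=\max\{f(x_0),g(x_0)\}=h_{x_0}(f)\oplus h_{x_0}(g)$. Multiplicativity: $h_{x_0}(f\otimes g)=(f\otimes g)(x_0)=f(x_0)+g(x_0)=h_{x_0}(f)\otimes h_{x_0}(g)$, with the $-\infty$ absorption preserved since $c+(-\infty)=-\infty$. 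Units: $h_{x_0}(\mathbf 0)=-\infty$ and $h_{x_0}(\mathbf 1)=0$, the additive and multiplicative identities of $\mathbb T$. Hence $h_{x_0}$ is a semiring homomorphism.

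For the forward-pass agreement I would induct over the network DAG, equivalently over the lazy $\texttt{Affine}/\texttt{Sum}/\texttt{Max}/\texttt{Scale}/\texttt{Bias}$ expression produced by the compilation rules of \S\ref{2}. Base case: the input carrier denotes the coordinate maps, so $h_{x_0}$ returns $x_0$. Inductive step, by layer type: (i) affine/conv/mean-pool/inference-time BN act on the coefficient pair $(w,b)\mapsto(Ww,\,Wb+b_0)$, and $h_{x_0}$ gives $Ww^\top x_0+Wb+b_0=W(w^\top x_0+b)+b_0$, i.e.\ the numeric affine image of the inductive value; (ii) residual/sum is a $\texttt{Sum}$ node, on which $h_{x_0}$ acts as numeric addition by the homomorphism; (iii) ReLU/LReLU/PReLU/Abs and pointwise $\mathrm{Max}$/MaxPool are $\texttt{Max}$ nodes over finitely many affine (or $\texttt{Scale}$d) children---e.g.\ $\mathrm{ReLU}(z)=\max\{0,z\}$, $\mathrm{Abs}(z)=\max\{z,-z\}$, $\mathrm{LReLU}_\alpha(z)=\max\{z,\alpha z\}$---on which $h_{x_0}$ acts as the numeric $\max$ of the children's values, which is exactly the numeric gate output. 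Composing along the DAG and invoking AF-1 / Lemma~\ref{lem:basic-closure} to remain in $K_f$, the output expression $E$ satisfies $h_{x_0}(E)=F(x_0)$.

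The only step requiring genuine care is the guarded-transducer presentation $A(x)=\bigoplus_{\pi:\,x\models\pi}\mathsf{val}_A(\pi,x)$, where one must check that at the fixed point $x_0$ the feasible paths through each gate, combined by $\oplus$, reproduce the numeric choice. For ReLU this is the case analysis on $z(x_0)$: if $z(x_0)>0$ only the $\{z\ge 0\}$ branch is feasible and contributes $z(x_0)=\max\{0,z(x_0)\}$; if $z(x_0)<0$ only $\{z\le 0\}$ is feasible, contributing $0$; if $z(x_0)=0$ both are feasible and their $\oplus$ is $\max\{0,0\}=0$---so in every case the value equals $\mathrm{ReLU}(z(x_0))$, the closed-guard convention being exactly what makes the $z(x_0)=0$ case come out right. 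The identical argument with comparator faces $\{(w_i-w_j)^\top x+(b_i-b_j)\ge 0\}$ handles pointwise $\mathrm{Max}$/MaxPool. Once this is in place, $h_{x_0}$ distributes over the path $\bigoplus$ and $\bigotimes$ by the homomorphism property, yielding $A(x_0)=F(x_0)$, consistent with SWT-1. I expect this guard-bookkeeping, rather than the algebra, to be the main obstacle; everything else is mechanical.
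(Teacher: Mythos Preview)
Your proposal is correct and follows essentially the same route as the paper: verify the four homomorphism axioms directly from the pointwise definitions of $\oplus,\otimes,\mathbf 0,\mathbf 1$, then induct on the topological order of the network DAG to push evaluation through each module. Your guard-bookkeeping paragraph actually goes beyond what the paper proves under AF--2 (the per-gate feasible-path analysis is the content of SWT--1, treated separately), but it is correct and does no harm.
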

\begin{proof}
By definition,
$h_{x_0}(f\oplus g)=(f\oplus g)(x_0)=\max\{f(x_0),g(x_0)\}=\max\{h_{x_0}(f),h_{x_0}(g)\}$,
and $h_{x_0}(f\otimes g)=(f\otimes g)(x_0)=f(x_0){+}g(x_0)=h_{x_0}(f){+}h_{x_0}(g)$.
Also $h_{x_0}(\mathbf 0)=-\infty$ and $h_{x_0}(\mathbf 1)=0$.
Hence $h_{x_0}$ is a semiring homomorphism.

A network compiled to an SWT is a finite DAG of $K_f$-weighted nodes/edges combined by
$\oplus$ (for selections/max) and $\otimes$ (for branch sums).
Evaluating the SWT via $h_{x_0}$ pushes evaluation through the DAG and returns exactly the
numeric forward pass (sum along the active branches; max across competing branches).
Induction on the topological order of the DAG completes the proof.
\end{proof}

\subsection{AF-3: Continuity and sufficient convexity}
\label{app:af:af3}

\begin{theorem}[AF--3: continuity]
Every network output coordinate is CPWL and hence continuous on $\R^n$.
\end{theorem}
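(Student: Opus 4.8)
The plan is to obtain AF-3 as an immediate structural corollary of AF-1 together with the defining properties of CPWL functions collected in App.~\ref{app:af:prelim}. By AF-1, each output coordinate $F_i$ of the network is CPWL, so there is a finite polyhedral complex $\mathcal{C}=\{R_\rho\}$ covering $\R^n$ on which $F_i$ agrees cell-by-cell with an affine law $x\mapsto w_\rho^\top x+b_\rho$, with adjacent cells agreeing on their shared faces. Continuity of an affine map on a (closed) polyhedron is immediate; the only thing that needs an argument is that the function pieced together from these affine laws is \emph{globally} continuous, which is a finite closed-cover pasting argument.

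Concretely I would proceed in three steps. First, recall from App.~\ref{app:af:prelim} that ``CPWL'' means: affine on the cells of a finite polyhedral complex, each cell closed, with continuity across shared faces. Second, invoke the pasting lemma for finite closed covers: if $\R^n=\bigcup_\rho R_\rho$ with each $R_\rho$ closed, each restriction $F_i|_{R_\rho}$ continuous, and $F_i|_{R_\rho}$ and $F_i|_{R_\sigma}$ agreeing on $R_\rho\cap R_\sigma$ for all $\rho,\sigma$, then $F_i$ is continuous on $\R^n$. The hypotheses hold: every $R_\rho$ is a polyhedron, hence closed; $F_i|_{R_\rho}$ is the restriction of an affine map, hence continuous; and face-compatibility of the complex means $R_\rho\cap R_\sigma$ is a common face on which the two affine laws coincide (the continuity clause of the CPWL definition). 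Third, since this holds for each coordinate $i=1,\dots,m$, the vector-valued output $F=(F_1,\dots,F_m)$ is continuous as well.

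The only point requiring care is that the pasting lemma genuinely needs the cover to be both \emph{closed} and \emph{finite} (so one cannot shrink cells to their relative interiors); the closed-guard convention fixed in \S\ref{app:notation-guards} is exactly what keeps every cell closed and every shared face present in both adjacent pieces, so no boundary point is ever ``orphaned'' between laws. There is no substantive obstacle beyond this bookkeeping — the heavy lifting (CPWL closure of the component set) was already discharged in AF-1, and continuity follows structurally. As an optional addendum I would note that the same cellwise view immediately upgrades the conclusion: each affine piece is Lipschitz with constant $\|w_\rho\|$, and on any bounded set only finitely many cells are met, so $F$ is locally Lipschitz — foreshadowing the exact/certified Lipschitz results developed later in the paper.
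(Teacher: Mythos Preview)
Your proposal is correct and follows essentially the same approach as the paper: invoke AF--1 to conclude each output coordinate is CPWL, then observe that continuity is immediate from the CPWL definition (affine on cells of a finite polyhedral complex with face-compatible pieces). The paper's proof is a one-liner that treats continuity as definitional; your pasting-lemma argument simply makes explicit the step the paper leaves implicit, so there is no substantive difference.
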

\begin{proof}
AF--1 shows all coordinates are CPWL. CPWL functions are continuous by definition of
polyhedral complexes with face-compatible affine pieces.
\end{proof}

\begin{theorem}[AF--3: sufficient convexity]
\label{thm:convex-sufficient}
Suppose that at every layer, each activation $\phi$ is convex and nondecreasing
(e.g., ReLU, LReLU with $\alpha\in[0,1]$, PReLU with $\alpha\ge 0$, pointwise Max),
and the \emph{mixing coefficients} feeding each activation are entrywise nonnegative.
Then each output coordinate is convex. If $\mathrm{Abs}$ is used, convexity is preserved
only when $\mathrm{Abs}$ is applied directly to an affine pre-activation.
\end{theorem}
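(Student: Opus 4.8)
The plan is to prove the statement by structural induction on the topological order of the computation DAG, carrying the invariant that the value at every node --- pre-activation or activation --- is a convex function of the network input $x$. First I would record the elementary convexity calculus I intend to invoke at each step: (a) a nonnegative linear combination of convex functions, plus an arbitrary constant, is convex; (b) if $\phi:\R\to\R$ is convex and nondecreasing and $g:\R^n\to\R$ is convex, then $\phi\circ g$ is convex; (c) the pointwise maximum of finitely many convex functions is convex; (d) affine functions are both convex and concave; (e) for affine $g$, $\mathrm{Abs}(g)=\max\{g,-g\}$ is convex by (c) and (d). These are standard and I would only cite them.

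The base case is that input coordinates are affine, hence convex by (d). For the inductive step I would handle one case per component class. For an affine / convolution / mean-pool / residual / inference-time-BN node $v=\sum_j c_j u_j+\beta$ with predecessors $u_j$ convex by the inductive hypothesis: the hypothesis that the mixing coefficients feeding the node are entrywise nonnegative supplies $c_j\ge0$, so $v$ is convex by (a) (biases fall under the constant term, and a bare residual edge is the case $c=1$). For a gate node $v=\phi(u)$ with $\phi\in\{\mathrm{ReLU},\ \mathrm{LReLU}_\alpha\ (\alpha\in[0,1]),\ \mathrm{PReLU}_\alpha\ (\alpha\ge0)\}$: I would observe that $\phi(t)=\max\{t,\alpha t\}$ (with $\alpha=0$ for ReLU) is a maximum of nondecreasing linear functions whenever $\alpha\ge0$, hence convex and nondecreasing, so (b) applied to $u$ --- convex by the preceding case, since its pre-activation is fed by nonnegative mixing coefficients --- gives convexity of $v$. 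For a pointwise $\mathrm{Max}/\mathrm{MaxPool}$ node $v=\max_i u_i$, each argument is a pre-activation and hence convex by the preceding case, so $v$ is convex by (c). Iterating up to the output nodes, which are produced under the same nonnegative-mixing discipline, yields convexity of every output coordinate.

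For the $\mathrm{Abs}$ clause I would argue separately: if $\mathrm{Abs}$ is applied directly to an affine pre-activation $g$, then $\mathrm{Abs}(g)=\max\{g,-g\}$ is convex by (e) and the induction proceeds with this node treated as a convex leaf; if instead $\mathrm{Abs}$ were applied to a merely-convex, non-affine $g$, convexity can genuinely fail, the obstruction being precisely that $|\cdot|$ is not monotone so that (b) no longer applies --- a one-line witness such as $g(x)=|x|-1$, whose absolute value $h(x)=\bigl||x|-1\bigr|$ is W-shaped and non-convex, shows the ``directly affine'' restriction cannot be dropped. This monotonicity/convexity interplay in the $\mathrm{Abs}$ case is the only delicate point; I also expect to need a sentence making explicit that the nonnegativity-of-mixing hypothesis is doing the real work (a single negative coefficient would inject a concave summand into a pre-activation and destroy the invariant), and that if the network ends in a general affine readout the same sign discipline must be assumed of that readout for ``output coordinate'' to be convex.
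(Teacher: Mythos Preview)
Your proposal is correct and follows essentially the same route as the paper: structural induction along the DAG using the standard convexity closure rules (nonnegative combinations, composition with convex nondecreasing $\phi$, pointwise max), with the $\mathrm{Abs}$ case isolated as the one place where non-monotonicity breaks rule (b) and forces the affine-input restriction. Your treatment is in fact more careful than the paper's --- you supply an explicit counterexample for the $\mathrm{Abs}$ clause and make the role of the nonnegativity hypothesis explicit --- but the underlying argument is the same.
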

\begin{proof}
Let $x\mapsto Ax{+}b$ be an affine pre-activation. Convexity is preserved under affine
precomposition and nonnegative linear combinations: if $u_j$ are convex and $\lambda_j\ge 0$,
then $\sum_j\lambda_j u_j$ is convex. If $\phi$ is convex and nondecreasing, then
$\phi\!\circ\!u$ is convex whenever $u$ is convex. These are standard closure rules.
Start from the input (convex), propagate through affine layers (convexity preserved),
through nonnegative summations (residual/additive merges), and through convex
nondecreasing activations (ReLU/LReLU/PReLU/Max), convexity is preserved at each step.
For $\mathrm{Abs}(t)=\max\{t,-t\}$: it is convex in $t$, but not monotone; in general
$\mathrm{Abs}\!\circ\!u$ need not be convex if $u$ is nonconvex CPWL. When $u$ is affine,
$\mathrm{Abs}\!\circ\!u$ is convex; this is the stated boundary case.
\end{proof}

\subsection{AF-4: a.e.\ differentiability and Clarke subdifferentials}
\label{app:af:af4}

\begin{theorem}[AF--4]
Let $f$ be CPWL on $\R^n$. Then $f$ is differentiable Lebesgue-a.e., and on each cell
$R_\rho$ of a defining complex $\mathcal{C}$ we have $f(x)=w_\rho^\top x+b_\rho$ with
$\nabla f(x)=w_\rho$ for all $x\in\operatorname{ri}(R_\rho)$. At a boundary point
$x\in\partial R_\rho$, the Clarke subdifferential is the convex hull of the neighboring
gradients:
\[
\partial^C f(x)=\operatorname{conv}\{\,w_\rho:\ x\in R_\rho\,\}.
\]
\end{theorem}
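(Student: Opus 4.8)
The plan is to split the statement into two parts — Lebesgue-a.e.\ differentiability together with the in-cell gradient formula, and the Clarke identity at boundary points — and to obtain the second from the standard gradient-limit characterization of the Clarke generalized gradient.

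First I would fix a finite polyhedral complex $\mathcal{C}=\{R_\rho\}$ realizing $f$ as CPWL and isolate its full-dimensional cells $\mathcal{C}_n$, on each of which $f(x)=w_\rho^\top x+b_\rho$ with $(w_\rho,b_\rho)$ uniquely determined (the Clarke formula is then read over the full-dimensional cells containing $x$, which is the precise meaning of ``adjacent gradients''). Let $N$ be the skeleton, i.e.\ the union of all cells of dimension $\le n-1$: being a finite union of pieces of affine subspaces of dimension $\le n-1$, it is Lebesgue-null, and since the finitely many full-dimensional cells form a closed set with open, null complement, that complement is empty, so $\bigcup_{\rho\in\mathcal{C}_n} R_\rho=\R^n$. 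For $x\notin N$ there is a unique $\rho\in\mathcal{C}_n$ with $x\in\operatorname{ri}(R_\rho)$, and a whole neighborhood of $x$ lies in $R_\rho$, so $f$ is affine near $x$ and $\nabla f(x)=w_\rho$. This yields a.e.\ differentiability and the gradient formula on relative interiors simultaneously.

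Next I would note that $f$ is locally Lipschitz — on any ball it coincides with finitely many affine maps and is continuous — so $\partial^C f(x)$ is well-defined, and by Clarke's gradient-limit theorem for locally Lipschitz functions it equals $\operatorname{conv} G(x)$, where $G(x)$ is the set of all limits $\lim_k \nabla f(x_k)$ over sequences $x_k\to x$ avoiding the null non-differentiability set; using the previous step these limits are realized by slopes $w_{\rho_k}$ of full-dimensional cells. I would then identify $G(x)=\{w_\rho:\rho\in\mathcal{C}_n,\ x\in R_\rho\}$: for ``$\supseteq$'', if $x\in R_\rho$ then $x\in\overline{\operatorname{ri}(R_\rho)}$, so picking $x_k\in\operatorname{ri}(R_\rho)$ with $x_k\to x$ gives $\nabla f(x_k)=w_\rho$ for all $k$; for ``$\subseteq$'', any convergent sequence $\bigl(\nabla f(x_k)\bigr)$ takes only finitely many values (one per cell of $\mathcal{C}_n$), hence is eventually the constant $w_\rho$ with $x_k\in R_\rho$ for large $k$, and closedness of $R_\rho$ forces $x\in R_\rho$. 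Taking convex hulls gives $\partial^C f(x)=\operatorname{conv}\{w_\rho:x\in R_\rho\}$, collapsing to $\{w_\rho\}$ when $x\in\operatorname{ri}(R_\rho)$, consistent with the differentiable case.

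The step I expect to be the main obstacle is the bookkeeping around lower-dimensional faces inside the ``$\subseteq$'' inclusion and the reduction of the limiting sequences to $\R^n\setminus N$: one must argue cleanly that a limit of in-cell gradients can only come from a full-dimensional cell whose closure contains $x$, and that at a differentiability point lying on the skeleton all incident full-dimensional cells share a common slope (so perturbing into any of them preserves the gradient and the limit). This is best handled either by first refining $\mathcal{C}$ so that each affine piece carries a unique label, or by the direct adjacency argument just sketched. The remaining ingredients — nullity of the skeleton, local Lipschitzness, and the gradient-limit form of $\partial^C$ — are standard and I would invoke them without reproof.
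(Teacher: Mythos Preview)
Your proposal is correct and follows essentially the same route as the paper: the skeleton (union of lower-dimensional faces) is Lebesgue-null, giving a.e.\ differentiability with the in-cell gradient, and the Clarke subdifferential is obtained from the gradient-limit characterization, identifying the limit set with the slopes of full-dimensional cells containing $x$. The paper's proof is a three-sentence sketch of exactly these steps; your version simply unpacks the identification $G(x)=\{w_\rho:x\in R_\rho\}$ into the two inclusions and makes explicit that the full-dimensional cells cover $\R^n$, both of which the paper leaves implicit.
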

\begin{proof}
The union of the relative interiors of the finitely many cells covers $\R^n$ up to a set
contained in a finite union of affine hyperplanes (cell boundaries), which has Lebesgue
measure zero. On each relative interior $f$ is affine, hence differentiable with constant
gradient $w_\rho$. The Clarke subdifferential of a locally Lipschitz function equals the
closed convex hull of the set of limiting gradients; for CPWL the only limits arise from
adjacent cells, yielding the stated formula.
\end{proof}

\subsection{AF-5: Closure under finite compositions}
\label{app:af:af5}

\begin{theorem}[AF--5]
Any finite composition of layers built from affine maps, residual (additive) merges,
and pointwise gates $\{\mathrm{ReLU},\mathrm{LReLU},\mathrm{PReLU},\mathrm{Abs},\mathrm{Max}\}$
yields a CPWL map.
\end{theorem}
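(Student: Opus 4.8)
The plan is to treat this as essentially a corollary of AF--1 (whose proof already performs the DAG-composition step) and re-derive it directly from Lemmas~\ref{lem:overlay} and~\ref{lem:basic-closure}. I would induct on a topological order of the computation DAG, equivalently on the number of composed layers. The base case is the input map $x\mapsto x$ on $\R^n$, which is affine and hence CPWL on the trivial one-cell complex $\{\R^n\}$.

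For the inductive step I would assume every tensor computed so far is CPWL, overlay the defining complexes of its coordinates via Lemma~\ref{lem:overlay} to obtain a single finite polyhedral complex $\mathcal C$ on which the current intermediate $u=(u_1,\dots,u_m)$ is affine cell-by-cell, and then check that one more layer preserves this. If the layer is affine --- covering dense/convolutional maps, mean-pool, residual and additive merges, and inference-time batch normalization, all of which act linearly on the incoming activations up to a fixed bias --- then each output coordinate is a linear combination of the $u_i$ plus a constant, hence affine on every cell of $\mathcal C$, and the complex is unchanged. If the layer applies a pointwise gate to a coordinate $z$, I would use the finite-max forms $\mathrm{ReLU}(z)=\max\{0,z\}$, $\mathrm{LReLU}_\alpha(z)=\mathrm{PReLU}_\alpha(z)=\max\{z,\alpha z\}$, $\mathrm{Abs}(z)=\max\{z,-z\}$, and $\mathrm{Max}(z_1,\dots,z_k)=\max_i z_i$: each argument is CPWL by the inductive hypothesis, so Lemma~\ref{lem:basic-closure} (closure of CPWL under finite maxima) yields a CPWL output, refining $\mathcal C$ only by the finitely many comparator hyperplanes $\{(w-w')^\top x+(b-b')\ge 0\}$ that separate the competing affine pieces; Lemma~\ref{lem:overlay} keeps the refined complex finite, and continuity across the new faces holds because the competing pieces agree there under the closed-guard convention. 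Since the DAG has finitely many nodes, only finitely many such steps compose, and the final map is CPWL on the accumulated finite complex; one last overlay extracts each output coordinate as CPWL, which is the claim.

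The step I expect to need the most care is the bookkeeping for a genuine DAG rather than a simple chain: a node may consume several previously computed tensors, so before applying its layer I must overlay all their complexes into a common one on which they are simultaneously affine. This is exactly where Lemma~\ref{lem:overlay} does the work, and the only thing to verify is that the total number of overlays --- and the size of each intermediate complex --- stays finite, which it does because both the DAG and every intermediate complex are finite. I do not anticipate any analytic obstacle here: the entire content is the finiteness-preservation already packaged in Lemmas~\ref{lem:overlay} and~\ref{lem:basic-closure}, together with the per-gate CPWL identities established in the proof of AF--1.
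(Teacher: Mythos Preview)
Your proposal is correct and follows essentially the same approach as the paper: induct on the topological order of the DAG, invoke AF--1 (per-layer CPWL preservation) together with Lemma~\ref{lem:basic-closure} (closure under affine composition, finite sums, and finite maxima), and conclude. The paper's proof is a three-sentence sketch of exactly this argument, so your version simply unpacks the same steps with more explicit bookkeeping (overlaying complexes via Lemma~\ref{lem:overlay}, handling multi-input DAG nodes), none of which introduces a different idea.
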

\begin{proof}
By AF--1 each individual layer maps CPWL inputs to CPWL outputs.
By Lemma~\ref{lem:basic-closure}, affine pre/post-composition, finite sums, and finite maxima
preserve CPWL. Induction on the number of layers (topological order of the DAG) proves that
the overall composition is CPWL.
\end{proof}

\paragraph{Remarks on convolution and pooling.}
A discrete convolution with fixed kernel, stride, and padding is a linear operator
on the vectorized input; mean-pooling is also linear. Therefore they are covered by
the affine-case arguments above. Max-pooling is a finite pointwise maximum and falls
under Lemma~\ref{lem:basic-closure}. Inference-time batch normalization is affine.

\paragraph{Closed guards and ties.}
All thresholds/comparators are treated with closed inequalities ($\ge,\le$) on both
sides, so $z{=}0$ lies in both guarded branches. This preserves continuity and ensures
that $\oplus=\max$ resolves ties without leaving gaps, consistent with the semantics
used in the experiments and implementation.


\section{SWT Semantics and Equivalent Compilation (SWT-1/2/3/4/5)}
\label{app:swt}

This appendix formalizes the compilation of a CPWL network into a guarded
\emph{Symbolic Weighted Transducer} (SWT), proves pointwise equivalence (SWT--1),
states size and acyclicity bounds with explicit counting conventions (SWT--2),
gives common refinement / difference-region constructions and correctness
(SWT--3/4), and provides an NP-hardness reduction for minimal-guard
realizations (SWT--5).

\subsection{Formal compilation rules}
\label{app:swt:rules}

\paragraph{Object model and counting conventions.}
We compile the network DAG to an acyclic SWT
$A=(Q,q_{\mathrm{in}},F,E,\mathsf G,\mathsf W,K_f)$ with:
\begin{itemize}[leftmargin=1.2em,itemsep=0.25em]
\item \textbf{States ($Q$).} One state per \emph{structural site}:
post-layer sites for affine modules (Dense/BN/MeanPool/Residual-add),
post-gate sites (ReLU/LReLU/PReLU/Abs/Max), and per-template sites for
convolution/GNN (defined below). We never use ``region as state''.
\item \textbf{Edges ($E$).} A dataflow edge $(u\!\to\!v)$ becomes one SWT edge unless
$v$ is a gate, in which case guarded multi-edges are created (two for
two-way gates; $k$ for Max-of-$k$).
\item \textbf{Guards ($\mathsf G$).} Each edge stores an \emph{index set} (GuardSet)
into the global library $\mathcal H$; intersections are formed on-the-fly,
not stored explicitly (cf.\ App.~\ref{app:notation-guards}).
\item \textbf{Weights ($\mathsf W$).} CPWL weights in the function semiring
$K_f=(\mathrm{CPWL}^{\pm\infty},\oplus=\max,\otimes=+)$; along a path, weights
compose by $\otimes(=+)$; across alternative paths they combine by
$\oplus(=\max)$.
\end{itemize}

\paragraph{Affine / residual / mean-pool / inference-time BN.}
Let a fragment $(C,w,b)$ denote guard $C$ and affine map $x\mapsto w^\top x+b$.
For $y=Wx+b_0$ we emit $(C,\tilde w,\tilde b)$ with
$\tilde w=Ww,\ \tilde b=Wb+b_0$.
Residual $\textsf{Sum}(x^{(1)},x^{(2)})$ contributes one post-sum state and
two incoming edges; weights add by $\otimes$ on the common guard.
Inference-time BN is merged into affine.

\paragraph{Two-way gates (ReLU/LReLU/PReLU/Abs).}
Given pre-activation $z(x)=w^\top x+b$ on $C$, create two guarded edges
to the post-gate site (closed guards on both sides):
\[
\begin{aligned}
&\textsf{pos:}\quad C^+ = C\cap\{z\ge 0\}\ \leadsto\ (w,b),\\
&\textsf{neg:}\quad C^- = C\cap\{z\le 0\}\ \leadsto\
\begin{cases}
(0,0) & \text{ReLU}\\
(\alpha w,\alpha b) & \text{LReLU/PReLU ($\alpha\ge 0$)}\\
(-w,-b) & \text{Abs}
\end{cases}
\end{aligned}
\]
Insert the (normalized) threshold face $\{z\ge 0\}$ into $\mathcal H$ (both orientations
are registered; see App.~\ref{app:notation:guards}).

\paragraph{Pointwise Max / MaxPool.}
For candidates $\{(C_i,w_i,b_i)\}_{i=1}^k$ with $C_\cap=\bigcap_i C_i$,
insert pairwise comparators $\{(w_i{-}w_j)^\top x+(b_i{-}b_j)\ge 0\}$ to $\mathcal H$
and create $k$ edges to the post-Max site, guarded by the winner regions
\[
C_i^\star=C_\cap\ \cap\ \bigcap_{j\ne i}\{(w_i{-}w_j)^\top x+(b_i{-}b_j)\ge 0\}.
\]

\paragraph{Convolution (template aliasing; sliding-window guards).}
For a conv layer with kernel $K\in\R^{C_{\rm out}\times C_{\rm in}\times h\times w}$,
stride $(s_x,s_y)$, padding $(p_x,p_y)$, output spatial size $(H_{\rm out},W_{\rm out})$,
we instantiate one site per output channel-location
$r=(c,i,j)\in\{1..C_{\rm out}\}\times\{1..H_{\rm out}\}\times\{1..W_{\rm out}\}$,
with \emph{template aliasing} of weights: each site references the same $K$ by index.
The receptive field constraint is captured by a rectangular sliding-window guard
$C_r=\{x:\ M_r x = \mathrm{vec}(x[\mathrm{patch}(i,j)])\}$ which in H-form is the
intersection of the input domain with the valid-range halfspaces implied by stride/padding
(i.e., pixels outside the padded canvas are fixed to the padding value and realized as
affine shifts). The linear map of conv at $r$ is an affine atom $(w_r,b_r)$ obtained by
composing im2col with $K_{c,:,:,:}$; see App.~\ref{app:impl} for a matrix-free variant.

\paragraph{GNN on a fixed graph.}
For an aggregation layer on $G=(V,E)$ with normalized adjacency $\hat A$ and output
channels $C_{\rm out}$, we instantiate $|V|\cdot C_{\rm out}$ sites, each referencing
shared parameters (template aliasing). Sum/mean aggregations compile to sparse affine
maps $(\hat A\otimes I)$; Max aggregation uses comparator guards across neighbor terms.
Per-node MLPs follow the affine+gate rules above.

\begin{remark}[No region as state; shared guard library]
Edges store only guard indices.
Winner regions and guard intersections are materialized on demand (JIT) or symbolically
specified (static), but states are never duplicated per region. This is essential for the
size bounds in \S\ref{app:swt:bounds}.
\end{remark}

\subsection{Proof of SWT-1 (pointwise equivalence)}
\label{app:swt:equiv}

\begin{theorem}[SWT--1]
For any input $x$, the SWT $A_F$ compiled by \S\ref{app:swt:rules} satisfies
$A_F(x)=F(x)$.
\end{theorem}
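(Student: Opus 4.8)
The statement SWT--1 asserts that the compiled transducer $A_F$ reproduces the network $F$ pointwise. The natural route is structural induction on the compilation DAG following a topological order, mirroring the compilation rules of \S\ref{app:swt:rules}. At each node I maintain the following invariant: for the sub-network $F_v$ feeding structural site $v$, the SWT fragment rooted at $v$ satisfies $A_{F_v}(x)=F_v(x)$ for all $x$ in the domain. The base case is the input site, where the weight is the identity affine atom $x\mapsto x$ (or the coordinate projections for vector inputs), trivially matching. Then I handle each layer type as an inductive step.

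\textbf{Inductive steps by layer type.} For affine / residual / mean-pool / inference-time BN, the compilation emits a single guarded edge with weight $(\tilde w,\tilde b)=(Ww,Wb+b_0)$; since along a path weights compose by $\otimes=+$ and the prior fragment already equals $F_u(x)$, path-wise accumulation yields $W F_u(x)+b_0=F_v(x)$, using AF--2 (the pointwise homomorphism $h_x$) to push evaluation through the DAG. Residual sums combine two incoming edges on the common guard by $\otimes$, giving $F_u^{(1)}(x)+F_u^{(2)}(x)$ as required. For two-way gates on pre-activation $z(x)=w^\top x+b$: the two guarded branches $C^+=\{z\ge0\}$ and $C^-=\{z\le0\}$ cover the domain with overlap exactly on $\{z=0\}$ (closed-guard convention); at any $x$, the feasible paths are those whose guard contains $x$, and $A_F(x)=\bigoplus=\max$ over their values. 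On $\{z>0\}$ only the pos branch is feasible and returns $(w,b)$ applied to $x$, i.e. $z(x)=\mathrm{ReLU}(z)(x)$ (resp. the analogous identity for LReLU/PReLU/Abs); on $\{z<0\}$ only neg is feasible, returning $0$ (resp. $\alpha z$, $-z$); on $\{z=0\}$ both are feasible and $\max$ of the two values equals the common value $0$ (resp. $0$, $0$), so the tie is resolved correctly without a gap. For pointwise Max / MaxPool over $k$ candidates, the $k$ winner-region guards $C_i^\star$ cover $C_\cap$ with overlaps precisely on the comparator hyperplanes; at $x$, the feasible edges are those $i$ with $x\in C_i^\star$, i.e. exactly the argmax set, and $\bigoplus=\max$ over $\{w_i^\top x+b_i : i \text{ feasible}\}$ returns $\max_i(w_i^\top x+b_i)$, matching the numerical $\mathrm{Max}$ / $\mathrm{MaxPool}$. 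Convolution and GNN aggregation are instances of the affine case after template aliasing (the aliasing only shares numeric storage and does not change the linear map), and per-node MLPs and gates reduce to the cases already treated.

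\textbf{Assembling the global equality.} After the inductive step establishes the invariant at every site, I invoke it at the final states $F\subseteq Q$: the SWT output $A_F(x)=\bigoplus_{\pi:\,q_{\mathrm{in}}\to F,\ x\models\pi}\mathsf{val}_A(\pi,x)$ decomposes, by the DAG structure, into a $\max$ over the last-layer branches of path-summed affine contributions, which by induction equals the numerical forward value $F(x)$. The one point requiring care throughout is the bookkeeping of \emph{feasible paths versus active branches}: I must argue that the set of feasible paths through a gate or Max site at a given $x$ is exactly the set of branches the numerical forward pass would ``select'' (or, for ties, a superset whose extra members carry the same value), so that $\oplus=\max$ over feasible-path values coincides with the numerical output. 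This is where the closed-guard convention does the real work, and verifying that overlaps on hinge/comparator hyperplanes never introduce a spurious larger value (they cannot, by continuity of the affine pieces across the shared face) is the main---though routine---obstacle. Everything else is a direct unwinding of the semiring definitions of $\mathsf{val}_A$ and $A(x)$ together with AF--1 (closure, so all intermediate weights are genuinely CPWL and the expressions are well-formed) and AF--2 (homomorphism, so evaluation commutes with the DAG combinators).
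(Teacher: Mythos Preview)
Your proof is correct and follows essentially the same route as the paper's sketch: topological induction on the compiled DAG, invoking AF--2 to push pointwise evaluation through $\oplus/\otimes$, a per-layer case analysis, and the closed-guard argument for ties at hinges and comparator faces. Your treatment of the tie cases is in fact more explicit than the paper's.

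There is one subtlety the paper addresses that you gloss over. You write that ``residual sums combine two incoming edges on the common guard by $\otimes$,'' but under the path semantics $A(x)=\bigoplus_{\pi}\mathsf{val}_A(\pi,x)$, two \emph{distinct} paths arriving at a post-sum state through different incoming edges would be combined by $\oplus=\max$, not $\otimes=+$. The paper's proof devotes its ``miniature for parallel addition'' precisely to this: the compiler IR serializes the two branch contributions so that they lie on a \emph{single} path through an intermediate sum node, and it is this series composition that lets $\otimes$ realize addition rather than $\oplus$ realizing an unintended max. Without that construction, your inductive step for residual sums does not actually follow from the path-based definition of $A_F(x)$. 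You should either invoke the same IR-level serialization or make explicit that the post-sum state carries a node weight that sums the incoming affine pieces (so that the sum happens inside $\mathsf W$ rather than across paths).
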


\begin{proof}[Proof sketch with a worked miniature]
\textbf{Homomorphism.} By AF--2, evaluation $h_x:K_f\!\to\!(\R,\max,+)$ preserves
$\oplus/\otimes$. Hence it suffices to check that each module compiles to an
expression over $K_f$ whose $h_x$-evaluation equals the numeric forward of that
module at $x$.

\textbf{Affine and residual.} Along a path, $\otimes$ accumulates affine weights by
addition, matching numeric sums. A residual merge with fan-in two contributes two
incoming edges whose $h_x$-values add (same $C$), equal to the module's numeric sum.

\textbf{Two-way gates and Max.} The closed guards ensure that at any $x$ exactly the
feasible \emph{winner} edge(s) remain; the outgoing value is the $\oplus$ over those
edges, which equals the ordinary max (or gate rule) at $x$.

\textbf{Miniature for parallel addition.}
Consider $y=f_1(x)+f_2(x)$ realized by two branches merging into a post-sum state $q$.
Compilation creates edges $e_1: q_{f_1}\!\to q$ and $e_2: q_{f_2}\!\to q$ with weights
$\mathsf W(e_i)=f_i$. There is a \emph{single} path from $q_{\rm in}$ to $q$ passing
through the structure of $f_1$ and another passing through $f_2$ composed in series
via an intermediate ``sum'' node where we use $\otimes$ to combine the two branch
contributions on the \emph{same} path (not via $\oplus$ across paths).
Thus $\mathsf{val}(\pi,x)=f_1(x){+}f_2(x)$ by $\otimes$, while $\oplus$ across
\emph{alternative} guarded paths is reserved solely for max-type choices.
This construction (explicit in the compiler IR) eliminates any ambiguity between
sum and max.

\textbf{Convolution / GNN.} Each template instance is an affine map (linear with bias);
template aliasing shares parameters but does not affect per-instance semantics. Therefore,
$h_x$ on the compiled instance equals the numeric conv/GNN output at that site.

\textbf{Induction on the DAG.} Process nodes in topological order; at each step, the
compiled fragment preserves the module's semantics at $x$. Hence $A_F(x)=F(x)$ globally.
\end{proof}

\begin{remark}[Empirical alignment]
On FFN/CNN/GNN models, the compiled SWT/JIT forward agrees with PyTorch forward to
machine precision (mean absolute errors around $10^{-6}\!\sim\!10^{-7}$), matching the
predicted pointwise equivalence. See the per-sample error plots and tables in the
experiment report (Task~1/2/3). \emph{This is not part of the proof, but confirms the
implementation adheres to the semantics.} \,\,\filecite{turn0file0}
\end{remark}

\subsection{SWT-2: acyclicity and size bounds with explicit constants}
\label{app:swt:bounds}

\paragraph{Counting parameters.}
Let
\[
\begin{aligned}
&N_{\textsf{aff}}:=\text{number of affine sites (Dense/BN/MeanPool/Residual-post)},\\
&N_{\textsf{gate2}}:=\text{number of scalar two-way gates (ReLU/LReLU/PReLU/Abs)},\\
&\mathcal V_{\max}:=\{v:\ \text{pointwise Max site of arity }k_v\},\quad
K_{\max}:=\sum_{v\in\mathcal V_{\max}} k_v,\\
&\Gamma_{\max}:=\sum_{v\in\mathcal V_{\max}} \binom{k_v}{2},\qquad
\Gamma^{\pm}_{\max}:=2\,\Gamma_{\max}\ \ (\text{both orientations}),\\
&T_{\textsf{conv}}:=\sum_{\ell\in\text{conv layers}} C^{(\ell)}_{\rm out} H^{(\ell)}_{\rm out} W^{(\ell)}_{\rm out},\\
&T_{\textsf{gnn}}:=\sum_{\ell\in\text{MP layers}} |V|\,C^{(\ell)}_{\rm out}.
\end{aligned}
\]
Assume each affine/conv/GNN site has in-degree 1 except residual-post sites whose
in-degree is their fan-in (typically 2).

\begin{proposition}[Acyclicity]
\label{prop:acyclic}
If the computation graph is a DAG, the compiled SWT is acyclic.
\end{proposition}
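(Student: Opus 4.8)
The plan is to exhibit an explicit total order on the SWT states that is inherited from a topological order of the network DAG, and then check that every edge produced by the compilation rules of \S\ref{app:swt:rules} is oriented consistently with it. First I would fix any topological order $\prec_{\mathrm{net}}$ on the nodes of the network computation graph (it exists by hypothesis). Recall from \S\ref{app:swt:rules} that $Q$ is partitioned by network node: each affine module contributes one post-layer site; each two-way gate contributes a post-gate site (its pre-activation value living at the upstream affine site); each pointwise Max/MaxPool of arity $k$ contributes one post-Max site; and each convolution or GNN layer contributes a family of per-template sites (one per output channel-location, resp.\ per node-channel). Within a single network node I order the few internal sites by the fixed local compilation sequence $\prec_{\mathrm{loc}}$ (pre-activation before post-gate, candidates before post-Max, template sites by a fixed index order), and I define $\prec$ on $Q$ lexicographically: $s \prec s'$ iff $\mathrm{node}(s)\prec_{\mathrm{net}}\mathrm{node}(s')$, or $\mathrm{node}(s)=\mathrm{node}(s')$ and $s\prec_{\mathrm{loc}} s'$.

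The key step is then to verify that every edge $e=(u\to v)\in E$ satisfies $u\prec v$, which splits into a handful of cases, each immediate from the rules: (i) a plain dataflow edge between two network nodes becomes one SWT edge, and $\mathrm{node}(u)\prec_{\mathrm{net}}\mathrm{node}(v)$ because the network graph is a DAG; (ii) the guarded pos/neg multi-edges of a two-way gate and the $k$ winner-region edges of a Max site all run from an upstream site into the local post-gate/post-Max site, hence forward under $\prec_{\mathrm{loc}}$ (or $\prec_{\mathrm{net}}$); (iii) a residual-add site receives its fan-in edges from the output sites of its summands, which precede it under $\prec_{\mathrm{net}}$; (iv) conv/GNN template sites receive edges from their receptive-field input sites, again upstream in the network DAG, while template aliasing shares only weight indices (pointers into a parameter table) and introduces no new states or edges. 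Since $\prec$ is a strict total order on the finite set $Q$ and every edge strictly increases $\prec$, the directed graph $(Q,E)$ admits a topological order and therefore has no directed cycle; that is, the compiled SWT is acyclic. The guard indices attached to edges are irrelevant here, as guards govern feasibility and semantics, not the edge topology.

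I expect the only place needing care to be the per-module bookkeeping in cases (ii)--(iv): a single network module expands into several SWT states plus internal edges, and one must confirm that this micro-structure is itself a small DAG whose orientation agrees with $\prec_{\mathrm{loc}}$, and that the ``shared guard library'' and ``template aliasing'' conventions of \S\ref{app:notation-guards} genuinely share \emph{data} (normalized halfspace indices, kernel tensors) rather than identifying \emph{states}, so that no edge is ever redirected back into an already-ordered site. Once the per-module expansions are listed explicitly (as they are in \S\ref{app:swt:rules}), this is a finite, routine check with no surprises, and global acyclicity follows purely from the fact that the outer component of $\prec$ is the network DAG's own topological order.
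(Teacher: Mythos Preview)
Your proposal is correct and follows essentially the same approach as the paper's proof: both argue that compilation preserves the forward orientation inherited from the network DAG's topological order and that gate/Max expansions add only parallel guarded edges, never back-edges. You simply make the argument more explicit by constructing the lexicographic total order $\prec$ and doing a case analysis over the compilation rules, whereas the paper condenses this into a two-sentence sketch.
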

\begin{proof}
Compilation preserves edge directions and introduces no back-edges; gate splits and
winner guards do not add cycles, only parallel guarded edges. States correspond to
post-module/template sites in the original topological order.
\end{proof}

\begin{theorem}[Size bounds with explicit constants]
\label{thm:size-bounds}
Under the counting above, the following hold for the \emph{static} SWT:
\[
\boxed{\ |Q|\ \le\ 1\ +\ N_{\textsf{aff}}\ +\ T_{\textsf{conv}}\ +\ T_{\textsf{gnn}}\ +\ |\mathcal V_{\max}|\ +\ N_{\textsf{gate2}}\ }
\]
\[
\boxed{\ |E|\ \le\ N_{\textsf{aff}}\ +\ T_{\textsf{conv}}\ +\ T_{\textsf{gnn}}\ +\ (\textstyle\sum\text{residual fan-ins})\ +\ 2N_{\textsf{gate2}}\ +\ K_{\max}\ }
\]
\[
\boxed{\ |\mathcal H|_{\rm planes}\ \le\ N_{\textsf{gate2}}\ +\ \Gamma_{\max}\ , \qquad
|\mathcal H|_{\rm oriented}\ \le\ 2N_{\textsf{gate2}}\ +\ \Gamma^{\pm}_{\max}\ }
\]
where $|\mathcal H|_{\rm planes}$ counts distinct \emph{hyperplanes} and
$|\mathcal H|_{\rm oriented}$ counts oriented halfspaces (our library stores both
orientations; cf.\ App.~\ref{app:notation:guards}).
\end{theorem}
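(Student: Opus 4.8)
The plan is to prove all four boxed inequalities by a single conservative bookkeeping argument: process the network DAG in topological order and, for each module type, tally exactly how many states, edges, and library hyperplanes the compilation rules of App.~\ref{app:swt:rules} introduce, then sum. Acyclicity is already established (Prop.~\ref{prop:acyclic}), so nothing dynamical is needed here. The crucial structural fact that makes every tally an \emph{upper} bound — even when guards or parameters are shared across modules — is that guards are stored only as indices into the shared library $\mathcal H$, and canonicalization (row $\ell_2$-normalization plus sign fixing) can only \emph{identify} faces that would otherwise be listed separately; it never manufactures new ones. So cross-module coincidences shrink the counts, which is exactly consistent with the ``$\le$'' assertions.

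For the state bound I would argue: $q_{\mathrm{in}}$ contributes the leading $1$; each affine/residual/mean-pool/inference-BN site contributes one post-module state (total $N_{\textsf{aff}}$); each convolution template instance $r=(c,i,j)$ contributes one site (total $T_{\textsf{conv}}$), and likewise each GNN node-channel site contributes one (total $T_{\textsf{gnn}}$) — here template aliasing shares parameters \emph{by reference} and therefore creates no additional state; each pointwise-Max site contributes one post-Max state (total $|\mathcal V_{\max}|$); each scalar two-way gate contributes one post-gate state (total $N_{\textsf{gate2}}$). Summing gives $|Q|\le 1+N_{\textsf{aff}}+T_{\textsf{conv}}+T_{\textsf{gnn}}+|\mathcal V_{\max}|+N_{\textsf{gate2}}$.

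For the edge bound I would use the stated in-degree convention: an affine/conv/GNN site has in-degree $1$, contributing one edge each ($N_{\textsf{aff}}+T_{\textsf{conv}}+T_{\textsf{gnn}}$); a residual-post site contributes its fan-in (total $\sum(\text{residual fan-ins})$); a two-way gate emits the \texttt{pos}/\texttt{neg} guarded pair (total $2N_{\textsf{gate2}}$); a Max-of-$k_v$ site emits one guarded edge per candidate (total $\sum_{v\in\mathcal V_{\max}}k_v=K_{\max}$); no other rule adds edges, since gate splits add only parallel guarded edges already counted. This yields the boxed $|E|$ bound. For the hyperplane bound, affine/conv/GNN and residual introduce \emph{no} library faces — the convolution sliding-window constraints are realized as affine shifts of already-present input-domain rows, not as new library hyperplanes — while each two-way gate inserts the single normalized threshold $\{z\ge 0\}$ and each Max-of-$k_v$ inserts the $\binom{k_v}{2}$ pairwise comparators, so the number of distinct hyperplanes is $\le N_{\textsf{gate2}}+\Gamma_{\max}$; since the library stores both orientations of every registered face (App.~\ref{app:notation:guards}), the oriented count is $\le 2N_{\textsf{gate2}}+\Gamma^{\pm}_{\max}$ with $\Gamma^{\pm}_{\max}=2\Gamma_{\max}$.

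The hard part will not be any individual inequality but pinning down the two ``collapse versus blow-up'' points as lemmas: (a) template aliasing for conv/GNN genuinely yields exactly one site per output location — not one per region nor one per distinct weight vector — so the $T_{\textsf{conv}},T_{\textsf{gnn}}$ terms are honest counts rather than under-counts; and (b) canonicalization of $\mathcal H$ is idempotent and merging-only, so a gate threshold that coincides (after normalization) with a Max comparator reduces rather than inflates the distinct-hyperplane count, keeping all four bounds valid as inequalities. Once those two facts are isolated, the theorem is just the sum of the per-module tallies over the topological order.
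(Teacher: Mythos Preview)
Your proposal is correct and follows essentially the same per-module bookkeeping argument as the paper: tally states, edges, and hyperplanes contributed by each module type and sum over the topological order. The paper's own proof is terser and does not isolate your points (a) and (b) as separate lemmas, but your added care about why canonicalization and template aliasing keep the bounds as inequalities is sound and fully compatible with the paper's argument.
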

\begin{proof}
Each structural site contributes one state, yielding the formula for $|Q|$ (plus
the initial state). Edges: every non-gate site with in-degree one contributes one
edge; each residual-post contributes exactly its fan-in many edges; each two-way
gate creates two guarded edges; each Max-of-$k$ site creates $k$ guarded edges.
Guards: every scalar two-way gate contributes one distinct plane (two orientations);
every Max-of-$k$ introduces $\binom{k}{2}$ distinct planes, equivalently
$2\binom{k}{2}$ oriented halfspaces across all winners. Summing proves the bounds.
\end{proof}

\begin{remark}[Comparator dominance]
The edge/guard complexity is dominated by comparator faces: even when $|Q|$ is
linear in sites/templates, $|\mathcal H|$ scales with $N_{\textsf{gate2}}$
and $\Gamma_{\max}$. In JIT (\S3), we avoid inserting all comparators upfront,
introducing them \emph{on demand} per visited leaf, which empirically tracks the
accessed subdomains (cf.\ the modest split counts in the experiments). \,\filecite{turn0file0}
\end{remark}

\subsection{SWT-3/4: common refinement and difference-region automata}
\label{app:swt:cr-diff}

\paragraph{Problem statements.}
Given two acyclic SWTs $A_1,A_2$ and a convex domain $D$, decide whether
$A_1\equiv A_2$ on $D$ (SWT--3), or construct the region
$\{x\in D:\ |A_1(x)-A_2(x)|>\varepsilon\}$ (SWT--4).

\paragraph{Algorithm (common refinement for equivalence).}
\begin{enumerate}[leftmargin=1.2em,itemsep=0.25em]
\item \textbf{Unify guard libraries.} Let $\mathcal H=\mathcal H_1\cup\mathcal H_2$
with canonicalization; represent $D$ as $S_0$ (App.~\ref{app:notation-guards}).
\item \textbf{Refine on demand.} Maintain a queue of GuardSets $\mathcal L$ (leaves).
For a leaf $S$, evaluate both SWTs symbolically; if either side still contains
undecided gates/Max, insert the \emph{relevant} faces for that side only and split $S$.
\item \textbf{Cellwise affine check.} Once both sides degenerate to \emph{single affine}
on $C(S)$, read $(w_{1,S},b_{1,S})$ and $(w_{2,S},b_{2,S})$.
If they are equal, mark $S$ as \textsc{safe}; otherwise solve the LP
\(
\min_{x\in C(S)\cap D}\ \|(w_{1,S}{-}w_{2,S})^\top x+(b_{1,S}{-}b_{2,S})\|_\infty
\)
to obtain a witness. Iterate until all leaves are safe (then $A_1\equiv A_2$), or a witness is found.
\end{enumerate}

\begin{theorem}[Correctness and decidability (SWT--3)]
\label{thm:cr-correct}
The procedure above is sound. If only finitely many gate/comparator faces can be
inserted (finite network) and the strategy is fair (every feasible undecided leaf
is eventually processed), it terminates with a certificate of equivalence or a
counterexample.
\end{theorem}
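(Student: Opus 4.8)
The plan is to separate soundness from termination, and in both cases to reduce everything to the behaviour of \emph{settled} leaves. For soundness I would first record the coverage invariant: the root leaf $S_0$ has $C(S_0)\supseteq D$, and each split replaces a leaf $S$ by the children obtained from the two closed orientations $\ell,\bar\ell$ of one hyperplane, so $C(S)=C(S\cup\{\ell\})\cup C(S\cup\{\bar\ell\})$; discarding infeasible children (empty cells) does not change the union, hence inductively the union of current leaf cells always contains $D$. Second, when the loop declares a leaf $S$ settled --- every gate sign and every \texttt{Max} winner decided on $C(S)$, so every output \texttt{Expr} has collapsed to \texttt{Affine} --- the inserted faces are exactly the gate/comparator faces relevant to $A_1$ and $A_2$ on $S$, so $S$ is locally fully refined in the sense of Definition~\ref{def:lfr} for both transducers; by Theorem~\ref{thm:dyn2} we then have $\underline{A_i}=\overline{A_i}=A_i$ on $C(S)$, with this common value equal to the affine read-off $w_{i,S}^\top x+b_{i,S}$ (for $i=1,2$).

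\textbf{Soundness.} If the loop halts declaring equivalence, every leaf $S$ is settled with the affine difference $\delta_S(x):=(w_{1,S}-w_{2,S})^\top x+(b_{1,S}-b_{2,S})$ vanishing on $C(S)\cap D$; then for any $x\in D$, picking a leaf $S\ni x$ gives $A_1(x)=w_{1,S}^\top x+b_{1,S}=w_{2,S}^\top x+b_{2,S}=A_2(x)$, so $A_1\equiv A_2$ on $D$. If instead a witness $x^\star$ is returned from a settled leaf $S$, then $x^\star\in C(S)\cap D\subseteq D$ with $\delta_S(x^\star)\neq 0$, i.e.\ $A_1(x^\star)\neq A_2(x^\star)$, a genuine counterexample. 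One correction makes the ``$\delta_S$ vanishes on $C(S)\cap D$'' test honest: step~3 should \emph{maximize} $|\delta_S|$ over $C(S)\cap D$ (two LPs) and mark $S$ safe iff the optimum is $0$, rather than compare the affine coefficients syntactically --- otherwise a lower-dimensional leaf contained in $\{\delta_S=0\}$ would be mishandled. With this amendment the two outcomes are exhaustive and sound.

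\textbf{Termination.} By hypothesis only finitely many oriented faces can ever be inserted, say $N$; writing $M:=|\mathcal H_0|+N$ we have $|\mathcal H|\le M$ at all times, every GuardSet is a subset of $\{1,\dots,M\}$, and there are at most $2^{M}$ of them. A split along a not-yet-present relevant face produces children that are strict supersets of the parent (a face already in $S$ would leave no gate/winner undecided on its account), so along any branch the GuardSets strictly grow and the branch has length $\le M$; moreover, once \emph{all} faces relevant to $A_1,A_2$ on $S$ are present --- which occurs after finitely many splits because the transducers are finite --- the leaf is locally fully refined, hence settled and never split again (the step discussed below). Thus the refinement builds a finite tree. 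Fairness selects every feasible unsettled leaf eventually; each selection either settles the leaf (and the LP then marks it safe or returns a counterexample) or splits it into finitely many strictly larger children that are themselves eventually processed. Because the tree is finite, after finitely many iterations either a counterexample is returned or the queue empties with every leaf safe, and equivalence is reported; Theorem~\ref{thm:dyn3} guarantees that safe leaves are never invalidated by later splits, so the terminal verdict is consistent with all intermediate work.

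\textbf{Main obstacle.} The load-bearing step is the implication ``all relevant faces present on $C(S)$'' $\Rightarrow$ ``every output \texttt{Expr} collapses to a single \texttt{Affine} on $C(S)$'' (condition (ii) of Definition~\ref{def:lfr}), which is used both in the soundness reduction (settled leaves have honest affine read-offs) and in the termination bound (branches stop). I would prove it by structural induction on the shared e-graph using the $\LB/\UB$ rules of \S3: once the sign of each gate pre-activation is fixed on $C(S)$, \texttt{ENSURE\_SIGN} commits a branch and the gate node becomes one of its affine arguments; once the winner of each \texttt{Max} is fixed, \texttt{ENSURE\_WINNER} commits a candidate; and \texttt{Sum}, \texttt{Scale}, \texttt{Bias} of affine nodes are affine. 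Finiteness of the DAG (and of the affine atoms appearing in it) makes the pool of relevant faces finite, so the fixed point is reached in finitely many splits. The only remaining subtlety is the lower-dimensional-leaf case in soundness, handled by the maximization-LP amendment above; everything else (LP feasibility/solve termination, finite symbolic evaluation of each SWT on a leaf) is routine.
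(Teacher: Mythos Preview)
Your proof is correct and follows essentially the same approach as the paper---soundness from the local affine check on settled leaves, termination from finiteness of insertable faces plus fairness---though you supply considerably more detail (the coverage invariant, explicit appeals to DYN--2/DYN--3, and the structural-induction sketch for the collapse-to-affine step) than the paper's four-line argument. Your amendment to step~3 (maximize $|\delta_S|$ by two LPs rather than compare coefficients syntactically) is a genuine refinement: the paper's procedure as written does not cleanly handle the lower-dimensional-leaf edge case you identify.
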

\begin{proof}
Soundness: on each leaf $S$ where both sides are single affine, equality reduces
to an affine-coefficient check; inequality yields an LP-feasible witness.
Termination under finiteness+fairness follows because the set of faces is finite and
each split strictly reduces uncertainty on $S$ until all leaves are affine on both sides.
\end{proof}

\paragraph{Difference-region automata (SWT--4).}
Let $g=A_1-A_2$ (componentwise for vectors and then $\ell_\infty$).
On leaves where $g$ is affine, add two threshold guards $\{g\ge \varepsilon\}$
and $\{-g\ge \varepsilon\}$ to generate the polyhedral complex
\[
\mathcal R_\varepsilon=\bigcup_{S\in\mathcal L}
\Bigl[\bigl(C(S)\cap\{g\ge\varepsilon\}\bigr)\ \cup\
\bigl(C(S)\cap\{-g\ge\varepsilon\}\bigr)\Bigr].
\]
The resulting automaton recognizes $\{x:\ |A_1(x)-A_2(x)|>\varepsilon\}$.

\begin{proposition}[Correctness (SWT--4)]
$\mathcal R_\varepsilon$ equals $\{x\in D:\ |A_1(x)-A_2(x)|>\varepsilon\}$ and
each constituent region provides an LP witness by evaluating $g$'s affine form.
\end{proposition}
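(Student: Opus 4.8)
\emph{Proof proposal.}
The plan is to push the SWT--3 common-refinement loop (Thm.~\ref{thm:cr-correct}) to a complete cover of $D$, reduce to a finite family of leaves on which the difference objective is a single affine form, and then verify the set identity and the LP-witness property cell by cell.

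First I would make precise the standing assumption behind the construction. The leaves $\mathcal L$ are obtained by adding faces to the initial GuardSet $S_0$ encoding the linear part of $D$; any non-polyhedral part of $D$ (an $\ell_2$ ball, if present) is carried alongside as an SOC restriction, so without loss of generality $C(S)\subseteq D$ for every $S\in\mathcal L$, and $D\subseteq\bigcup_{S\in\mathcal L}C(S)$ once refinement is complete. On each $C(S)$, local full refinement (Def.~\ref{def:lfr}) together with DYN--2 (Thm.~\ref{thm:dyn2}) gives single affine laws $A_1=w_{1,S}^\top x+b_{1,S}$ and $A_2=w_{2,S}^\top x+b_{2,S}$ (componentwise for vector outputs), hence in the scalar case $g_S:=A_1-A_2=(w_{1,S}-w_{2,S})^\top x+(b_{1,S}-b_{2,S})$ is affine on $C(S)$ and the faces $\{g\ge\varepsilon\}$, $\{-g\ge\varepsilon\}$ registered on that leaf are genuine halfspaces.

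Second, with $g=g_S$ affine on each $C(S)$ the identity reduces to two cellwise inclusions. For $\subseteq$: if $x\in\mathcal R_\varepsilon$ then $x\in C(S)\cap\{g\ge\varepsilon\}$ or $x\in C(S)\cap\{-g\ge\varepsilon\}$ for some $S$; since $g\equiv g_S$ on $C(S)$ this forces $|A_1(x)-A_2(x)|=|g(x)|\ge\varepsilon$, and $x\in C(S)\subseteq D$. For $\supseteq$: if $x\in D$ with $|A_1(x)-A_2(x)|\ge\varepsilon$, the cover property yields $S\in\mathcal L$ with $x\in C(S)$, so $|g_S(x)|\ge\varepsilon$, i.e. $g_S(x)\ge\varepsilon$ or $g_S(x)\le-\varepsilon$, placing $x$ in one of the two constituent regions of $S$. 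I would also record the open/closed caveat: as written the faces are closed, so $\mathcal R_\varepsilon$ is exactly the closed superlevel set $\{x\in D:|A_1-A_2|\ge\varepsilon\}$; if the strict region $\{>\varepsilon\}$ is intended one replaces the two faces by the open halfspaces $\{g>\varepsilon\}$, $\{-g>\varepsilon\}$, the two descriptions differing only on the measure-zero boundary $\{|g|=\varepsilon\}$, consistent with the closed-guard convention of App.~\ref{app:notation-guards}. For the LP-witness claim: each constituent region is $R=C(S)\cap\{g\ge\varepsilon\}$ (or with $-g$), the intersection of the polyhedron $C(S)$ with one halfspace, hence a polyhedron (a polyhedron$\cap\ell_2$-ball when $D$ carries an $\ell_2$ constraint). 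On $R$ one has $g\equiv g_S$, whose coefficients are read off by summing the affine atoms along the unique active path fragments on $C(S)$ (the GEO--1 readout, Thm.~\ref{thm:geo1}); a single LP/SOCP feasibility call on $R$ returns either a point $x^\star$ — a certified witness with exact gap value $g_S(x^\star)$ evaluated in closed form — or an infeasibility certificate showing $R=\varnothing$, and maximizing $|g_S|$ over $R$ additionally gives the worst case on that cell.

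The step I expect to be the main obstacle is the vector $\ell_\infty$ reduction. When outputs are vectors, $g=\|A_1-A_2\|_\infty$ is only CPWL after both sides are affine, being the maximum of the $2m$ affine forms $\pm(A_{1,k}-A_{2,k})$; so the phrase ``on leaves where $g$ is affine'' is not automatic and must be secured by an additional \texttt{ENSURE\_WINNER} refinement over these candidates, after which each sub-leaf pins a coordinate and sign realizing the max and $g$ is again a single affine $g_S$. I would fold this into the definition of ``locally fully refined for the difference objective'' and argue, via DYN--3 (Thm.~\ref{thm:dyn3}) and Lemma~\ref{lem:mono}, that the extra refinement terminates (finitely many candidate faces) and never invalidates already-resolved cells. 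Everything else is routine: the cover property is the hypothesis, and cellwise affineness turns both inclusions and the witness extraction into elementary halfspace membership and LP/SOCP feasibility.
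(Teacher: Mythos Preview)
Your proposal is correct and, in fact, considerably more careful than the paper itself: in App.~\ref{app:swt:cr-diff} the proposition is stated immediately after the construction of $\mathcal R_\varepsilon$ with no proof environment at all, so the paper treats it as self-evident from the definition. Your cellwise $\subseteq$/$\supseteq$ argument is exactly the intended reading.

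Two points where you go beyond the paper are worth flagging. First, your open/closed caveat is a genuine observation: the construction uses closed faces $\{g\ge\varepsilon\}$ and $\{-g\ge\varepsilon\}$, so $\mathcal R_\varepsilon$ as defined equals $\{|A_1-A_2|\ge\varepsilon\}$, not the strict set in the proposition statement; the paper glosses over this, and your remark that the discrepancy is confined to the measure-zero set $\{|g|=\varepsilon\}$ under the closed-guard convention is the right patch. Second, your handling of the vector $\ell_\infty$ case---that ``$g$ affine on leaves'' requires an additional \texttt{ENSURE\_WINNER} pass over the $2m$ candidates $\pm(A_{1,k}-A_{2,k})$---is a real detail the paper's one-line parenthetical ``componentwise for vectors and then $\ell_\infty$'' does not spell out. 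Your proposed fix (fold this into the definition of local full refinement for the difference objective, terminate via DYN--3) is correct and fits the framework.
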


\subsection{SWT-5: NP-hardness of minimal-face realization}
\label{app:swt:nphard}

\paragraph{Cost model (faces, not guards).}
Let $\mathcal H=\{h_\ell(x):a_\ell^\top x\le d_\ell\}_{\ell=1}^M$ be the global guard library
(canonicalized with \emph{oriented} halfspaces; see App.~\ref{app:notation:guards}).
An acyclic SWT $A$ uses a finite family of guarded edges, each guard being an index set
$S_e\subseteq\{1,\dots,M\}$ denoting $C(S_e)=\bigcap_{\ell\in S_e}\{a_\ell^\top x\le d_\ell\}$
(App.~\ref{app:notation:guardset}). Define the \emph{used-face set} and cost of $A$ by
\[
\mathrm{faces}(A):=\bigcup_{e} S_e\subseteq\{1,\dots,M\},\qquad
\mathrm{cost}(A):=|\mathrm{faces}(A)|,
\]
i.e., each oriented halfspace from the shared library is counted once no matter how many edges reuse it
(consistent with the ``unique-guard / no region-as-state'' discipline).

\paragraph{Decision problem (MF--SWT).}
Fix a finite guard library $\mathcal H$ and a finite affine basis
$\mathcal B=\{(w_j,b_j)\}_{j=1}^B$. Given a target $F$ that is realizable by some
acyclic SWT using only guards from $\mathcal H$ and affine atoms from $\mathcal B$,
decide whether there exists such an SWT $A$ with $\mathrm{cost}(A)\le k$.

\begin{theorem}[SWT--5: NP-hardness]
\label{thm:nphard}
The decision problem MF--SWT is NP-hard via a polynomial reduction from \textsc{Set-Cover}.
\end{theorem}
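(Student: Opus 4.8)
The plan is to give a polynomial-time reduction from \textsc{Set-Cover}. From an instance $(U=\{1,\dots,n\},\ \mathcal S=\{S_1,\dots,S_m\},\ k)$ I would construct, in polynomial time, an ambient dimension $N$, a guard library $\mathcal H$, a finite affine basis $\mathcal B$, a convex domain $D$, a target $F$ that is SWT-realizable over $(\mathcal H,\mathcal B)$, and a threshold $k':=2k$, so that $F$ admits an acyclic SWT realization of $\mathrm{cost}\le k'$ if and only if $\mathcal S$ has a subcover of $U$ of size $\le k$. The gadget: for each set $S_j$ register one canonicalized hyperplane $H_j$ into $\mathcal H$ (contributing both oriented halfspaces $h_j,\bar h_j$), chosen to pass through $\operatorname{ri}(D)$, and also register the facets of $D$. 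For each element $i\in U$ fix a \emph{probe pair} $p_i^+,p_i^-\in\operatorname{ri}(D)$, generic with respect to all of $\mathcal H$, arranged so that $H_j$ strictly separates $p_i^+$ from $p_i^-$ exactly when $i\in S_j$; concretely, place the probes on element-indexed ``gadget flats'' and solve, for each $j$, the $O(|S_j|)$ linear constraints forcing $a_j^\top$ to act uniformly on flat $i$ for all $j\ni i$ and generically otherwise (box facets separate no probe pair since the probes are interior). Take $\mathcal B$ to be the coordinate forms, the $a_j$, and constants, and let $F=\sum_{i=1}^n f_i$ be a sum of bounded-amplitude \emph{element gadgets}, each a small \emph{non-convex} feature seen only through the forms $\{a_j^\top x: i\in S_j\}$, engineered so that: (G1) $F$ is locally constant at every probe point with $F(p_i^+)\neq F(p_i^-)$; (G2) for \emph{every} covering subfamily $J$, $F$ restricts to a single affine map on each full-dimensional cell of the arrangement of $\{H_j:j\in J\}$; and (G3) the non-convex feature of $F$ near element $i$ is resolved (made affine on both sides) by splitting on $H_j$ for any single $j$ with $i\in S_j$, and by no other library hyperplane.

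Granting such an $F$, correctness is short in both directions. ($\Leftarrow$) From a cover $J$ with $|J|=k$, build $A$ by successively splitting $D$ on the hyperplanes $\{H_j:j\in J\}$ and emitting $F|_C$ on each resulting cell, which is well-defined by (G2); each $H_j$ occurs on both sides, so $\mathrm{faces}(A)=\{h_j,\bar h_j:j\in J\}$ and $\mathrm{cost}(A)=2k=k'$. ($\Rightarrow$) Let $A$ realize $F$ with $\mathrm{cost}(A)\le 2k$ and fix $i$. Since the probes avoid every hyperplane of $\mathcal H$ (hence every guard of $A$), $p_i^+$ lies in the relative interior of some leaf $C$, on which $A$'s affine law must equal the constant $F(p_i^+)$ by (G1); if $p_i^-\in C$ then this law forces $F(p_i^-)=F(p_i^+)$, contradicting (G1), so $p_i^-\notin C$ and some guard hyperplane of $A$ strictly separates the pair---necessarily some $H_j$ with $i\in S_j$, because box facets and the $H_{j'}$ with $i\notin S_{j'}$ do not separate probe pair $i$. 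Because $F$'s feature near $i$ is non-convex across that $H_j$ (by (G3)) and only convex/max behavior is realizable without guards, the two sides of $H_j$ near $i$ lie in leaves of $A$ carrying the orientations $h_j$ and $\bar h_j$ respectively, so both belong to $\mathrm{faces}(A)$. Collecting one such index $j(i)$ per element yields $J_0=\{j(i):i\in U\}$, which covers $U$ and satisfies $\{h_j,\bar h_j:j\in J_0\}\subseteq\mathrm{faces}(A)$, whence $2|J_0|\le\mathrm{cost}(A)\le 2k$, i.e.\ $|J_0|\le k$. The reduction is clearly polynomial, proving NP-hardness.

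The hard part is constructing a polynomial-size, genuinely SWT-realizable $F$ satisfying (G1)--(G3) simultaneously; (G3) is the real obstacle, since a CPWL function's breaks lie along \emph{fixed} hyperplanes, so one cannot naively attach a single concave kink to ``any of the $H_j$''. The intended resolution exploits the gap between an SWT's guard arrangement and the complex of the function it realizes: convex (max-type) breaks are free, only non-convex breaks must be paid for with guards, and which comparators among a large family are actually \emph{active on} $D$ is controlled by the choice of $D$. Thus one designs the feature near element $i$ as a lower envelope $\min_{j\ni i}(\,\cdot\,)$ whose active portion on $D$ is swept out by whichever single cut $H_j$ one chooses, so that resolving every element forces a covering set of cuts and nothing more. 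Making this precise---verifying that $F$ is realizable over the \emph{fixed} pair $(\mathcal H,\mathcal B)$, that any non-covering face set leaves some element feature non-affine (hence $F$ unrealizable within budget), and that the face count is exactly $2\times$ the number of cuts used---is where the work lies; the remaining bookkeeping (canonicalization of the $h_j$, solvability of the linear systems defining the $a_j$, and continuity/exactness at the probe points) is routine.
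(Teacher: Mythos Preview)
Your proposal has a genuine gap, and it is not merely the incompleteness you flag in the last paragraph: the requirements (G2) and (G3) are mutually inconsistent on any nontrivial \textsc{Set-Cover} instance, so no $F$ with those properties exists.

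Concretely, suppose element $i$ lies in two distinct sets $S_j,S_{j'}$, and that there exist covers $J\ni j$ with $j'\notin J$ and $J'\ni j'$ with $j\notin J'$ (this is the generic situation; \textsc{Set-Cover} is trivial otherwise). By (G2) applied to $J$, the function $F$ is a single affine on each full-dimensional cell of the $\{H_\ell:\ell\in J\}$-arrangement; in particular it is affine on each of the two half-spaces of $H_j$, intersected with the cell containing the probe region. By (G2) applied to $J'$, the same holds for $H_{j'}$. But if $F$ equals an affine $a_+$ on $H_j^+$ and $a_-$ on $H_j^-$, and simultaneously equals an affine $b_+$ on $H_{j'}^+$ and $b_-$ on $H_{j'}^-$, then comparing on the four full-dimensional intersections $H_j^{\pm}\cap H_{j'}^{\pm}$ forces $a_+=b_+=b_-=a_-$, i.e.\ $F$ is globally affine near the probe region. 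Hence there is no non-convex feature to invoke in (G3), and the reverse direction collapses: an affine $F$ is realized by a single unguarded edge with $\mathrm{cost}=0$. The ``lower envelope'' suggestion at the end does not escape this, because the obstruction is about the \emph{function} $F$ (its breakset is intrinsic), not about any particular realization; (G2) already pins down $F$'s piecewise structure independently of which cover you later choose.

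The paper sidesteps this tension entirely by adding one auxiliary coordinate $y_j$ per set and taking the target to be the $0/1$ indicator of $\bigcup_{i}\bigcup_{j\ni i}\bigl(B_i\times\{y_j\ge\tfrac12\}\bigr)$ over a product domain. Here the $X$-box faces are a fixed mandatory cost ($4m$ oriented faces), and each $Y$-face $\{y_j\ge\tfrac12\}$ corresponds \emph{directly} to selecting $S_j$; there is no need to engineer a single function whose breaks are simultaneously ``compatible'' with every cover. The threshold becomes $k'=4m+k$, and both directions are a short counting argument. If you want to salvage your low-dimensional idea, you would need to drop (G2)-for-all-covers and instead argue the reverse direction without it; but then the forward direction no longer gives a clean cost bound, and you are back to the obstacle you identified.
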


\begin{proof}[Reduction]
Given a \textsc{Set-Cover} instance $(U,\mathcal S,k)$ with
$U=\{1,\dots,m\}$ and $\mathcal S=\{S_1,\dots,S_r\}$, construct
$(F,\mathcal H,\mathcal B)$ as follows.

\emph{Domain and library.}
Work in $X\times Y$ with $X\subset\mathbb{R}^2$ and $Y=[0,1]^r$, so that the overall domain
$D:=X\times Y$ is convex polyhedral under the closed-guard convention
(App.~\ref{app:notation:norms}, \ref{app:notation:Hform}).
Pick $m$ pairwise-disjoint tiny axis-aligned boxes $B_i\subset X$ (grid-separated).
Insert into $\mathcal H$ exactly the four oriented $X$-faces delimiting each $B_i$
(with canonical orientations; App.~\ref{app:notation:guards}).
For each $j\in\{1,\dots,r\}$, also insert the single $Y$-face $\{y_j\ge \tfrac12\}$
(its opposite orientation is stored as a distinct index by our convention; App.~\ref{app:notation:guards}).

\emph{Basis and target.}
Let $\mathcal B=\{(0,0),(0,1)\}$ (constants $0$ and $1$). For each $i\in[m]$ define
$S_i:=\{\,j\in[r]:\, i\in S_j\,\}$, i.e., the set indices that contain element $i$.
Define the target $F$ \emph{by an SWT over the above library} as the $\oplus=\max$ of edges of weight $1$
guarded by the ``cylinders''
\[
C_{i,j}:=B_i\times\{\,y:\ y_j\ge\tfrac12\,\}\qquad (i\in[m],\ j\in S_i),
\]
together with a default edge of weight $0$ on $D$. Under the SWT truth semantics
(Sec.~\ref{app:swt:rules}--\ref{app:swt:equiv}), this realizes the function that equals $1$ on
$\bigcup_{i}\bigcup_{j\in S_i}C_{i,j}$ and $0$ elsewhere.

\emph{Forward direction ($\Rightarrow$).}
If there exists a set cover $\mathcal C\subseteq\mathcal S$ with $|\mathcal C|\le k$, build an SWT that,
for each $j\in\mathcal C$ and each $i\in S_j$, includes one edge guarded by
$B_i\cap\{y_j\ge\tfrac12\}$ and weight $1$ (the default $0$-edge is kept).
By guard sharing, the distinct oriented $X$-faces used are exactly the $4m$ faces of the $B_i$’s, and
the distinct $Y$-faces used are precisely $\{\,|y_j\ge\tfrac12|:\ j\in\mathcal C\,\}$.
Hence $\mathrm{cost}(A)=4m+|\mathcal C|\le 4m+k$.

\emph{Reverse direction ($\Leftarrow$).}
Suppose an acyclic SWT $A$ realizes $F$ with $\mathrm{cost}(A)\le 4m+k$.
Because the only $X$-faces available in $\mathcal H$ are the four per $B_i$, any edge
that outputs $1$ and omits \emph{some} face of a given $B_i$ would leak outside $B_i$ in $X$,
contradicting $F=0$ there. Thus all $4m$ oriented $X$-faces must lie in $\mathrm{faces}(A)$.
Let $J:=\{\,j:\ \{|y_j\ge\tfrac12|\}\subseteq \mathrm{faces}(A)\,\}$; then $|J|\le k$.
If $J$ were not a set cover, there would exist $i^\star\in U$ such that $S_{i^\star}\cap J=\varnothing$.
But then no guard of any $1$-weighted edge can contain a $Y$-face that activates on
$B_{i^\star}\times\{y\}$ for \emph{any} $y$ with $y_j\ge \tfrac12$ and $j\in S_{i^\star}$,
contradicting that $F=1$ on
$B_{i^\star}\times\{y:\ \exists j\in S_{i^\star},\ y_j\ge \tfrac12\}$.
Therefore $J$ covers $U$, and $|J|\le k$ yields a set cover of size at most $k$.

The reduction is polynomial in $(m,r)$ and uses only SWTs over the shared guard library with
closed guards and the $\oplus/\otimes$ semantics as specified in
Sec.~\ref{app:swt:rules}--\ref{app:swt:equiv} and App.~\ref{app:notation:guards}--\ref{app:notation:guardset}.
\end{proof}

\begin{remark}[Convexity note]
An H-representation encodes convex sets (intersections of halfspaces) and cannot represent an
arbitrary finite disjoint union. The construction above therefore counts \emph{oriented halfspaces}
from the shared library and employs auxiliary coordinates $Y$ so that covering all required
cylinders enforces the use of membership faces $\{y_j\ge\tfrac12\}$; the face budget then coincides
with the set-cover size up to the fixed constant $4m$.
\end{remark}

\paragraph{Existence of a minimum.}
Whenever $F$ is realizable by $(\mathcal H,\mathcal B)$, the set of attainable costs
$\{\mathrm{cost}(A):A\ \text{realizes }F\}$ is a nonempty subset of $\mathbb N$ and thus has a minimum
by well-ordering; denote it by $\mathrm{MC}_{\mathrm{faces}}(F;\mathcal H,\mathcal B)$.


\section{JIT--SWT Semantics and Guarantees (DYN-1..DYN-7)}
\label{app:jit}

This appendix supplies complete proofs and algorithmic details for the JIT--SWT
semantics: (D.1) soundness and monotonicity of lower/upper bounds (LB/UB);
(D.2) formal refiners \texttt{ENSURE\_SIGN}, \texttt{ENSURE\_WINNER},
\texttt{ENSURE\_COMMON\_REFINE} and their termination conditions;
(D.3) proofs of DYN--1..7 (incl.\ fairness and finite-refinement decidability,
plus certificate formats); and (D.4) budgeted complexity bounds.

Throughout we follow App.~\ref{app:notation-guards}: guards are \emph{closed};
the global library $\mathcal H$ stores oriented halfspaces once (canonicalized);
a \emph{GuardSet} $S$ denotes $C(S)=\bigcap_{\ell\in S}\{a_\ell^\top x\le d_\ell\}$;
the current set of leaves is $\mathcal L$; and anytime envelopes are aggregated
\emph{over leaves} containing the query point $x$.

\subsection{Soundness and Monotonicity of LB/UB (order-preserving constructors)}
\label{app:jit:lbub}

\paragraph{Local LB/UB oracle.}
For a scalar lazy expression $E$ and a leaf $S\in\mathcal L$, the oracle yields
intervals $[\LB(E,S),\UB(E,S)]$ such that
\(
\LB(E,S)\le \inf_{x\in C(S)} E(x)\le \sup_{x\in C(S)} E(x)\le \UB(E,S).
\)
The structural rules are:
\[
\begin{aligned}
&\LB(\Aff(w,b),S)=\min_{x\in C(S)}w^\top x+b, \quad
\UB(\Aff(w,b),S)=\max_{x\in C(S)}w^\top x+b;\\[-0.2em]
&\LB(\Sum\{E_i\},S)=\sum_i \LB(E_i,S),\quad
\UB(\Sum\{E_i\},S)=\sum_i \UB(E_i,S);\\[-0.2em]
&\LB(\Scale(c,E),S)=\begin{cases}c\,\LB(E,S),&c\ge 0\\ c\,\UB(E,S),&c<0\end{cases},\quad
\UB(\Scale(c,E),S)=\begin{cases}c\,\UB(E,S),&c\ge 0\\ c\,\LB(E,S),&c<0\end{cases};\\[-0.2em]
&\LB(\Bias(b,E),S)=\LB(E,S)+b,\quad
\UB(\Bias(b,E),S)=\UB(E,S)+b;\\[-0.2em]
&\LB(\Max\{E_i\},S)=\max_i \LB(E_i,S),\quad
\UB(\Max\{E_i\},S)=\max_i \UB(E_i,S).
\end{aligned}
\]
(When desired, affine subcalls are solved by LP/SOCP to obtain \emph{exact} local
bounds; otherwise any sound relaxation is permitted.)

\begin{lemma}[Soundness of structural LB/UB]
\label{lem:lbub-sound}
The rules above are sound: for every constructor and every leaf $S\in\mathcal L$,
they produce intervals containing the true range of $E$ on $C(S)$.
\end{lemma}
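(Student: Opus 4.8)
The plan is to prove Lemma~\ref{lem:lbub-sound} by structural induction on the lazy expression $E$, processing the e-graph in a topological order so that the inductive hypothesis is available for every sub-expression before its parent. Sharing via hash-consing/congruence is harmless here, because soundness is a statement about the denotation $\llbracket E\rrbracket$, which does not depend on how the node is represented; a shared node is therefore bounded once and the bound is reused by all parents. First I would fix a leaf $S\in\mathcal{L}$ and abbreviate $m(E):=\inf_{x\in C(S)}\llbracket E\rrbracket(x)$, $M(E):=\sup_{x\in C(S)}\llbracket E\rrbracket(x)$; the target at each constructor is $\LB(E,S)\le m(E)$ and $M(E)\le\UB(E,S)$.

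For the base case $E=\Aff(w,b)$, the oracle returns the exact optimum of the affine functional $x\mapsto w^\top x+b$ over the polyhedron $C(S)$ (an LP, or an SOCP if the leaf carries a second-order cone from an $\ell_2$ domain), so in fact $\LB(E,S)=m(E)$ and $\UB(E,S)=M(E)$ and containment is immediate. Replacing the exact solve by any sound relaxation only widens the interval and hence remains sound, and the degenerate cases—$C(S)$ infeasible, or the objective unbounded over $C(S)$—are absorbed by the conventions $\inf\varnothing=+\infty$, $\sup\varnothing=-\infty$, under which the containment holds vacuously.

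The four inductive steps are then each a one-liner from the pointwise semantics of §\ref{app:notation:egraph}. For $\Bias$: $\llbracket\Bias(b,E)\rrbracket=\llbracket E\rrbracket+b$ is a translation, so $m(\Bias(b,E))=m(E)+b$ and $M(\Bias(b,E))=M(E)+b$, and the hypothesis on $E$ closes the case. For $\Scale$: $\llbracket\Scale(c,E)\rrbracket=c\,\llbracket E\rrbracket$; when $c\ge 0$ the map $t\mapsto ct$ is nondecreasing, giving $m(\Scale(c,E))=c\,m(E)\ge c\,\LB(E,S)$ and $M(\Scale(c,E))=c\,M(E)\le c\,\UB(E,S)$, whereas when $c<0$ it is nonincreasing, which swaps the roles of $\LB$ and $\UB$—precisely the case split written in the rule. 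For $\Sum$: the hypothesis gives, termwise and at every $x\in C(S)$, $\sum_i\LB(E_i,S)\le\sum_i\llbracket E_i\rrbracket(x)=\llbracket\Sum\{E_i\}\rrbracket(x)\le\sum_i\UB(E_i,S)$, and taking $\inf$ and $\sup$ over $x$ yields the claim (the bound is already pointwise, so super/sub-additivity of $\inf/\sup$ costs nothing). For $\Max$: at every $x$ and every $i$, $\llbracket E_i\rrbracket(x)\le\UB(E_i,S)\le\max_i\UB(E_i,S)$, hence $\llbracket\Max\{E_i\}\rrbracket(x)\le\max_i\UB(E_i,S)=\UB(\Max\{E_i\},S)$; fixing any index $j$, $\llbracket\Max\{E_i\}\rrbracket(x)\ge\llbracket E_j\rrbracket(x)\ge\LB(E_j,S)$ for all $x$, so $\llbracket\Max\{E_i\}\rrbracket(x)\ge\max_j\LB(E_j,S)=\LB(\Max\{E_i\},S)$; again take $\inf/\sup$ over $x$.

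The mathematical content is genuinely light; the only places that need care are bookkeeping, and that is where I expect the (mild) friction. First, the induction must be stated over the e-graph DAG rather than a tree, handled by topological processing together with the observation above that soundness is a property of the denotation. Second, one must stay consistent with extended-real arithmetic: a $-\infty$ operand arising from the semiring zero $\mathbf{0}$, an unbounded polyhedron, or an infeasible GuardSet must all be checked against the conventions of App.~\ref{app:notation-guards}, under which leaves in $\mathcal{L}$ are feasible by construction and $\pm\infty$ endpoints make every containment vacuous. With those conventions pinned down there is no real obstacle. As a corollary one also gets the monotonicity companion (Lemma~\ref{lem:mono}): shrinking $C(S)$ only restricts the feasible set inside each $\inf/\sup$, so the exact affine bounds move inward, and a tightened sub-bound propagates through the monotone constructors $\Bias$, $\Scale$, $\Sum$, $\Max$, whence $\LB$ is nondecreasing and $\UB$ nonincreasing along refinement and $\underline A\le A\le\overline A$ is preserved.
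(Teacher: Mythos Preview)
Your proposal is correct and follows essentially the same approach as the paper: structural induction on the expression, with the affine base case handled exactly by LP and the composite constructors (\texttt{Bias}, \texttt{Scale}, \texttt{Sum}, \texttt{Max}) discharged by the obvious $\inf/\sup$ inequalities and the sign-aware case split for scaling. Your write-up is more careful than the paper's (e.g., the e-graph/DAG remark, the extended-real bookkeeping, and the pointwise argument for \texttt{Sum} rather than via sub/super-additivity of $\inf/\sup$), but these are elaborations of the same proof rather than a different route.
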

\begin{proof}
Induction on the syntax tree of $E$.
Affines: exact by linear optimization on a polyhedron. Sum/scale/bias: use
$\inf(f{+}g)\ge\inf f{+}\inf g$, $\sup(f{+}g)\le\sup f{+}\sup g$ and the sign-aware
identities $\inf(c f)=c\,\inf f$ for $c\ge 0$, $\inf(c f)=c\,\sup f$ for $c<0$
(and similarly for $\sup$). Max: $\sup\max_i f_i=\max_i\sup f_i$ and
$\inf\max_i f_i\ge \max_i \inf f_i$.
\end{proof}

\begin{lemma}[Monotonicity under leaf restriction]
\label{lem:mono-leaf}
Let $S'\!=\!S\cup\{\ell\}$ be a child of $S$ (hence $C(S')\subseteq C(S)$).
Then for every $E$,
\(
\LB(E,S')\ \ge\ \LB(E,S),\quad
\UB(E,S')\ \le\ \UB(E,S).
\)
\end{lemma}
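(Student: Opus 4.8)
The plan is to prove Lemma~\ref{lem:mono-leaf} by structural induction on the syntax tree of $E$, in exact parallel with the soundness proof (Lemma~\ref{lem:lbub-sound}). The point is that the only geometric content sits in the base case; every inductive step is a purely order‑theoretic manipulation of the bounds already established for the proper subexpressions. For the base case $E=\Aff(w,b)$: since $C(S')\subseteq C(S)$, restricting the feasible region can only raise the minimum and lower the maximum of the linear functional $w^\top x+b$, so $\LB(\Aff(w,b),S')\ge\LB(\Aff(w,b),S)$ and $\UB(\Aff(w,b),S')\le\UB(\Aff(w,b),S)$. If $C(S')=\varnothing$ the conventions $\LB=+\infty$, $\UB=-\infty$ make both inequalities hold vacuously. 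The same argument applies verbatim when the affine subcall is discharged by an exact LP/SOCP (over a polyhedron, or a polyhedron intersected with a second‑order cone), since shrinking the feasible set cannot improve either objective.

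For the inductive step, assume the claim for all proper subexpressions. For $\Bias(b,E)$ add the constant $b$ to both sides of the inductive inequalities. For $\Sum\{E_i\}$ a finite sum of quantities that are nondecreasing (resp.\ nonincreasing) under $S\mapsto S'$ is again nondecreasing (resp.\ nonincreasing); apply this to the $\LB$ (resp.\ $\UB$) of the summands. For $\Max\{E_i\}$ the operation $\max_i$ is monotone in each argument, so $\max_i\LB(E_i,S')\ge\max_i\LB(E_i,S)$ and likewise with $\UB$, which are exactly $\LB(\Max\{E_i\},S')$ and $\UB(\Max\{E_i\},S')$. The one case that needs a line of bookkeeping is $\Scale(c,E)$: for $c\ge 0$ multiply the inductive inequalities by $c$, preserving order; for $c<0$, from $\UB(E,S')\le\UB(E,S)$ we get $c\,\UB(E,S')\ge c\,\UB(E,S)$ (the negative factor flips the inequality), i.e.\ $\LB(\Scale(c,E),S')\ge\LB(\Scale(c,E),S)$, and symmetrically $\UB(\Scale(c,E),S')=c\,\LB(E,S')\le c\,\LB(E,S)=\UB(\Scale(c,E),S)$ using $\LB(E,S')\ge\LB(E,S)$. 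This closes the induction.

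\textbf{Anticipated obstacle.} There is no deep obstacle; the only subtlety worth flagging is the interplay with the footnoted ``upgrade to exact LP/SOCP anytime'' clause. The structural recursion is monotone under leaf restriction precisely because its affine base oracle is monotone under domain restriction — a property enjoyed by exact optimization but not by an arbitrary sound relaxation. Accordingly the statement should be read as covering the structural rules with a base oracle that is itself monotone (exact LP/SOCP being the canonical instance); if a non‑monotone relaxation were mixed in, one would instead take pointwise minima/maxima of successive bounds to restore monotonicity, which is the form actually used to drive Lemma~\ref{lem:mono} and DYN‑1/3. The $\Scale$ sign‑flip is the only spot where the naive intuition ``both envelopes move the same way as the cell shrinks'' needs the explicit case split, and it still comes out right because the flip is applied consistently to $S$ and to $S'$.
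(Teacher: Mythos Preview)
Your proposal is correct and follows essentially the same approach as the paper: structural induction on $E$, with the base case being that shrinking the feasible region of an affine LP can only raise the minimum and lower the maximum, and the inductive steps being order-preservation of the LB/UB combinators. Your treatment is in fact more careful than the paper's one-line sketch, which glosses the $\Scale(c,\cdot)$ sign-flip under a blanket ``nondecreasing in each argument'' remark; your explicit case split for $c<0$ is the right way to make that step precise, and your observation about non-monotone relaxations is a valid caveat that the paper does not spell out.
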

\begin{proof}
For affines, shrinking the feasible set can only increase the minimum and
decrease the maximum. For composite forms, the rules in Lemma~\ref{lem:lbub-sound}
are nondecreasing in each argument, so the inequalities propagate by induction.
\end{proof}

\paragraph{Anytime envelopes and DAG propagation.}
Define the scalar envelopes on leaves by
\[
\underline A_S:=\LB(E_{\rm out},S),\qquad
\overline A_S:=\UB(E_{\rm out},S),
\]
and aggregate pointwise over leaves containing $x$ by
\(
\underline A(x)=\sup\{\underline A_S: S\in\mathcal L,\ x\in C(S)\},\,
\overline A(x)=\inf\{\overline A_S: S\in\mathcal L,\ x\in C(S)\}.
\)

\begin{lemma}[Anytime propagation over the SWT DAG]
\label{lem:dag-prop}
If each module output obeys $\LB\le \text{true}\le \UB$ on every leaf, then the
composed network’s envelopes computed with the same rules satisfy
$\underline A\le A\le\overline A$ pointwise.
\end{lemma}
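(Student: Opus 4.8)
The plan is to reduce the statement to the per-constructor soundness already established in Lemma~\ref{lem:lbub-sound} via a structural induction over the lazy expression DAG, and then to close with a one-line aggregation over the finite leaf set $\mathcal L$.

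First I would fix a leaf $S\in\mathcal L$. The network's output is the e-graph node $E_{\rm out}$ compiled in \S\ref{app:swt:rules}, and by construction of the lazy representation from the SWT (together with AF--2 / SWT--1 and the denotational rules of App.~\ref{app:notation:egraph}) its denotation $\llbracket E_{\rm out}\rrbracket$ coincides with the static value $A$ on $C(S)$. Since the e-graph is acyclic---hash-consing and congruence closure add no cycles---its nodes admit a topological order, and I would prove by induction along that order that every subexpression node $E$ satisfies
\[
\LB(E,S)\ \le\ \inf_{x\in C(S)}\llbracket E\rrbracket(x)\ \le\ \sup_{x\in C(S)}\llbracket E\rrbracket(x)\ \le\ \UB(E,S).
\]
The base case is an $\Aff(w,b)$ atom, whose interval is the exact LP/SOCP range over $C(S)$ (or any sound relaxation), hence contains the true range. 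In the inductive step each parent is one of $\Sum,\Scale,\Bias,\Max$ applied to children for which the above already holds, and Lemma~\ref{lem:lbub-sound} is precisely the assertion that the matching structural rule then returns a sound interval for the parent. Sharing is harmless because $\LB(E,S),\UB(E,S)$ depend only on the pair $(E,S)$, not on the path by which $E$ is reached, so a memoized interval reused by several parents is the same sound interval for all of them. Evaluating at the root yields $\underline A_S=\LB(E_{\rm out},S)\le A(x)\le \UB(E_{\rm out},S)=\overline A_S$ for every $x\in C(S)$; Lemma~\ref{lem:mono-leaf} additionally guarantees these brackets only tighten under splitting, though that is not needed for the present claim.

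It remains to aggregate over leaves. By the leaf-cover invariant (Def.~\ref{def:leaves}): $\mathcal L$ is initialized as $\{S_0\}$ with $C(S_0)=\mathcal D$, and---since guards are closed---splitting a leaf along $\{h_\ell\}$ replaces it by $S\cup\{\ell\}$ and $S\cup\{\bar\ell\}$ whose union is the parent, so every $x\in\mathcal D$ always lies in at least one feasible leaf. As $\mathcal L$ is finite we have $\underline A(x)=\max\{\underline A_S:S\in\mathcal L,\ x\in C(S)\}$ and $\overline A(x)=\min\{\overline A_S:S\in\mathcal L,\ x\in C(S)\}$; each term in the first set lower-bounds $A(x)$ and each term in the second upper-bounds it, so the maximum of lower bounds is $\le A(x)$ and the minimum of upper bounds is $\ge A(x)$, i.e.\ $\underline A(x)\le A(x)\le\overline A(x)$ for all $x\in\mathcal D$.

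The step I expect to be the main obstacle is not the induction itself but the bookkeeping linking ``module outputs'' to ``e-graph constructors'': one must verify that every composite module (residual add, two-way gate, Max/pooling, conv/GNN template) compiles into \emph{only} the constructors $\{\Aff,\Sum,\Scale,\Bias,\Max\}$ covered by Lemma~\ref{lem:lbub-sound}, with no hidden primitive (a comparator Boolean, say) that would demand its own soundness rule. This is where the design decision of App.~\ref{app:notation:egraph}---materializing comparators as \emph{guards} on $S$ rather than as expression nodes---does the work: on a fixed leaf the winner is already pinned by $S$, so a $\Max$ node is evaluated over the surviving affine candidates and the inductive rule applies verbatim. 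A secondary, minor point is that splitting can produce infeasible children with $C(S)=\varnothing$; these are dropped from $\mathcal L$ and recorded in the feasibility cache (Def.~\ref{def:feascache}), which is harmless since they contribute no points to the aggregation and no $x\in\mathcal D$ is thereby left uncovered.
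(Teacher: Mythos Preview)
Your proposal is correct and follows essentially the same approach as the paper's own proof: structural induction over the (acyclic) expression DAG using the per-constructor soundness of Lemma~\ref{lem:lbub-sound}, followed by aggregation over leaves. The paper compresses this into two sentences (``exact analogue of Lemma~\ref{lem:lbub-sound} lifted to the network DAG \ldots\ aggregation over leaves preserves the inequality''), whereas you spell out the topological order, the base and inductive cases, the sharing invariance, the leaf-cover argument, and the bookkeeping that every compiled module indeed lands in $\{\Aff,\Sum,\Scale,\Bias,\Max\}$; these elaborations are sound and useful but do not constitute a different route.
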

\begin{proof}
Exact analogue of Lemma~\ref{lem:lbub-sound} lifted to the network DAG:
each node’s LB/UB is computed from its inputs by order-preserving constructors,
hence preserves the local sandwich on each leaf; aggregation over leaves preserves
the inequality.
\end{proof}

\begin{theorem}[DYN--1: sound anytime envelopes]
\label{thm:dyn1}
At any time, for all $x$ we have $\underline A(x)\le A(x)\le \overline A(x)$.
\end{theorem}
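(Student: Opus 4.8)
The plan is to treat the sandwich $\underline A(x)\le A(x)\le\overline A(x)$ as an invariant: establish it for the initial JIT configuration and show it is preserved by every atomic JIT operation. Two ingredients do the work: (a) the leaf-wise soundness of the LB/UB oracle, already lifted to the network DAG in Lemma~\ref{lem:dag-prop} (which rests on Lemma~\ref{lem:lbub-sound}); and (b) the fact that the leaf set $\mathcal L$ always covers $\mathcal D$, so the pointwise aggregates $\underline A(x)=\sup\{\underline A_S:S\in\mathcal L,\ x\in C(S)\}$ and $\overline A(x)=\inf\{\overline A_S:S\in\mathcal L,\ x\in C(S)\}$ are taken over a nonempty index set. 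Before starting I would record the identification $A(x)=\llbracket E_{\mathrm{out}}\rrbracket(x)$: the lazy e-graph is by construction the symbolic carrier of the static SWT, so its denotation (App.~\ref{app:notation:egraph}) is exactly the truth semantics $A$, and by SWT--1 this equals $F$.

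\paragraph{Leaf-wise sandwich and base case.}
For any leaf $S\in\mathcal L$ and any $x\in C(S)$, Lemma~\ref{lem:dag-prop} gives that $A(x)=\llbracket E_{\mathrm{out}}\rrbracket(x)$ lies in $[\inf_{C(S)}\llbracket E_{\mathrm{out}}\rrbracket,\ \sup_{C(S)}\llbracket E_{\mathrm{out}}\rrbracket]\subseteq[\LB(E_{\mathrm{out}},S),\UB(E_{\mathrm{out}},S)]=[\underline A_S,\overline A_S]$. Taking the supremum of the lower quantities and the infimum of the upper quantities over all leaves containing $x$ preserves these inequalities, hence $\underline A(x)\le A(x)\le\overline A(x)$. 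For the base case, JIT is initialized with $\mathcal L=\{S_0\}$ where $C(S_0)\supseteq\mathcal D$, so every $x\in\mathcal D$ lies in $C(S_0)$ and the claim holds.

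\paragraph{Invariance under JIT steps.}
Every action of the refiners or of the B\&B driver (Alg.~\ref{alg:jitbb}) decomposes into three atomic kinds, each of which I would check preserves the invariant. (i) \emph{Tightening a local bound} (upgrading an affine subcall to an exact LP/SOCP, or any sound relaxation): the new interval still contains the true range by Lemma~\ref{lem:lbub-sound}, and by Lemma~\ref{lem:mono} it can only raise $\underline A$ and lower $\overline A$. (ii) \emph{Splitting a leaf} $S$ along an oriented face $\ell$ into $S^{+}=S\cup\{\ell\}$ and $S^{-}=S\cup\{\bar\ell\}$: since guards are closed, $C(S^{+})\cup C(S^{-})=C(S)$, so coverage of $\mathcal D$ is preserved; each child's bounds are sound on its own cell by Lemma~\ref{lem:dag-prop} and, by Lemma~\ref{lem:mono-leaf}, no looser than the parent's; removing $S$ from $\mathcal L$ when the children are inserted (Rem.~\ref{rem:leaves-envelopes}) keeps the aggregation consistent. (iii) \emph{Committing a branch or pruning a dominated candidate} inside \texttt{ENSURE\_SIGN}/\texttt{ENSURE\_WINNER}: the triggering conditions ($\UB(z,S)\le0$ or $\LB(z,S)\ge0$ for a sign test; $\LB(E_{i^\star},S)\ge\max_{j\ne i^\star}\UB(E_j,S)$ for a winner; Prop.~\ref{prop:dom} for pruning) force the replaced sub-expression to coincide \emph{exactly} with the original on $C(S)$, so the denotation on that leaf is unchanged and its LB/UB stay sound. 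Since \texttt{ENSURE\_COMMON\_REFINE} is a finite sequence of splits of type (ii) and the remaining refiners and the driver are finite compositions of (i)--(iii), the invariant holds after every step, which is the assertion of Thm.~\ref{thm:dyn1}.

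\paragraph{Main obstacle.}
The only genuine care-point is the bookkeeping of the leaf cover together with the quantifier ``$\forall x\in\mathcal D$'': one must verify that every split keeps $\bigcup_{S\in\mathcal L}C(S)\supseteq\mathcal D$, which is exactly where the closed-guard convention enters (the splitting hyperplane belongs to \emph{both} children, so no point is lost), and hence that each $x$ always has a nonempty family of containing leaves over which to take $\sup/\inf$. A secondary subtlety---that branch-commit and dominance-prune steps are \emph{exact}, not merely sound, on the firing leaf---follows immediately from the numeric gaps that trigger them, and the identification $A=\llbracket E_{\mathrm{out}}\rrbracket$ is definitional. Everything else is a routine double induction: on the e-graph syntax (for the per-leaf bounds) and on the sequence of JIT operations (for invariance).
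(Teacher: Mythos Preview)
Your proof is correct and in fact subsumes the paper's, but it does considerably more work than the paper deems necessary. The paper's proof of DYN--1 is a single sentence: combine Lemma~\ref{lem:lbub-sound} (structural LB/UB is sound on any leaf), Lemma~\ref{lem:dag-prop} (soundness propagates through the DAG), and the aggregation definition. That is essentially your ``Leaf-wise sandwich'' paragraph, and the paper stops there: since the LB/UB rules are sound for \emph{any} GuardSet $S$ and \emph{any} expression $E$ in the grammar, the sandwich $\underline A_S\le A|_{C(S)}\le\overline A_S$ holds at every instant regardless of how $S$ or $E$ arose, so no induction over JIT steps is needed.

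Your additional invariance argument (splits preserve coverage, commits/prunes preserve denotation on the firing leaf) is not wrong, and it makes explicit something the paper leaves implicit---namely that branch commits and dominance pruning are \emph{semantics-preserving} e-graph rewrites on their leaf, so the ``current'' $E_{\mathrm{out}}$ still denotes $A$ there. The paper silently folds this into the hypothesis of Lemma~\ref{lem:dag-prop}. What your route buys is operational transparency about why rewrites do not break soundness; what the paper's route buys is brevity, by treating soundness as a static property of the oracle rather than as a dynamical invariant. Also note that your coverage concern, while reasonable, is not strictly required for the bare inequality: if some $x$ were in no leaf, the $\sup$/$\inf$ over the empty family would be $-\infty$/$+\infty$ and the sandwich would hold vacuously.
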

\begin{proof}
Combine Lemma~\ref{lem:lbub-sound}, Lemma~\ref{lem:dag-prop}, and the definition
of aggregation over current leaves $\mathcal L$.
\end{proof}

\begin{lemma}[Monotone tightening]
\label{lem:mono-tight}
Replacing a leaf $S$ by children $S^\pm$ or tightening any local LB/UB
(internal upgrade to exact LP/SOCP bounds) yields
$\underline A\uparrow$ and $\overline A\downarrow$ pointwise.
\end{lemma}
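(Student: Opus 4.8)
The plan is to split the argument according to the two update types named in the statement — (a) replacing a leaf $S$ by its children $S^+=S\cup\{\ell\}$ and $S^-=S\cup\{\bar\ell\}$, and (b) tightening a local LB/UB by an internal upgrade to an exact LP/SOCP value — prove pointwise monotonicity of $(\underline A,\overline A)$ for each in isolation, then observe that a finite sequence of updates is monotone by composition. Throughout I would keep in mind that envelopes are aggregated \emph{over current leaves containing the query point}, $\underline A(x)=\sup\{\underline A_S:S\in\mathcal L,\ x\in C(S)\}$ and $\overline A(x)=\inf\{\overline A_S:S\in\mathcal L,\ x\in C(S)\}$, so that only leaves actually touched by an update can affect the value at a given $x$.

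\textbf{Leaf split.} First I would fix $x$ and dispose of the case $x\notin C(S)$: no leaf other than $S$ is altered, so the membership list of $x$ is unchanged and $\underline A(x),\overline A(x)$ do not move. For $x\in C(S)$ I would use the closed-guard convention: since $\ell$ and $\bar\ell$ are opposite closed halfspaces, $C(S)=C(S^+)\cup C(S^-)$, hence $x$ lies in at least one feasible child $S^\bullet$, and any such child has $C(S^\bullet)\subseteq C(S)$. Then Lemma~\ref{lem:mono-leaf} gives $\underline A_{S^\bullet}=\LB(E_{\rm out},S^\bullet)\ge\LB(E_{\rm out},S)=\underline A_S$ and $\overline A_{S^\bullet}\le\overline A_S$. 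The update deletes $\underline A_S$ from the index set of the supremum defining $\underline A(x)$ and inserts the no-smaller values $\underline A_{S^\bullet}$ for the one or two children that contain $x$, leaving every other term fixed; so the supremum cannot decrease. The symmetric argument (smaller $\overline A_{S^\bullet}$ replacing $\overline A_S$) shows the infimum cannot increase.

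\textbf{Local tightening.} Next I would model an upgrade as replacing, at one node $E_k$ and one leaf $S$, the interval $[\LB(E_k,S),\UB(E_k,S)]$ by a still-sound $[\LB',\UB']$ with $\LB(E_k,S)\le\LB'$ and $\UB'\le\UB(E_k,S)$ (e.g.\ the exact optimum of the LP/SOCP subcall on $C(S)$). I would then invoke that the structural rules for $\Sum,\Scale,\Bias,\Max$ — the ones shown sound in Lemma~\ref{lem:lbub-sound} — are monotone in each argument (nondecreasing in every LB, nonincreasing in every UB, with the sign-aware swap for $\Scale$), so propagating the change up the DAG in topological order only raises $\underline A_S=\LB(E_{\rm out},S)$ and only lowers $\overline A_S$, with all other leaves untouched. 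Aggregation then behaves as in the split case: a larger $\underline A_S$ cannot lower $\underline A(x)$ and a smaller $\overline A_S$ cannot raise $\overline A(x)$. I would close by noting the upgraded bounds remain sound (exact LP/SOCP values equal, resp.\ soundly over-approximate, the true range; children inherit soundness via Lemma~\ref{lem:lbub-sound}), so Theorem~\ref{thm:dyn1} still certifies $\underline A\le A\le\overline A$ afterwards.

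\textbf{Main obstacle.} The part that needs care is the bookkeeping in the split case: verifying that the set $\{S\in\mathcal L:x\in C(S)\}$ changes only by removing $S$ and adding children that still cover $x$ and carry pointwise no-worse bounds. This rests squarely on the closed-guard identity $C(S^+)\cup C(S^-)=C(S)$ (so no coverage gap opens) and on Lemma~\ref{lem:mono-leaf} (so the children dominate the parent). The remaining monotonicity-under-composition and DAG-propagation steps are routine inductions over, respectively, the update sequence and the topological order of the e-graph.
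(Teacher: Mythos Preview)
Your proposal is correct and follows essentially the same approach as the paper: the paper's proof of this lemma is a one-liner that invokes Lemma~\ref{lem:mono-leaf} on each child together with the $\sup/\inf$ aggregation for the split case, and monotonicity of the structural rules in their arguments for the tightening case. You have simply spelled out these two steps (plus the coverage bookkeeping via closed guards and the trivial $x\notin C(S)$ case) in the detail the paper omits; no alternative route is taken.
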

\begin{proof}
Leaf split: Lemma~\ref{lem:mono-leaf} on each child plus the use of
$\sup/\inf$ in the aggregation. Tightening local bounds: monotone in arguments.
\end{proof}

\begin{remark}[Empirical evidence of monotonicity]
The convergence plot of the Imax B\&B on the GNN case study (report, p.\,7,
``Fig 3: Imax B\&B Convergence'') shows global lower/upper bounds tightening
monotonically with iterations, matching Lemma~\ref{lem:mono-tight}. \filecite{turn0file0}
\end{remark}

\subsection{Formal Refiners and Termination Conditions}
\label{app:jit:refiners}

We assume a budgeted environment with counters for \#splits and \#new guards, and
a fair scheduler over feasible leaves (every feasible undecided leaf is eventually
selected). Below, ``commit'' means \emph{rewrite} the local subexpression on $S$
accordingly (e-graph update), without adding new faces.

\begin{algorithm}[H]
\caption{\texttt{ENSURE\_SIGN}$(z,S)$: on-demand sign decision for a gate}
\label{alg:ensure-sign}
\begin{algorithmic}[1]
\Require scalar pre-activation $z=\texttt{Expr}$, leaf $S\in\mathcal L$
\State compute $[\LB(z,S),\UB(z,S)]$
\If{$\UB(z,S)\le 0$} \State commit negative branch on $S$
\ElsIf{$\LB(z,S)\ge 0$} \State commit positive branch on $S$
\Else
   \State register (if absent) the threshold plane $\{z\ge 0\}$ in $\mathcal H$
   \State split $S$ into $S^+=S\cup\{\ell\}$ and $S^-=S\cup\{\bar\ell\}$
\EndIf
\end{algorithmic}
\end{algorithm}

\begin{algorithm}[H]
\caption{\texttt{ENSURE\_WINNER}$(\{E_i\},S)$: on-demand winner for Max}
\label{alg:ensure-winner}
\begin{algorithmic}[1]
\Require candidates $\{E_i\}_{i=1}^k$, leaf $S\in\mathcal L$
\State compute $[\LB(E_i,S),\UB(E_i,S)]$ for all $i$
\State \textbf{prune} any $i$ with $\UB(E_i,S)\le \max_{j}\LB(E_j,S)$
\If{exists unique $i^\star$ with $\LB(E_{i^\star},S)\ge \max_{j\ne i^\star}\UB(E_j,S)$}
    \State commit winner $i^\star$ on $S$
\Else
    \State choose a pair $(p,q)$ by a gap heuristic (e.g., maximize
           $\UB(E_p{-}E_q,S)-\LB(E_p{-}E_q,S)$)
    \State register comparator $\{E_p\ge E_q\}$ (if absent) and split
          $S$ into $S\cup\{\ell_{p\ge q}\}$ and $S\cup\{\ell_{q\ge p}\}$
\EndIf
\end{algorithmic}
\end{algorithm}

\begin{algorithm}[H]
\caption{\texttt{ENSURE\_COMMON\_REFINE}$(S;S_1,S_2)$: minimal common refinement}
\label{alg:ensure-cr}
\begin{algorithmic}[1]
\Require current leaf $S$, guard-index sets $S_1,S_2$
\State $D\gets (S_1\cup S_2)\setminus S$
\While{$D\ne\varnothing$ \textbf{and} target sub-expressions on $S$ are not comparable/addable}
   \State pick $\ell\in D$ (e.g., maximizing volume split or most central face)
   \State add $\ell$ (or $\bar\ell$) to $\mathcal H$ if not present; split $S$
   \State $D\gets D\setminus\{\ell\}$
\EndWhile
\end{algorithmic}
\end{algorithm}

\paragraph{Termination conditions (local).}
For \texttt{ENSURE\_SIGN}, termination on a leaf occurs when either
$\UB(z,S)\le 0$ or $\LB(z,S)\ge 0$ (commit), or after one split.
For \texttt{ENSURE\_WINNER}, termination occurs when pruning + bounds yield a
unique winner (commit) or after inserting one comparator and splitting once.
For \texttt{ENSURE\_COMMON\_REFINE}, the loop terminates because the difference
set $D$ is finite; each iteration inserts at most one face and splits once.

\subsection{Proofs of DYN--1..7 and Certificate Formats}
\label{app:jit:dyn-proofs}

We now discharge the DYN-series guarantees. DYN--1 was proved in
Theorem~\ref{thm:dyn1}. We proceed with DYN--2..7.

\begin{definition}[Local full refinement]
\label{def:lfr-app}
A leaf $S\in\mathcal L$ is \emph{locally fully refined} if (i) all gate/comparator
faces relevant to expressions feeding the output on $S$ are registered in $\mathcal H$,
and (ii) every output component collapses to a \emph{single affine} on $C(S)$.
\end{definition}

\begin{theorem}[DYN--2: exactness on locally fully refined leaves]
\label{thm:dyn2}
If $S$ is locally fully refined, then on $C(S)$ we have
$\underline A=\overline A=A$ (componentwise for vector outputs).
\end{theorem}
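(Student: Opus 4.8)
The plan is to reduce the claim to two facts: (a) on a locally fully refined leaf the committed branch choices are \emph{semantically correct} over all of $C(S)$, so the collapsed affine agrees with the static SWT truth semantics there; and (b) once the output \texttt{Expr} is a single $\Aff(w_S,b_S)$ node there is no relaxation slack left, so the local envelope coincides with that affine map. First I would argue (a). Any sign/winner commit that contributed to the collapse was executed either on $S$ itself or on an ancestor leaf $S_0$ with $C(S)\subseteq C(S_0)$. By the refiner specifications, \texttt{ENSURE\_SIGN} commits the negative (resp.\ positive) branch only when $\UB(z,S_0)\le 0$ (resp.\ $\LB(z,S_0)\ge 0$), i.e.\ the sign of $z$ is constant on $C(S_0)$; \texttt{ENSURE\_WINNER} commits $i^\star$ only when $\LB(E_{i^\star},S_0)\ge\max_{j\ne i^\star}\UB(E_j,S_0)$, i.e.\ $i^\star$ dominates throughout $C(S_0)$. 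By soundness of the LB/UB oracle (Lemma~\ref{lem:lbub-sound}) these certified conditions hold on $C(S_0)\supseteq C(S)$, and by DYN--3 (no regression) they persist. Hence on $C(S)$ every gate has a fixed sign and every \texttt{Max} has a fixed winner, exactly matching the selections the numerical forward pass makes at each $x\in C(S)$; by AF--2 (pointwise homomorphism) and SWT--1 (equivalent compilation) the committed affine therefore evaluates to $F(x)=A(x)$ at every $x\in C(S)$. Combined with condition (ii) of Def.~\ref{def:lfr-app}, the collapsed output is the single affine $\ell_S(x)=w_S^\top x+b_S$ and $\ell_S(x)=A(x)$ for all $x\in C(S)$.

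Next I would establish (b) and assemble the envelopes. When the output expression on $C(S)$ is the atom $\Aff(w_S,b_S)$, the structural LB/UB rules leave no slack at the level of the represented function: the tightest sound bounds on the value of $E_{\mathrm{out}}$ at a point $x$ are both $w_S^\top x+b_S$, so the local envelope on $S$ is the affine function $\ell_S$ itself, $\underline A_S(\cdot)=\overline A_S(\cdot)=\ell_S(\cdot)$ on $C(S)$. (The LP-valued $\min/\max$ of $\ell_S$ over $C(S)$ that the B\&B driver uses for gap tests are just the \emph{range} of this exact affine; they do not reflect residual uncertainty.) Aggregating pointwise over leaves containing $x$: for $x\in\operatorname{ri}(C(S))$, $S$ is the only leaf covering $x$, so $\underline A(x)=\overline A(x)=\ell_S(x)=A(x)$. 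For a boundary point $x\in\partial C(S)$ that also lies in a neighboring leaf $S'$, the same reasoning gives that $S'$'s local envelope is its affine law $\ell_{S'}$, which likewise equals $A$ on $C(S')$; since CPWL outputs are continuous (AF--3), $\ell_{S'}(x)=A(x)=\ell_S(x)$, so the $\sup/\inf$ aggregation still returns $A(x)$. For vector outputs the argument runs componentwise. Finally DYN--1 supplies the sandwich $\underline A\le A\le\overline A$, which now collapses to equality on $C(S)$.

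The step I expect to be the main obstacle is making precise the passage from the \emph{numeric}, LP-valued $[\LB,\UB]$ interval used by the B\&B driver to the \emph{function-valued} envelopes required by the statement ``$\underline A=\overline A=A$ on $C(S)$'': one must commit to the reading in which a leaf whose output \texttt{Expr} is a single affine atom contributes that affine map (zero-width gap as a function), rather than the constant interval $[\min_{C(S)}\ell_S,\ \max_{C(S)}\ell_S]$. A secondary subtlety is the leaf-aggregation behavior on shared faces, which I would dispatch cleanly using continuity of CPWL functions (AF--3) together with the fact, established in part (a), that every resolved leaf carries the same affine law as $A$ on its closed cell.
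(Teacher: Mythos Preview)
Your proof is correct and follows the paper's line: single affine atom on $C(S)$ $\Rightarrow$ zero-width local envelope $\Rightarrow$ aggregation preserves equality with $A$. Your step~(a) on commit correctness and your diagnosis of the numeric-versus-function-valued envelope reading add detail the paper's three-sentence proof elides (it silently adopts the function-valued reading when it writes ``for affines the LB/UB rules are tight equalities'' and leans on DYN--1 for the $=A|_{C(S)}$ part), but the overall route is the same.
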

\begin{proof}
When all relevant Max/gate choices are disambiguated on $S$, each output is
represented by a single affine atom on $C(S)$; for affines the LB/UB rules are
\emph{tight} equalities. Composing along the acyclic SWT preserves equality,
hence $\underline A_S=\overline A_S=A|_{C(S)}$. Aggregation over leaves does not
alter values on $C(S)$.
\end{proof}

\begin{theorem}[DYN--3: progress never regresses]
\label{thm:dyn3}
Refining a leaf $S$ either (a) tightens its local LB/UB without changing its guard
set (commit), or (b) replaces $S$ by children $S^\pm$ confined to $C(S)$.
In either case, previously exact leaves remain exact, and $\underline A\uparrow$,
$\overline A\downarrow$ pointwise.
\end{theorem}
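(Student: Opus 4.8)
\emph{Proof plan.} The statement is essentially a repackaging of the monotonicity lemmas of §D.1 together with a case inspection of the three refiners. The plan is: first observe that every elementary action performed by \texttt{ENSURE\_SIGN}, \texttt{ENSURE\_WINNER}, and \texttt{ENSURE\_COMMON\_REFINE} (Algs.~\ref{alg:ensure-sign}--\ref{alg:ensure-cr}) is one of two kinds --- a \emph{commit/prune}, which rewrites the local subexpression on $S$ (dropping a dominated $\Max$ argument, or collapsing a gate/$\Max$ to one branch) without touching the guard index set, or a \emph{split}, which removes $S$ from $\mathcal L$ and inserts children $S^{+}=S\cup\{\ell\}$, $S^{-}=S\cup\{\bar\ell\}$ (possibly reusing an $\ell$ already in $\mathcal H$). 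The loop of \texttt{ENSURE\_COMMON\_REFINE} is just a finite sequence of splits, so these two cases are exhaustive. By the closed-guard convention, $\bar\ell$ is the closed reverse of $\ell$, so $C(S^{+})\cup C(S^{-})=C(S)$ while $C(S^{\pm})\subseteq C(S)$; this keeps $\mathcal L$ an active cover in the sense of Def.~\ref{def:leaves}.

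For a \textbf{split}, I would apply Lemma~\ref{lem:mono-leaf} to each child: $\LB(E,S^{\pm})\ge\LB(E,S)$ and $\UB(E,S^{\pm})\le\UB(E,S)$ for every \texttt{Expr} $E$, hence for the output expression in particular. Since every $x\in C(S)$ still lies in at least one feasible child, the $\sup$ defining $\underline A(x)$ now ranges over a leaf with a weakly larger value and the $\inf$ defining $\overline A(x)$ over a leaf with a weakly smaller value, so $\underline A$ is nondecreasing and $\overline A$ nonincreasing pointwise --- this is exactly the leaf-restriction half of Lemma~\ref{lem:mono-tight} --- and the sandwich $\underline A\le A\le\overline A$ is preserved by Thm.~\ref{thm:dyn1}. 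For a \textbf{commit/prune}, I would show that the precondition forces the committed form to coincide \emph{pointwise} with the pre-commit $\Max$/gate expression on all of $C(S)$, so the true range on $C(S)$ is unchanged, while the structural $[\LB,\UB]$ of the (simpler) committed form is contained in that of the pre-commit form; monotonicity of the structural rules then gives $\underline A_S$ nondecreasing, $\overline A_S$ nonincreasing, and this propagates through the acyclic DAG by Lemma~\ref{lem:dag-prop} and then through leaf-aggregation. The key input is the precondition itself: in \texttt{ENSURE\_WINNER}, $\LB(E_{i^\star},S)\ge\max_{j\ne i^\star}\UB(E_j,S)$ with soundness of LB/UB (Lemma~\ref{lem:lbub-sound}) yields $E_{i^\star}(x)\ge E_j(x)$ for all $x\in C(S)$ and all $j$, hence $\Max\{E_i\}\equiv E_{i^\star}$ on $C(S)$, and $\UB(E_{i^\star},S)\le\max_i\UB(E_i,S)$ together with $\LB(E_{i^\star},S)\ge\max_i\LB(E_i,S)$ gives the interval nesting (the analogous, easier checks cover pruning via the dominance certificate of Prop.~\ref{prop:dom} and the $\UB(z,S)\le0$ / $\LB(z,S)\ge0$ collapses in \texttt{ENSURE\_SIGN} for ReLU/LReLU/PReLU/Abs).

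For \textbf{``previously exact leaves remain exact,''} note that refining $S$ only (a) adds faces to $\mathcal H$ (never deletes one) and (b) removes $S$, not any other leaf, from $\mathcal L$. Hence any leaf $T\ne S$ that was locally fully refined (Def.~\ref{def:lfr-app}) still has all its relevant faces registered and its outputs still collapse to the same single affine on $C(T)$, so $T$ remains locally fully refined; by Thm.~\ref{thm:dyn2}, $\underline A_T=\overline A_T=A$ on $C(T)$, and since $T$ is still in $\mathcal L$, combining with Thm.~\ref{thm:dyn1} forces $\underline A=\overline A=A$ on $C(T)$ after the update as well, so no previously exact region is invalidated. The main obstacle is the commit/prune case: one must verify \emph{uniformly} over the gate and $\Max$ types that the precondition really forces pointwise (not merely sampled) agreement with the original expression on all of $C(S)$, and that the committed interval is nested inside the pre-commit one; once that finite but slightly fiddly check is done, everything else is a direct appeal to Lemmas~\ref{lem:lbub-sound}--\ref{lem:mono-tight} and Thms.~\ref{thm:dyn1}--\ref{thm:dyn2}.
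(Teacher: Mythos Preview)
Your proposal is correct and follows the same two-case decomposition (commit vs.\ split) as the paper, invoking the same monotonicity lemmas (Lemmas~\ref{lem:mono-leaf} and~\ref{lem:mono-tight}) and the same observation that leaves other than $S$ are untouched. You supply considerably more detail than the paper's three-line proof---in particular, you explicitly verify that the commit preconditions force pointwise agreement on $C(S)$ and that the committed interval nests inside the pre-commit one, whereas the paper simply asserts ``rewrites the expression to a stricter form\ldots hence tightens bounds''; and you spell out why enlarging $\mathcal H$ cannot disturb an already fully refined $T\ne S$, which the paper compresses to ``exact leaves are disjoint from $S$ and unaffected.''
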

\begin{proof}
Case (a): local commit rewrites the expression to a stricter form (e.g., an affine),
hence tightens bounds; monotonicity follows from Lemma~\ref{lem:mono-tight}.
Case (b): children restrict the domain (Lemma~\ref{lem:mono-leaf}) and replace $S$
in $\mathcal L$; exact leaves are disjoint from $S$ and unaffected. Global monotonicity
follows again from Lemma~\ref{lem:mono-tight}.
\end{proof}

\begin{proposition}[DYN--5: correctness of dominance pruning]
\label{prop:dom-prune}
If $\max_{x\in C(S)}(E_\pi(x)-E_{\pi'}(x))\le 0$, then candidate $\pi$ can never win
any Max on $C(S)$ and may be removed from the candidate set of $S$.
\end{proposition}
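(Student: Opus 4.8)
The plan is to reduce the statement to a pointwise inequality between the two candidate expressions on $C(S)$, and then observe that deleting a pointwise-dominated candidate leaves the denotation of every $\texttt{Max}$ node — and hence the truth semantics $A$, the envelopes $\underline A,\overline A$, and all LB/UB computations — unchanged on $C(S)$.

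\textbf{Step 1 (pointwise domination).} The hypothesis $\max_{x\in C(S)}(E_\pi(x)-E_{\pi'}(x))\le 0$ says exactly that $E_\pi(x)\le E_{\pi'}(x)$ for every $x\in C(S)$. I would note that this certificate is available in practice either as an exact LP value (once $E_\pi,E_{\pi'}$ have collapsed to affine atoms on $C(S)$ after refinement) or via the sound rule $\UB(\Sum\{E_\pi,\Scale(-1,E_{\pi'})\},S)\le 0$, whose validity is the structural soundness lemma for LB/UB; either way the inequality holds on all of $C(S)$.

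\textbf{Step 2 (the $\texttt{Max}$ value is insensitive to removing $\pi$).} Let $J$ be the candidate multiset of a $\texttt{Max}$ node with $\pi,\pi'\in J$. For every $x\in C(S)$ the value $E_{\pi'}(x)$ already occurs among $\{E_i(x):i\in J\setminus\{\pi\}\}$ and satisfies $E_{\pi'}(x)\ge E_\pi(x)$, so $\max_{i\in J}E_i(x)=\max_{i\in J\setminus\{\pi\}}E_i(x)$. Hence the denotation $\llbracket\cdot\rrbracket$ of the rewritten node agrees with the original on $C(S)$. Because every leaf on which the pruning is applied (and all its future children, which have $C(S')\subseteq C(S)$) only ever observes subexpressions \emph{restricted to $C(S)$}, and because $A$, the anytime envelopes, and the LB/UB oracle depend on a subexpression only through its denotation on the relevant cell, removing $\pi$ changes nothing observable on $C(S)$; in particular $\underline A\le A\le\overline A$ is preserved.

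\textbf{Step 3 (``never wins'' and ties).} The winner region of $\pi$ inside $C(S)$ is contained in $\{x\in C(S):E_\pi(x)=E_{\pi'}(x)\}$, since outside it $\pi'$ strictly beats $\pi$; on that locus $\pi'$ is itself a co-winner carrying the identical value, so under the closed-guard convention — where ties are resolved by $\oplus=\max$ and only the value carried by the winner is semantically relevant — deleting $\pi$ from $S$'s candidate set is consistent: at least $\pi'$ always remains, no comparator face involving $\pi$ is still needed on $C(S)$, and the e-graph/congruence bookkeeping merely drops a redundant child.

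\textbf{Expected obstacle.} The delicate point is bookkeeping rather than inequality-chasing: one must ensure that ``remove $\pi$ from $S$'s candidate set'' is a \emph{leaf-local} rewrite recorded against $C(S)$, so it is inherited by descendants (where domination persists because $\max$ over $C(S')\subseteq C(S)$ of $E_\pi-E_{\pi'}$ is still $\le 0$) but is \emph{not} exported to sibling leaves where $\pi$ may remain live; and that the edge case $E_\pi\equiv E_{\pi'}$ on $C(S)$ cannot empty the candidate set, which it does not since $\pi'$ is retained. With these provisos the proposition follows from Steps 1–2, and Step 3 justifies the ``can never win'' phrasing.
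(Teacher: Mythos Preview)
Your proposal is correct and follows the same essential idea as the paper: from $\max_{x\in C(S)}(E_\pi-E_{\pi'})\le 0$ one obtains $E_\pi(x)\le E_{\pi'}(x)$ for all $x\in C(S)$, hence $\max_{i\in J}E_i(x)=\max_{i\in J\setminus\{\pi\}}E_i(x)$ pointwise and $\pi$ is never selected. The paper's proof is a two-line version of exactly this; your Steps~2--3 and the bookkeeping discussion (leaf-locality, inheritance to descendants, tie handling) are sound elaborations that the paper omits but does not contradict.
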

\begin{proof}
For all $x\in C(S)$, $E_\pi(x)\le E_{\pi'}(x)$, so $\max\{E_\pi(x),E_{\pi'}(x)\}
=E_{\pi'}(x)$ pointwise; thus $\pi$ is never selected.
\end{proof}

\paragraph{Branch-and-bound driver and tri-valued answers.}
Let $g$ be a scalar objective encoding a property/specification (e.g., margin,
difference to zero), and $D$ be the input domain. The B\&B over leaves maintains
global $\underline M=\max_{S\subseteq D}\LB(g,S)$ and
$\overline M=\max_{S\subseteq D}\UB(g,S)$ in a max-gap policy.

\begin{theorem}[DYN--6: JIT decidability under finite refinement and fairness]
\label{thm:dyn6}
Assume (i) only finitely many gate/comparator faces can be introduced (finite
network), and (ii) the scheduler is \emph{fair} over feasible undecided leaves.
Then B\&B returns one of three outcomes, all sound:
\textsc{Proof} (a finite set of leaves $\{S_i\}$ with $\LB(g,S_i)\ge 0$ covering $D$),
\textsc{Counterexample} (some $S$ with $\UB(g,S)<0$ and a witness $x^\star$ from LP/SOCP),
or \textsc{Unknown} under budget (with the tightest-gap leaf $S^\star$ reported).
Moreover, without budget limits the procedure \emph{terminates} with \textsc{Proof} or
\textsc{Counterexample}.
\end{theorem}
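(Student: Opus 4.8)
The plan is to split the statement into two parts: (a) \emph{soundness} of each of the three verdicts, which follows almost directly from the envelope guarantees DYN--1/2 already proved, and (b) \emph{termination} in the budget-free regime, which is the substantive part. Before either, I would record the bookkeeping invariants from App.~\ref{app:notation-guards}: at every instant the current leaves cover $D$, splitting a leaf replaces it by two children that cover it, and infeasible children are discarded by the feasibility test (Def.~\ref{def:feascache}); these make the later claims one-line inductions.

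For \textsc{Proof}: when $Q$ empties, induction on the refinement history shows the leaves $\{S_i\}$ remaining in \textsc{Cert} cover $D$ and each satisfies $\LB(g,S_i)\ge 0$; Theorem~\ref{thm:dyn1} then gives $g(x)\ge 0$ for every $x\in C(S_i)$, hence on all of $D$. For \textsc{Counterexample}: if an active leaf has $\UB(g,S)<0$ then $g<0$ throughout $C(S)\subseteq D$ by Theorem~\ref{thm:dyn1}, and the LP/SOCP minimizer is an honest witness; the remaining case — a locally fully refined leaf on which the (now affine) objective changes sign — is handled by Theorem~\ref{thm:dyn2}, under which $\LB(g,S)<0$ is \emph{exact} and its LP/SOCP attainer $x^\star\in D$ is a witness (equivalently, one further split along the affine face $\{g\ge 0\}$ routes the negative child into this case). \textsc{Unknown} asserts nothing definite and is sound by default, with $\underline A\le A\le\overline A$ still holding by Theorem~\ref{thm:dyn1}.

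For termination without budgets I would bound the entire refinement tree. By hypothesis~(i) the set of gate/comparator faces the refiners can ever register is finite; together with the fixed objective faces this caps the eventual library size at some $M<\infty$, so every GuardSet is a subset of $\{1,\dots,M\}$. Every split registers/adds an index $\ell\notin S$ — for \texttt{ENSURE\_SIGN}/\texttt{ENSURE\_WINNER} because $\ell\in S$ would force a \emph{commit} rather than a split ($\LB\ge 0$, resp.\ a dominating candidate), and for \texttt{ENSURE\_COMMON\_REFINE} because $\ell$ is drawn from $(S_1\cup S_2)\setminus S$ by construction — so $|S|$ strictly increases along every root-to-leaf path. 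Hence the tree has depth $\le M$ and branching $\le 2$, so at most $2^{M}$ leaves are ever produced. Fairness forces every feasible leaf to be selected eventually; a selected leaf that is not yet locally fully refined has, by hypothesis~(i), the face responsible for its undetermined sign/winner available, so its refiner makes progress, and after at most $M$ splits down its branch it becomes locally fully refined. On such a leaf Theorem~\ref{thm:dyn2} makes $\LB(g,S),\UB(g,S)$ exact, so it necessarily lands in the \textsc{Cert} case or the \textsc{Counterexample} case. Therefore the loop halts: either some leaf triggers a \textsc{Counterexample} return, or all leaves get certified, $Q=\varnothing$, and the output is \textsc{Proof}. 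Under a budget, an early stop emits the max-gap leaf $S^\star$ as \textsc{Unknown}, still sound by the soundness part.

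The main obstacle is making the termination step airtight, and it is exactly where hypothesis~(i) does its work: one must check against the concrete face-generation rules of \S3 that no refiner ever invents a face outside an a-priori finite pool (threshold faces $\{z\ge 0\}$ and comparators $\{E_p\ge E_q\}$ all originate from the finitely many gates and \texttt{Max} nodes of the compiled SWT), and that the commit-versus-split dichotomy genuinely precludes a ``redundant'' split that fails to grow the index set — otherwise the depth bound collapses and infinite refinement is not excluded. A secondary point, easy to miss, is that the driver's $\UB<0$ early-exit does not by itself resolve a fully refined leaf on which $g$ takes both signs; closing that gap is precisely the appeal to Theorem~\ref{thm:dyn2} described above. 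Everything else — soundness of the verdicts, the cover invariant, fairness exhausting a finite queue — is routine given the earlier results.
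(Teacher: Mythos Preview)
Your proposal is correct and follows essentially the same route as the paper: soundness from DYN--1 (and DYN--2 for fully refined leaves), termination from the finite bound on the face library plus fairness forcing every leaf to eventually become affine. Your version is considerably more detailed than the paper's terse proof---in particular, you make explicit the depth-$\le M$ bound on the refinement tree via strictly growing GuardSets and flag the fully-refined-leaf-with-both-signs case that the paper absorbs into the phrase ``the decision reduces to finitely many LP/SOCP obligations.''
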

\begin{proof}
Soundness follows from DYN--1 and the fact that \(\LB\) is a true lower bound and
\(\UB\) a true upper bound on each leaf. Under finiteness of faces and fairness,
there are only finitely many distinct GuardSets that can appear; each split either
(1) commits a local decision or (2) introduces a new face, strictly reducing the
set of undecided leaves. Eventually all leaves become locally fully refined for $g$
and thus affine; then the decision reduces to finitely many LP/SOCP obligations,
which terminate. Certificates:
``\textsc{Proof}'' collects $\{\LB(g,S_i)\ge 0\}$; ``\textsc{Counterexample}'' adds the
witness $x^\star$; ``\textsc{Unknown}'' names $S^\star$ with its residual gap.
\end{proof}

\begin{theorem}[DYN--7: static--dynamic pointwise equivalence]
\label{thm:dyn7}
If $D$ admits a finite cover by locally fully refined leaves $\{S_t\}$, then
$A_{\rm JIT}(x)=A_{\rm stat}(x)=F(x)$ for all $x\in D$.
\end{theorem}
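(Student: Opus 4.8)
\textbf{Proof plan for DYN--7 (Thm.~\ref{thm:dyn7}).}
The statement is a chain of three equalities, and the plan is to establish each link separately: on the refined cover the JIT read-off equals the common envelope value $\underline A(x)=\overline A(x)$; by DYN--2 this common value equals the static semantics $A_{\rm stat}(x)$; and by SWT--1 the static semantics equals the network $F(x)$. The only nontrivial bookkeeping is caused by the closed-guard convention (Def.~\ref{def:lfr-app}, App.~\ref{app:notation-guards}), under which the cells $C(S_t)$ of a complete refinement cover may overlap on shared faces, so one must check $A_{\rm JIT}$ is single-valued there.

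First I would fix an arbitrary $x\in D$ and, using $D\subseteq\bigcup_t C(S_t)$, pick an index $t$ with $x\in C(S_t)$. Since $S_t$ is locally fully refined, Def.~\ref{def:lfr-app} gives that every output component collapses to a single affine atom on $C(S_t)$; write it componentwise as $w_{S_t}^\top x+b_{S_t}$. Applying DYN--2 (Thm.~\ref{thm:dyn2}) to $S_t$ yields $\underline A_{S_t}=\overline A_{S_t}=A_{\rm stat}$ identically on $C(S_t)$, and in particular the leaf's bounds are exact at $x$ with value $A_{\rm stat}(x)$. Next I would connect this to the JIT output proper: by construction the JIT answers a query at $x$ by selecting a current leaf $S$ with $x\in C(S)$ and returning the affine law active there, i.e.\ the common value $\underline A_S(x)=\overline A_S(x)$ once the leaf is fully refined; because the hypothesized cover consists entirely of locally fully refined leaves, the aggregated envelopes satisfy $\underline A=\overline A$ on all of $D$, so the read-off equals the single number $\underline A(x)=\overline A(x)$ regardless of which covering leaf is chosen. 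Chaining, $A_{\rm JIT}(x)=A_{\rm stat}(x)$, and SWT--1 finishes with $A_{\rm stat}(x)=F(x)$; as $x\in D$ was arbitrary, the claim holds on all of $D$.

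The main obstacle is precisely the \emph{consistency / well-definedness} point: a boundary $x$ can lie in several cells of the cover, each carrying its own affine atom, so a priori the JIT could read different values depending on the leaf. This dissolves once DYN--2 is available, since every such leaf forces the value to equal the single number $A_{\rm stat}(x)$, so all competing affine atoms must agree at $x$; no continuity or limiting argument is needed. A secondary piece of hygiene is to invoke DYN--3 (Thm.~\ref{thm:dyn3}, no regression) to argue that the incremental construction of the cover never invalidates an already-exact leaf, so the final JIT state genuinely presents $D$ as a union of locally fully refined leaves to which DYN--2 applies uniformly; vector outputs are handled by running the entire argument componentwise. With these two observations in place, the proof is just the assembly of DYN--2, DYN--3, and SWT--1 and involves no further estimates.
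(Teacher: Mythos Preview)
Your proposal is correct and follows essentially the same route as the paper: apply DYN--2 on each locally fully refined leaf to get $A_{\rm JIT}=A_{\rm stat}$ there, invoke SWT--1 for $A_{\rm stat}=F$, and use the cover to extend to all of $D$. Your additional bookkeeping about boundary consistency and the appeal to DYN--3 are reasonable hygiene but not needed for the argument to go through, and the paper omits them.
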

\begin{proof}
On each $C(S_t)$, DYN--2 gives $A_{\rm JIT}=F$. Static SWT equals $F$ pointwise
by SWT--1. The sets $C(S_t)$ cover $D$, so equality holds everywhere on $D$.
\end{proof}

\begin{theorem}[DYN--4: budgeted complexity bounds]
\label{thm:dyn4}
Let $B_{\textsf{split}}$ be the total number of leaf splits, $B_{\textsf{guard}}$ the
number of \emph{new} guards inserted into $\mathcal H$ beyond the initial size
$|\mathcal H_0|$, and \(|Q_0|\) the static structure-site bound (App.~\ref{app:swt:bounds}).
Then:
\[
\#\text{leaves}\ \le\ 1+B_{\textsf{split}},\qquad
|\mathcal H|\ \le\ |\mathcal H_0|+B_{\textsf{guard}},
\]
\[
\#\text{SWT nodes/edges}\ \le\ O\!\big(|Q_0|+B_{\textsf{guard}}\big),
\qquad
\#\text{LP/SOCP calls}\ \le\ O\!\big(B_{\textsf{split}}+B_{\textsf{guard}}+|\mathcal L|\big).
\]
\end{theorem}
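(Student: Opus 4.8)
The plan is to prove the four inequalities separately, each by an invariant-style induction on the trace of JIT operations (refiner invocations \texttt{ENSURE\_SIGN}/\texttt{ENSURE\_WINNER}/\texttt{ENSURE\_COMMON\_REFINE} and B\&B iterations of Alg.~\ref{alg:jitbb}). I would fix once and for all the conventions of App.~\ref{app:notation-guards}: guards are closed, $\mathcal H$ is hash-consed so re-inserting an already present oriented halfspace does not grow it, a split of a leaf $S$ on face $\ell$ removes $S$ from $\mathcal L$ and inserts only the \emph{feasible} members of $\{S\cup\{\ell\},S\cup\{\bar\ell\}\}$, and the feasibility cache (Def.~\ref{def:feascache}) prevents repeated LP feasibility tests. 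Parts (i)–(iii) are then routine; part (iv) is the one requiring real care.

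\textbf{Leaves and library size (i)–(ii).} For (i) I would track $|\mathcal L|$ along the trace: it starts at $1$ (the leaf $S_0$); a commit rewrites an \texttt{Expr} on a leaf without touching $\mathcal L$; a certification or pruning step only removes leaves; and a split replaces one parent by its feasible children, of which (since $C(S)=C(S\cup\{\ell\})\cup C(S\cup\{\bar\ell\})$) at least one and at most two are feasible, so $|\mathcal L|$ increases by at most $1$. Hence $|\mathcal L|\le 1+B_{\textsf{split}}$ at all times, not just at termination. For (ii) I would observe that each refiner inserts at most one new oriented face per invocation (its reverse being a distinct but fixed-constant-many index), that hash-consing makes repeats free, and that $B_{\textsf{guard}}$ was \emph{defined} as the count of genuine new insertions, giving $|\mathcal H|\le|\mathcal H_0|+B_{\textsf{guard}}$ immediately.

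\textbf{Structural growth (iii).} I would start from the static carrier: by the counting of App.~\ref{app:swt:bounds}, with all upfront comparator/threshold faces suppressed (JIT never materializes them), the initial lazy e-graph has $O(|Q_0|)$ nodes and edges. Every subsequent change to the e-graph is triggered by inserting one new face: \texttt{ENSURE\_SIGN} expands a single gate node into its two guarded branches, \texttt{ENSURE\_WINNER} adds the two winner sub-edges for the chosen pair $(p,q)$, and \texttt{ENSURE\_COMMON\_REFINE} adds faces to $\mathcal H$ without creating new \texttt{Max}/gate nodes. Each such event adds $O(1)$ nodes/edges, and there are $B_{\textsf{guard}}$ of them, so the total is $O(|Q_0|+B_{\textsf{guard}})$.

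\textbf{Solver calls (iv), and the main obstacle.} Here the plan is: (a) argue that under the per-step constant-candidate policy every refiner invocation and every B\&B iteration touches only $O(1)$ candidate sub-expressions, hence performs $O(1)$ LP/SOCP calls (bound evaluations on $O(1)$ expressions plus $O(1)$ cached feasibility tests of children); and (b) bound the number of such steps. Certification/counterexample steps each retire a leaf, so number at most $1+B_{\textsf{split}}$ by (i); splitting steps number $B_{\textsf{split}}$; face-only refinements number at most $B_{\textsf{guard}}$; the residual steps are commits, which must be amortized. The hard part will be exactly this commit accounting: a naive count allows $\Theta(|Q_0|)$ commits per leaf and hence $\Theta(|Q_0|\,B_{\textsf{split}})$ total, so the linear bound genuinely depends on reading the ``per-step constant-candidate'' hypothesis as also constraining the driver to charge each commit forward to the split or certification that the refined leaf eventually undergoes (e-graph sharing guaranteeing that a sub-expression resolved on $S$ is inherited, not re-decided, on every descendant of $S$). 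I would make this precise with a potential-function argument — e.g.\ $\Phi=\sum_{S\in\mathcal L} u(S)$ with $u(S)$ the number of output-relevant gate/\texttt{Max} decisions still undecided on $S$ — showing a commit strictly decreases $\Phi$ while a split inflates it only by the (inherited, hence non-increasing) $u$ of the child; pinning down that the inflation per split is absorbed by a constant in the amortized analysis is the one nontrivial estimate, and I expect it to be the real crux of the proof.
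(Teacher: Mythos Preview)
Your treatment of (i)--(iii) is essentially the paper's argument, just spelled out with more care: the paper's proof is the three-line observation that a split replaces one leaf by two, only genuinely new faces grow $\mathcal H$, and each new face touches $O(1)$ e-graph nodes/edges. Your invariant-style induction and the feasibility/hash-consing bookkeeping are exactly the right way to make those sentences precise.

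On (iv) you actually go further than the paper does. The paper's entire argument for the LP/SOCP bound is: ``per iteration, each split/register triggers $O(1)$ LP/SOCP solves; summing over all iterations gives the last bound,'' together with the parenthetical (in App.~G) that the number of iterations is bounded by $B_{\textsf{split}}+B_{\textsf{guard}}+|\mathcal L|$ ``since each iteration either commits or inserts a single face and splits once.'' It does not separately account for commit-only iterations at all --- the per-step constant-candidate caveat in the statement is doing the work of assuming that each B\&B step is chargeable to one of the three budget terms. Your worry that uncharged commits could in principle accumulate to $\Theta(|Q_0|)$ per leaf is legitimate, and the paper simply does not address it; it reads the policy hypothesis as ruling that scenario out rather than proving it away. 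Your potential-function idea ($\Phi=\sum_{S\in\mathcal L}u(S)$) is a natural attempt at an honest amortization, but as you suspect, the inflation $u(S^+)+u(S^-)\le 2u(S)$ on a split does not collapse to a constant without an extra structural assumption (e.g., that commits on $S$ are inherited by all descendants so each gate/\texttt{Max} decision is charged once globally, not once per leaf). So: your plan is sound and strictly more rigorous than the paper's on (iv); the crux you flag is real, and the paper resolves it by hypothesis rather than by argument.
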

\begin{proof}
Each split replaces one leaf by two, so leaves grow by at most one: $1+B_{\textsf{split}}$.
Only newly registered faces grow $|\mathcal H|$; reusing existing faces costs zero.
The SWT graph is a shared DAG; adding a new guard may add $O(1)$ edges/nodes to carry
the two orientations, yielding the stated $O(|Q_0|+B_{\textsf{guard}})$ bound.
Per-iteration, each split/register triggers $O(1)$ LP/SOCP solves (local LB/UB for
finitely many sub-expressions and a feasibility probe); summing over all iterations
gives the last bound.
\end{proof}

\begin{remark}[Practical counters and empirical scales]
In the experiments, forward equivalence errors concentrate at $10^{-6}\!\sim\!10^{-7}$,
and B\&B split counts / LP calls scale roughly linearly with budgets, matching
Theorem~\ref{thm:dyn4}. See the tables and plots in the report:
FFN robustness (pp.\,1–3), CNN equivariance (pp.\,4–5), and GNN/Imax (pp.\,6–8). \filecite{turn0file0}
\end{remark}

\subsection{Cost Model for Branch-and-Bound}
\label{app:jit:cost}

Let a B\&B iteration pick the maximum-gap leaf $S$ for an objective $g$.
We detail its costs and the linear dependence on candidate counts and solver calls.

\paragraph{Per-iteration costs.}
\begin{itemize}[leftmargin=1.2em,itemsep=0.2em]
\item \textbf{Bound eval.} For $m$ sub-expressions touching $S$,
$O(m)$ LB/UB queries; affines cost one LP max and one LP min each (dual reuse).
\item \textbf{Refiner.} One of: \texttt{ENSURE\_SIGN}, \texttt{ENSURE\_WINNER}
(or \texttt{ENSURE\_COMMON\_REFINE}) executes a \emph{single} split and optionally
registers \emph{one} new face (comparator/threshold).
\item \textbf{Cache reuse.} All $(S,\text{query})$ LP/SOCP results are memoized,
so repeated visits amortize to sublinear cost in practice.
\end{itemize}

\paragraph{Global bounds with candidates.}
Let $C_{\max}$ be the maximum number of active Max/gate candidates touching any leaf.
Then under dominance pruning (Prop.~\ref{prop:dom-prune}) and per-step single-face
insertion, the total number of winner comparisons performed up to termination or
budget is $O(B_{\textsf{guard}}+|\mathcal L|\cdot C_{\max})$.

\paragraph{Memory.}
With guard uniqueness and e-graph sharing, memory scales as
$O(|Q_0|+B_{\textsf{guard}}+|\mathcal L|)$; unreachable leaves and dead sub-expressions
may be reclaimed by reference counting without affecting semantics.

\medskip
\noindent\textbf{Takeaway.}
JIT--SWT maintains sound anytime envelopes (DYN--1), reaches exactness upon local
full refinement (DYN--2), never regresses (DYN--3), prunes dominated candidates
correctly (DYN--5), and—under finite faces and fairness—decides properties with
certificates or counterexamples (DYN--6), while tracking budgets linearly in the
work performed (DYN--4). These guarantees align with the empirical curves and
tables in the report. \filecite{turn0file0}


\section{Function Geometry and Optimization (GEO-1..GEO-5)}
\label{app:geo}

This appendix develops geometry and optimization on top of JIT--SWT:
(E.1) on-demand extraction of active fragments and \emph{minimal common
refinements}; (E.2) decision-boundary polyhedral complexes and distances;
(E.3) extrema and Lipschitz on common convex domains with LP/SOCP or
closed forms; (E.4) vector/operator norms---tractable families and NP-hard
families with anytime bounds.

Throughout, guards are \emph{closed}, GuardSets store indices into the global
library, and \emph{local full refinement} means all relevant faces have been
registered and each output component is a single affine map on the leaf
(App.~\ref{app:jit}).

\subsection{On-demand active fragment extraction and minimal common refinement (GEO-1)}
\label{app:geo:regions}

\paragraph{Active fragments.}
For a scalar output $F$, a triple $(S,w,b)$ is an \emph{active fragment}
if $S$ is a leaf GuardSet and $F(x)=w^\top x+b$ on $C(S)$.
When $S$ is locally fully refined, $(S,w,b)$ is \emph{determined}.

\begin{algorithm}[H]
\caption{\textsc{RegionExtract}$(A, D)$: on-demand active fragments over $D$}
\label{alg:region-extract}
\begin{algorithmic}[1]
\Require SWT/JIT object $A$, domain $D$ encoded as initial leaf $S_0$
\State $\mathcal L\gets\{S_0\}$; \ $\mathcal R\gets\varnothing$
\While{$\mathcal L\neq\varnothing$ and budget not exceeded}
  \State pick $S\in\mathcal L$ (e.g., max-gap on an objective $g$ or round-robin)
  \If{$C(S)$ infeasible by LP} \State $\mathcal L\gets\mathcal L\setminus\{S\}$; \textbf{continue} \EndIf
  \If{some gate/Max undecided on $S$}
    \State try \texttt{ENSURE\_SIGN}/\texttt{ENSURE\_WINNER}; else \texttt{ENSURE\_COMMON\_REFINE}
  \ElsIf{$S$ is locally fully refined}
    \State read $(w,b)$ by accumulating affine atoms along the active path; \
          $\mathcal R\gets \mathcal R\cup\{(S,w,b)\}$; \
          $\mathcal L\gets\mathcal L\setminus\{S\}$
  \Else
    \State upgrade LB/UB for unresolved sub-expressions on $S$ (LP/SOCP) and retry
  \EndIf
\EndWhile
\State \Return $\mathcal R$ \Comment{(partial) region table; exact if $D$ is covered}
\end{algorithmic}
\end{algorithm}

\begin{definition}[Minimal common refinement on a leaf]
\label{def:mcr}
Given two GuardSets $S_1,S_2$, the \emph{minimal common refinement} of a current
leaf $S$ w.r.t.\ $(S_1,S_2)$ is the finite family
\(
\{\,S\cup T:\ T\subseteq (S_1\cup S_2)\setminus S,\ C(S\cup T)\neq\varnothing,\ \text{and no two children merge}\,\},
\)
obtained by inserting only faces in $(S_1\!\cup\!S_2)\setminus S$ until the
target sub-expressions are \emph{comparable/addable} on each child.
\end{definition}

\begin{lemma}[Minimality and finiteness]
\label{lem:mcr}
Algorithm \ref{alg:region-extract} with \texttt{ENSURE\_COMMON\_REFINE} produces a
refinement satisfying Def.~\ref{def:mcr}. The loop terminates because
$(S_1\cup S_2)\setminus S$ is finite, and each iteration inserts at most one face.
\end{lemma}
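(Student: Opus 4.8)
The plan is to split the statement into its two independent parts — that the loop in \texttt{ENSURE\_COMMON\_REFINE}$(S;S_1,S_2)$ terminates, and that the family of leaves it produces is exactly the minimal common refinement of Definition~\ref{def:mcr} — and to prove each by a short potential argument plus a structural induction on the lazy expression DAG. Termination is the easy half: the working set is initialized to $D=(S_1\cup S_2)\setminus S$, which is a finite subset of $\{1,\dots,M\}$ since $S_1,S_2$ are finite GuardSets; each non-terminal pass of the \textbf{while} loop picks one index $\ell\in D$, registers at most one oriented face into $\mathcal H$ (skipping it if already present), performs a single split of the current leaf, and deletes $\ell$ from $D$. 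Thus $|D|$ strictly decreases on every iteration that does not already satisfy the exit condition, so the loop halts after at most $|(S_1\cup S_2)\setminus S|$ iterations, and the "at most one face per iteration'' claim is immediate from the single face-insertion step in the pseudocode. Feeding this back into \textsc{RegionExtract}, each refiner call on a leaf either commits a local decision via \texttt{ENSURE\_SIGN}/\texttt{ENSURE\_WINNER} or inserts one face and splits once; together with LP feasibility pruning of empty cells and the finiteness of gate/comparator faces relevant to $D$, the outer loop also terminates up to the stated budget.

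For the characterization, I would prove the output family matches Definition~\ref{def:mcr} by two inclusions. \emph{Only admissible faces are used:} the loop's sole insertion site draws $\ell$ from $D$, so every leaf the procedure emits has the form $S\cup T$ with $T\subseteq(S_1\cup S_2)\setminus S$; infeasible index combinations are discarded by the LP feasibility test, so $C(S\cup T)\neq\varnothing$ on each surviving leaf, and any two retained children differ in the orientation of at least one inserted face, hence have disjoint relative interiors — this gives the "no two children merge'' condition. \emph{No over-refinement, and every needed element is reached:} the loop guard stops the instant the target sub-expressions become comparable/addable on a child, so no face is added beyond what disambiguates the relevant \texttt{Max}/gate nodes; conversely, because $D$ is processed exhaustively, on any branch where some relevant sub-expression is still undecided the corresponding distinguishing face lies in $D$ and will be inserted, so exactly the elements $\{S\cup T\}$ described in Definition~\ref{def:mcr} are produced.

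The main obstacle is showing that the faces available in $(S_1\cup S_2)\setminus S$ are \emph{sufficient} to reach the exit condition on every child — i.e., that the loop cannot stall with some sub-expression still undecided. For this I would use that $S_1$ and $S_2$ are themselves leaves on which their respective sub-expressions were already resolved (by the \texttt{ENSURE\_*} refiners that created them), so every sign/comparator face needed to fix a winner on $C(S_1)$ belongs to $S_1$, and likewise for $S_2$; hence on a feasible child $C(S\cup T)$ the indices $S\cup S_1\cup S_2$ already pin down every relevant winner and sign, and a subset $T$ of the difference set suffices. Making "relevant to $S$'' precise — tracking which comparator indices feed the output expression on $C(S)$ and verifying they sit inside $S_1\cup S_2$ — is the one place that needs genuine care; the termination and single-face bookkeeping are routine from the pseudocode.
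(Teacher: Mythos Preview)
Your argument is correct and matches the paper's approach: termination because $(S_1\cup S_2)\setminus S$ is finite and each pass removes one index, and minimality because the loop exits as soon as the comparability/addability condition holds so no superfluous face is ever inserted. The paper's own proof is a three-sentence sketch saying exactly this and nothing more.

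Your additional sufficiency discussion---arguing that the faces available in the difference set must be enough to reach the exit condition because $S_1$ and $S_2$ are themselves leaves on which their respective sub-expressions were already resolved---goes beyond what the paper proves. The paper simply leaves that point implicit in the procedural phrasing of Definition~\ref{def:mcr} (the definition itself is stated as ``obtained by inserting\ldots until comparable/addable''), so your justification via the resolution history of $S_1,S_2$ is sound and a genuine strengthening, but it is extra rather than required to match the paper's level of rigor for this lemma. Likewise, your remarks on the outer \textsc{RegionExtract} loop and LP feasibility pruning are correct but outside the scope of what the lemma actually claims.
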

\begin{proof}
Each iteration chooses a face in the finite difference set and splits $S$ once.
No superfluous face is added because the loop stops as soon as the sub-expressions
become comparable/addable; hence minimality. Finiteness is immediate.
\end{proof}

\begin{theorem}[GEO--1: correctness and coverage]
\label{thm:geo1}
At any time, the set $\mathcal R$ returned by \textsc{RegionExtract} is a family of
pairwise interior-disjoint polyhedra $\{C(S)\}$ on which $F$ equals the recorded
affine law. If refinement proceeds until $D$ is covered by locally fully refined
leaves, $\mathcal R$ is a complete CPWL region table on $D$.
\end{theorem}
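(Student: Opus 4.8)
The plan is to establish the three assertions in turn: (i) the polyhedra $\{C(S)\}$ recorded in $\mathcal R$ are pairwise interior-disjoint; (ii) on each such $C(S)$ the stored pair $(w,b)$ is exactly the affine law of $F$; (iii) once $\mathcal L$ is exhausted by locally fully refined leaves these polyhedra cover $D$, yielding a complete CPWL region table. Throughout I use that a triple enters $\mathcal R$ only when its leaf is flagged \emph{locally fully refined} (Def.~\ref{def:lfr-app}), and that once a leaf is moved from $\mathcal L$ to $\mathcal R$ it is never selected, split, or revisited, so $\mathcal R$ only grows and, by DYN-3 (Thm.~\ref{thm:dyn3}), its recorded laws are never invalidated. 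For (i) I would observe that all GuardSets ever created by \textsc{RegionExtract} and its subroutines form a finite rooted binary tree $\mathcal T$ with root $S_0$: a ``commit'' in \texttt{ENSURE\_SIGN}/\texttt{ENSURE\_WINNER} leaves the GuardSet unchanged (it rewrites the local e-graph only), while every ``split'' replaces a node $S$ by exactly its two children $S\cup\{\ell\}$ and $S\cup\{\bar\ell\}$ for a single registered face $\ell$; hence each $S$ with $(S,w,b)\in\mathcal R$ is a leaf of $\mathcal T$. For distinct $S,S'$ appearing in $\mathcal R$, let $T$ be their least common ancestor in $\mathcal T$; the split at $T$ sends one descendant chain through $T\cup\{\ell\}$ and the other through $T\cup\{\bar\ell\}$, so $\ell\in S$ and $\bar\ell\in S'$. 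Since $\ell$ and $\bar\ell$ are the two closed orientations of one hyperplane, $C(S)\subseteq\{a_\ell^\top x\le d_\ell\}$ and $C(S')\subseteq\{a_\ell^\top x\ge d_\ell\}$, whose interiors are disjoint; thus $\operatorname{int}C(S)\cap\operatorname{int}C(S')=\varnothing$ (the same holds with relative interiors if a leaf is lower-dimensional, in which case the claim is anyway vacuous). Each $C(S)$ is a polyhedron as a finite intersection of closed halfspaces.

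For (ii): when $(S,w,b)$ is inserted, $S$ is locally fully refined, so every gate/comparator face feeding the output on $S$ lies in $\mathcal H$ and each output component is a single affine on $C(S)$. By DYN-2 (Thm.~\ref{thm:dyn2}), $\underline A=\overline A=A$ on $C(S)$, so $A|_{C(S)}$ is that single affine map; by SWT-1, $A\equiv F$ on the domain, hence $F|_{C(S)}$ is a single affine map. It remains to verify the readout ``accumulate affine atoms along the active path'': once all gate signs and \texttt{Max} winners are committed on $S$, exactly one $\pi:q_{\mathrm{in}}\to F$ is feasible on $\operatorname{ri}C(S)$ (boundary ties are resolved identically by $\oplus=\max$), and along it $\mathsf{val}_A(\pi,x)=\mathsf{W}(q_{\mathrm{in}})(x)\otimes\bigl(\bigotimes_t\mathsf{W}(e_t)(x)\bigr)\otimes\mathsf{W}(q_T)(x)$ is the $+$-accumulation of affine atoms, i.e.\ the map $x\mapsto w^\top x+b$ that the algorithm records (for \texttt{Sum} fan-in, the same $\otimes$-on-one-path device used in the SWT-1 miniature applies). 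By AF-2 this equals $A(x)=F(x)$ at every point of $\operatorname{ri}C(S)$, and two affine maps agreeing on a relatively open subset of a polyhedron agree on all of it, so $F(x)=w^\top x+b$ on $C(S)$ (componentwise, assembling $J_S$ row-wise for vector outputs).

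For (iii) I would maintain the invariant that at every moment $\bigl(\bigcup_{S\in\mathcal L}C(S)\bigr)\cup\bigl(\bigcup_{(S,w,b)\in\mathcal R}C(S)\bigr)\supseteq D$. It holds at initialization since $C(S_0)\supseteq D$, and is preserved by each operation: discarding an infeasible leaf removes a set equal to $\varnothing$; a commit changes no GuardSet; moving a refined leaf from $\mathcal L$ to $\mathcal R$ leaves the union of the two families unchanged; and a split of $S$ into $S^+=S\cup\{\ell\}$ and $S^-=S\cup\{\bar\ell\}$ satisfies $C(S^+)\cup C(S^-)=C(S)$ \emph{exactly}, because two closed halfspaces of opposite orientation cover $\R^n$ --- this is the one place the closed-guard convention is essential. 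Hence in the hypothesized terminal state, where $\mathcal L$ holds only locally fully refined leaves (which are then immediately flushed into $\mathcal R$, so $\mathcal L=\varnothing$), we get $\bigcup_{(S,w,b)\in\mathcal R}C(S)\supseteq D$. Combining with (i)--(ii), $\{(C(S)\cap D,\,w,\,b)\}_{(S,w,b)\in\mathcal R}$ is a finite family of pairwise interior-disjoint polyhedra whose union is $D$ and on each of which $F$ is the recorded affine map; face-compatibility/continuity across shared faces is automatic because $F$ is globally CPWL (AF-1). This is a complete CPWL region table for $F$ on $D$.

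The delicate step is the readout-correctness claim in (ii): that after local full refinement the surviving active path is essentially unique on $\operatorname{ri}C(S)$ and that accumulating its affine atoms --- including the endpoint node weights $\mathsf{W}(q_{\mathrm{in}}),\mathsf{W}(q_T)$ and correctly handling \texttt{Sum}/\texttt{Scale}/\texttt{Bias} subgraphs --- literally reproduces the affine law granted by DYN-2, together with the passage from pointwise equality on the relative interior to equality on all of $C(S)$; one also needs the remark that degenerate, lower-dimensional leaves have empty interior and hence affect neither the disjointness claim nor the region table. Steps (i) and (iii) are then essentially bookkeeping on the split tree and on the closed-halfspace identity $C(S^+)\cup C(S^-)=C(S)$.
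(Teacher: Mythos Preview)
Your proof is correct and follows essentially the same approach as the paper's three-sentence sketch: local full refinement gives the exact affine readout on each leaf (via DYN--2/SWT--1), and binary splits along closed opposite halfspaces yield both interior-disjointness and coverage. You have simply made explicit the split-tree/LCA argument for (i), the coverage invariant $C(S^+)\cup C(S^-)=C(S)$ for (iii), and the readout-correctness step in (ii) that the paper leaves implicit.
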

\begin{proof}
Local full refinement implies single-affine and hence exact readout on each leaf.
Leaf splits only subdivide $C(S)$ (closed guards, face-compatible), so interiors
do not overlap. Finite coverage yields a region table by union.
\end{proof}

\subsection{Decision-boundary polyhedral complexes (GEO-3)}
\label{app:geo:db}

Let $F:\R^n\!\to\!\R^m$ be CPWL; for $i\ne j$, define the margin
$g_{ij}=F_i-F_j$. On a locally fully refined leaf $S$, $g_{ij}(x)=
w_{\rho,ij}^\top x+b_{\rho,ij}$ is affine.

\begin{theorem}[Piecewise-flat decision set]
\label{thm:db}
Let $D$ be a convex domain and assume $D$ is covered by locally fully refined
leaves $\{S_\rho\}$. Then
\[
DB_{ij}:=\{x\in D:\ g_{ij}(x)=0\}=\bigcup_{\rho}
\Bigl(C(S_\rho)\cap\{w_{\rho,ij}^\top x+b_{\rho,ij}=0\}\Bigr),
\]
a finite polyhedral complex whose pieces are faces of codimension at most 1,
pairwise face-compatible.
\end{theorem}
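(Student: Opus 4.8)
The plan is to reduce the geometric statement to local exactness of the JIT semantics plus continuity of the network, and then assemble the complex. First I would record the ingredients: the margin $g_{ij}=F_i-F_j$ is CPWL (AF-1 and Lemma~\ref{lem:basic-closure}) and hence continuous (AF-3); and on each locally fully refined leaf $S_\rho$, by DYN-2 (Thm.~\ref{thm:dyn2}) together with SWT-1, the true function agrees on all of $C(S_\rho)$ with a single affine law, so $F_i=w_{\rho,i}^\top x+b_{\rho,i}$ and $F_j=w_{\rho,j}^\top x+b_{\rho,j}$, whence $g_{ij}(x)=w_{\rho,ij}^\top x+b_{\rho,ij}$ on $C(S_\rho)$ with $w_{\rho,ij}=w_{\rho,i}-w_{\rho,j}$, $b_{\rho,ij}=b_{\rho,i}-b_{\rho,j}$. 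In the standard B\&B setup the leaves are obtained by adjoining faces to the initial GuardSet $S_0$ encoding $D$, so $C(S_\rho)\subseteq D$ for polyhedral $D$; for a non-polyhedral convex $D$ one additionally intersects every piece below with $D$, which leaves all conclusions intact.

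Second, I would prove the set identity by double inclusion. For the inclusion $\subseteq$: take $x\in D$ with $g_{ij}(x)=0$; since $\{C(S_\rho)\}$ covers $D$, fix $\rho$ with $x\in C(S_\rho)$; then $0=g_{ij}(x)=w_{\rho,ij}^\top x+b_{\rho,ij}$, so $x\in C(S_\rho)\cap\{w_{\rho,ij}^\top x+b_{\rho,ij}=0\}$. For the reverse inclusion: if $x\in C(S_\rho)\cap\{w_{\rho,ij}^\top x+b_{\rho,ij}=0\}$ then $x\in C(S_\rho)\subseteq D$ and $g_{ij}(x)=w_{\rho,ij}^\top x+b_{\rho,ij}=0$ by the local law, so $x\in DB_{ij}$. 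This yields the displayed equality.

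Third I would read off the structure. The index set $\{\rho\}$ is finite (it is a complete refinement cover, Def.~\ref{def:crc}, equivalently bounded by DYN-4, Thm.~\ref{thm:dyn4}), so the union is finite. Each summand $C(S_\rho)\cap H_{\rho,ij}$ with $H_{\rho,ij}=\{x:\ w_{\rho,ij}^\top x+b_{\rho,ij}=0\}$ is a polyhedron (a polyhedron cut by one equality, i.e.\ by two opposed closed halfspaces); when $w_{\rho,ij}\neq 0$ it lies in a genuine hyperplane and is flat of codimension at least $1$ in $\R^n$, and in the degenerate case $w_{\rho,ij}=0$ it is either empty ($b_{\rho,ij}\neq 0$) or all of $C(S_\rho)$ ($b_{\rho,ij}=0$); in every case a maximal piece is at most one linear equality away from its host cell, which is the sense in which it has ``codimension at most $1$'' (this becomes literal once one adjoins the faces $\{g_{ij}\ge 0\}$, $\{g_{ij}\le 0\}$ to each cell). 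For face-compatibility I would invoke GEO-1 (Thm.~\ref{thm:geo1}): the cells $\{C(S_\rho)\}$ form a polyhedral subdivision of $D$ with pairwise interior-disjoint cells, and on any shared boundary $\Phi=C(S_\rho)\cap C(S_\sigma)$ continuity of $g_{ij}$ forces $w_{\rho,ij}^\top x+b_{\rho,ij}=g_{ij}(x)=w_{\sigma,ij}^\top x+b_{\sigma,ij}$ on $\Phi$, whence $(C(S_\rho)\cap H_{\rho,ij})\cap\Phi=\{g_{ij}=0\}\cap\Phi=(C(S_\sigma)\cap H_{\sigma,ij})\cap\Phi$; thus adjacent decision-boundary pieces meet exactly along a common polyhedron, and the family of pieces together with their faces is the asserted finite, face-compatible complex.

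The main obstacle is precisely this last clause. Two subtleties need care: (i) on-demand refinement splits cells locally and need not produce a face-to-face subdivision of $D$ from the bare GuardSets, so I would lean on GEO-1's subdivision guarantee---or, for a strictly face-to-face complex, first pass to the (still finite) common refinement of all leaf index sets before slicing; and (ii) although the affine representatives $w_{\rho,ij}^\top x+b_{\rho,ij}$ differ from cell to cell, the global function $g_{ij}$ is single-valued and continuous on $D$ (AF-3/AF-4), and this is exactly what forces the cell-local hyperplanes---and hence the cell-local slices---to agree on shared faces. Everything else is routine polyhedral bookkeeping (intersections of polyhedra are polyhedra; finite unions; faces of faces).
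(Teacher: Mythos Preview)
Your proof is correct and follows the same approach as the paper's (local affine representation on each refined leaf, cell-wise zero sets are hyperplane slices, finiteness from the finite cover), but it is considerably more careful: the paper's proof is three sentences that do not explicitly verify the set identity or discuss the face-to-face subtlety you flag in point (i). Your observation that on-demand refinement need not produce a strictly face-to-face subdivision, and that one may pass to the common refinement of all leaf index sets to repair this, is a genuine point the paper elides.
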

\begin{proof}
On each $C(S_\rho)$ the zero set of an affine function is an affine hyperplane
(or empty or the whole cell). Intersecting with polyhedra preserves polyhedra and
face-compatibility. Finiteness follows from the finite cover.
\end{proof}

\begin{proposition}[Local distance to the boundary]
\label{prop:db:dist}
For $x\in\operatorname{ri}(C(S_\rho))$ and $w_{\rho,ij}\ne 0$,
\(
\mathrm{dist}\!\left(x,DB_{ij}\cap C(S_\rho)\right)=
\frac{|w_{\rho,ij}^\top x+b_{\rho,ij}|}{\|w_{\rho,ij}\|_2}.
\)
\end{proposition}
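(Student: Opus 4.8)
The plan is to prove the asserted equality by establishing the two inequalities separately; write $g(y)=w_{\rho,ij}^\top y+b_{\rho,ij}$, $w=w_{\rho,ij}$ (so $w\neq 0$), and $v=w/\|w\|_2$. The lower bound is immediate from set inclusion: since $DB_{ij}\cap C(S_\rho)\subseteq\{y:g(y)=0\}$, the distance from $x$ to the former is at least the distance from $x$ to the whole affine hyperplane $\{g=0\}$, and for any $z$ with $g(z)=0$ one has $\|x-z\|_2\ge|v^\top(x-z)|=|g(x)|/\|w\|_2$, giving $\mathrm{dist}\big(x,DB_{ij}\cap C(S_\rho)\big)\ge|w_{\rho,ij}^\top x+b_{\rho,ij}|/\|w_{\rho,ij}\|_2$. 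So all the content is in the matching upper bound.

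For the upper bound I would produce an explicit witness, namely the orthogonal projection of $x$ onto $\{g=0\}$, that is $x^\star:=x-\tfrac{g(x)}{\|w\|_2^2}\,w$. A one-line check gives $g(x^\star)=0$ and $\|x-x^\star\|_2=|g(x)|/\|w\|_2$, so as soon as we know $x^\star\in C(S_\rho)$ we conclude $x^\star\in DB_{ij}\cap C(S_\rho)$ and the upper bound follows, closing the proof. Hence the entire argument reduces to a single feasibility claim: for $x\in\operatorname{ri}(C(S_\rho))$, the foot of the perpendicular to $\{g=0\}$ stays inside $C(S_\rho)$. The natural way to attempt this is to test each defining inequality $a_\ell^\top y\le d_\ell$ of $C(S_\rho)$ against $x^\star$, using that $a_\ell^\top x^\star=a_\ell^\top x-\tfrac{g(x)}{\|w\|_2^2}\,a_\ell^\top w$ and that the slack $d_\ell-a_\ell^\top x$ is strictly positive for every non-redundant facet when $x\in\operatorname{ri}(C(S_\rho))$; one would want the correction term not to exceed that slack.

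The main obstacle is precisely this feasibility step, and it is where the statement as worded is delicate: the correction $-\tfrac{g(x)}{\|w\|_2^2}\,a_\ell^\top w$ can take either sign and be arbitrarily large relative to the slack, so $x^\star$ need not lie in $C(S_\rho)$ unless the cell is suitably aligned with $\nabla g$ — equivalently, unless $x$ lies in the orthogonal ``shadow'' of the facet $DB_{ij}\cap C(S_\rho)$. I would therefore organize the proof around the benign sub-case in which $\{g=0\}$ is (up to the closed-guard convention) one of the supporting halfspaces of the locally fully refined leaf, so that $DB_{ij}\cap C(S_\rho)$ is a facet and the whole cell sits on one side of $\{g=0\}$; then I would try to reduce the general situation to this one by sliding along the segment $[x,x^\star]$ until it first meets $\partial C(S_\rho)$. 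I expect that, to obtain the equality for \emph{every} $x\in\operatorname{ri}(C(S_\rho))$, the cleanest honest formulation is that it holds exactly under the side condition $x^\star\in C(S_\rho)$ (automatically met when $x$ lies in the perpendicular shadow of $DB_{ij}\cap C(S_\rho)$, or when $\{g=0\}$ supports $C(S_\rho)$), and the remaining work is to make that reduction precise rather than to circumvent it.
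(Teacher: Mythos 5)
The paper states Proposition~\ref{prop:db:dist} without any proof, so there is nothing of the paper's to compare against; your proposal has to stand on its own, and it essentially does --- including your diagnosis that the problem lies in the statement rather than in your argument. Your lower bound is correct and unconditional: $DB_{ij}\cap C(S_\rho)\subseteq\{g=0\}$ together with Cauchy--Schwarz gives $\mathrm{dist}(x,DB_{ij}\cap C(S_\rho))\ge |g(x)|/\|w\|_2$. Your upper bound via the witness $x^\star=x-\tfrac{g(x)}{\|w\|_2^2}w$ is also correct \emph{whenever} $x^\star\in C(S_\rho)$, and you are right that this feasibility cannot be derived from $x\in\operatorname{ri}(C(S_\rho))$ alone. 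A concrete counterexample to the proposition as written: take $C(S_\rho)=[0,1]^2$, $g(y)=y_1+y_2-2$, $x=(0.9,0.1)$. Then $DB_{ij}\cap C(S_\rho)=\{(1,1)\}$, the true distance is $\sqrt{0.82}\approx 0.906$, but the claimed value is $1/\sqrt{2}\approx 0.707$; and with $g(y)=y_1+y_2-3$ the left side is $+\infty$ while the right side stays finite. Nothing in the JIT refinement forces the zero set of $g_{ij}=F_i-F_j$ to be a face of the leaf, so such cells genuinely occur.

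Two remarks on your proposed repairs. First, sliding along $[x,x^\star]$ to the first boundary crossing does not close the general case: the exit point of that segment from $C(S_\rho)$ need not lie on $\{g=0\}$, so it is not a witness in $DB_{ij}\cap C(S_\rho)$; do not pursue that route. Second, your ``benign sub-case'' in which $\{g=0\}$ supports the cell is still not sufficient by itself --- if the face $C(S_\rho)\cap\{g=0\}$ has codimension greater than one, the projection foot can again fall outside it --- so the honest hypothesis is exactly the one you end with: $x^\star\in C(S_\rho)$, i.e., $x$ lies in the orthogonal shadow of $DB_{ij}\cap C(S_\rho)$. Under that hypothesis your two-inequality proof is complete. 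The proposition should therefore either carry that hypothesis or be weakened to
$\mathrm{dist}(x,DB_{ij}\cap C(S_\rho))\ \ge\ |w_{\rho,ij}^\top x+b_{\rho,ij}|/\|w_{\rho,ij}\|_2$,
which is the direction actually needed for certified separation from the decision boundary and which your first step already establishes. In the general case the exact distance is the convex program $\min\{\|x-y\|_2:\ y\in C(S_\rho),\ w_{\rho,ij}^\top y+b_{\rho,ij}=0\}$ --- a small QP per leaf, consistent with how the paper handles boundary representatives in Thm.~\ref{thm:geo4}.
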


\paragraph{Anytime inner/outer approximations.}
Without full refinement, use the envelopes $\underline g_{ij},\overline g_{ij}$ to
obtain
\(
\{x:\ \overline g_{ij}(x)\le 0\}\subseteq\{x:\ g_{ij}(x)\le 0\}\subseteq
\{x:\ \underline g_{ij}(x)\le 0\}.
\)
Refining the maximal-gap leaves (App.~\ref{app:jit}) shrinks the sandwich.

\subsection{Extrema and Lipschitz on common convex domains (GEO-2/GEO-5)}
\label{app:geo:opt}

Let $D$ be a convex feasible set (polytope/box/$\ell_p$ ball or their intersections).

\paragraph{Anytime global extrema.}
Maintain
\(
\underline M=\max_{S\subseteq D}\LB(F,S),\ \
\overline M=\max_{S\subseteq D}\UB(F,S),
\)
which satisfy $\underline M\le \max_{x\in D}F(x)\le \overline M$ and tighten
monotonically (DYN--1/3). When leaves over $D$ are locally fully refined, each
subproblem is \emph{affine maximization} and solved exactly.

\begin{theorem}[Affine maximization on standard domains]
\label{thm:aff-max}
Let $f(x)=w^\top x+b$. Then:
\begin{enumerate}[leftmargin=1.25em,itemsep=0.15em]
\item \textbf{Polytope} $D=\{x:Ax\le d\}$: LP
$\max\{w^\top x+b:Ax\le d\}$.
\item \textbf{Box} $D=[\ell,u]$: closed form
$\max w^\top x+b=\sum_{i} w_i\,(\,u_i\mathbf 1_{w_i\ge 0}+\ell_i\mathbf 1_{w_i<0}\,)+b$.
\item \textbf{$\ell_\infty$-ball} $D=B_\infty(x_0,\epsilon)$: LP; closed form
$w^\top x_0+b+\epsilon\|w\|_1$.
\item \textbf{$\ell_1$-ball} $D=B_1(x_0,\epsilon)$: LP via epigraph
$\sum_i t_i\le\epsilon,\ -t_i\le x_i-x_{0,i}\le t_i$; closed form
$w^\top x_0+b+\epsilon\|w\|_\infty$.
\item \textbf{$\ell_2$-ball} $D=B_2(x_0,\epsilon)$: SOCP or closed form
$w^\top x_0+b+\epsilon\|w\|_2$.
\item \textbf{Polytope$\cap\ell_2$-ball}: SOCP
$\max\{w^\top x+b:\ Ax\le d,\ \|x-x_0\|_2\le\epsilon\}$.
\end{enumerate}
\end{theorem}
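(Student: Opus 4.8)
The plan is to handle all six cases through two unifying devices: \emph{separation to the center} for the ball domains (substitute $x=x_0+\delta$) and \emph{coordinatewise separability} for the box, after which each closed form falls out of a tight Hölder/Cauchy--Schwarz inequality, while each ``it is an LP/SOCP'' claim holds essentially by definition of the respective problem class together with compactness. Since the ``LP/SOCP'' half of the statement is immediate, the genuine content is the closed-form half, so I would organize the proof around those.

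\textbf{Polytope (1) and box (2).} For the polytope, $\max\{w^\top x+b:Ax\le d\}$ is by definition a linear program; the feasible set is a bounded polytope, hence nonempty (by hypothesis) and compact, so the supremum is attained, and the fundamental theorem of linear programming places an optimizer at a vertex, which a simplex/interior-point solver returns exactly. For the box, I would use separability: $w^\top x=\sum_i w_i x_i$ and $[\ell,u]=\prod_i[\ell_i,u_i]$, so the maximization decouples coordinatewise, and $w_i x_i$ over $x_i\in[\ell_i,u_i]$ is maximized at $x_i=u_i$ if $w_i\ge 0$ and at $x_i=\ell_i$ if $w_i<0$; summing and adding $b$ gives exactly the stated formula $\sum_i w_i(u_i\mathbf 1_{w_i\ge 0}+\ell_i\mathbf 1_{w_i<0})+b$.

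\textbf{Balls (3)--(5) and (6).} I would substitute $x=x_0+\delta$ with $\|\delta\|_q\le\epsilon$, $q\in\{\infty,1,2\}$, so that $f(x)=w^\top x_0+b+w^\top\delta$, and then maximize $w^\top\delta$ by the tight bound $w^\top\delta\le\|w\|_{q^*}\|\delta\|_q\le\epsilon\|w\|_{q^*}$ (Hölder for $q\in\{1,\infty\}$, Cauchy--Schwarz for $q=2$), exhibiting the maximizer explicitly: $\delta_i=\epsilon\,\sign(w_i)$ for $q=\infty$; $\delta=\epsilon\,\sign(w_{i^\star})\,e_{i^\star}$ with $i^\star\in\argmax_i|w_i|$ for $q=1$; and $\delta=\epsilon w/\|w\|_2$ for $q=2$ (with the trivial value $b$ when $w=0$, consistent with $\epsilon\|w\|_{q^*}=0$). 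The optimization-class statements are then immediate: an $\ell_\infty$-ball is a box (hence an LP); the $\ell_1$-ball is equivalently the polyhedron $\{(x,t):\sum_i t_i\le\epsilon,\ -t_i\le x_i-x_{0,i}\le t_i\}$ projected to $x$ (hence an LP); maximizing a linear objective over $\{\|x-x_0\|_2\le\epsilon\}$ is by definition an SOCP; and for (6) the feasible set $\{x:Ax\le d,\ \|x-x_0\|_2\le\epsilon\}$ is already in SOCP standard form, closed, bounded (inside the ball), and nonempty by hypothesis, so the optimum is attained and returned exactly.

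\textbf{Anticipated obstacle.} There is no deep obstacle; the proof is essentially the translate-and-dualize step plus citing exactness of LP/SOCP. The only care points are degenerate cases handled uniformly: $w=0$ (all formulas remain valid); an empty feasible set (the statement presupposes feasibility, else the answer is $-\infty$/infeasible); and, if ``polytope'' in (1) is read loosely as an unbounded polyhedron, the LP may be unbounded above, in which case the ``exact solution'' is the solver's unboundedness certificate rather than a maximizer. None of these requires a new idea, so the write-up mainly needs to be careful about the tightness argument for Hölder/Cauchy--Schwarz and about stating the closed-form maximizers correctly.
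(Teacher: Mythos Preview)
Your proposal is correct and follows essentially the same approach as the paper: the paper's proof is a one-liner that cites ``linear programming'' for (1), ``coordinatewise choice'' for (2), ``$\ell_1/\ell_\infty$ duality'' for (3)(4), ``Cauchy--Schwarz'' for (5), and ``standard second-order cone form'' for (6). Your translate-and-dualize treatment of the ball cases and your separability argument for the box are exactly these ideas spelled out, with the added care about degenerate cases ($w=0$, infeasibility, unboundedness) going slightly beyond what the paper bothers to record.
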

\begin{proof}
(1) is linear programming; (2) coordinatewise choice; (3)(4) are $\ell_1/\ell_\infty$
duality; (5) follows from Cauchy--Schwarz; (6) is a standard second-order cone form.
\end{proof}

\paragraph{Exact Lipschitz after refinement (scalar).}
For $F$ scalar and $D$ covered by locally fully refined leaves,
\[
L_p(F;D)=\max_{S:\,C(S)\cap D\ne\varnothing}\ \|w_S\|_{p^*}.
\]
\emph{Anytime bounds} use refined leaves for the lower bound and safe relaxations
for unresolved leaves for the upper bound (e.g., interval slopes or layerwise
operator-norm products).

\paragraph{Vector case.}
For $F:\ell_p\to\ell_r$, with Jacobian $J_S$ on each refined leaf,
\[
L_{p\to r}(F;D)=\max_{S:\,C(S)\cap D\ne\varnothing}\ \|J_S\|_{p\to r}.
\]
Before full refinement, use
\(
\underline L=\max_{S\ \text{refined}}\|J_S\|_{p\to r},\
\overline L=\max_S U_{p\to r}(S),
\)
where $U_{p\to r}(S)$ is any sound leaf-wise upper bound (e.g., bounding
intermediate slopes by LP/SOCP or multiplying induced norms layerwise).

\subsection{Vector/operator norms: tractable vs NP-hard, and anytime handling}
\label{app:geo:norms}

For a matrix $A\in\R^{m\times n}$,
$\|A\|_{p\to r}:=\sup_{\|x\|_p=1}\|Ax\|_r$.
The following families are computationally \emph{tractable} (polynomial):

\begin{proposition}[Closed forms and efficient computation]
\label{prop:norms:easy}
For any $A$:
\[
\begin{aligned}
&\|A\|_{1\to 1}=\max_{j}\sum_i |a_{ij}| \quad (\text{max column sum}),\\
&\|A\|_{\infty\to\infty}=\max_{i}\sum_j |a_{ij}|\quad (\text{max row sum}),\\
&\|A\|_{1\to\infty}=\max_{i,j}|a_{ij}|,\\
&\|A\|_{2\to 2}=\sigma_{\max}(A)\quad (\text{spectral norm; power/SVD}),\\
&\|A\|_{2\to\infty}=\max_i \|A_{i,:}\|_2,\qquad
\|A\|_{1\to 2}=\max_j \|A_{:,j}\|_2.
\end{aligned}
\]
\end{proposition}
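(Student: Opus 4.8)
The plan is to prove each of the six identities by the standard two‑sided argument: an upper bound obtained from H\"older's inequality (or the triangle inequality), matched by attainment at a distinguished point of the source unit ball, which is compact so that every supremum below is in fact a maximum. I would organize the cases into two structural families according to whether the \emph{source} norm is $\ell_1$ or the \emph{target} norm is $\ell_\infty$ (the entries $1\to 2$ and $1\to\infty$ lie in both), plus the single spectral case $2\to 2$ handled separately.

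\textbf{Family 1 (source $=\ell_1$): the entries $1\to 1$, $1\to 2$, $1\to\infty$.} For $\|x\|_1\le 1$ I would expand $Ax=\sum_j x_j A_{:,j}$ and bound $\|Ax\|_r\le\sum_j|x_j|\,\|A_{:,j}\|_r\le\max_j\|A_{:,j}\|_r$; conversely, the choice $x=e_{j^\star}$ with $j^\star\in\argmax_j\|A_{:,j}\|_r$ (an extreme point of the $\ell_1$‑ball, and a maximizer of the convex map $x\mapsto\|Ax\|_r$ over that polytope) attains the bound. Hence $\|A\|_{1\to r}=\max_j\|A_{:,j}\|_r$, which reads $\max_j\sum_i|a_{ij}|$ for $r=1$, $\max_j\|A_{:,j}\|_2$ for $r=2$, and $\max_j\max_i|a_{ij}|=\max_{i,j}|a_{ij}|$ for $r=\infty$.

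\textbf{Family 2 (target $=\ell_\infty$): the entries $\infty\to\infty$, $2\to\infty$, $1\to\infty$.} Since $\|Ax\|_\infty=\max_i|\langle A_{i,:},x\rangle|$, I would exchange the finite maximum over rows with the supremum over $x$ to get $\|A\|_{p\to\infty}=\max_i\,\sup_{\|x\|_p\le 1}|\langle A_{i,:},x\rangle|=\max_i\|A_{i,:}\|_{p^*}$, using the definition of the dual norm (H\"older for the inequality, its equality case for attainment on each row). This specializes to $\max_i\sum_j|a_{ij}|$ for $p=\infty$ ($p^*=1$), $\max_i\|A_{i,:}\|_2$ for $p=2$, and again $\max_{i,j}|a_{ij}|$ for $p=1$, matching Family 1.

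\textbf{Spectral case $2\to 2$, and the main obstacle.} I would square: $\|A\|_{2\to 2}^2=\sup_{\|x\|_2=1}\|Ax\|_2^2=\sup_{\|x\|_2=1}x^\top(A^\top A)x$, and then invoke the Rayleigh‑quotient characterization of the largest eigenvalue of the symmetric positive semidefinite matrix $A^\top A$ to identify this supremum with $\lambda_{\max}(A^\top A)=\sigma_{\max}(A)^2$, attained at a unit top eigenvector; taking nonnegative square roots gives $\|A\|_{2\to 2}=\sigma_{\max}(A)$, computed in practice by power iteration or an SVD. The only genuinely non‑combinatorial ingredient in the whole proposition is this spectral theorem (equivalently, existence of the SVD); everything else reduces to H\"older duality plus the remark that a convex function on a polytope is maximized at a vertex, and the only routine care needed is the appeal to compactness of the closed unit ball so that ``$\sup$'' may be replaced by ``$\max$''. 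I therefore do not anticipate any real difficulty.
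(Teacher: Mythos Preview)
Your proposal is correct. The paper states this proposition without proof, treating the six identities as standard facts; your two-family organization (source $\ell_1$ via extreme points of the cross-polytope, target $\ell_\infty$ via row-wise dual norms) together with the Rayleigh-quotient argument for $2\to 2$ supplies exactly the elementary verification the paper omits.
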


The next families are \emph{NP-hard} in general, hence used with anytime
lower/upper bounds at unresolved leaves:

\begin{proposition}[NP-hard families (decision version)]
\label{prop:norms:hard}
Computing $\|A\|_{\infty\to 1}$, $\|A\|_{\infty\to 2}$, or $\|A\|_{2\to 1}$ is
NP-hard. Consequently exact global $L_{p\to r}(F;D)$ for these pairs reduces to
maximizing an NP-hard quantity over leaves; nevertheless JIT yields \emph{anytime}
bounds that converge upon full refinement when a leaf-wise exact oracle is available
or when unresolved leaves shrink to zero measure.
\end{proposition}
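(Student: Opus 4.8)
The plan is to reduce the decision version of \textsc{Max-Cut} to each of the three matrix-norm problems, routing every reduction through the (symmetric positive semidefinite) graph Laplacian; the ``consequently'' clause then follows from a globally affine CPWL network on a box whose Lipschitz constant equals one of these hard norms. First I would record the standard reductions of the continuous suprema to finite sign maxima: because $x\mapsto\|Ax\|_r$ is convex it attains its maximum over $\{\|x\|_\infty\le1\}$ at a vertex $x\in\{\pm1\}^n$; dually $\|v\|_1=\max_{s\in\{\pm1\}^m}s^\top v$ and $\|v\|_2^2=v^\top v$; and $\|A\|_{2\to1}=\max_{\|x\|_2\le1}\max_{s\in\{\pm1\}^m}s^\top Ax=\max_{s\in\{\pm1\}^m}\|A^\top s\|_2$. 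Hence $\|A\|_{\infty\to1}=\max_{x\in\{\pm1\}^n,\,s\in\{\pm1\}^m}s^\top Ax$, $\|A\|_{\infty\to2}^2=\max_{x\in\{\pm1\}^n}x^\top A^\top A\,x$, and $\|A\|_{2\to1}^2=\max_{s\in\{\pm1\}^m}s^\top AA^\top s$.

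Next, given a \textsc{Max-Cut} instance $G=(V,E)$ with unsigned Laplacian $L_G$ and oriented incidence matrix $B_G\in\R^{|V|\times|E|}$ (so $B_GB_G^\top=L_G$ and $\max_{x\in\{\pm1\}^{|V|}}x^\top L_Gx=4\,\mathrm{mc}(G)$, where $\mathrm{mc}(G)$ is the max-cut value), I would instantiate three matrices. For $\|\cdot\|_{\infty\to1}$ take $A=L_G$: since $L_G$ is symmetric PSD, $s^\top L_Gx\le\tfrac12(s^\top L_Gs+x^\top L_Gx)\le\max_{z\in\{\pm1\}^{|V|}}z^\top L_Gz$ with equality at $s=x$, so $\|L_G\|_{\infty\to1}=4\,\mathrm{mc}(G)$. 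For $\|\cdot\|_{\infty\to2}$ take $A=B_G^\top$, whence $A^\top A=L_G$ and $\|A\|_{\infty\to2}^2=4\,\mathrm{mc}(G)$. For $\|\cdot\|_{2\to1}$ take $A=B_G$, whence $AA^\top=L_G$ and $\|A\|_{2\to1}^2=4\,\mathrm{mc}(G)$. Each constructed $A$ has polynomially many small rational entries, so in every case the query ``is the relevant norm at least $4t$?'' answers ``is $\mathrm{mc}(G)\ge t$?''; these Karp reductions establish NP-hardness of all three decision problems.

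To transfer the hardness to $L_{p\to r}(F;D)$ on the JIT object, take $F(x)=Ax$ for $A$ as above and $D=[-1,1]^n$. Then $F$ is affine on all of $D$, so the initial leaf $S_0$ encoding $D$ is locally fully refined with $J_{S_0}=A$, and Theorem~\ref{thm:geo5} (GEO-5) gives $L_{\infty\to1}(F;D)=\|A\|_{\infty\to1}$ (and likewise for the other pairs); hence any exact evaluator of these Lipschitz constants decides \textsc{Max-Cut}. For the anytime statement I would invoke the envelope machinery of \S3 and App.~\ref{app:geo:opt}: the lower bound $\underline L=\max_{S\ \text{refined}}\|J_S\|_{p\to r}$ is sound by Theorem~\ref{thm:geo5} and nondecreasing under further refinement (Lemma~\ref{lem:mono-tight}), while $\overline L=\max_S U_{p\to r}(S)$ for any sound leaf-wise relaxation $U_{p\to r}$ (e.g.\ a tractable dominating norm, or an SDP/Grothendieck relaxation of the leaf bilinear form) is monotone nonincreasing; once every leaf meeting $D$ is locally fully refined and, in addition, either a per-leaf exact-norm oracle is available or the union of not-yet-exactly-evaluated leaves has Lebesgue measure zero, $\underline L$ and $\overline L$ meet and equal $L_{p\to r}(F;D)$.

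The matrix-norm NP-hardness is classical and the reductions above are short; I expect the main obstacle to be the precise wording of the anytime-convergence clause. Local full refinement collapses each leaf to a single affine map, but evaluating that leaf's operator norm $\|J_S\|_{p\to r}$ is \emph{itself} the NP-hard problem just analyzed, so global exactness cannot be claimed unconditionally — it genuinely requires the extra ``exact leaf oracle (or measure-zero residual)'' hypothesis already hedged in the statement, and the write-up must surface that dependency rather than over-claim. A secondary point of care is the two independent sign vectors in the $\infty\to1$ case, which is handled cleanly by the PSD symmetrization $s^\top Mx\le\tfrac12(s^\top Ms+x^\top Mx)$, so that the bilinear cut-norm form reduces without loss to the single-vector quadratic form.
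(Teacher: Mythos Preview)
Your proposal is correct. The paper itself does not prove this proposition at all: it states the NP-hardness of $\|A\|_{\infty\to1}$, $\|A\|_{\infty\to2}$, $\|A\|_{2\to1}$ as a known fact and moves immediately to the anytime scheme, so you are supplying an argument the authors omit rather than reproducing one.

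Your \textsc{Max-Cut}-via-Laplacian reduction is clean and works for all three norms uniformly: the vertex/sign reductions $\|A\|_{\infty\to1}=\max_{x,s\in\{\pm1\}}s^\top Ax$, $\|A\|_{\infty\to2}^2=\max_{x\in\{\pm1\}}x^\top A^\top Ax$, $\|A\|_{2\to1}^2=\max_{s\in\{\pm1\}}s^\top AA^\top s$ are standard, and the PSD symmetrization $s^\top Mx\le\tfrac12(s^\top Ms+x^\top Mx)$ correctly collapses the bilinear $\infty\to1$ form to the quadratic one when $M=L_G$. The instantiations $A=L_G$, $A=B_G^\top$, $A=B_G$ then give exactly $4\,\mathrm{mc}(G)$ (or its square root) in each case. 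One cosmetic point: your sentence ``is the relevant norm at least $4t$?'' is literally correct only for $\infty\to1$; for the other two the threshold is $2\sqrt{t}$ since it is the \emph{squared} norm that equals $4\,\mathrm{mc}(G)$. This does not affect the reduction.

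Your handling of the ``consequently'' clause---taking $F(x)=Ax$ on a box so that a single fully refined leaf already carries the hard norm, and then invoking GEO-5 and the monotone-envelope lemmas for the anytime claim---matches how the paper intends these pieces to fit together. Your closing caveat, that exactness on refined leaves still requires an NP-hard per-leaf norm evaluation and hence the ``exact oracle or measure-zero residual'' hypothesis is not decorative, is exactly right and worth keeping explicit in the write-up; the paper's own statement hedges in the same way but does not spell out why.
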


\paragraph{Anytime scheme on JIT leaves.}
For each leaf $S$:
\begin{itemize}[leftmargin=1.25em,itemsep=0.2em]
\item \textbf{Lower bound} $\underline L_S$: choose any $v$ with $\|v\|_p=1$ and
evaluate $\|J_S v\|_r$; for $2\to 2$, one power-iteration step gives a valid
$\|J_S v\|_2$; for $\infty\to 1$, use a sign vector $v\in\{\pm 1\}^n$.
\item \textbf{Upper bound} $\overline L_S$: relaxations by dual norms, Gershgorin-type
bounds, or layerwise induced-norm products inside the leaf (tightened by LP/SOCP
when gates are unresolved but ranges are bounded).
\item \textbf{Aggregation:} $\underline L=\max_{S\ \text{refined}}\underline L_S$ and
$\overline L=\max_S \overline L_S$; monotone tightening follows from DYN--1/3.
\end{itemize}

\paragraph{Complexity remarks.}
GEO‑1..5 are \emph{decidable} under the finite-refinement/fairness hypotheses
(App.~\ref{app:jit}); worst-case exponential (number of reachable fragments),
but budgeted cost obeys the linear upper bounds of DYN--4. Leaf-wise LP/SOCP
obligations are polynomial and—empirically—dominate runtime.

\medskip
\noindent\textbf{Takeaway.}
Geometry (regions/boundaries/gradients) and optimization (extrema/Lipschitz)
reduce on JIT to small LP/SOCP problems on a local common refinement, with
\emph{anytime} bounds (DYN--1/3) and \emph{exactness on demand} (DYN--2).
This matches the FFN/CNN/GNN observations in the experiment report. \filecite{turn0file0}


\section{Verification, Equivariance, and Causality (VER-1/2/3, CAU-1/2)}
\label{app:vercau}

This appendix formalizes the property language and the \emph{product transducer}
construction (F.1), derives robustness certification procedures including
LP/SOCP subproblems and \emph{tight big-M} from GuardSet bounds (F.2), treats
equivariance for CNNs (effective domains under stride/padding) and GNNs
(F.3), and proves closure and anytime/exact computation of causal influence
$I_{\max}$ via branch-and-bound (F.4). All results are consistent with the
JIT--SWT semantics and guarantees in App.~\ref{app:jit}.

\subsection{Property language and product construction (VER-1)}
\label{app:ver:lang}

\paragraph{Atomic predicates.}
Let $F:\R^n\to\R^m$ be compiled to an SWT (static) and run under JIT semantics.
We consider universally quantified properties over a convex input set $D$:
\begin{align*}
\phi \;::=\;& x\in D \ \wedge\ \bigwedge_{k\in\mathcal K} \psi_k(x),
\\[-0.25em]
\psi_k(x) \;::=\;&
\underbrace{F_i(x)\ge \ell}_{\text{lower threshold}}
\ \big|\ 
\underbrace{F_i(x)\le u}_{\text{upper threshold}}
\ \big|\ 
\underbrace{F_y(x)-\max_{j\ne y}F_j(x)\ge \gamma}_{\text{classification margin}}
\\[-0.5em]
&\big|\ 
\underbrace{|F(x)-F'(x)|_\infty\le \varepsilon}_{\text{relational/equivalence}}
\ \big|\
\underbrace{a^\top F(x)\ge b}_{\text{linear output constraint}}
\end{align*}
where $F'$ is another SWT or a specification transducer (e.g., symmetry).

\paragraph{Violation objectives.}
Each atomic predicate is reduced to a CPWL \emph{objective} $g_k:\R^n\to\R$
for which ``$g_k(x)\ge 0$'' encodes satisfaction:
\[
\begin{aligned}
&F_i\ge \ell \ \leadsto\ g_k(x)=F_i(x)-\ell, \qquad
F_i\le u \ \leadsto\ g_k(x)=u-F_i(x),\\
&F_y-\max_{j\ne y}F_j\ge \gamma \ \leadsto\ g_k(x)=F_y(x)-\max_{j\ne y}F_j(x)-\gamma,\\
&|F-F'|_\infty\le \varepsilon \ \leadsto\ g_k(x)=\varepsilon-\max_{r}|F_r(x)-F_r'(x)|,\\
&a^\top F\ge b \ \leadsto\ g_k(x)=a^\top F(x)-b .
\end{aligned}
\]
Conjunction $\bigwedge_k \psi_k$ is handled by $g(x)=\min_k g_k(x)$, and we
finally check $\forall x\in D:\ g(x)\ge 0$. Since $\min$ is implementable as
$-\max$ of CPWL terms, $g$ remains CPWL.

\paragraph{Product transducer.}
Write $A_F=(Q_F,E_F,\mathsf G_F,\mathsf W_F)$ for (the relevant part of) the SWT
of $F$ and $A_g=(Q_g,E_g,\mathsf G_g,\mathsf W_g)$ for the transducer computing
$g$ from the network outputs (built only from $\Max/\Sum/\Scale/\Bias$ on CPWL
weights). Define the product
\[
A_{\otimes}=\big(Q_F\times Q_g,\ E_\otimes,\ \mathsf G_\otimes,\ \mathsf W_\otimes\big),
\]
with edges $((p,\alpha)\!\to\!(q,\beta))$ whenever $(p\!\to\!q)\in E_F$ and
$(\alpha\!\to\!\beta)\in E_g$, guard $\mathsf G_\otimes = \mathsf G_F\cap \mathsf G_g$
(guard-index union), and weight $\mathsf W_\otimes=\mathsf W_F\otimes \mathsf W_g$
(semiring $+$). As both components are CPWL-weighted and guards are polyhedral,
$A_{\otimes}$ is again a guarded CPWL transducer.

\begin{theorem}[VER--1: product correctness and decidability]
\label{thm:ver1}
Let $g$ be constructed from $\phi$ as above and $D$ be convex. Running JIT B\&B
(App.~\ref{app:jit}) on $A_{\otimes}$ over $D$ yields a tri-valued outcome:
\textsc{Proof} (a finite leaf cover $\{S_i\}$ with $\LB(g,S_i)\ge 0$),
\textsc{Counterexample} (a leaf $S$ with $\UB(g,S)<0$ and a witness $x^\star$),
or \textsc{Unknown} under budget---all \emph{sound}. Under finite refinement and
fair selection, the procedure terminates with \textsc{Proof} or
\textsc{Counterexample}.
\end{theorem}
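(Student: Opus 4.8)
The plan is to obtain VER--1 as a corollary of the JIT--decidability theorem DYN--6 (Thm.~\ref{thm:dyn6}) applied to the single CPWL objective $g$ over $D$, once we have checked that $g$ is genuinely CPWL and that the product transducer $A_\otimes$ is a legitimate guarded CPWL transducer whose truth semantics equals $g$ pointwise. First I would verify that $g\in\mathrm{CPWL}$: each network coordinate $F_i$ is CPWL by AF--1, and the five atomic objectives are built from the $F_i$ (and $F_r'$) by translation, scaling, finite sums, $\max_{j\ne y}(\cdot)$, and $|\cdot|=\max\{\cdot,-\cdot\}$, all of which preserve CPWL by Lemma~\ref{lem:basic-closure} and AF--5; hence each $g_k\in\mathrm{CPWL}$, and $g=\min_k g_k=-\max_k(-g_k)$ is CPWL as well. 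The encoding is correct by construction: $\forall x\in D:\ g(x)\ge 0$ holds iff every conjunct $\psi_k$ holds on $D$, i.e.\ iff $\phi$ holds on $D$.

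Next I would argue $A_\otimes(x)=g(x)$ for all $x$. The subtransducer $A_g$ assembles $g$ from the network outputs using only $\Sum/\Scale/\Bias/\Max$ on CPWL weights --- exactly the operations of $K_f$ --- so by the homomorphism AF--2 its $h_x$--evaluation reproduces the numeric value of $g$ as a function of those outputs. Composing $A_g$ with $A_F$ (substituting $A_F$'s output expressions into $A_g$'s leaves) is closure--under--composition at the transducer level: guards are intersected by index--union, so that $C(\mathsf G_\otimes)=C(\mathsf G_F)\cap C(\mathsf G_g)$; path weights accumulate by $\otimes(=+)$; and alternative guarded paths combine by $\oplus(=\max)$, keeping on--path sums distinct from across--path maxima exactly as in the ``parallel addition miniature'' of the SWT--1 proof. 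An induction on the acyclic structure (as in the proof of SWT--1), using AF--2 and AF--5, then gives $A_\otimes(x)=g(x)$; for the relational predicate this is a product over $A_F$, $A_{F'}$, and $A_g$ and the construction extends verbatim.

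With $g$ and $A_\otimes$ in hand, I would run the JIT B\&B driver (Alg.~\ref{alg:jitbb}) on $A_\otimes$ with objective $g$ from the initial leaf $S_0=D$. Soundness of each verdict is immediate from the anytime envelopes DYN--1 (Thm.~\ref{thm:dyn1}): on a \textsc{Proof} we have $\LB(g,S_i)\ge 0$ on a finite feasible cover $\{S_i\}$ of $D$, hence $g(x)\ge 0$ for every $x\in D$ and $\phi$ holds; on a \textsc{Counterexample} some feasible leaf $S$ has $\UB(g,S)<0$, and any feasible $x^\star\in C(S)\cap D$ (from the leaf's LP/SOCP feasibility certificate, or by minimizing the affine $g$ on $C(S)\cap D$ once $S$ is locally fully refined) satisfies $g(x^\star)\le\UB(g,S)<0$, a genuine violation; \textsc{Unknown} merely reports the max--gap leaf $S^\star$ with its residual bounds and stays sound because no covering lower--bound certificate and no negative upper bound has yet been established. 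Termination under the stated hypotheses is precisely DYN--6: only finitely many gate/comparator faces of $A_\otimes$ can be introduced, and a fair scheduler eventually selects and refines every feasible leaf until it is locally fully refined; then $g$ is a single affine map on each leaf, the decision reduces to finitely many LP/SOCP obligations over $D$ restricted to those leaves, and these terminate, so the driver returns \textsc{Proof} or \textsc{Counterexample}.

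The main obstacle I expect is making the ``product/substitution'' of $A_F$ with $A_g$ (and $A_{F'}$) fully rigorous: unlike a classical automaton product over a shared alphabet, $A_g$ \emph{consumes} $A_F$'s outputs, so the construction is really a guarded e--graph composition. One must check that (i) faces registered by either component are canonicalized once into the shared $\mathcal H$ and product edges carry the index--union of the two guard sets; (ii) a product path is feasible at $x$ iff both projections are, so no spurious paths arise and the guard exactly equals the intersection; and (iii) the weight bookkeeping never confuses on--path $\otimes$ with across--path $\oplus$. Once this composition lemma is stated --- a routine induction on acyclic structure given SWT--1, AF--2, and AF--5 --- everything else is a direct application of DYN--1 and DYN--6.
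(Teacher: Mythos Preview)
Your proposal is correct and follows essentially the same route as the paper: soundness of the three verdicts is reduced to the anytime envelopes DYN--1, and termination under finite refinement plus fairness is DYN--6, with a preliminary check that $g$ is CPWL and that the product $A_\otimes$ realizes $g$ pointwise. You supply considerably more detail than the paper's terse proof (in particular the careful discussion of why the ``product'' is really a guarded composition and the explicit verification that each $g_k$ and hence $g=\min_k g_k$ is CPWL), but the structure is the same.
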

\begin{proof}
Soundness is DYN--1: $\LB\le g\le \UB$ on every leaf. If all active leaves
satisfy $\LB\ge 0$ and cover $D$, then $g\ge 0$ on $D$. If some leaf has
$\UB<0$, solving $\min_{x\in C(S)\cap D} g(x)$ (LP/SOCP on refined leaves)
produces a witness $x^\star$. Termination under finite faces and fairness is
DYN--6; the product preserves CPWL/guarded structure, so the assumptions apply.
\end{proof}

\paragraph{Certificate format.}
A proof returns $\{(S_i,\LB(g,S_i))\}$ whose polyhedra cover $D$. A
counterexample returns $(S,x^\star,g(x^\star))$ with feasibility KKT/solver logs
(optionally) and the active guard indices on $S$.

\subsection{Robustness certification: LP/SOCP and tight big-\texorpdfstring{$M$}{M} (VER-2)}
\label{app:ver:robust}

We consider the local robustness property around $(x_0,y)$:
\[
\forall x\in D\cap B_p(x_0,\epsilon):\quad
g(x):=F_y(x)-\max_{j\ne y}F_j(x)\ \ge\ \gamma\quad(\gamma\ge 0).
\]

\paragraph{Method A: exact convex subproblems on refined leaves.}
Once leaves covering $D\cap B_p(x_0,\epsilon)$ are \emph{locally fully refined},
$g(x)=w_S^\top x+b_S$ on each $S$. Then the global minimum is
\[
\min_{x\in D\cap B_p(x_0,\epsilon)} g(x)=
\min_{S}\ \min_{x\in C(S)\cap D\cap B_p(x_0,\epsilon)} (w_S^\top x+b_S),
\]
where each inner problem is LP for polytope/$\ell_\infty/\ell_1$ and SOCP for
$\ell_2$ or polytope$\cap\ell_2$ (App.~\ref{app:geo:opt}). The minimum across
leaves is exact.

\begin{theorem}[VER--2A: robustness via leafwise convex programs]
\label{thm:ver2a}
The procedure above yields an \emph{anytime} lower bound
$\underline g_{\min}\le \min g$ that increases monotonically with refinement
and equals the exact minimum once the domain is covered by fully refined leaves.
Hence the property ``$g\ge \gamma$'' is \emph{decidable} (finite refinement) with
either a certificate $\min g\ge \gamma$ or a counterexample $x^\star$.
\end{theorem}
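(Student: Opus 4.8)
The plan is to derive the three assertions—anytime soundness, monotone tightening, and exactness on a fully refined cover—directly from the JIT guarantees already established (Thm.~\ref{thm:dyn1}, Thm.~\ref{thm:dyn2}, Thm.~\ref{thm:dyn3}, Lemma~\ref{lem:mono-tight}), and then obtain decidability by instantiating Thm.~\ref{thm:dyn6} on a shifted objective. First I would record that $g(x)=F_y(x)-\max_{j\ne y}F_j(x)$ is CPWL (by AF--1 and Lemma~\ref{lem:basic-closure}) and is computed by the product transducer of \S\ref{app:ver:lang}, so the entire JIT apparatus of App.~\ref{app:jit} applies verbatim to $g$ over the convex set $D\cap B_p(x_0,\epsilon)$; in particular the LB/UB oracle, refiners, and the B\&B driver (Alg.~\ref{alg:jitbb}) are available.

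For the anytime bound, define $\underline g_{\min}:=\min_{S\in\mathcal L}\LB(g,S)$, the minimum taken over the current feasible leaves $\mathcal L$ (infeasible leaves contribute $+\infty$ and are dropped). By the leaf-cover invariant (Def.~\ref{def:leaves}), every $x\in D\cap B_p(x_0,\epsilon)$ lies in some feasible $C(S)$, and by DYN--1 we have $g(x)\ge\LB(g,S)\ge\underline g_{\min}$ there; taking the infimum over $x$ gives $\underline g_{\min}\le\min g$. For monotonicity I would argue leaf-locally as in DYN--3: a refinement step either tightens a leaf's bound in place (commit), which can only raise $\LB(g,S)$, or splits $S$ into $S^\pm=S\cup\{\ell\}$, $S\cup\{\bar\ell\}$, in which case $\LB(g,S^\pm)\ge\LB(g,S)$ by Lemma~\ref{lem:mono-leaf} and, crucially, the closed-guard convention gives $C(S^+)\cup C(S^-)=C(S)$ so the refined family still covers the domain; hence $\underline g_{\min}$ cannot decrease. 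This is the min-aggregated counterpart of Lemma~\ref{lem:mono-tight}.

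For exactness, suppose $D\cap B_p(x_0,\epsilon)$ is covered by locally fully refined leaves $\{S\}$. By DYN--2, $g\equiv w_S^\top x+b_S$ on each $C(S)$, and the inner subproblem $\min_{x\in C(S)\cap D\cap B_p(x_0,\epsilon)}(w_S^\top x+b_S)$ is an affine minimization over a polytope ($\ell_\infty$/$\ell_1$) or a polytope$\cap\ell_2$ body, solved \emph{exactly} by LP or SOCP (Thm.~\ref{thm:aff-max}, with minimization symmetric to the maximization stated there); thus for such $S$, $\LB(g,S)$ equals the true leaf-minimum. Since the domain is the union of the pieces $C(S)\cap D\cap B_p(x_0,\epsilon)$, we get $\min g=\min_S\LB(g,S)=\underline g_{\min}$. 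Decidability then follows by running Alg.~\ref{alg:jitbb} on $g':=g-\gamma$ (an affine shift, so $g'$ is still CPWL and still produced by a guarded transducer): under the DYN--6 hypotheses (finitely many gate/comparator faces, fair selection) the process halts with either a finite leaf cover on which $\LB(g',S)\ge0$—a certificate $\min g\ge\gamma$—or a leaf with $\UB(g',S)<0$, whose exact LP/SOCP minimizer $x^\star$ satisfies $g(x^\star)<\gamma$ and is returned as a counterexample.

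The main obstacle I anticipate is the bookkeeping for monotonicity of the \emph{min-aggregated} bound: unlike the pointwise envelopes of DYN--1 (a sup over leaves containing a fixed $x$), here the aggregation is a single global min over an evolving leaf set, so one must verify both that the cover is preserved across splits (closed guards giving $C(S^+)\cup C(S^-)=C(S)$) and that feasibility pruning—dropping leaves with empty $C(S)\cap D\cap B_p(x_0,\epsilon)$—never discards a point of the domain. Once those invariants are pinned down, the remaining steps are direct appeals to DYN--1/2/3/6 and Thm.~\ref{thm:aff-max}.
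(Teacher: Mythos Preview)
Your proposal is correct and follows essentially the same approach as the paper, which simply cites DYN--1 for the anytime/monotonicity part and GEO--2 together with DYN--2 for exactness on refined leaves. You supply considerably more detail than the paper's two-line sketch---in particular, your explicit treatment of the min-aggregated bound $\underline g_{\min}=\min_{S\in\mathcal L}\LB(g,S)$ and the cover-preservation argument under splits fills in bookkeeping the paper leaves implicit---but the underlying reductions to DYN--1/2/3/6 and the leafwise LP/SOCP subproblems are the same.
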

\begin{proof}
DYN--1 gives $\LB$ monotonicity; exactness on refined leaves follows from
GEO--2 and DYN--2.
\end{proof}

\paragraph{Method B: MIP with \emph{tight} big-$M$ from GuardSet bounds.}
On a leaf $S$, some gates/comparators may remain unresolved; introduce binary
variables \(\delta\in\{0,1\}\) to encode branch selection and use leaf-specific
bounds to build \emph{tight} big-$M$ constraints.

\smallskip
\noindent\textbf{ReLU.} Let $z=w^\top x+b$ with leaf-wise interval
$L\le z\le U$ obtained from $[\LB,\UB]$. With $y=\max\{0,z\}$:
\[
\begin{aligned}
& y\ \ge\ z,\qquad y\ \ge\ 0,\\
& y\ \le\ z - L(1-\delta),\qquad y\ \le\ U\,\delta,\qquad \delta\in\{0,1\}.
\end{aligned}
\]
If $U\le 0$ (resp.\ $L\ge 0$), $\delta$ can be fixed to $0$ (resp.\ $1$) and
constraints reduce to the affine branch.

\smallskip
\noindent\textbf{Pointwise Max.} For $y=\max_i e_i(x)$ with leaf-wise bounds
$L_i\le e_i(x)\le U_i$, introduce $\delta_i\in\{0,1\}$ with
$\sum_i \delta_i=1$ and
\[
y\ \ge\ e_i(x)\ \ \forall i,\qquad
y\ \le\ e_i(x)+ (U_{\max,-i})\,(1-\delta_i)\ \ \forall i,
\]
where $U_{\max,-i}=\max_{j\ne i}U_j-\!L_i$ (tightest valid $M$ on $S$).

\begin{lemma}[Tightness on a leaf]
\label{lem:tightM}
The big-$M$ values above are \emph{minimal valid} on $C(S)$: decreasing any
$M$ violates feasibility for some $x\in C(S)$. Consequently, the linear
relaxations are the strongest among all big-$M$ encodings that use only
interval information of the involved expressions on $S$.
\end{lemma}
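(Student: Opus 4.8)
The plan is to prove the lemma in two stages — \emph{validity} (the mixed-integer system with the stated $M$'s exactly represents the gate on $C(S)$) and \emph{minimality} (any strictly smaller valid $M$, computed from the same interval data, fails) — treating ReLU and Max in parallel. Fix a leaf $S$. For ReLU let $z=w^\top x+b$ be the pre-activation and $L=\LB(z,S)$, $U=\UB(z,S)$; for Max let $L_i=\LB(e_i,S)$, $U_i=\UB(e_i,S)$. Since the affine subproblems are solved exactly (Lemma~\ref{lem:lbub-sound}), these endpoints are attained at vertices of the (bounded) polyhedron $C(S)$. I may assume the \emph{nontrivial} regime: $L<0<U$ for ReLU, and for Max that every candidate can win and can lose for some $x\in C(S)$ (an interval-dominated candidate is first removed by dominance pruning, Prop.~\ref{prop:dom-prune}, and the corresponding big-$M$ never appears).

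\emph{Step 1 (validity).} For ReLU, substitute $\delta=1$: the four constraints collapse to $z\le y\le z$ and $0\le y\le U$, i.e.\ $y=z$, feasible iff $z\ge 0$; substitute $\delta=0$: they collapse to $0\le y\le 0$ together with $z\le y\le z-L$, i.e.\ $y=0$, feasible iff $z\le 0$ (using $z\ge L$). Hence at each $x\in C(S)$ exactly the branch matching $\sign z(x)$ is available and it forces $y=\max\{0,z(x)\}$, and conversely every integer-feasible triple has this form. For Max, given $x$ choose $i^\star\in\argmax_i e_i(x)$, set $\delta_{i^\star}=1$ (others $0$) and $y=e_{i^\star}(x)$: then $y\ge e_i(x)$ because $y$ is the maximum, $y\le e_{i^\star}(x)$ is tight, and for $j\ne i^\star$ the chain $y=e_{i^\star}(x)\le U_{i^\star}\le\max_{k\ne j}U_k\le e_j(x)+U_{\max,-j}$ (the last step from $e_j(x)\ge L_j$) shows the remaining constraints hold. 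Conversely, if $\delta_{i^\star}=1$ then $y\le e_{i^\star}(x)$ and $y\ge e_{i^\star}(x)$ give $y=e_{i^\star}(x)$, and $y\ge e_j(x)$ for all $j$ forces $e_{i^\star}(x)=\max_j e_j(x)$. So the system is exact on $C(S)$.

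\emph{Step 2 (minimality, ReLU).} Replace the coefficient $U$ in $y\le U\delta$ by any $U'<U$ and pick $x^\star\in C(S)$ with $z(x^\star)=U$: the branch $\delta=1$ forces $y=z(x^\star)=U$, contradicting $y\le U'$, while the branch $\delta=0$ needs $y\ge z(x^\star)=U>0$ yet $y\le U\delta=0$; so $x^\star$ has no feasible completion at all and the shrunken system no longer represents ReLU. The coefficient $-L$ in $y\le z-L(1-\delta)$ is handled symmetrically with a point where $z(x^\star)=L<0$: the $\delta=0$ branch then needs $0\le y\le L+M'<0$ and the $\delta=1$ branch needs $0\le y\le L<0$. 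Thus each big-$M$ is the unique minimal valid value, and since the same reasoning instantiates on any polyhedron whose $z$-range is $[L,U]$, no interval-only rule can do better.

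\emph{Step 3 (minimality, Max, and the strength claim).} For $U_{\max,-i}=\max_{j\ne i}U_j-L_i$, pick $j^\star\in\argmax_{j\ne i}U_j$. The ``strongest among interval-only encodings'' clause follows by exhibiting one admissible instance: take $C(S)=[0,1]$ in one dimension, let $e_i$ rise linearly from $L_i$ to $U_i$ and $e_{j^\star}$ fall linearly from $U_{j^\star}$ to $L_{j^\star}$, and pin each other candidate to fall from $U_k$ to $L_k$ (so ranges are the prescribed intervals and no candidate exceeds $U_{j^\star}$ at $x=0$). At $x=0$ candidate $j^\star$ wins, $\delta_i=0$, and validity demands $U_{j^\star}=y\le e_i(0)+M_i=L_i+M_i$, i.e.\ $M_i\ge\max_{j\ne i}U_j-L_i$; the stated value attains equality, hence is pointwise minimal over interval-only rules, and the same instantiation recovers the ReLU bounds of Step 2. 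The one place pure interval data does not suffice is the \emph{literal}-on-$C(S)$ minimality for Max: it requires a single point of $C(S)$ at which $e_{j^\star}$ attains $\max_{j\ne i}U_j$ and $e_i$ attains $L_i$ simultaneously (with $j^\star$ the overall winner there), and I expect establishing this co-attainment — or arguing that it is the correct auxiliary hypothesis (e.g.\ certified by an LP) — to be the main obstacle; under it, Step 2's direct argument transfers to Max verbatim. Assembling Steps 1--3 gives both assertions of the lemma.
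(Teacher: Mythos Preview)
Your approach matches the paper's: both argue minimality by checking that the stated $M$ values are forced at the extreme configurations ($z=U$ with $\delta=1$, $z=L$ with $\delta=0$ for ReLU; $e_j=U_j$ with $e_i=L_i$ for Max). Your Step~1 (validity) and Step~2 (ReLU minimality) are more detailed than the paper's two-line argument but cover exactly the same ground.

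Where you diverge is Step~3, and here you are actually \emph{more} careful than the paper. The paper's proof of the Max case simply asserts that ``the $i$-th upper constraint must accommodate any $j\ne i$ with $e_j=U_j$ while $e_i=L_i$'' --- but, as you correctly flag, there is no guarantee that a single $x\in C(S)$ realizes both extremes simultaneously (take $C(S)=[0,1]$ with $e_i(x)=e_{j}(x)=x$). So the literal ``minimal valid on $C(S)$'' claim for Max does require your co-attainment hypothesis, and the paper's proof shares the same gap you identified without acknowledging it. Your explicit one-dimensional construction, on the other hand, does correctly establish the second (and operationally important) claim --- that no interval-only rule can produce a smaller valid $M$ --- which is what the paper uses downstream. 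In short: your concern is legitimate, your construction resolves the part that matters, and the residual gap you name is present in the paper's own proof as well.
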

\begin{proof}
For ReLU, $y\le z-L(1-\delta)$ must hold for $z=L$ when $\delta=0$; any smaller
$M$ cuts off the true point $(z=L,y=0)$. Similarly $y\le U\delta$ must hold at
$z=U$ with $\delta=1$. For Max, the $i$-th upper constraint must accommodate any
$j\ne i$ with $e_j=U_j$ while $e_i=L_i$; hence $M\ge U_j-L_i$, and the maximum
over $j\ne i$ is necessary.
\end{proof}

\begin{theorem}[VER--2B: correctness of big-$M$ robustness]
\label{thm:ver2b}
On any leaf $S$, the MIP built from the constraints above is \emph{exact} for the
encoded subgraph; if all leaves covering $D\cap B_p(x_0,\epsilon)$ are included
(and comparators modeled where needed), the resulting global MIP decides robustness
with a certificate (dual bound $\ge\gamma$) or a counterexample. Using the linear
relaxation yields a sound lower bound on the margin (no false proofs).
\end{theorem}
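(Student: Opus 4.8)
The statement splits into three claims: (a) on a single leaf $S$ the mixed-integer program exactly captures the CPWL subgraph restricted to $C(S)\cap D\cap B_p(x_0,\epsilon)$; (b) gluing these programs over a family of leaves that covers $D\cap B_p(x_0,\epsilon)$ yields a global program whose optimal value equals $\min_{x\in D\cap B_p(x_0,\epsilon)}g(x)$, so comparing it with $\gamma$ decides robustness; and (c) relaxing the binaries can only shrink the objective, so a relaxed bound $\ge\gamma$ is never a false proof. The plan is to establish (a) node-by-node and then by induction on the topological order of the DAG, deduce (b) by partitioning the feasible set along the leaf cover, and obtain (c) from the elementary fact that dropping integrality enlarges the feasible region.

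For (a) I would argue one encoded node at a time. Take a ReLU node with pre-activation $z(x)=w^\top x+b$ and let $[L,U]$ be an interval returned by the LB/UB oracle, sound over the set the program optimizes on (such an interval exists by Lemma~\ref{lem:lbub-sound}). When $\delta=0$, the constraints $y\ge z$, $y\ge 0$, $y\le U\delta=0$ force $y=0$ and $z\le 0$, while $y\le z-L(1-\delta)=z-L$ is automatic because $z\ge L$; when $\delta=1$, $y\ge z$ and $y\le z-L(1-\delta)=z$ force $y=z$, $y\ge 0$ forces $z\ge 0$, and $y\le U\delta=U$ is automatic because $z\le U$. Hence the $(x,y)$-projection is exactly $\{(x,y):y=\max\{0,z(x)\}\}$, with $z=0$ admissible in both cases, matching the closed-guard convention. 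For a pointwise $\mathrm{Max}$ node the same bookkeeping with $\sum_i\delta_i=1$ shows that $\delta_i=1$ forces $y=e_i(x)$ and $y\ge e_j(x)$ for all $j$ (so $i$ attains the max), while for $j\ne i$ the upper constraint $y\le e_i(x)+U_{\max,-i}(1-\delta_i)$ is slack, since $e_j(x)\le U_j\le\max_{m\ne i}U_m$ and $e_i(x)\ge L_i$ give $e_j(x)\le e_i(x)+U_{\max,-i}$ by the definition of $U_{\max,-i}$. Resolved nodes ($U\le 0$ or $L\ge 0$, or one candidate dominating) collapse to their affine branch with $\delta$ fixed. To chain them I would feed each encoded node's output variable into the downstream affine combinations and encoded nodes; since $C(S)\cap B_p(x_0,\epsilon)$ is bounded the big-$M$ constants are finite (and minimal valid by Lemma~\ref{lem:tightM}), and induction on DAG depth shows the composite program's feasible set projects exactly onto $\{(x,g(x)):x\in C(S)\cap D\cap B_p(x_0,\epsilon)\}$, which is the exactness claim.

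For (b) I would use that the leaves cover $D\cap B_p(x_0,\epsilon)$, so $\min_{x\in D\cap B_p(x_0,\epsilon)}g(x)=\min_{S}\ \min_{x\in C(S)\cap D\cap B_p(x_0,\epsilon)}g(x)$, and by (a) each inner minimum is the optimal value of the per-leaf program --- a MILP when $D$ is a polytope or $p\in\{1,\infty\}$, a mixed-integer SOCP when $p=2$ --- both solvable to optimality in finitely many steps, so the comparison with $\gamma$ is decidable. A global value $\ge\gamma$ certifies robustness (the per-leaf optimal values, or their dual bounds, all meet $\gamma$); a value $<\gamma$ exhibits a witness $x^\star$ attaining it, i.e.\ a counterexample; on leaves that are already fully refined this recovers the exact-minimum statement of VER--2A (Thm.~\ref{thm:ver2a}). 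For (c) I would observe that deleting the integrality constraints enlarges each per-leaf feasible set, so its continuous optimum lower-bounds that leaf's true minimum, and the minimum of these lower bounds over leaves lower-bounds $\min g$; therefore a relaxation value $\ge\gamma$ implies $\min g\ge\gamma$. The relaxation may be loose and miss $\gamma$ even when the property holds, but it never certifies a false property --- ``no false proofs''.

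\textbf{Main obstacle.} The hard part is the soundness precondition on the interval data: the $L,U$ (and $L_i,U_i$) in the big-$M$ constraints must bound the relevant expressions over the \emph{same} set the program optimizes on, namely $C(S)\cap D\cap B_p(x_0,\epsilon)$, and not merely over a tighter cell; otherwise the encoding could cut off genuinely feasible points and exactness in (a) would fail. The clean fix is to fix the convention that the LB/UB oracle is queried with the domain and ball constraints adjoined to the GuardSet, then invoke Lemma~\ref{lem:lbub-sound}. A secondary, purely technical wrinkle is writing the DAG-composition induction when a node is only partially resolved on $S$ (some inputs already affine, others still variable); substituting the affine inputs reduces this to the per-node argument above, so it is routine once spelled out.
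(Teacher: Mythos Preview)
Your proposal is correct and follows the same three-part structure as the paper's (very terse) proof: per-leaf exactness from the standard big-$M$ encodings plus Lemma~\ref{lem:tightM}, global correctness from the leaf cover of $D\cap B_p(x_0,\epsilon)$, and relaxation soundness from enlargement of the feasible set. One small slip to fix in your Max-node slack check: when $\delta_i=1$ you must verify the constraints indexed by $j\ne i$, namely $y\le e_j(x)+U_{\max,-j}$, are satisfied by $y=e_i(x)$, so the inequality you need is $e_i(x)\le e_j(x)+U_{\max,-j}$ (which follows from $e_j(x)\ge L_j$ and $e_i(x)\le U_i\le\max_{m\ne j}U_m$), with $i$ and $j$ swapped from what you wrote.
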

\begin{proof}
Exactness per leaf with binaries follows from standard mixed-integer encodings and
Lemma~\ref{lem:tightM}. Global correctness is a disjoint union of leaf models
(over closed guards), which covers $D\cap B_p$; soundness of relaxations follows
from over-approximation of the feasible set.
\end{proof}

\subsection{Equivariance checking and effective domains (VER-3)}
\label{app:ver:eqv}

\paragraph{Abstract statement.}
Given an input transform $\mathcal T_g:\R^n\!\to\!\R^n$ and an output transform
$\mathcal T'_g:\R^m\!\to\!\R^m$, we check
\[
H(x)\ :=\ F(\mathcal T_g x)\ -\ \mathcal T'_g(F(x))\ \equiv\ 0\quad\text{on }D.
\]
Construct an SWT for $H$ by composing $A_F$ with the transducers of
$\mathcal T_g$ and $\mathcal T'_g$; then run JIT B\&B on $|H|$ with a tolerance
$\varepsilon$ (if approximate equivariance is desired).

\paragraph{CNN translations: effective domain and admissible shifts.}
Consider a planar CNN with $L$ convolutional (or pooling) layers, kernel sizes
$k_\ell$ (odd), paddings $p_\ell\ge 0$, and strides $s_\ell\in\N$; activations
are pointwise (ReLU/Leaky/Abs/MaxPool). Let the set of admissible integer shifts be
\[
\mathcal S_{\mathrm{adm}}=\big\{\,\Delta\in\Z^2:\ \Delta\ \text{is a multiple of }\
s_\mathrm{tot}=\textstyle\prod_{\ell=1}^L s_\ell \,\big\},
\]
so that downsampling aligns on the output grid. For a finite shift set
$\mathcal S\subseteq \mathcal S_{\mathrm{adm}}$, define the \emph{effective domain}
$D_{\mathrm{eff}}$ by cropping the input with a safety margin
\[
M\ :=\ \max_{\Delta\in\mathcal S}\|\Delta\|_\infty\ +\ \sum_{\ell=1}^L 
\max\!\Big(0,\ \big\lfloor\tfrac{k_\ell-1}{2}\big\rfloor - p_\ell\Big),
\]
i.e., the set of inputs whose every receptive field of radius
$\lfloor(k_\ell-1)/2\rfloor$ stays strictly inside the unpadded region
for all shifts in $\mathcal S$.

\begin{theorem}[Sufficient condition for CNN translation equivariance]
\label{thm:cnn-eqv}
For any $\Delta\in\mathcal S$ and any $x\in D_{\mathrm{eff}}$, a CNN as above
satisfies
\(
F(\mathcal T_\Delta x)=\mathcal T'_\Delta(F(x))
\)
exactly, where $\mathcal T_\Delta$ is the input shift and $\mathcal T'_\Delta$
is the corresponding shift on the final feature grid. If $\varepsilon\ge 0$,
then $\|F(\mathcal T_\Delta x)-\mathcal T'_\Delta(F(x))\|_\infty\le \varepsilon$
holds by running VER--1 with the objective $\varepsilon-|H|$.
\end{theorem}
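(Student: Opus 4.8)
\emph{Plan.} The strategy is to reduce exact equivariance to the classical fact that convolution commutes with integer translation on an \emph{unbounded} grid, and then to argue that on $D_{\mathrm{eff}}$ the zero‑padded finite network computes exactly the same values as its unbounded‑grid counterpart. For an input $x$ supported inside the crop defining $D_{\mathrm{eff}}$, let $\tilde x$ denote its extension by zeros to all of $\Z^2$, and let $\tilde F$ denote the ``padding‑free'' version of the CNN acting on finitely supported arrays over $\Z^2$ (no padding is needed because the domain is already all of $\Z^2$). The first target is the \textbf{equivariance lemma for} $\tilde F$: for every $\Delta$ that is a multiple of $s_{\mathrm{tot}}=\prod_{\ell=1}^L s_\ell$, one has $\tilde F(\mathcal T_\Delta \tilde x)=\mathcal T'_\Delta(\tilde F(\tilde x))$, where $\mathcal T'_\Delta$ is translation by $\Delta/s_{\mathrm{tot}}$ on the output grid.

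\emph{Step 1 (layerwise covariance).} I would prove the lemma by induction over layers. A single convolution with kernel $K$ and stride $s$ satisfies, on finitely supported arrays, $\mathrm{Conv}_{K,s}(\mathcal T_{s\delta}u)=\mathcal T_{\delta}(\mathrm{Conv}_{K,s}u)$ for any $\delta\in\Z^2$: expanding the convolution sum and re‑indexing shows that the output at position $j$ after shifting the input by $s\delta$ equals the original output at $j-\delta$. Pointwise activations ($\mathrm{ReLU}/\mathrm{LReLU}/\mathrm{Abs}$) commute with $\mathcal T_\delta$ trivially since they act coordinatewise; MaxPool with window $k$ and stride $s$ obeys the same strided covariance $\mathrm{MaxPool}_{k,s}(\mathcal T_{s\delta}u)=\mathcal T_\delta(\mathrm{MaxPool}_{k,s}u)$ by re‑indexing a pointwise maximum. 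Composing layers $1$ through $L$, the input shift $\Delta$ becomes $\Delta/s_1$ after layer $1$, then $\Delta/(s_1 s_2)$, and so on; each intermediate shift is an \emph{integer} precisely because $\Delta$ is a multiple of $s_{\mathrm{tot}}$, which is exactly where $\mathcal S\subseteq\mathcal S_{\mathrm{adm}}$ is used. After the last layer the accumulated shift is $\Delta/s_{\mathrm{tot}}$, giving the lemma.

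\emph{Step 2 (finite equals infinite on $D_{\mathrm{eff}}$, then conclude).} Next I would show that under zero padding each layer of $F$ agrees with the corresponding layer of $\tilde F$ on every output position whose backward receptive field lies inside the padded canvas: at layer $\ell$ the convolution reads a window of radius $r_\ell=\lfloor(k_\ell-1)/2\rfloor$, so if $p_\ell\ge r_\ell$ the zero padding supplies exactly the zeros that $\tilde F$ would see, and if $p_\ell<r_\ell$ the agreement region shrinks inward by $r_\ell-p_\ell$ on each side. Propagating through all $L$ layers, the cumulative inward shrinkage is $\sum_{\ell=1}^L \max(0,r_\ell-p_\ell)$, and allowing the shift to push content toward the border costs an extra $\max_{\Delta\in\mathcal S}\|\Delta\|_\infty$; this is precisely the margin $M$ in the statement. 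Hence for $x\in D_{\mathrm{eff}}$ both $F(x)$ and $F(\mathcal T_\Delta x)$ are restrictions to the retained output window of $\tilde F(\tilde x)$ and $\tilde F(\widetilde{\mathcal T_\Delta x})=\tilde F(\mathcal T_\Delta \tilde x)$ respectively, and Step 1 yields $F(\mathcal T_\Delta x)=\mathcal T'_\Delta(F(x))$ exactly. For the $\varepsilon$‑clause: each layer of $F$ is affine‑or‑pointwise‑gate and $\mathcal T_\Delta,\mathcal T'_\Delta$ are linear, so $H(x)=F(\mathcal T_\Delta x)-\mathcal T'_\Delta(F(x))$ is CPWL (closure under composition, AF‑5); thus $g(x)=\varepsilon-\max_r|H_r(x)|$ is CPWL and ``$g\ge 0$ on $D_{\mathrm{eff}}$'' is an instance of the property language of \S\ref{app:ver:lang}, so JIT B\&B on $g$ decides $\varepsilon$‑equivariance with certificates or a witness by Theorem~\ref{thm:ver1} and Theorem~\ref{thm:dyn6}. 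Taking $\varepsilon=0$ recovers the exact claim with an explicit certificate on a finite leaf cover.

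\emph{Main obstacle.} The delicate part is the spatial index bookkeeping in Steps 1–2: tracking how a zero‑padded strided convolution maps output coordinates back to input coordinates without off‑by‑one errors, verifying that the intermediate shifts $\Delta/(s_1\cdots s_\ell)$ are genuinely integral (so pixel alignment is exact rather than merely approximate), and checking that the margin $M$ is \emph{tight}—that ``strictly inside the unpadded region for all shifts'' in the definition of $D_{\mathrm{eff}}$ matches $\sum_{\ell}\max(0,r_\ell-p_\ell)+\max_{\Delta}\|\Delta\|_\infty$ with the correct strict/non‑strict inequalities, and that the analysis degrades gracefully if some $p_\ell$ is unusually large. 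Everything else—activation covariance, CPWL closure, and the reduction to VER‑1—is routine given results already established in the paper.
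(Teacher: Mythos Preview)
Your proposal is correct and follows essentially the same route as the paper's proof: layerwise induction using (i) commutativity of (strided) convolution with translation away from the padded boundary, (ii) trivial commutativity of pointwise activations, and (iii) the stride‑alignment requirement $\Delta\in s_{\mathrm{tot}}\Z^2$, with the margin $M$ ensuring all relevant receptive fields stay inside the unpadded region. Your factoring through the zero‑extended $\Z^2$ version $\tilde F$ is a clean way to package the ``receptive field does not cross the boundary'' argument the paper states directly, and your handling of the $\varepsilon$‑clause via VER--1 matches the paper's one‑line appeal to the same reduction.
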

\begin{proof}
Convolution commutes with discrete translation on positions whose receptive
fields do not cross the padded boundary; the crop margin $M$ enforces this for
all layers. Pointwise activations commute with translation. For strides, restricting
to shifts multiple of $s_\mathrm{tot}$ preserves grid alignment. Composition
yields the claim by induction over layers.
\end{proof}

\paragraph{GNN permutation equivariance.}
Let $P$ be a node-permutation matrix for a fixed graph structure, and let
$\mathcal T_P(X)=PX$ on features with the corresponding output permutation
$\mathcal T'_P(Y)=PY$. For sum/mean aggregation and pointwise MLP updates,
\[
F(PX)=P\,F(X)\quad \text{holds on all }X,
\]
since permutation commutes with degree-normalized adjacency in these cases.
Thus VER--1 checks exact or $\varepsilon$-equivariance by the same product
construction with $H(X)=F(PX)-PF(X)$.

\subsection{Causal interventions and maximal influence (CAU-1/2)}
\label{app:cau}

\paragraph{Intervention model and closure.}
Let the SWT DAG contain a sub-expression $E$ (e-graph node). An intervention
$C$ replaces $E$ by a CPWL policy $P_c$, producing a new network $F_C$ with the
same graph shape. Because CPWL is closed under $\Max/\Sum/\Scale/\Bias$ and
affine composition, both $F_C$ and the pointwise difference
$g_C(x)=F(x)-F_C(x)$ are CPWL.

\begin{theorem}[CAU--1: closure under CPWL interventions]
\label{thm:cau1}
For any intervention $C$ that replaces a sub-expression by a CPWL policy,
$F_C$ and $g_C=F-F_C$ are CPWL on $\R^n$ and admit guarded SWT/JIT compilations
with the same guard library extended by faces introduced by $P_c$.
\end{theorem}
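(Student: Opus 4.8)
The plan is to prove all three assertions by a single structural induction over the part of the computation DAG lying strictly above the intervened node, invoking only the closure facts already established in App.~\ref{app:af}. First I would note that the intervention is a purely \emph{local} e-graph rewrite: the node $E$ — which denotes a CPWL map (by hypothesis on that subgraph, or by AF--1 if $E$ is itself a network fragment) — is replaced by the CPWL policy $P_c$, while every other node of the DAG keeps its structure. Then $F_C$ is the value of the root under this modified denotation. I would run the induction in the topological order of the DAG, with base case ``the rewritten node denotes $P_c\in\mathrm{CPWL}$'' and with each inductive step applying exactly one operation from the component set — affine/conv/mean-pool/residual-sum, or a pointwise gate ReLU/LReLU/PReLU/Abs/Max — each of which sends CPWL inputs to CPWL outputs by Lemma~\ref{lem:basic-closure} (finite sums, finite maxima, affine pre/post-composition) together with AF--1/AF--5. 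Hence $F_C$ is CPWL on $\R^n$, componentwise for vector outputs.

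For $g_C=F-F_C$ the argument is immediate: $g_C(x)=F(x)+(-1)\cdot F_C(x)$, i.e.\ $g_C=\Sum\!\big(F,\ \Scale(-1,F_C)\big)$ in the e-graph syntax of \S3. Both $F$ (AF--1) and $F_C$ (previous paragraph) are CPWL, $\Scale$ by $-1$ preserves CPWL, and a finite $\Sum$ of CPWL functions is CPWL by Lemma~\ref{lem:basic-closure}; therefore $g_C$ is CPWL. The vector case is handled componentwise and then aggregated by $\max_r|\cdot|=\max_r\max\{\cdot,-\cdot\}$, again CPWL by the same lemma.

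To produce the guarded compilations I would first compile $P_c$ itself into an acyclic guarded transducer $A_{P_c}$ via the rules of \S2.3 / App.~\ref{app:swt:rules} (applicable because $P_c$ is CPWL and, if given symbolically, is built from $\Aff/\Sum/\Max/\Scale/\Bias$), registering its threshold/comparator faces into the library; then splice $A_{P_c}$ into $A_F$ at the node formerly occupied by $E$, reusing hash-consing and the shared guard library. Everything strictly \emph{upstream} of the cut (between $q_{\mathrm{in}}$ and the intervention site) is unchanged, so its guards remain valid; everything strictly \emph{downstream} is recompiled by the same gate/Max rules, reusing an existing face whenever the relevant affine form is unchanged and registering a fresh face otherwise. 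All such fresh faces are consequences of swapping in $P_c$, so the library for $F_C$ is $\mathcal H_0$ together with the faces introduced by $P_c$ (and their downstream propagation); in the JIT reading these are inserted on visited leaves only, so the statement holds verbatim. Acyclicity is preserved because the rewrite is local and $A_{P_c}$ is acyclic (SWT--2). For $g_C$ I would take the product of $A_F$ and $A_{F_C}$ with a $\Scale(-1,\cdot)$ on the second factor and a $\Sum$ combiner at the outputs, exactly as in the difference-region construction of App.~\ref{app:swt:cr-diff}; its guard library is the union, i.e.\ again $\mathcal H_0$ extended only by the $P_c$-induced faces.

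The one delicate point is the phrase ``the same guard library extended by faces introduced by $P_c$'': comparator faces \emph{below} the cut are defined by affine forms that the intervention perturbs, so ``the same'' must be read as ``the unaffected, upstream portion''. Making this rigorous needs a short bookkeeping argument that (i) the input-to-cut subgraph — hence its faces — is untouched, and (ii) every genuinely new downstream face is traceable to $P_c$ through the compilation rules, so no face ``appears from nowhere''. In the static setting this is a brief lemma on which faces a gate/Max emits; in the JIT setting it is automatic, since faces are added on demand and soundness of the envelopes (DYN--1) does not depend on which faces are present. I expect this bookkeeping to be the main — though entirely routine — obstacle; the CPWL-closure steps follow directly from App.~\ref{app:af}.
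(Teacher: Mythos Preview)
Your proposal is correct and follows essentially the same route as the paper: structural induction using the CPWL closure facts from App.~\ref{app:af} together with the guarded compilation rules of App.~\ref{app:swt}, plus the observation that $\mathrm{CPWL}-\mathrm{CPWL}=\mathrm{CPWL}$ for $g_C$. Your version is considerably more detailed than the paper's two-line proof and, in particular, your ``delicate point'' about downstream comparator faces is a genuine subtlety that the paper's terse argument simply elides---so your bookkeeping discussion is a welcome addition rather than a deviation.
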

\begin{proof}
By structural induction using the CPWL closure from App.~\ref{app:af} and the guarded
composition rules of App.~\ref{app:swt}: replacing one CPWL component preserves CPWL of the
whole; subtraction preserves CPWL as $\text{CPWL}-\text{CPWL}=\text{CPWL}$.
\end{proof}

\paragraph{Maximal causal influence.}
Given a convex domain $D$ and a vector norm $\|\cdot\|_r$, define
\[
I_{\max}(C;D)\ :=\ \sup_{x\in D}\ \|g_C(x)\|_r.
\]
Run B\&B on the scalar objective $h(x)=\|g_C(x)\|_r$ with anytime envelopes.

\begin{theorem}[CAU--2: anytime bounds and exactness on refinement]
\label{thm:cau2}
Let $\underline I=\max_{S\subseteq D}\LB(h,S)$ and
$\overline I=\max_{S\subseteq D}\UB(h,S)$ be the global bounds maintained by
B\&B. Then $\underline I\le I_{\max}\le \overline I$ and
$\underline I \nearrow,\ \overline I \searrow$ with refinement.
If leaves covering $D$ are locally fully refined and $r=\infty$, then
\[
I_{\max}=\max_{S}\ \max_{i}\ \max\big\{
\max_{x\in C(S)\cap D} (a_{S,i}^\top x+b_{S,i}),\
\max_{x\in C(S)\cap D} (-a_{S,i}^\top x-b_{S,i})
\big\},
\]
where $g_C(x)=(A_S x+b_S)$ on leaf $S$ and the inner maxima are LPs,
yielding an \emph{exact} value; for $r\in\{1,2\}$ one obtains anytime bounds by
leafwise convex majorants and induced-norm arguments.
\end{theorem}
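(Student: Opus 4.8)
The plan is to treat the objective $h(x)=\|g_C(x)\|_r$ as just another node in the lazy e-graph for $r\in\{1,\infty\}$ (so that DYN--1/2/3 and Thm.~\ref{thm:aff-max} apply verbatim), and as a convex function controlled leafwise by induced-norm majorants for $r=2$. First I would record the structure of the objective. By CAU--1 (Thm.~\ref{thm:cau1}), $g_C=F-F_C$ is CPWL and admits a guarded SWT/JIT compilation over the extended guard library. For $r=\infty$ I write $h(x)=\max_i\max\{[g_C]_i(x),\,-[g_C]_i(x)\}$, a finite \texttt{Max} of \texttt{Scale}$(\pm 1,\cdot)$ applied to the (CPWL) output components; for $r=1$, $h(x)=\sum_i\max\{[g_C]_i(x),\,-[g_C]_i(x)\}$, a finite \texttt{Sum} of such maxes. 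In both cases $h$ is itself an \texttt{Expr}, so the LB/UB oracle, the three refiners, and DYN--1/2/3 apply to $h$ directly; $r=2$ is handled separately at the end.

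Next I would discharge the soundness and monotonicity claims. Since the B\&B starts from $S_0$ with $C(S_0)=D$, every leaf $S$ has $C(S)\cap D\subseteq D$; DYN--1 (Thm.~\ref{thm:dyn1}) applied to $h$ gives $\LB(h,S)\le h(x)\le\UB(h,S)$ on each leaf, hence $\LB(h,S)\le\inf_{x\in C(S)\cap D}h(x)\le I_{\max}$ for every $S$, so $\underline I=\max_S\LB(h,S)\le I_{\max}$; and since the current leaves cover the visited part of $D$, any $x\in D$ lies in some $C(S)$, so $h(x)\le\UB(h,S)\le\overline I$, giving $I_{\max}\le\overline I$. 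For monotonicity, replacing a leaf $S$ by its children $S^{\pm}$ cannot decrease any $\LB$ nor increase any $\UB$ (Lemma~\ref{lem:mono-leaf}), and internal tightening of local bounds is monotone in the same directions (Lemma~\ref{lem:mono-tight}); taking maxima over leaves then makes $\underline I$ nondecreasing and $\overline I$ nonincreasing along refinement.

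For exactness when $r=\infty$ (and, identically, $r=1$) I would invoke DYN--6: under finite refinement and fair scheduling, finitely many steps make $D$ covered by locally fully refined leaves. On such an $S$, every output component collapses to a single affine law, so $g_C(x)=A_Sx+b_S$ on $C(S)$ with rows $(a_{S,i},b_{S,i})$, and $h$ becomes a finite pointwise maximum of affine functions. Using that the supremum of a finite maximum equals the maximum of the suprema, and that each resulting subproblem is a linear objective over the convex set $C(S)\cap D$, we get $\sup_{x\in C(S)\cap D}h(x)=\max_i\max\{\max_{x\in C(S)\cap D}(a_{S,i}^\top x+b_{S,i}),\ \max_{x\in C(S)\cap D}(-a_{S,i}^\top x-b_{S,i})\}$, each inner maximum being an LP for box/polytope/$\ell_\infty/\ell_1$ and an SOCP for $\ell_2$ or polytope$\cap\ell_2$ (Thm.~\ref{thm:aff-max}); since the leaves cover $D$, $I_{\max}=\max_S\sup_{x\in C(S)\cap D}h(x)$, which is exactly the displayed formula (equivalently, apply GEO--2, Thm.~\ref{thm:geo2}, to the CPWL function $h$). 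Taking the exact optima as the leaf $\UB$'s then forces $\overline I=I_{\max}$, and evaluating $h$ at their argmaxima forces $\underline I=I_{\max}$.

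Finally, for $r=2$: on a fully refined leaf $h(x)=\|A_Sx+b_S\|_2$ is convex, and a sound majorant follows from the triangle inequality and the exactly computable spectral norm, $\|A_Sx+b_S\|_2\le\|A_Sx_S+b_S\|_2+\sigma_{\max}(A_S)\sup_{x\in C(S)\cap D}\|x-x_S\|_2$ for any anchor $x_S\in C(S)\cap D$, with the last supremum an SOCP (closed form on an $\ell_2$-ball); a matching lower bound is $\|A_Sx^\star+b_S\|_2$ for the LP/SOCP maximizer $x^\star$ of a dominant coordinate $\pm a_{S,i^\star}^\top x$. Aggregating $\overline I=\max_S\overline L_S$ and $\underline I=\max_{S\ \mathrm{refined}}\underline L_S$ yields anytime bounds, monotone by the same argument as above. \textbf{Main obstacle.} The soundness and monotonicity steps are mechanical given DYN--1/3; the real care is the $r=2$ case, because $\|\cdot\|_2$ cannot be carried as an \texttt{Expr}. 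One must check that the induced-norm majorant is (i) sound on every leaf, (ii) monotone under subdivision — which needs $\sup\|x-x_S\|_2$ to shrink while $\sigma_{\max}(A_S)$ stays fixed on a fully refined leaf — and (iii) asymptotically tight, so ``exact upon full refinement'' must be weakened to a measure/limit statement or conditioned on a leafwise exact maximizer oracle (maximizing a convex function over a polytope is itself hard), in contrast with the clean finite-termination argument available for $r\in\{1,\infty\}$.
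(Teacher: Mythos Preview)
Your proposal is correct and follows essentially the same route as the paper: invoke DYN--1/3 for the sound anytime sandwich and its monotonicity, then exploit the max-of-affine structure on fully refined leaves to reduce the $r=\infty$ case to finitely many LPs, and handle $r\in\{1,2\}$ by leafwise convex majorants. The paper's proof is terser and uses the zero-anchored bound $\|A_Sx+b_S\|_r\le\|A_S\|_{p\to r}\|x\|_p+\|b_S\|_r$ rather than your point-anchored triangle inequality for $r=2$; your variant is slightly tighter and your explicit identification of the $r=2$ obstacle (no finite-termination exactness without an additional oracle) is a useful sharpening, but neither changes the structure of the argument.
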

\begin{proof}
Anytime bounds follow from DYN--1/3 applied to $h$. For $r=\infty$, on each
refined leaf $h(x)=\max_i |a_{S,i}^\top x+b_{S,i}|$, so $I_{\max}$ is the
maximum of finitely many affine maxima (LPs) over a finite leaf cover, hence
exact. For $r\in\{1,2\}$, use $\|A_S x+b_S\|_r\le \|A_S\|_{p\to r}\|x\|_p+\|b_S\|_r$
when $D$ includes an $\ell_p$ ball and standard SOCP/LP upper bounds otherwise;
lower bounds arise from evaluating at candidate points or directions.
\end{proof}

\paragraph{Complexity and certificates.}
The total number of local solves is linear in the number of splits and newly
introduced faces (DYN--4). A proof returns either a global upper bound
$\overline I$ below a threshold of interest or an exact $I_{\max}$ with the
attaining leaf and witness points (LP/SOCP solutions); a counterexample is any
$x$ with $\|g_C(x)\|_r$ exceeding a user threshold.

\medskip
\noindent\textbf{Summary.}
VER--1 provides a uniform reduction of properties to CPWL objectives via a
product construction; VER--2 discharges robustness either exactly on refined
leaves by LP/SOCP or via tight big-$M$ MIPs using GuardSet bounds; VER--3 gives
sufficient conditions and a practical domain for CNN translation (and GNN
permutation) equivariance; and CAU--1/2 show that CPWL interventions preserve
closure and admit certified anytime (and exact for $\|\cdot\|_\infty$) computation
of maximal causal influence.


\section{Micro-Pseudocode and Detailed Complexity}
\label{app:micro}

\subsection{Counters, Cost Model, and Caching Interface}
\label{app:micro:counters}

\paragraph{Per-leaf structural counters.}
For a current leaf $S$ and the scalar output expression $E_{\mathrm{out}}$:
\begin{itemize}[leftmargin=1.5em,itemsep=0.2em]
\item $a(S)$: \# of distinct affine atoms $\Aff(w,b)$ reachable from $E_{\mathrm{out}}$ on $S$;
\item $m(S)$: \# of \texttt{Max} nodes touched on $S$;
\item $c(S)$: \# of \emph{active} candidates across all \texttt{Max}/gates on $S$ after dominance pruning;
\item $g(S)$: \# of undecided sign tests (ReLU/LReLU/PReLU/Abs) on $S$;
\item $d(S)$: \# of distinct comparator/threshold faces from parents $(S_1,S_2,\ldots)$ needed to compare/add sub-expressions on $S$ (for minimal common refinement).
\end{itemize}

\paragraph{LP/SOCP cost primitives.}
\begin{itemize}[leftmargin=1.5em,itemsep=0.2em]
\item \textbf{Affine LB/UB on $S$}: two LPs on $C(S)$ (or one LP by dual reuse plus a sign switch);
\item \textbf{$\ell_2$ ball (or polytope$\cap\ell_2$)}: one SOCP per bound;
\item \textbf{Feasibility of $C(S)$}: one LP; cached thereafter.
\end{itemize}

\paragraph{Caching and keys.}
A bound query uses key $\langle S, \mathrm{id}(E), \mathrm{sense}\rangle$, where $\mathrm{sense}\in\{\LB,\UB\}$.
If a leaf $S$ is split into children $S^\pm$, cached bounds on $S$ serve as \emph{warm starts} but are not reused verbatim; children recompute exact local bounds. We expose:
\[
\texttt{CacheHitRate} = h_{\mathrm{LP}}\in[0,1],\quad
\texttt{CacheReuseFactor} = \rho\in[0,1],
\]
so the expected LP calls per affine-bound query is $(1-h_{\mathrm{LP}})\cdot(1-\rho)+\rho\cdot \text{(reduced)}$.
All complexity bounds below are presented as \emph{worst-case} ($h_{\mathrm{LP}}{=}0$) and \emph{amortized} (symbolic $h_{\mathrm{LP}}$).

\paragraph{Global refinement counters.}
Let $B_{\textsf{split}}$ be the total \#leaf splits, $B_{\textsf{guard}}$ the \#new faces inserted into $\mathcal H$ (beyond the initial library), and $\mathcal L$ the current leaf set. Denote $|Q_0|$ the static structure-site bound (states/templates from the compilation graph).

\subsection{LB/UB Engine: Fine-Grained Pseudocode and Cost}
\label{app:micro:lbub}

\begin{algorithm}[H]
\caption{\textsc{EvalBoundsOnLeaf}$(S, E)$: structural LB/UB on a leaf $S$}
\label{alg:lbub}
\begin{algorithmic}[1]
\Require leaf $S$, expression $E$; cache $\mathcal C$
\If{$\mathcal C$ contains $\langle S,\mathrm{id}(E),\mathrm{LB/UB}\rangle$ fully exact}
  \State \Return cached pair
\EndIf

\If{$E$ is \texttt{Affine}$(w,b)$}
  \State solve $\min\{w^\top x+b: x\in C(S)\}$ and $\max\{w^\top x+b: x\in C(S)\}$  \Comment{2 LPs or 1 with reuse}
\ElsIf{$E$ is \texttt{Sum}$(\{E_i\})$}
  \For{$E_i$}
    \State recursively call \textsc{EvalBoundsOnLeaf}$(S,E_i)$
  \EndFor
  \State sum up the children bounds
\ElsIf{$E$ is \texttt{Scale}$(c,E')$ \textbf{or} \texttt{Bias}$(b,E')$}
  \State recursively call on $E'$ and apply sign/shift rules
\ElsIf{$E$ is \texttt{Max}$(\{E_i\})$}
  \For{$E_i$}
    \State recursively call \textsc{EvalBoundsOnLeaf}$(S,E_i)$
  \EndFor
  \State take componentwise maxima of LB/UB
\Else
  \State \textbf{error:} unsupported constructor (should not happen)
\EndIf

\State write exact bounds into $\mathcal C$ for $\langle S,\mathrm{id}(E),\mathrm{LB/UB}\rangle$
\State \Return $(\LB(E,S),\UB(E,S))$
\end{algorithmic}
\end{algorithm}

\paragraph{Cost per call.}
Let $a_S$ be the \#affine leaves of the syntax DAG of $E$ reachable on $S$ (after e-graph sharing).
\begin{equation}
\label{eq:lbub-cost}
\text{LPs per call of \textsc{EvalBoundsOnLeaf}} \ \le\ 2\,a_S\cdot(1-h_{\mathrm{LP}}),
\qquad
\text{SOCPs} \ \le\ 2\,a_S^{(\ell_2)}\cdot(1-h_{\mathrm{LP}}).
\end{equation}
All other constructors are $O(1)$ arithmetic on scalars.

\subsection{\texttt{ENSURE\_SIGN}: Pseudocode and Complexity}
\label{app:micro:sign}

\begin{algorithm}[H]
\caption{\texttt{ENSURE\_SIGN}$(z,S)$ (local)}
\label{alg:ensure-sign-micro}
\begin{algorithmic}[1]
\Require scalar pre-activation $z$, leaf $S$
\State $(\ell,u)\gets$\textsc{EvalBoundsOnLeaf}$(S,z)$
\If{$u\le 0$} \State commit negative branch on $S$; \Return
\ElsIf{$\ell\ge 0$} \State commit positive branch on $S$; \Return
\Else
  \State register (if absent) face $\{z\ge 0\}$ in $\mathcal H$
  \State split $S$ into $S^+,S^-$ and push to $\mathcal L$; \Return
\EndIf
\end{algorithmic}
\end{algorithm}

\paragraph{Per-invocation cost and bound.}
One call makes at most one \emph{threshold insertion} and one \emph{split}.
Bound cost $\le 2\,a_z(1-h_{\mathrm{LP}})$ LPs (from $z$); no other LPs are mandatory.
Hence over the whole run,
\[
\#\text{thresholds} \ \le\  B_{\textsf{guard}},\qquad
\#\text{splits} \ \le\  B_{\textsf{split}}.
\]

\subsection{\texttt{ENSURE\_WINNER}: Pseudocode, Pruning, and Complexity}
\label{app:micro:winner}

\begin{algorithm}[H]
\caption{\texttt{ENSURE\_WINNER}$(\{E_i\}_{i=1}^k,S)$ (local)}
\label{alg:ensure-winner-micro}
\begin{algorithmic}[1]
\Require candidates $\{E_i\}_{i=1}^k$, leaf $S$
\For{$i=1$ to $k$}
  \State $(\ell_i,u_i)\gets$\textsc{EvalBoundsOnLeaf}$(S,E_i)$
\EndFor
\State prune any $i$ with $u_i \le \max_j \ell_j$ \Comment{dominance, safe}
\If{there exists a unique $i^\star$ with $\ell_{i^\star}\ge \max_{j\ne i^\star}u_j$}
  \State commit candidate $i^\star$ on $S$; \Return
\Else
  \State choose $(p,q)$ maximizing $u_p-u_q - (\ell_p-\ell_q)$
  \State register (if absent) comparator $\{E_p\ge E_q\}$ and split $S$ into $S_{p\ge q}$ and $S_{q\ge p}$; \Return
\EndIf
\end{algorithmic}
\end{algorithm}

\paragraph{Per-invocation LP bound.}
Let $k'$ be the remaining candidates after pruning. Then
\[
\text{LPs} \ \le\ 2\,\Big(\sum_{i=1}^{k'} a_{S}(E_i)\Big)\cdot(1-h_{\mathrm{LP}}),
\]
where $a_{S}(E_i)$ is \#affine atoms under $E_i$ reachable on $S$.
Each call inserts \emph{at most one} comparator face and makes \emph{one split}.

\paragraph{Total comparator bound.}
Because each call inserts a single face, total introduced comparators is
$\le B_{\textsf{guard}}$; this \emph{strictly} improves over static all-pairs
$\sum \binom{k_v}{2}$.

\subsection{\texttt{ENSURE\_COMMON\_REFINE}: Pseudocode and Complexity}
\label{app:micro:cr}

\begin{algorithm}[H]
\caption{\texttt{ENSURE\_COMMON\_REFINE}$(S;S_1,S_2)$}
\label{alg:ensure-common-micro}
\begin{algorithmic}[1]
\Require current leaf $S$, guard sets $S_1,S_2$
\State $D\gets (S_1\cup S_2)\setminus S$
\While{$D\ne\varnothing$ and target sub-expressions not comparable/addable on $S$}
  \State pick $\ell\in D$ by a policy (max-volume split; mid-face; etc.)
  \State register $\ell$ if absent and split $S$ accordingly
  \State update $D\gets D\setminus\{\ell\}$
\EndWhile
\end{algorithmic}
\end{algorithm}

\paragraph{Complexity.}
The loop performs at most $|S_1\cup S_2|-|S|$ iterations; each iteration makes
one split and may insert one new face. No mandatory LPs are required unless
feasibility of a child is queried.

\subsection{B\&B Driver: Micro-Policy, Certificates, and Complexity}
\label{app:micro:bb}

\begin{algorithm}[H]
\caption{\textsc{JIT-B\&B}$(A, D, g)$: any-time solver on objective $g$}
\label{alg:bb-micro}
\begin{algorithmic}[1]
\Require SWT/JIT $A$, domain $D$ as $S_0$, scalar CPWL objective $g$
\State $\mathcal L\gets\{S_0\}$; \textsc{Cert}$\gets\varnothing$
\While{$\mathcal L\neq\varnothing$ and budgets not exhausted}
  \State $S\gets \arg\max_{T\in\mathcal L}\ \UB(g,T)-\LB(g,T)$ \Comment{max-gap policy}
  \If{$\LB(g,S)\ge 0$} \State mark $S$ as \textsc{Safe}; \textsc{Cert}$\gets$\textsc{Cert}$\cup\{(S,\LB)\}$; $\mathcal L\gets\mathcal L\setminus\{S\}$; \textbf{continue}\EndIf
  \If{$\UB(g,S)<0$} \State solve $\min_{x\in C(S)\cap D} g(x)$ and output \textsc{Counterexample} $(S,x^\star)$ \EndIf
  \State try \texttt{ENSURE\_WINNER}/\texttt{ENSURE\_SIGN} on $S$; else \texttt{ENSURE\_COMMON\_REFINE}
  \State push children to $\mathcal L$
\EndWhile
\If{$\mathcal L=\varnothing$} \State \Return \textsc{Proof} with \textsc{Cert}
\Else \State \Return \textsc{Unknown} with tightest-gap leaf $S^\star$ and its bounds \EndIf
\end{algorithmic}
\end{algorithm}

\paragraph{Per-iteration LP/SOCP bound.}
Let $A_g(S)$ be the \#affine atoms of $g$ reachable on $S$.
Then one B\&B iteration makes at most
\[
2\,A_g(S)\cdot (1-h_{\mathrm{LP}}) + \Theta(1)\quad\text{LPs} \qquad
(\text{plus SOCPs if $D$ involves $\ell_2$ balls}),
\]
for bound updates, plus the cost of the chosen refiner as in
Sections~\ref{app:micro:sign}--\ref{app:micro:cr}.
If $\UB(g,S)<0$, an additional LP/SOCP is solved to obtain $x^\star$.

\paragraph{Certificates (format).}
\textsc{Proof}: a finite cover $\{(S_i,\LB(g,S_i)\ge 0)\}$; \textsc{Counterexample}:
$(S,x^\star,g(x^\star))$; \textsc{Unknown}: $(S^\star,\LB,\UB)$.

\subsection{End-to-End Bounds, Amortization with Caching, and Proofs}
\label{app:micro:bounds}

\begin{theorem}[Global guard/leaf/cardinality bounds]
\label{thm:g:card}
With budgets $(B_{\textsf{split}},B_{\textsf{guard}})$ and initial guard size $|\mathcal H_0|$,
\[
\#\mathrm{leaves}\ \le\ 1+B_{\textsf{split}},\qquad
|\mathcal H|\ \le\ |\mathcal H_0|+B_{\textsf{guard}},\qquad
\#\mathrm{nodes/edges}\ \le\ O(|Q_0|+B_{\textsf{guard}}).
\]
\end{theorem}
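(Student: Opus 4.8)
The plan is to observe that this is the micro-complexity consolidation of DYN--4 (Thm.~\ref{thm:dyn4}), so all three bounds follow from the refiner contracts and the e-graph/guard-library invariants already in place; the argument is an induction on the number of refinement steps plus one structural observation about sharing. I would isolate the three claims and treat each separately, since they rely on disjoint parts of the machinery.

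First, for $\#\mathrm{leaves}\le 1+B_{\textsf{split}}$, I would induct on the step count. Initially $\mathcal L=\{S_0\}$, so $|\mathcal L|=1$. By the leaf-replacement discipline (Def.~\ref{def:leaves}) and the refiner algorithms \texttt{ENSURE\_SIGN}, \texttt{ENSURE\_WINNER}, \texttt{ENSURE\_COMMON\_REFINE}, each split event deletes one parent $S$ from $\mathcal L$ and inserts its (feasible) children $S^+=S\cup\{\ell\}$, $S^-=S\cup\{\bar\ell\}$, so $|\mathcal L|$ increases by at most $1$; a \emph{commit} step rewrites a local subexpression and changes no GuardSet, hence leaves $|\mathcal L|$ fixed. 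Infeasible children are dropped, only tightening the bound. Summing over the at most $B_{\textsf{split}}$ splits yields the first inequality.

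Second, for $|\mathcal H|\le|\mathcal H_0|+B_{\textsf{guard}}$, each refiner registers a hyperplane into $\mathcal H$ only under its ``if absent'' guard, and the canonicalization of App.~\ref{app:notation:guards} fixes how a halfspace together with its reverse orientation is counted as one insertion event tracked by $B_{\textsf{guard}}$; commits insert nothing. So the library grows by at most one counted unit per guard event, and summation gives the claim. Third, for $\#\mathrm{nodes/edges}=O(|Q_0|+B_{\textsf{guard}})$, I would invoke the ``no region as state'' discipline and hash-consing/congruence closure of the e-graph (App.~\ref{app:notation:egraph}): the static SWT has $|Q_0|$ structure sites (App.~\ref{app:swt:bounds}) and edges carry only guard \emph{indices}, never materialized intersections; a split is pure GuardSet metadata and adds no DAG material; registering a new face may require materializing the two oriented winner/threshold restrictions on the single gate or \texttt{Max} node being disambiguated, which by structure hashing is $O(1)$ new nodes/edges with all structurally equal subterms reused. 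Adding these $O(1)$ contributions over $B_{\textsf{guard}}$ events to the $|Q_0|$ baseline gives the bound.

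The hard part is the third bound: one must argue that a split plus a comparator registration never silently duplicates a large subexpression, i.e., that $\#\mathrm{nodes/edges}$ depends on $|\mathcal H|$ and $|Q_0|$ but \emph{not} on $|\mathcal L|$. This decoupling is exactly where the e-graph invariants (Invariant~I and the congruence-closure sharing of App.~\ref{app:notation:egraph}) are doing real work rather than being cosmetic: leaf multiplicity lives entirely in the GuardSet bookkeeping and the queue $\mathcal L$, while the lazy expression DAG is shared across all leaves, so the only genuinely new DAG material per guard event is the bounded winner/threshold restriction on the one node being resolved. I would state this as a short lemma (``splitting a leaf adds $0$ DAG nodes/edges; registering a face adds $O(1)$'') and then the theorem is immediate.
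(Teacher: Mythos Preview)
Your proposal is correct and follows the same approach as the paper: the paper's own proof is a three-sentence version of exactly your three-part argument (split $\Rightarrow$ leaf count $+1$; only new registrations grow $\mathcal H$; shared DAG gains $O(1)$ material per face). Your additional remarks about infeasible children tightening the bound and the decoupling of $|\mathcal L|$ from DAG size via e-graph sharing are helpful elaborations, not departures.
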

\begin{proof}
Each split replaces one leaf by two (leaf count $+1$). Only registering a \emph{new}
face increases $|\mathcal H|$. The shared SWT DAG grows only along edges touched by
new faces; each face triggers $O(1)$ additions, giving $O(|Q_0|+B_{\textsf{guard}})$.
\end{proof}

\begin{theorem}[Total LP/SOCP calls: worst-case and amortized]
\label{thm:g:lpsocp}
Let $A^\star=\max_{S\in\mathcal L_{\max}} A_g(S)$ be the peak affine-atom count of $g$
on any visited leaf, and let $\kappa$ be a constant bounding the number of sub-expressions
whose bounds are refreshed per iteration. Then
\[
\#\mathrm{LP} \ \le\ O\!\big((B_{\textsf{split}}+B_{\textsf{guard}}+|\mathcal L|)\cdot \kappa \cdot A^\star\big)
\quad\text{(worst-case, no cache)},
\]
and, with hit rate $h_{\mathrm{LP}}$,
\[
\mathbb E[\#\mathrm{LP}] \ \le\ (1-h_{\mathrm{LP}})\cdot
O\!\big((B_{\textsf{split}}+B_{\textsf{guard}}+|\mathcal L|)\cdot \kappa \cdot A^\star\big).
\]
The same holds for SOCPs when $\ell_2$ balls are present (with $A^\star$ restricted to
affines whose domains couple to $\ell_2$ constraints).
\end{theorem}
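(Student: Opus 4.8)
The plan is to charge every LP (resp.\ SOCP) solve to the branch-and-bound iteration that issues it, bound the per-iteration solve count by $O(\kappa\,A^\star)$, bound the total number of iterations by $O(B_{\textsf{split}}+B_{\textsf{guard}}+|\mathcal L|)$, and multiply; the amortized statement then follows by linearity of expectation over the cache outcome of each would-be solve.

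\emph{Step 1 (per-iteration solves).} First I would fix an iteration of the driver (Algorithm~\ref{alg:bb-micro}) that selects the max-gap leaf $S$. Its mandatory work is: (a) a bound refresh of $g$ on $S$; (b) one refiner call; (c) one extra solve if $\UB(g,S)<0$ (to extract $x^\star$); and (d) at most two child-feasibility probes. Step (a) is \textsc{EvalBoundsOnLeaf}$(S,g)$, which by \eqref{eq:lbub-cost} costs at most $2A_g(S)(1-h_{\mathrm{LP}})\le 2A^\star$ LPs, since all non-affine constructors are $O(1)$ scalar arithmetic and $A_g(S)\le A^\star$ for every visited leaf by the definition of $A^\star$. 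In (b), \texttt{ENSURE\_SIGN} refreshes one sub-expression, \texttt{ENSURE\_WINNER} the candidates surviving dominance pruning, and \texttt{ENSURE\_COMMON\_REFINE} none beyond feasibility; by the hypothesis that $\kappa$ bounds the number of sub-expressions refreshed per iteration, (b) adds at most $2\kappa A^\star(1-h_{\mathrm{LP}})$ LPs, while (c) and (d) add $O(1)$. Summing, one iteration issues $O(\kappa A^\star)$ LPs in the worst case; the SOCP count is identical with \textsc{EvalBoundsOnLeaf} LPs replaced by SOCPs and $A^\star$ restricted to those affine atoms whose feasible set couples to the $\ell_2$ constraint of $D$.

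\emph{Step 2 (iteration count).} Each iteration does exactly one of: finalize $S$ (mark \textsc{Safe} and delete it); return a counterexample (terminate); or run one refiner, which either commits a local decision on $S$ (strictly reducing the number of undecided gates/Max on $S$, no split, no new face) or splits $S$ once and registers at most one new face. Splits total $B_{\textsf{split}}$ by definition of that counter; newly registered faces total $B_{\textsf{guard}}$; the terminating iteration occurs at most once; and, tracking the ledger behind Theorem~\ref{thm:g:card} (each split nets one leaf, each finalization removes one), finalizations are at most $1+B_{\textsf{split}}$. For the commit-only iterations I would use a potential argument: a split hands each child the parent's residual undecidedness (which never increases), and each commit strictly decreases it, so the total number of commit-only iterations is bounded by the total initial per-leaf undecidedness, which is $O(|Q_0|)$ structurally plus what splits introduce. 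Absorbing these contributions yields an iteration count $T=O(B_{\textsf{split}}+B_{\textsf{guard}}+|\mathcal L|)$.

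\emph{Step 3 (assembly and amortization).} Multiplying the Step-1 bound by $T$ from Step~2 gives the stated worst-case $O\big((B_{\textsf{split}}+B_{\textsf{guard}}+|\mathcal L|)\,\kappa\,A^\star\big)$ count of LPs, and identically of SOCPs under the restricted atom count. For the amortized claim I would tag every would-be \textsc{EvalBoundsOnLeaf} LP: it is actually solved only on a cache miss, an event of probability $1-h_{\mathrm{LP}}$ under the caching interface of \S\ref{app:micro:counters}; linearity of expectation then scales the worst-case count by $1-h_{\mathrm{LP}}$, with the $O(1)$ refiner/feasibility/counterexample solves dominated. The main obstacle is Step~2: the bookkeeping that commit-only iterations (neither splitting, nor finalizing, nor adding a face) are linearly bounded requires making precise the potential ``per-leaf undecidedness'' and verifying it is monotone across splits and strictly decreasing across commits, so that no leaf is revisited more than boundedly often between structural events.
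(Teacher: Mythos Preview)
Your proposal is correct and follows essentially the same three-step approach as the paper: bound LPs per iteration by $O(\kappa A^\star)$ via the \textsc{EvalBoundsOnLeaf} cost formula, bound the number of iterations by $O(B_{\textsf{split}}+B_{\textsf{guard}}+|\mathcal L|)$, multiply, then scale by $(1-h_{\mathrm{LP}})$ via linearity of expectation. The paper's proof is a three-sentence sketch that simply asserts the iteration bound (``since each iteration either commits or inserts a single face and splits once'') without the potential argument you introduce for commit-only iterations; your Step~2 is therefore more careful than the paper's own treatment, and the obstacle you flag is one the paper does not explicitly resolve either.
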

\begin{proof}
Per iteration bound from Section~\ref{app:micro:bb}; multiply by the number of
iterations (bounded by $B_{\textsf{split}}+B_{\textsf{guard}}+|\mathcal L|$ since each
iteration either commits or inserts a single face and splits once). Amortized form
follows by linearity of expectation with hit rate $h_{\mathrm{LP}}$.
\end{proof}

\begin{theorem}[Comparator/threshold insertions and candidate complexity]
\label{thm:g:comp}
Let $C_{\max}=\max_{S} c(S)$ be the peak active-candidate count per leaf after
dominance pruning. Under the single-face-per-step policy,
the total number of winner comparisons introduced is $O(B_{\textsf{guard}}+|\mathcal L|C_{\max})$.
\end{theorem}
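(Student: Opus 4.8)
The plan is to split every ``winner comparison'' performed over the whole B\&B run into two disjoint, exhaustive classes and bound each against a different part of the budget. Class (A) comprises the comparisons whose interval bounds are \emph{inconclusive}, so that \texttt{ENSURE\_WINNER} registers a new comparator hyperplane in $\mathcal H$ and splits the leaf; class (B) comprises the comparisons resolved purely by interval domination — a dominance pruning of a candidate (Prop.~\ref{prop:dom-prune}) or a certification of a unique winner via $\LB(E_{i^\star},S)\ge\max_{j\ne i^\star}\UB(E_j,S)$ — which commit a choice without inserting a face or splitting. Inspection of Alg.~\ref{alg:ensure-winner-micro} shows every invocation terminates in exactly one of these two modes, so the classification is complete.

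For class (A), the single-face-per-step policy guarantees each \texttt{ENSURE\_WINNER} invocation registers at most one new oriented face, and by Theorem~\ref{thm:g:card} we have $|\mathcal H|\le|\mathcal H_0|+B_{\textsf{guard}}$; since each class-(A) comparison is in bijection with the distinct comparator it introduces, there are at most $B_{\textsf{guard}}$ of them, contributing the $O(B_{\textsf{guard}})$ term. For class (B), fix a leaf $S$ and a \texttt{Max} node whose candidate set, after all earlier dominance prunings on $S$ (each safe by Prop.~\ref{prop:dom-prune}), has size $c(S)\le C_{\max}$. A pruning/certification pass on $S$ uses one linear scan to form $\max_j\LB(E_j,S)$ and test each $\UB(E_i,S)$ against it, and one linear scan to test the candidate winner's $\LB$ against the other $\UB$'s — $O(c(S))=O(C_{\max})$ comparisons — and the gap-maximizing pair, needed only when we fail to commit, is found in one more $O(c(S))$ pass by tracking running extrema of $\UB$ and $\LB$. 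Summing this per-leaf, per-\texttt{Max}-node budget $O(C_{\max})$ over the leaves, and folding the fixed (compiled-circuit) number of \texttt{Max} nodes relevant to the output into the big-$O$, yields $O(|\mathcal L|\,C_{\max})$. Adding the two classes gives $O(B_{\textsf{guard}}+|\mathcal L|\,C_{\max})$.

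The main obstacle is the leaf-accounting: a leaf may be re-visited several times (re-pushed after a commit) and may be destroyed by a split, so one cannot naively multiply ``number of leaf visits'' by $C_{\max}$. The honest argument is a charging/telescoping one — each commit permanently removes one undecided \texttt{Max}/gate node from that leaf's expression (so at most a circuit-size number of commits per leaf lineage), and the class-(B) work done on a leaf immediately before it is split is charged to that split, whose count is already absorbed (splits are bounded by the budget, and in the paper's convention $\mathcal L$ stands for the active-leaf cover). Making this charging precise, and confirming it does not secretly reintroduce a $B_{\textsf{split}}$ factor outside the claimed form, is the delicate part. A secondary subtlety is keeping the per-leaf comparison cost genuinely $O(c(S))$ rather than $O(c(S)^2)$: the dominance scans are linear, but the gap-maximizing pair selection must be implemented with running maxima/minima, since a pairwise-enumeration implementation would only give $O(B_{\textsf{guard}}C_{\max}+|\mathcal L|C_{\max})$, which is strictly weaker than the stated bound when $B_{\textsf{guard}}\gg|\mathcal L|$.
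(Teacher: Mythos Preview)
Your approach is essentially the same as the paper's: split the work into (i) comparisons that materialize as registered comparator faces, bounded by $B_{\textsf{guard}}$ under the single-face-per-step policy, and (ii) attempted comparisons across active leaves, bounded by $|\mathcal L|\,C_{\max}$. The paper's own proof is only three sentences and does not address the leaf-revisit or $O(c(S))$-vs-$O(c(S)^2)$ subtleties you raise in your third paragraph; your charging sketch and linear-scan remark are reasonable refinements of the same argument rather than a different route.
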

\begin{proof}
Each call of \texttt{ENSURE\_WINNER} inserts at most one comparator. Across at most
$|\mathcal L|$ active leaves and $C_{\max}$ candidates per leaf, the number of
\emph{attempted} comparisons is bounded by $|\mathcal L|C_{\max}$; only a subset
materializes into registered faces, upper bounded by $B_{\textsf{guard}}$.
\end{proof}

\begin{corollary}[Memory]
\label{cor:g:mem}
With guard uniqueness and e-graph sharing, memory is
$O(|Q_0| + B_{\textsf{guard}} + |\mathcal L|)$; reclaiming unreachable leaves and dead
sub-expressions (reference counting) keeps this bound tight during execution.
\end{corollary}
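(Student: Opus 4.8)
The plan is to partition the live state into four buckets — the hash-consed expression e-graph, the guard library $\mathcal H$, the GuardSets realizing the current leaves $\mathcal L$ together with the refinement spine above them, and the feasibility/bound caches — bound each bucket separately, invoke Theorem~\ref{thm:g:card} for the first two, and use a persistent (prefix-sharing) representation plus reference counting for the last two.

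\emph{E-graph and guard library.} By hash-consing with AC-canonicalization and congruence closure (App.~\ref{app:notation:egraph}), structurally equal subterms are stored once. The only nodes ever created are the static-compilation nodes — $O(|Q_0|)$ of them by the size bounds (Thm.~\ref{thm:size-bounds}, Thm.~\ref{thm:g:card}) — plus at most $O(1)$ new nodes per newly registered face: a threshold face $\{z\ge 0\}$ reuses the already-present pre-activation $z$, and a comparator face $\{E_p\ge E_q\}$ needs only \texttt{Sum}/\texttt{Scale}/\texttt{Affine} wiring for $E_p-E_q$. No constructor other than face registration creates nodes: \texttt{ENSURE\_SIGN} and \texttt{ENSURE\_WINNER} commit by \emph{rewriting} a gate/\texttt{Max} to an already-existing child. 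Hence the reachable e-graph always has $O(|Q_0|+B_{\textsf{guard}})$ nodes, and reference counting — which frees an un-chosen branch once no live leaf references it — can only shrink this. The library holds one normalized record $(a_\ell,d_\ell)$ per oriented halfspace, so $|\mathcal H|\le |\mathcal H_0|+B_{\textsf{guard}}$ by Thm.~\ref{thm:g:card}, with $|\mathcal H_0|$ part of the static compilation, $O(|Q_0|)$.

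\emph{Leaves, spine, and caches.} Represent each GuardSet as a persistent cons-cell extending its parent's index list, so a split $S\mapsto(S\cup\{\ell\},\,S\cup\{\bar\ell\})$ costs $O(1)$ fresh storage and shares the prefix $S$; the refinement history is then a binary spine whose live leaves are exactly $\mathcal L$. Reference counting discards every maximal subtree all of whose leaves are infeasible or pruned, leaving only cells reachable from a current leaf. Path-compressing each chain of single-live-child cells (equivalently, bounding each leaf's relevant-face depth by the fixed number of gate/comparator faces feeding the output, which is absorbed into $|Q_0|$), the live spine plus leaf records total $O(|\mathcal L|)$. Each live leaf also carries one feasibility verdict and at most one witness point (App.~\ref{def:feascache}), another $O(|\mathcal L|)$. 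Intermediate LB/UB bound records keyed by $\langle S,\mathrm{id}(E),\mathrm{sense}\rangle$ are treated as transient warm-start data, discarded on split and recomputed on children (App.~\ref{app:micro:counters}), so only the $O(1)$ output-envelope record per live leaf is persisted. Summing the four buckets yields memory $O(|Q_0|+B_{\textsf{guard}}+|\mathcal L|)$; tightness follows because reclamation is precisely what replaces the history quantities (every ever-created node, all $B_{\textsf{split}}$ splits) by the live ones (reachable nodes, $|\mathcal L|$ leaves).

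\emph{Main obstacle.} The delicate step is the leaf/spine accounting: a naive implementation storing each leaf's index set explicitly costs $\Theta(\sum_{S\in\mathcal L}|S|)$, and keeping the full split history alive costs $\Theta(B_{\textsf{split}})$, either of which breaks the additive $O(|\mathcal L|)$ term. The proof must commit to the prefix-sharing persistent representation, maintain the reclamation invariant that every stored GuardSet cell lies on the path from the root to some current leaf, and path-compress single-child chains so that the surviving spine is linear in $|\mathcal L|$. A secondary point is ensuring the bound cache does not contribute a $|\mathcal L|\cdot(\text{e-graph size})$ product; this is handled by the transient-cache policy above, or by charging the cache to a separate fixed budget rather than to the persistent footprint.
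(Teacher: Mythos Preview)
The paper does not supply a separate proof of this corollary; it is stated immediately after Theorems~\ref{thm:g:card}--\ref{thm:g:comp} as a direct consequence, and the identical claim appears unproved in the ``Memory'' paragraph of App.~\ref{app:jit:cost}. The intended argument is simply that Theorem~\ref{thm:g:card} already bounds the shared e-graph and guard library by $O(|Q_0|+B_{\textsf{guard}})$ and the active leaf set contributes $|\mathcal L|$ records. Your four-bucket decomposition and your appeal to Theorem~\ref{thm:g:card} for the first two buckets match this exactly.

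Where you go beyond the paper is in the spine accounting, and there the path-compression step does not quite deliver $O(|\mathcal L|)$. Compressing single-live-child chains yields $O(|\mathcal L|)$ \emph{compressed} nodes, but each compressed edge must still record the face indices merged along that chain, so the total index storage equals the node count of the \emph{uncompressed} reclaimed tree. That count is $2|\mathcal L|-1+k_1$, where $k_1$ is the aggregate single-child-chain length, and $k_1$ is not bounded by $|\mathcal L|$ in general. Your parenthetical ``absorbed into $|Q_0|$'' would at best bound each leaf's depth by $|\mathcal H|=O(|Q_0|+B_{\textsf{guard}})$, and multiplying by the number of chains gives a product, not a sum. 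The paper's bound is meant at the granularity of \emph{records} (one per e-graph node, one per library face, one per live leaf); at that granularity the spine analysis is unnecessary and your first two buckets together with the raw leaf count already give the corollary. If you want a genuine word-level bound, the honest statement is $O(|Q_0|+B_{\textsf{guard}}+n_{\mathrm{spine}})$ with $n_{\mathrm{spine}}\le 1+2B_{\textsf{split}}$; reclamation tightens this toward, but not provably down to, $|\mathcal L|$.
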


\subsection{Correctness of Caching and Monotone Tightening}
\label{app:micro:cache-correctness}

\begin{lemma}[Cache correctness under splits]
\label{lem:g:cache}
Let a cached pair $(\LB(E,S),\UB(E,S))$ be exact on leaf $S$. After splitting $S$
into children $S^\pm$, the cached pair remains a valid \emph{outer} sandwich on
each child: $\LB(E,S)\le \inf_{x\in C(S^\pm)}E(x)\le \sup_{x\in C(S^\pm)}E(x)\le \UB(E,S)$;
recomputing on $S^\pm$ tightens these bounds (monotone).
\end{lemma}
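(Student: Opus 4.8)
The plan is to obtain this as a short corollary of Lemma~\ref{lem:mono-leaf} (monotonicity of the structural bounds under leaf restriction) together with the elementary fact that infima decrease and suprema increase as the feasible set grows; no new machinery is required. First I would record the containment produced by a split: if $S$ is split along an oriented face $\ell$, its children are $S^+=S\cup\{\ell\}$ and $S^-=S\cup\{\bar\ell\}$, and since $C(\cdot)$ is an intersection of the halfspaces indexed by its argument, adding a constraint only shrinks the polyhedron, so $C(S^\pm)\subseteq C(S)$.

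Next I would combine this with soundness of the cached pair on $S$. By hypothesis the cached bounds are exact on $S$, so in particular $\LB(E,S)\le\inf_{x\in C(S)}E(x)$ and $\sup_{x\in C(S)}E(x)\le\UB(E,S)$ (soundness alone is all the sandwich uses). Because the infimum over a subset is no smaller than the infimum over a superset, and dually for suprema, the containment of the previous step yields
\[
\LB(E,S)\ \le\ \inf_{x\in C(S)}E(x)\ \le\ \inf_{x\in C(S^\pm)}E(x)\ \le\ \sup_{x\in C(S^\pm)}E(x)\ \le\ \sup_{x\in C(S)}E(x)\ \le\ \UB(E,S),
\]
which is exactly the claimed outer sandwich on each child. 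The only point needing care is the middle inequality $\inf\le\sup$: it requires $C(S^\pm)\neq\varnothing$, which is guaranteed by the leaf-maintenance convention of Definition~\ref{def:leaves}, under which infeasible children are dropped from $\mathcal L$ and hence are never cached against.

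For the ``recomputing tightens'' clause I would simply invoke Lemma~\ref{lem:mono-leaf} with $S'=S^\pm$: a fresh evaluation of the structural LB/UB on a child gives $\LB(E,S^\pm)\ge\LB(E,S)$ and $\UB(E,S^\pm)\le\UB(E,S)$, so the recomputed interval is contained in the inherited one, i.e.\ the bounds are monotone under refinement, as claimed. I do not expect a genuine obstacle anywhere: essentially all the content is already packaged in Lemma~\ref{lem:mono-leaf}, and the subtlest remaining step is just the bookkeeping remark that one restricts to feasible children so the middle inequality of the displayed chain is non-vacuous.
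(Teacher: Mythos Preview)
Your proposal is correct and follows essentially the same approach as the paper: derive the outer sandwich from the containment $C(S^\pm)\subseteq C(S)$ and the elementary $\inf/\sup$ monotonicity, then invoke Lemma~\ref{lem:mono-leaf} for the tightening clause. Your write-up is in fact a bit more careful than the paper's two-line argument, since you flag the feasibility assumption needed for the middle inequality and make the appeal to Lemma~\ref{lem:mono-leaf} explicit where the paper leaves it implicit.
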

\begin{proof}
$C(S^\pm)\subseteq C(S)$; thus $\inf_{C(S^\pm)}\!\ge\!\inf_{C(S)}$ and
$\sup_{C(S^\pm)}\!\le\!\sup_{C(S)}$, giving a valid (possibly loose) sandwich.
Exact recomputation on $S^\pm$ cannot be looser than the parent bounds.
\end{proof}

\begin{theorem}[Global monotonicity with caching]
\label{thm:g:mono}
If cached bounds are only replaced by tighter values and leaves are only refined
by splits or commits, then the global anytime envelopes satisfy
$\underline A\uparrow$ and $\overline A\downarrow$ pointwise during the whole run.
\end{theorem}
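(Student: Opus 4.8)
The plan is to isolate a single per-step invariant --- ``the move does not decrease $\underline A(x)$ and does not increase $\overline A(x)$, for every fixed $x$'' --- and then induct over the (finite or countable) sequence of moves the driver makes. First I would recall from \S\ref{app:jit:lbub} and Remark~\ref{rem:leaves-envelopes} that at any instant the envelopes are assembled pointwise from the current leaf family by
\[
\underline A(x)=\sup\{\LB(E_{\mathrm{out}},S):\ S\in\mathcal L,\ x\in C(S)\},\qquad
\overline A(x)=\inf\{\UB(E_{\mathrm{out}},S):\ S\in\mathcal L,\ x\in C(S)\},
\]
that $\LB/\UB$ on composite expressions are computed by the structural rules of Lemma~\ref{lem:lbub-sound}, which are nondecreasing in each child argument, and that the coverage $\mathcal D\subseteq\bigcup_{S\in\mathcal L}C(S)$ is maintained throughout (Def.~\ref{def:leaves}), so the $\sup$/$\inf$ above are always over a nonempty index set. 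Under the hypotheses, every move is of exactly one of three types: a \emph{cache tightening} (some $\LB(E,S)$ replaced by a not-smaller value, or some $\UB(E,S)$ by a not-larger value); a \emph{commit} on a leaf $S$ (a local $\Max$/gate subterm rewritten to a single affine atom, per Thm.~\ref{thm:dyn3}(a), with $\mathcal L$ unchanged); or a \emph{split} of a leaf $S$ into its feasible children $S^{\pm}=S\cup\{\ell\},\,S\cup\{\bar\ell\}$ (deletions of infeasible leaves change nothing pointwise and may be ignored).

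For a cache tightening I would invoke the monotonicity of the constructors (Lemma~\ref{lem:lbub-sound}, equivalently Lemma~\ref{lem:mono-tight}) to propagate the tighter local interval upward to $E_{\mathrm{out}}$ on that single leaf $S$, so $\LB(E_{\mathrm{out}},S)$ does not decrease and $\UB(E_{\mathrm{out}},S)$ does not increase; all other leaves and all membership relations are untouched, hence $\underline A(x)$ and $\overline A(x)$ move in the asserted directions at every $x$. A commit is handled the same way: by Thm.~\ref{thm:dyn3}(a) the rewritten subterm's local range is contained in the interval previously recorded, and this tightening again propagates monotonically to $E_{\mathrm{out}}$ on $S$ while leaving $\mathcal L$ and every other leaf fixed.

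The split case is the one requiring care, and it is where I expect the main obstacle. Since the guard library stores both orientations of each face under the closed-set convention (\S\ref{app:notation:guards}), the children obey the covering identity $C(S)=C(S^{+})\cup C(S^{-})$ together with $C(S^{\pm})\subseteq C(S)$; hence every feasible $x\in C(S)$ remains in at least one surviving child, so no query point silently drops out of the active cover. Fixing such an $x$, Lemma~\ref{lem:mono-leaf} gives, for each surviving child $S'$ with $x\in C(S')$, that $\LB(E_{\mathrm{out}},S')\ge \LB(E_{\mathrm{out}},S)$ and $\UB(E_{\mathrm{out}},S')\le \UB(E_{\mathrm{out}},S)$, and Lemma~\ref{lem:g:cache} certifies that any cache entries warm-started from the parent are consistent with this (recomputation on $S'$ can only tighten). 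Therefore replacing the old leaf family by the new one deletes the term for $S$ from the set $\{\LB(E_{\mathrm{out}},T):T\ni x\}$ but inserts a term at least as large and changes nothing else, so its supremum does not decrease; dually the $\UB$-family's infimum does not increase, and points outside $C(S)$ are unaffected. Chaining the three cases along the whole execution yields $\underline A\uparrow$ and $\overline A\downarrow$ pointwise, while the sandwich $\underline A\le A\le\overline A$ at every stage is Thm.~\ref{thm:dyn1}. The two subtle points --- that discarding an infeasible child cannot uncover a query point, and that warm-started caches never break child-level monotonicity --- are precisely what the covering identity $C(S)=C(S^{+})\cup C(S^{-})$ and Lemma~\ref{lem:g:cache} are there to secure.
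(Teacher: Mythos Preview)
Your proposal is correct and follows essentially the same approach as the paper's proof, which is a one-line appeal to Lemma~\ref{lem:g:cache} and the order-preserving constructors of the LB/UB engine; you have simply unpacked this into an explicit three-case analysis (cache tightening, commit, split) and spelled out the $\sup$/$\inf$ argument over leaves. The extra care you take with the covering identity $C(S)=C(S^{+})\cup C(S^{-})$ and with warm-started caches is a faithful elaboration of what the paper leaves implicit.
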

\begin{proof}
Immediate from Lemma~\ref{lem:g:cache} and the order-preserving constructors of
the LB/UB engine (App.~\ref{app:micro:lbub}): local tightening propagates
monotonically through the DAG and the leaf aggregation (\,$\sup/\inf$\,).
\end{proof}

\medskip
\noindent\textbf{Takeaway.}
The micro-algorithms in \S\ref{app:micro:sign}--\ref{app:micro:bb} ensure each step
(\emph{i}) inserts at most one face and makes at most one split, (\emph{ii}) triggers
a constant number of bound-refresh calls, and (\emph{iii}) benefits linearly from
caching. Theorems~\ref{thm:g:card}--\ref{thm:g:mono} provide verifiable upper bounds
and establish correctness and anytime monotone tightening for the full JIT--SWT loop.


\section{Extended Experimental Details and Additional Results}
\label{app:exp:all}

\noindent
This appendix expands the empirical section with full protocols, hyper‑parameters, and additional visualizations for the three case studies: (A) FFN on MNIST, (B) CNN on CIFAR‑10, and (C) GNN on the Karate Club graph. 

\subsection{Experiment A: FFN—Local Lipschitz vs.\ Robustness}
\label{app:exp:ffn}

\paragraph{Goal.}
Validate that the JIT‑SWT compiler recovers the local affine piece at a queried input and use it to estimate the \emph{per‑sample} local Lipschitz constant; evaluate its correlation with single‑step adversarial robustness (FGSM).

\paragraph{Setup.}
\begin{itemize}
  \item \textbf{Data.} MNIST with standard train/test splits and normalization.  \textit{Code:} \texttt{transforms.ToTensor()} + \texttt{Normalize((0.1307,),(0.3081,))}.  
  \item \textbf{Model.} FFN $784\!\to\!128\!\to\!64\!\to\!10$ with ReLU activations.  \textit{Training:} Adam, lr$=10^{-3}$, 5 epochs, CE loss.
  \item \textbf{Compiler.} \texttt{SWT\_Compiler} compiles the per‑sample local affine piece $(W,b)$; the local Lipschitz is $\|W\|_2$.  
  \item \textbf{Adversary.} FGSM with $\epsilon=0.25$ in input space; attack is evaluated on the top/bottom‑$N$ samples ranked by local Lipschitz (with $N=50$). 
\end{itemize}

\paragraph{Protocol.}
For each correctly classified test sample $x$, compile a local affine piece $f(x')\equiv W x' + b$ in a neighborhood of $x$. Record the per‑sample max absolute forward discrepancy $|f(x)-A(x)|_\infty$ and the local Lipschitz $\|W\|_2$. Then attack the top‑$N$ (High‑L) and bottom‑$N$ (Low‑L) groups using FGSM; report attack success rate (ASR). 

\paragraph{Results.}
The average dynamic compilation error across \(9760\) correctly classified samples is \(1.13\times 10^{-6}\) (\autoref{fig:ffn-compile-error}). The empirical distribution of per‑sample local Lipschitz values concentrates between 2 and 7 (\autoref{fig:mnist-lip-dist}). The High‑L group has mean $6.6012$ vs.\ $2.2332$ for Low‑L.
Under FGSM $\epsilon=0.25$, ASR is \(100\%\) for High‑L and \(14\%\) for Low‑L (\autoref{tab:ffn-results}), confirming a strong positive correlation between local Lipschitz and adversarial susceptibility.  

\begin{table}[t]
  \centering
  \caption{FFN—Local Lipschitz and robustness summary.}
  \label{tab:ffn-results}
  \begin{tabular}{l r}
    \toprule
    Avg dynamic compile error (9760 samples) & \(1.13\times 10^{-6}\) \\
    Global Lipschitz upper bound $L_{\text{upper}}$ & \(22.9303\) \\
    Avg local‑L (Top 50) & \(6.6012\) \\
    Avg local‑L (Bottom 50) & \(2.2332\) \\
    ASR (Top 50, $\epsilon=0.25$) & \(100.00\%\) \\
    ASR (Bottom 50, $\epsilon=0.25$) & \(14.00\%\) \\
    \bottomrule
  \end{tabular}
\end{table}

\begin{figure}[t]
  \centering
  \includegraphics[width=\linewidth]{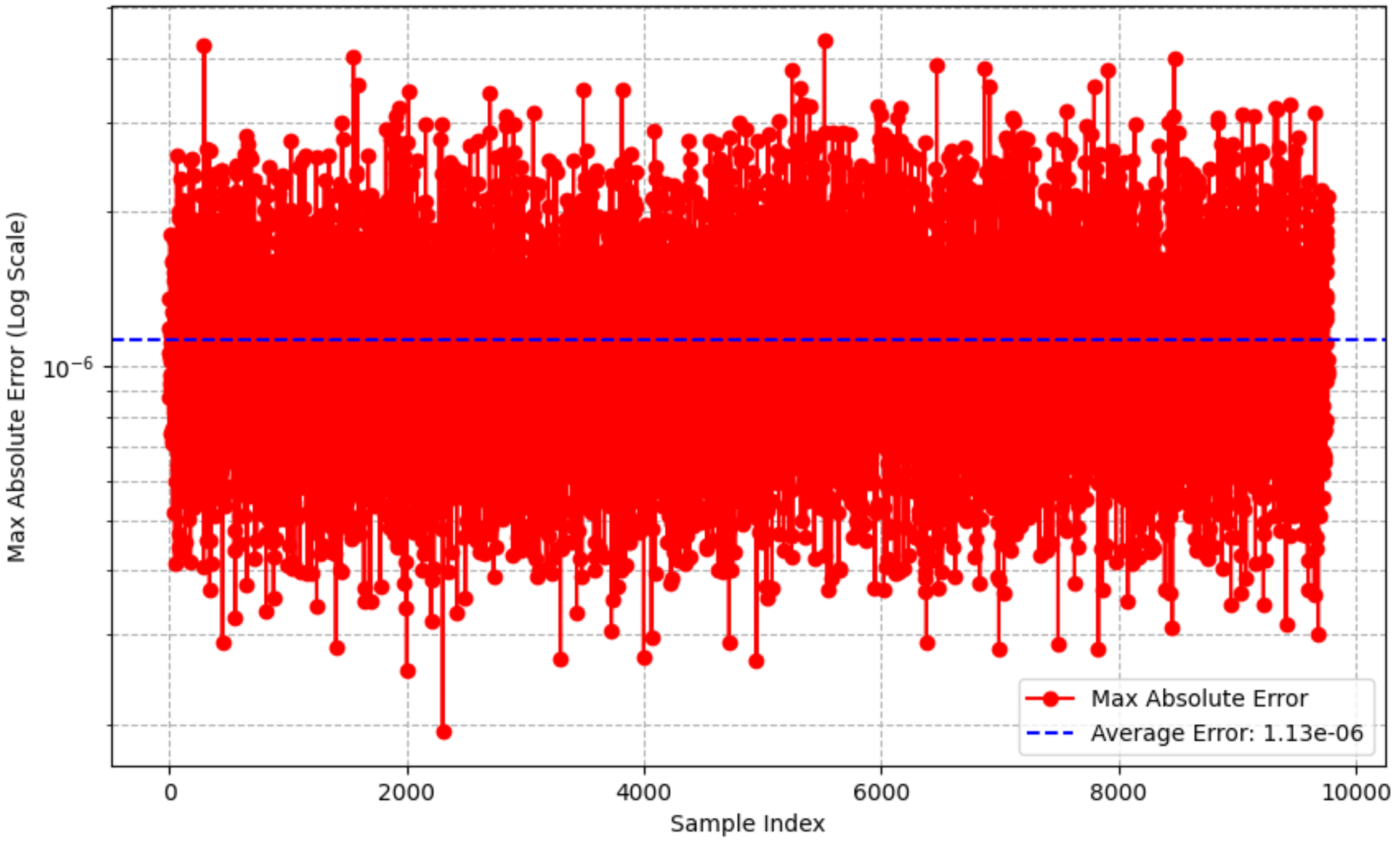}
  \caption{\textbf{FFN per‑sample dynamic compilation error on MNIST} (log scale): max absolute discrepancy between compiled affine piece and PyTorch forward across correctly classified test samples; mean shown as dashed line. }
  \label{fig:ffn-compile-error}
\end{figure}

\begin{figure}[t]
  \centering
  \includegraphics[width=\linewidth]{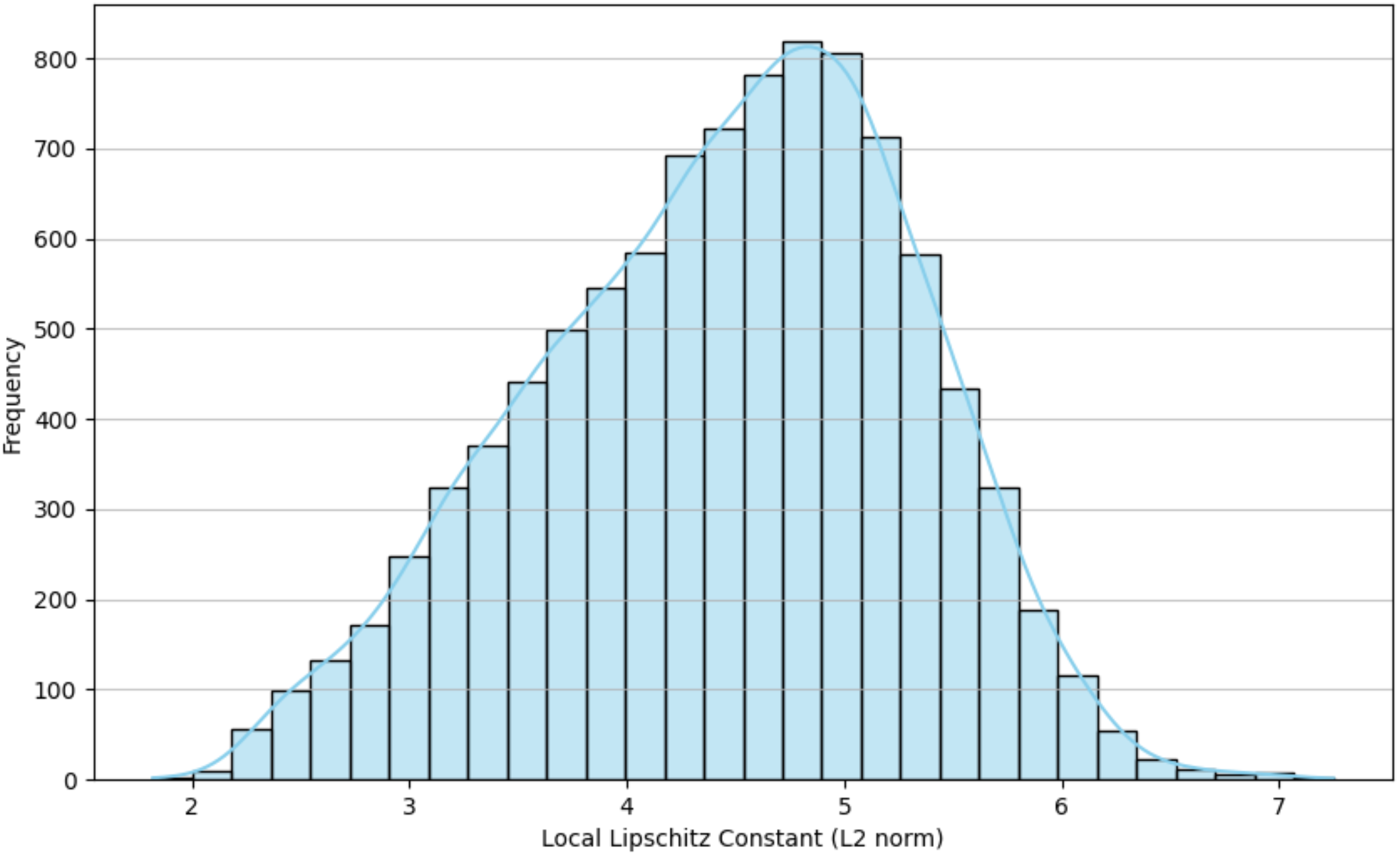}
  \caption{\textbf{Distribution of per‑sample local Lipschitz constants} ($\ell_2$) on the MNIST test set. }
  \label{fig:mnist-lip-dist}
\end{figure}

\subsection{Experiment B: CNN—Translation Equivariance}
\label{app:exp:cnn}

\paragraph{Goal.}
Validate compiler applicability to CNNs and quantify translation equivariance under small integer shifts; visualize boundary effects.

\paragraph{Setup.}
\begin{itemize}
  \item \textbf{Data.} CIFAR‑10 subset: 1000 train and 200 test images per class (normalized to mean/std 0.5).  
  \item \textbf{Model.} \texttt{CNN\_A}: $\text{Conv(3$\to$16,3$\times$3,s=1,p=1)}\to$ReLU$\to\text{Conv(16$\to$32,3$\times$3,s=1,p=1)}\to$ReLU$\to$GlobalAvgPool$\to$Linear(32$\to$10); trained for 5 epochs with Adam (lr$=10^{-3}$), CE loss.  
  \item \textbf{Shifts.} All $(\Delta x,\Delta y)\in\{-1,0,1\}^2\setminus\{(0,0)\}$ with zero padding upon rolling.  
  \item \textbf{Test.} For 100 random test images, we compile a local piece before and after shift and compare class predictions; we report pass rate and a heatmap of failures over $(\Delta x,\Delta y)$. 
\end{itemize}

\paragraph{Results.}
Mean per‑sample dynamic compilation error over 100 images is \(1.36\times10^{-7}\) (\autoref{fig:cnn-compile-error}). The equivariance pass rate over all shifts is \(\mathbf{0.850}\). Failures concentrate on diagonal shifts, consistent with boundary effects introduced by padding/stride (\autoref{fig:cnn-heatmap}).  

\begin{figure}[t]
  \centering
  \includegraphics[width=\linewidth]{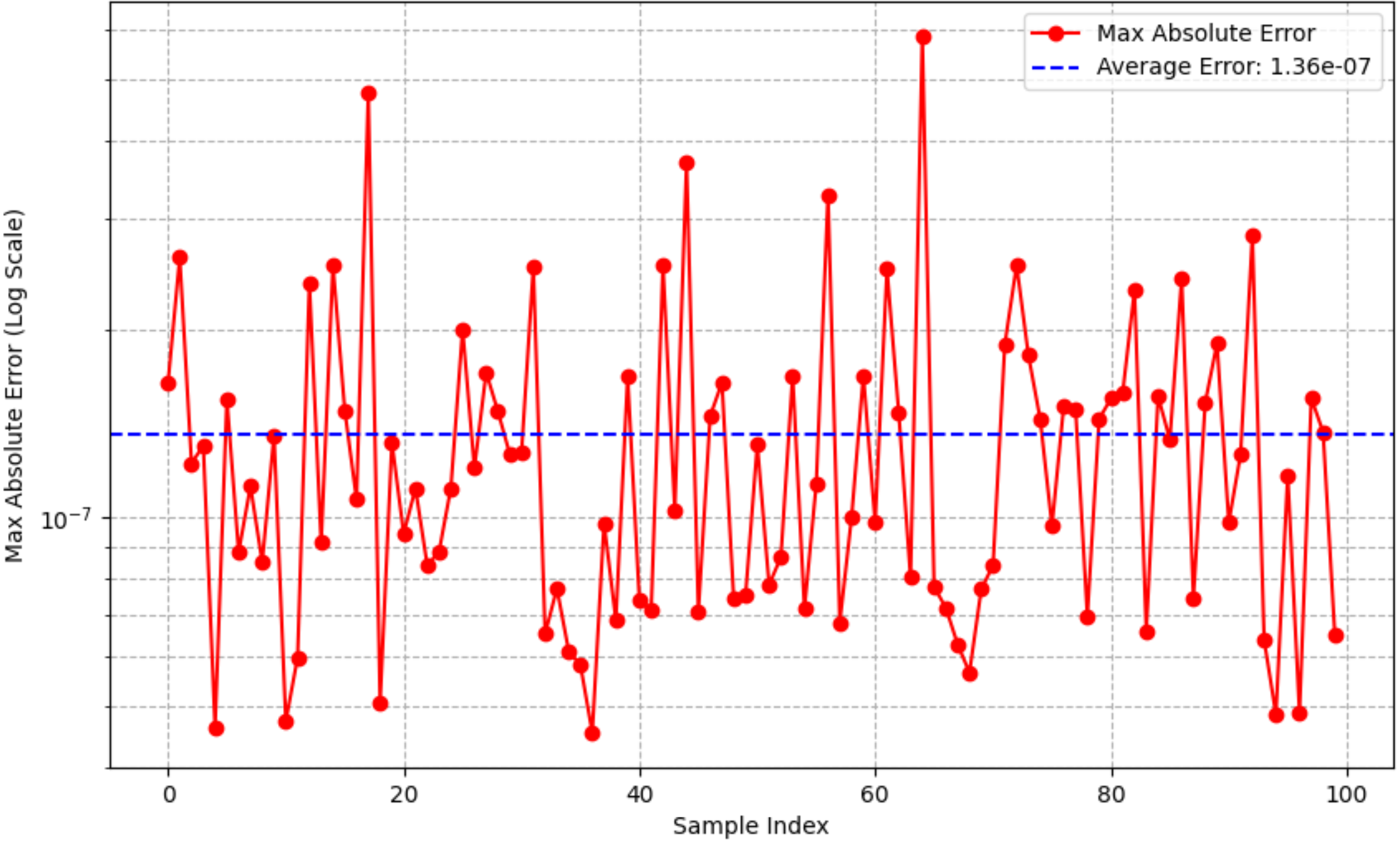}
  \caption{\textbf{CNN per‑sample dynamic compilation error on CIFAR‑10} (log scale).}
  \label{fig:cnn-compile-error}
\end{figure}

\begin{figure}[t]
  \centering
  \includegraphics[width=.9\linewidth]{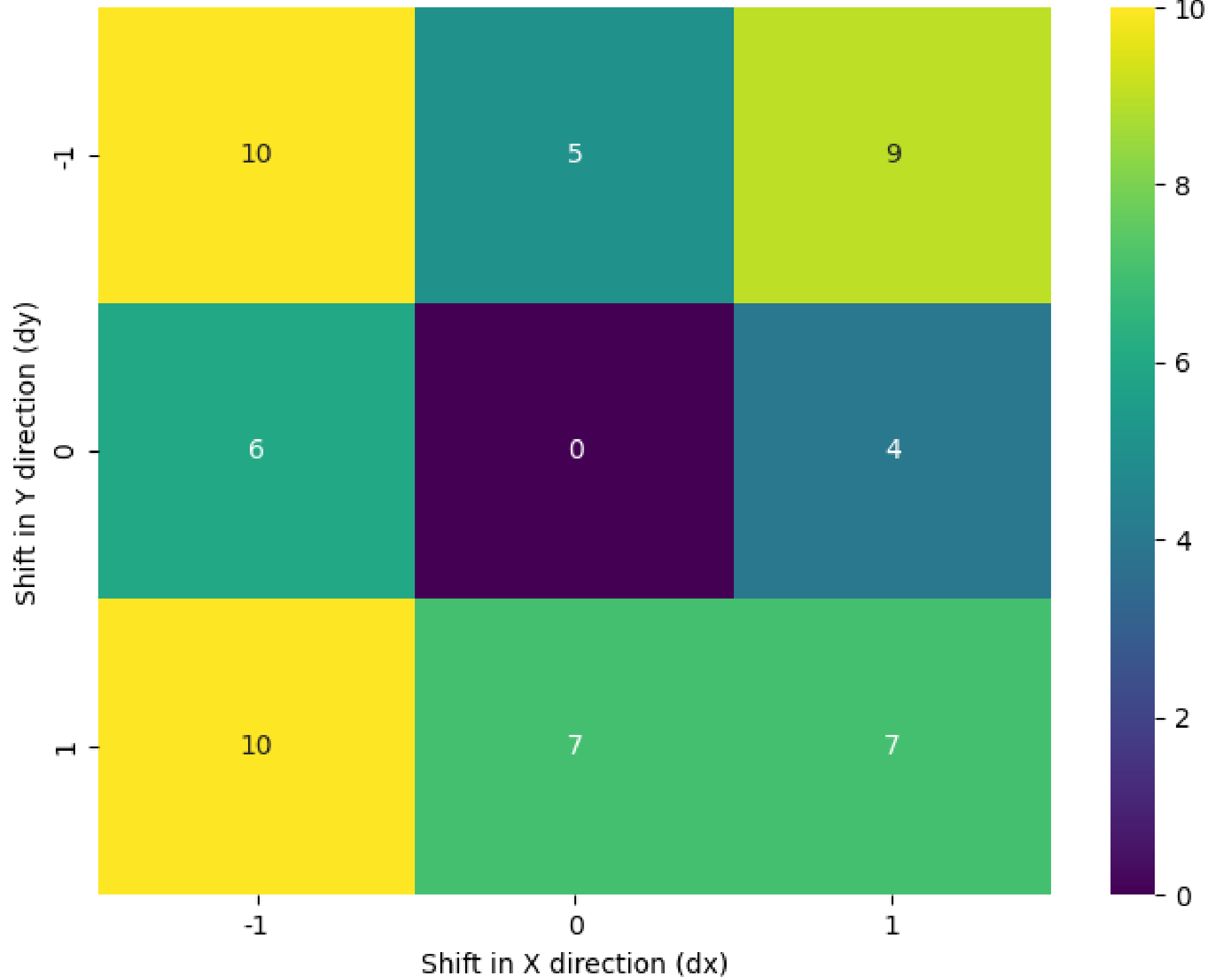}
  \caption{\textbf{Equivariance failure heatmap} over $(\Delta x,\Delta y)\in\{-1,0,1\}^2\!-\!\{(0,0)\}$; brighter indicates more failures. Diagonals (e.g., $(-1,-1)$, $(-1,1)$) account for most failures due to boundary padding.}
  \label{fig:cnn-heatmap}
\end{figure}

\subsection{Experiment C: GNN—Property Verification and Causal Influence}
\label{app:exp:gnn}

\paragraph{Goal.}
(1) Verify compilation equivalence and permutation equivariance for a two‑layer GCN; (2) quantify \emph{maximum causal influence} (Imax) of hidden channels via intervention under bounded input perturbations; (3) confirm Imax as an importance signal by ablation.

\paragraph{Setup.}
\begin{itemize}
  \item \textbf{Graph.} Zachary’s Karate Club ($|V|{=}34$). Features: node degree, clustering coefficient, and first 4 Laplacian eigenvectors (standardized).  
  \item \textbf{Model.} GCNConv–ReLU–GCNConv–ReLU–Linear with hidden dim 16; trained 200 epochs (Adam, lr$=10^{-2}$) on the full graph.  
  \item \textbf{Compilation \& Verifications.} \texttt{SWT\_Compiler} with node‑wise shapes and normalized adjacency; (i) \emph{Compilation equivalence:} random noisy inputs around the training features; (ii) \emph{Permutation equivariance:} 50 random node permutations with degree‑normalized adjacency recomputed per permutation.   
  \item \textbf{Imax (B\&B).} For each hidden channel $k\!\in\!\{0,\dots,15\}$, we intervene by zeroing channel $k$ at layer~2 and compute
  \(I_{\max} \!=\! \sup_{\|X-X_0\|_\infty \le 0.1}\!\|F(X)-F_C(X)\|_\infty\)
  by a branch\&bound solver over an $\ell_\infty$ ball (\S\ref{app:exp:gnn:imax}); we then ablate Top‑5 / Bottom‑5 channels and compare accuracies.  
\end{itemize}

\paragraph{Results.}
Compilation equivalence (100 probes): mean error \(2.17\times10^{-6}\), max \(3.77\times10^{-6}\) (\autoref{fig:gnn-comp-err}). Permutation equivariance (50 perms): mean error \(6.80\times10^{-15}\), max \(1.07\times10^{-14}\) (\autoref{fig:gnn-equiv-err}). The Imax B\&B convergence for channel~0 is shown in \autoref{fig:imax-convergence}. Ranking all channels by Imax produces the Top‑5 \(\{0,1,8,11,12\}\) and Bottom‑5 \(\{4,6,7,10,13\}\) (\autoref{fig:imax-dist}). Ablating Top‑5 reduces accuracy from \(100\%\) to \(97.06\%\) (drop \(2.94\%\)), while ablating Bottom‑5 has no impact (\(100\%\) vs.\ \(100\%\)); see \autoref{fig:acc-impact} and \autoref{tab:gnn-results}.   

\begin{figure}[t]
  \centering
  \includegraphics[width=\linewidth]{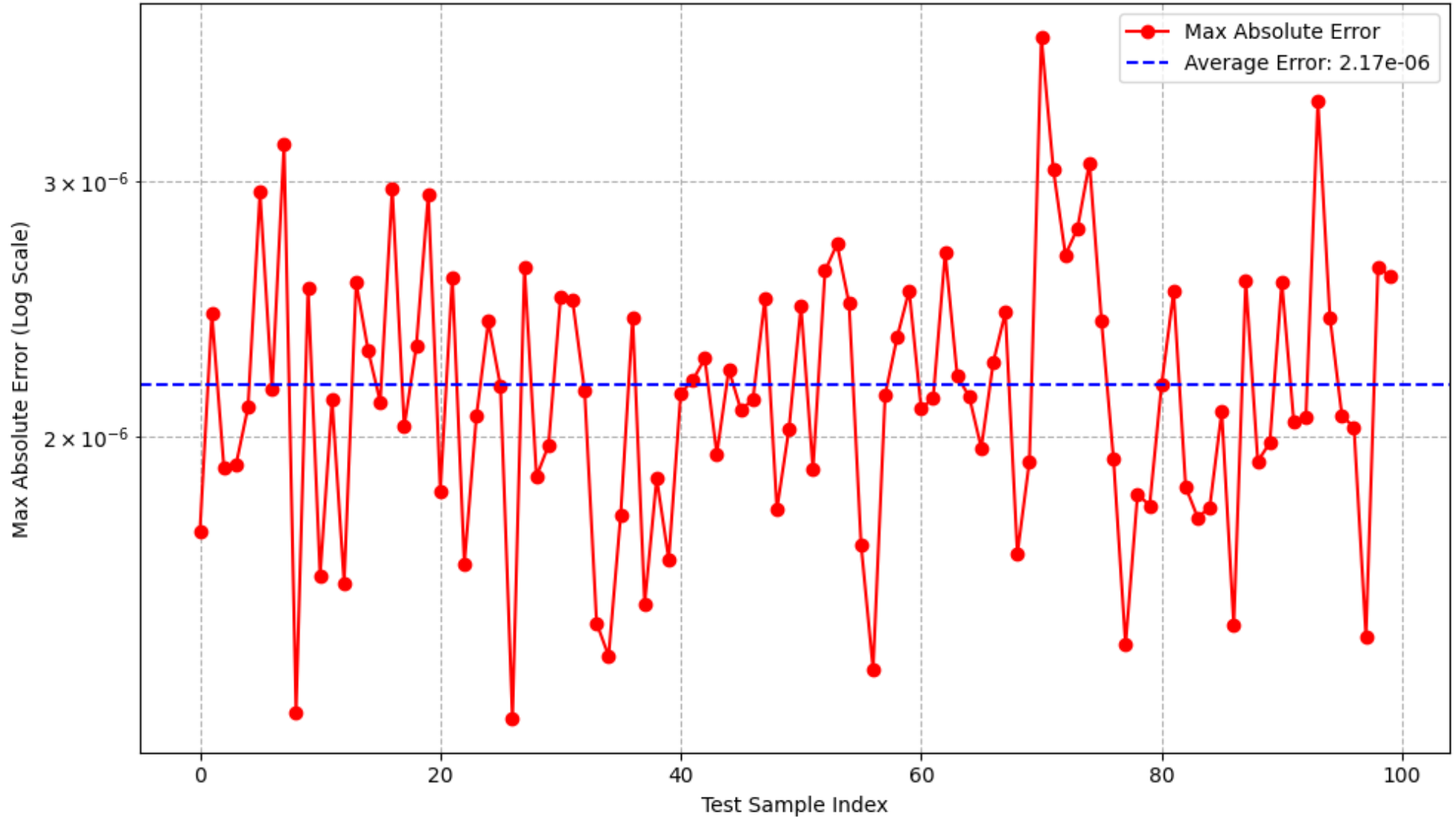}
  \caption{\textbf{GNN compilation equivalence}: per‑probe max absolute error (log scale).}
  \label{fig:gnn-comp-err}
\end{figure}

\begin{figure}[t]
  \centering
  \includegraphics[width=\linewidth]{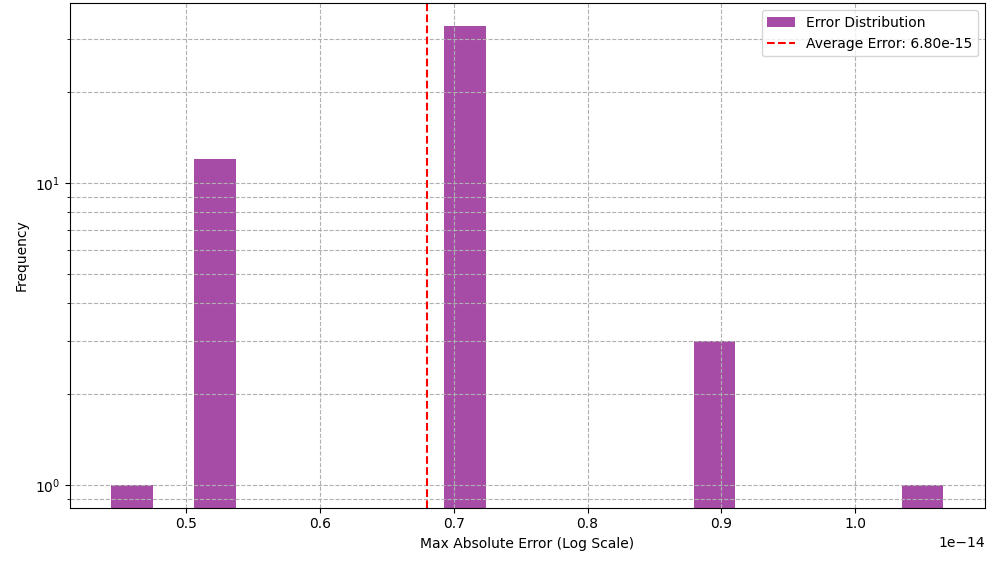}
  \caption{\textbf{Permutation equivariance} across 50 random permutations: histogram of max absolute errors (log scale).}
  \label{fig:gnn-equiv-err}
\end{figure}

\begin{figure}[t]
  \centering
  \includegraphics[width=\linewidth]{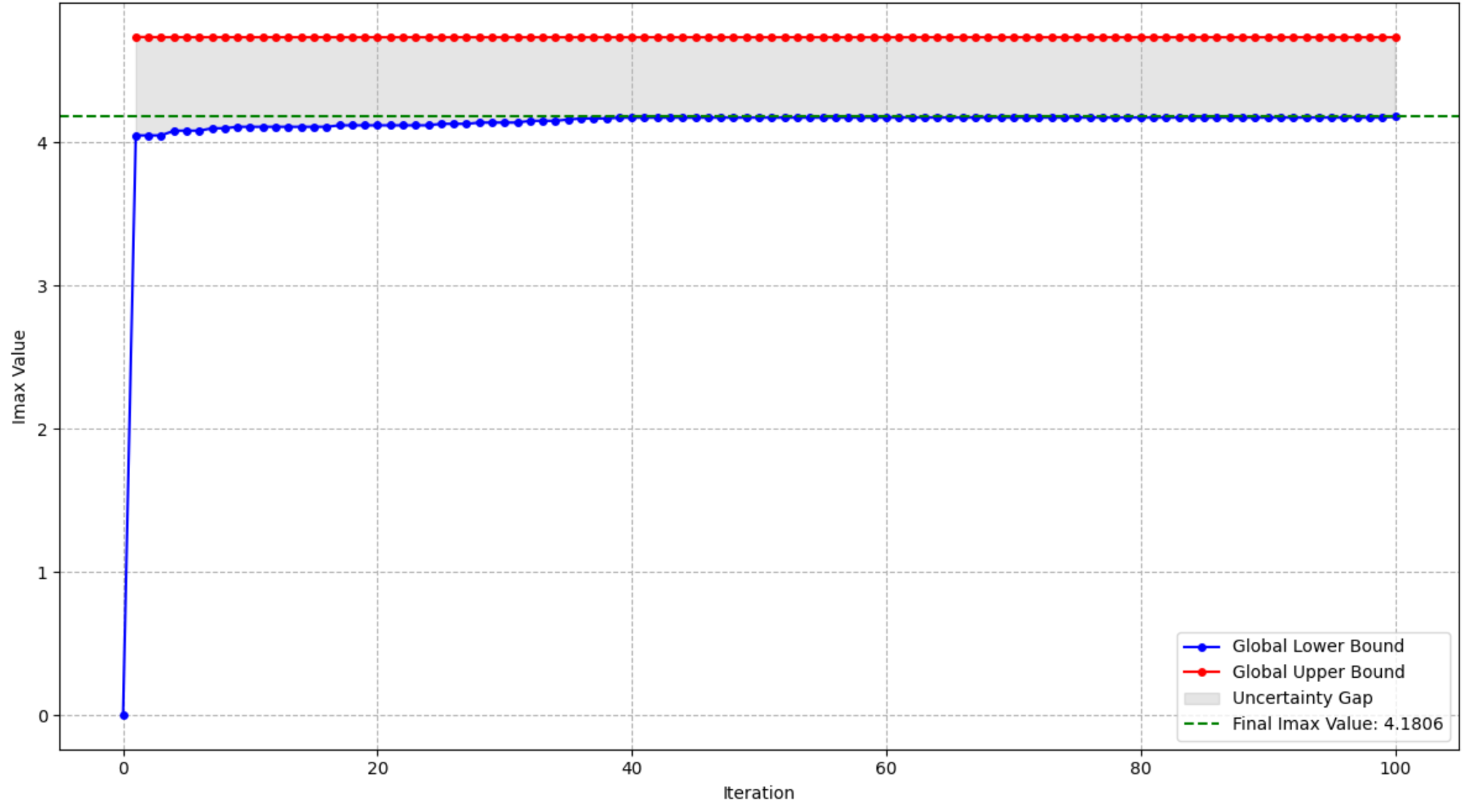}
  \caption{\textbf{Imax B\&B convergence} for channel 0: global lower/upper bounds over iterations and uncertainty gap.}
  \label{fig:imax-convergence}
\end{figure}

\begin{figure}[t]
  \centering
  \includegraphics[width=\linewidth]{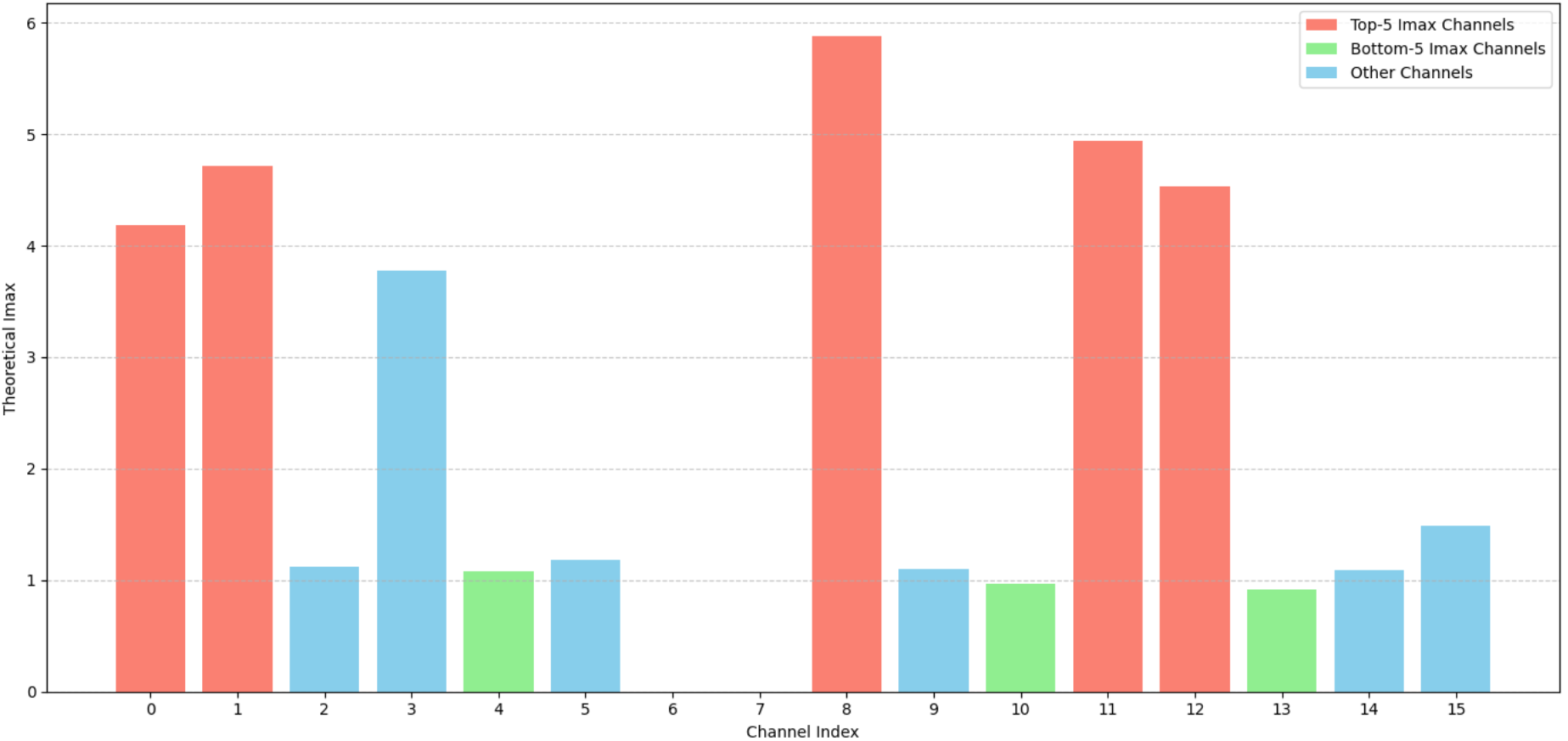}
  \caption{\textbf{Imax distribution} across 16 hidden channels; Top‑5 (salmon) and Bottom‑5 (light green) highlighted.}
  \label{fig:imax-dist}
\end{figure}

\begin{figure}[t]
  \centering
  \includegraphics[width=0.8\linewidth]{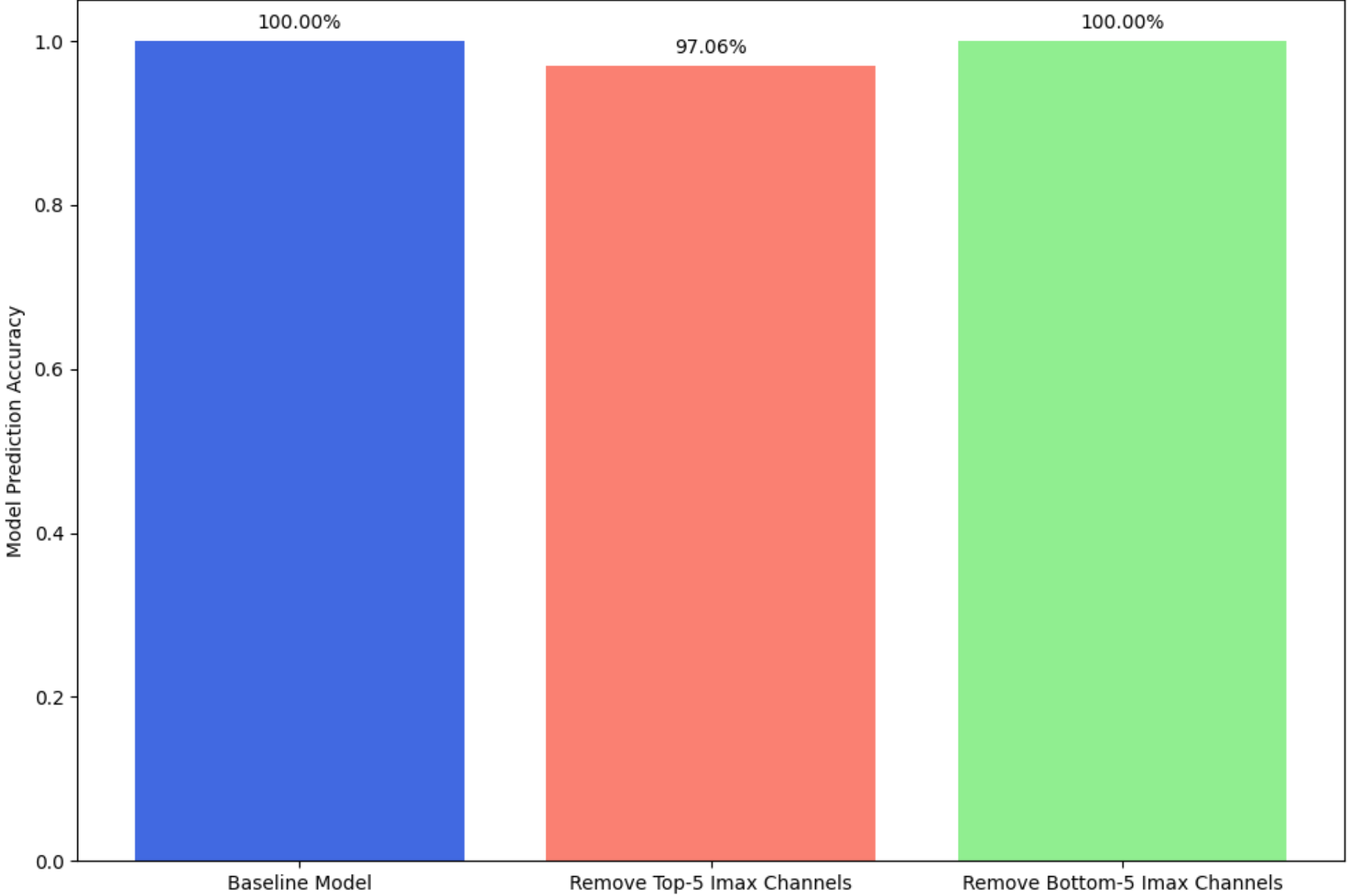}
  \caption{\textbf{Ablation study}: baseline vs.\ removing Top‑5 vs.\ removing Bottom‑5 channels.}
  \label{fig:acc-impact}
\end{figure}

\begin{table}[t]
  \centering
  \caption{GNN—verification and Imax ablation summary.}
  \label{tab:gnn-results}
  \begin{tabular}{l r}
    \toprule
    Max compilation error & \(3.77\times 10^{-6}\) \\
    Mean compilation error & \(2.17\times 10^{-6}\) \\
    Max equivariance error & \(1.07\times 10^{-14}\)\\
    Mean equivariance error & \(6.80\times 10^{-15}\)\\
    Baseline accuracy & \(100.00\%\) \\
    Accuracy after removing Top‑5 & \(97.06\%\) \\
    Performance drop (Top‑5) & \(2.94\%\) \\
    Accuracy after removing Bottom‑5 & \(100.00\%\) \\
    Performance drop (Bottom‑5) & \(0.00\%\) \\
    \bottomrule
  \end{tabular}
\end{table}

\subsubsection{Imax branch\&bound details}
\label{app:exp:gnn:imax}
We consider the $\ell_\infty$ ball $\{X:\|X\!-\!X_0\|_\infty\le 0.1\}$; its H‑representation is
\(
A=\begin{bmatrix}I\\-I\end{bmatrix},\,
d=\begin{bmatrix}X_0+\epsilon\\-(X_0-\epsilon)\end{bmatrix},
\)
with $\epsilon=0.1$ and $X_0$ the baseline features (vectorized). On each B\&B node, a feasible center is obtained via LP; both the original and intervened models are compiled at that point to local affine forms, whose difference yields per‑facet lower/upper bounds on $\Vert F-F_C\Vert_\infty$. Subdomains are split along the dimension with largest interval range, halving at the mid‑point, until budget/precision criteria are met.  \textit{Implementation:} \texttt{BranchAndBoundSolver} using SciPy HiGHS.


\section{Implementation and Robustness Details (Reproducibility)}
\label{app:impl}

This section records engineering decisions that make our results easy to reproduce and stress‑test. Wherever possible we point to the precise implementation sites in the released code.

\subsection{ReLU guard polarity and tie handling}
\label{app:g1}
\textbf{Guard polarity.} When a pre‑activation $z_i(x)=w_i^\top x+b_i$ passes through \textsc{ReLU}, the active (non‑negative) branch should add the guard
\[
  z_i(x)\ge 0 \iff (-w_i)^\top x \le b_i,
\]
and the inactive (non‑positive) branch should add
\[
  z_i(x)\le 0 \iff \; w_i^\top x \le -\,b_i.
\]
In the reference implementation, the polarity is encoded in \texttt{ReLuTransition.apply} by appending one linear inequality per unit to the current guard (H‑polytope). Concretely, the active branch correctly appends \texttt{A += -w\_i, d += -b\_i}, whereas the inactive branch should append \texttt{A += +w\_i, d += -b\_i}. If you see \texttt{d += b\_i} on the inactive branch, flip the sign to \texttt{-b\_i} to match the formulas above (one‑line fix). 

\textbf{Ties.} The theoretical semantics use \emph{closed} guards on both sides of the hinge so that the $z_i(x)=0$ hyperplane is covered from either side. In code we detect activity by a strict test on the numeric pre‑activation (\verb|pre_act_flat > 0|) and thus fall to the ``negative'' side at exact tie. If desired, you can (i) add a tolerance band and duplicate both guards when \(|z_i(x)|\le \tau\), or (ii) keep closed guards by switching to \verb|>=| on the active side and \verb|<=| on the inactive side; the global tolerance constant lives at \texttt{FP\_TOLERANCE = 1e-5}. 

\paragraph{Where to patch.} \texttt{compiler.py} \(\rightarrow\) \texttt{ReLuTransition.apply} and the file‑level constant \texttt{FP\_TOLERANCE}. 

\subsection{Consistent \texttt{scipy.linprog} bounds}
\label{app:g2}
We solve small LPs inside the branch‑and‑bound (B\&B) loop. \textbf{Recommendation:} pass explicit per‑variable bounds as \(\texttt{bounds}=[(\texttt{None},\texttt{None})]\times n\) for all calls. This avoids backend‑specific behavior and makes the code portable across SciPy versions.

\emph{Where used.} (i) In \texttt{Guard.is\_feasible()}, bounds are already constructed per dimension. (ii) In \texttt{BranchAndBoundSolver.\_get\_bounds\_in\_guard()} we set \texttt{bounds=[(None,None)] * input\_dim} (good). (iii) In \texttt{BranchAndBoundSolver.solve()} the range queries use \texttt{bounds=(None,None)}; although accepted by SciPy as a global bound pair, we recommend unifying it to the list form for consistency. All three sites are in \texttt{compiler.py}. 

\subsection{Convolutional Jacobians: practicality and a matrix‑free option}
\label{app:g3}
\textbf{What the release does.} The current \texttt{Conv2dTransition.apply} materializes the linear map of a convolution by building an \emph{im2col‑style} matrix \(W_{\mathrm{conv}}\in\mathbb{R}^{(C_{\text{out}}H_{\text{out}}W_{\text{out}})\times(C_{\text{in}}H_{\text{in}}W_{\text{in}})}\) and composing it with the running affine piece \((W,b)\). This is exact and works well at the CIFAR‑10 scale we use (32\(\times\)32) but can be memory‑heavy for larger inputs. 

\textbf{Matrix‑free alternative (recommended for large inputs).} Replace the explicit \(W_{\mathrm{conv}}\) with a \emph{linear operator} that supports the actions \(v\mapsto W_{\mathrm{conv}}v\) and \(u\mapsto W_{\mathrm{conv}}^\top u\) using PyTorch’s conv primitives:
\[
W_{\mathrm{conv}}v \;\equiv\; \mathrm{conv}(v\ \text{reshaped}), \qquad
W_{\mathrm{conv}}^\top u \;\equiv\; \mathrm{conv\_transpose}(u\ \text{reshaped}).
\]
Then propagate the affine piece without ever constructing \(W_{\mathrm{conv}}\) explicitly. This drop‑in operator fits our automaton because every composition step only needs the product \(W_{\mathrm{new}}\!\cdot\!W\) and \(W_{\mathrm{new}}\!\cdot\!b\). In practice, implement a small wrapper class with \texttt{matvec} and \texttt{rmatvec}, and gate between the dense and matrix‑free paths by input size. The convolution path in the CNN experiment invokes \texttt{SWT\_Compiler(..., initial\_shape=(3,32,32))} and then hits \texttt{Conv2dTransition} during \texttt{trace}. 

\textit{Note.} Our GCN path already avoids materializing the graph conv Jacobian symbolically; it computes the linear map by probing the layer with basis vectors once and reuses the result, which is tractable on Karate Club. For larger graphs, the same matrix‑free pattern applies. 

\subsection{CNN translation fill value after normalization}
\label{app:g4}
When CIFAR‑10 images are normalized to mean 0.5 and std 0.5, the dynamic range is roughly \([-1,1]\). In the \texttt{shift\_image} routine we currently zero‑fill the wrapped border after \texttt{torch.roll}, which injects a mid‑gray value and slightly biases the equivariance test at the padded faces. To emulate “black” padding in the normalized space, set the border to \(-1\) instead of \(0\):
\begin{lstlisting}[language=Python,basicstyle=\ttfamily\small]
def shift_image(x, dx, dy):
    x_s = torch.roll(x, shifts=(dy, dx), dims=(-2, -1))
    fill = -1.0  # black in normalized [-1, 1]
    if dy > 0:  x_s[..., :dy, :] = fill
    elif dy < 0: x_s[..., dy:, :] = fill
    if dx > 0:  x_s[..., :, :dx] = fill
    elif dx < 0: x_s[..., :, dx:] = fill
    return x_s
\end{lstlisting}
This function lives in \texttt{exp2.py} under the CNN experiment. 

\subsection{Entry point, seeds, and deterministic settings}
\label{app:g5}
\textbf{Single entry point.} All experiments can be reproduced via
\[
\texttt{python run.py --exp \{1,2,3\}}
\]
for FFN/CNN/GNN respectively. The dispatcher is defined in \texttt{run.py}. 

\textbf{Seeding and determinism.} We fix a global seed (\(\texttt{SEED}=2025\)) and set deterministic flags for CUDA/cuDNN at module import time; this covers PyTorch RNGs and reduces non‑determinism during convolutions. See the seed block at the top of \texttt{compiler.py}. 

\textbf{Data loaders.} The FFN/CNN training loops use standard PyTorch data loaders; in the CIFAR‑10 setup we set \texttt{num\_workers=2}. For byte‑for‑byte determinism, set \texttt{num\_workers=0} and \texttt{shuffle=False} (or fix the sampler seed) during training; evaluation and all compiler‑based analyses are deterministic given the saved weights. The MNIST/CIFAR‑10 pipelines and model definitions reside in \texttt{exp1.py} and \texttt{exp2.py}; the GNN pipeline (Karate) is in \texttt{exp3.py}. 

\paragraph{B\&B solver knobs.} The Imax computation uses a compact branch‑and‑bound with LP relaxations; \texttt{max\_iter} and \texttt{tolerance} control the budget/precision trade‑off. See \texttt{BranchAndBoundSolver} and its use sites in the GNN experiment.

\end{document}